\theoremstyle{thmstyleone}%
\newtheorem{theorem}{Theorem}
\newtheorem{corollary}[theorem]{Corollary}%
\newtheorem{proposition}[theorem]{Proposition}%
\newtheorem{lemma}[theorem]{Lemma}%
\newcommand{\code}[1]{\texttt{#1}}
\newcommand{\ourtool}{\code{autogato}\xspace}
\theoremstyle{thmstyletwo}%
\newtheorem{example}{Example}%
\theoremstyle{thmstylethree}%
\newtheorem{definition}[theorem]{Definition}%
\patchcmd{\@bibitem}{\texttt{doi:}}{\href{https://doi.org/\@bibitem}{doi:}}{}{}
\def\moverlay{\mathpalette\mov@rlay}
\def\mov@rlay#1#2{\leavevmode\vtop{%
    \baselineskip\z@skip \lineskiplimit-\maxdimen
    \ialign{\hfil$\m@th#1##$\hfil\cr#2\crcr}}}
\newcommand{\charfusion}[3][\mathord]{
  #1{\ifx#1\mathop\vphantom{#2}\fi
    \mathpalette\mov@rlay{#2\cr#3}
  }
  \ifx#1\mathop\expandafter\displaylimits\fi}
\DeclareRobustCommand\bigop[1]{%
  \mathop{\vphantom{\sum}\mathpalette\bigop@{#1}}\slimits@
}
\newcommand{\bigop@}[2]{%
  \vcenter{%
    \sbox\z@{$#1\sum$}%
    \hbox{\resizebox{\ifx#1\displaystyle.9\fi\dimexpr\ht\z@+\dp\z@}{!}{$\m@th#2$
}}%
  }%
}
\newcommand{\cupdot}{\charfusion[\mathbin]{\cup}{\cdot}}
\DeclareMathOperator{\Hasse}{Hasse}
\newcommand{\ORCID}[1]{}
\newcommand{\child}[0]{\bm{\ensuremath{\kappa}}}
\newcommand{\king}[0]{\ensuremath{\mathbf{K}}}
\newcommand{\SM}[0]{\ensuremath{\mathbf{S}}}
\newcommand{\Jacob}[0]{\ensuremath{\mathbf{D}}}
\newcommand{\Metzler}[1]{\ensuremath{\mathfrak{M}(#1)}}
\newcommand{\elem}[0]{\ensuremath{\mathfrak{Q}}}
\newcommand{\tree}[0]{\ensuremath{\mathcal{T}}}
\newcommand{\kine}[0]{\ensuremath{\nu}}
\newcommand{\Reac}[0]{\ensuremath{\mathcal{N}}}
\newcommand{\Auto}[0]{\ensuremath{\mathbf{A}}}
\newenvironment{mainitem}[2]{\medskip\par\noindent%
  \textbf{#1~M\ref{#2}.}}{\par}
\author[1,3]{Richard Golnik*}
\author[1]{Thomas Gatter}
\author[1,2,3,4,5,6,7,8,9]{Peter F. Stadler}
\author[1]{Nicola Vassena}
\affil[1]{Bioinformatics Group, Department of Computer Science, Leipzig University, D-04107, Leipzig, Germany}
\affil[2]{Interdisciplinary Center for Bioinformatics, Leipzig University, D-04107, Leipzig, Germany}
\affil[3]{Zuse School for Embedded and Composite Artificial Intelligence (SECAI)}
\affil[4]{Center for Scalable Data Analytics and Artificial Intelligence, Leipzig University, D-04107, Leipzig, Germany}
\affil[5]{Max Planck Institute for Mathematics in the Sciences, D-04103 Leipzig Germany}
\affil[6]{Department of Theoretical Chemistry, University of Vienna, A-1090 Wien, Austria}
\affil[7]{Center for non-coding RNA in Technology and Health, University of Copenhagen, DK-1870, Frederiksberg, Denmark}
\affil[8]{Facultad de Ciencias, Universidad Nacional de Colombia, Bogot{\'a}, Colombia}
\affil[9]{Santa Fe Institute, Santa Fe, NM 87501, USA}
\title{Enumeration of Autocatalytic Subsystems in Large Chemical Reaction Networks}
\date{*Email: richard@bioinf.uni-leipzig.de}
\begin{document}

\maketitle

\begin{abstract}
  Autocatalysis is an important feature of metabolic networks, contributing
  crucially to the self-maintenance of organisms. Autocatalytic subsystems
  of chemical reaction networks (CRNs) are characterized in terms of
  algebraic conditions on submatrices of the stoichiometric matrix
  $\SM$. Here, we derive sufficient conditions for subgraphs supporting
  irreducible autocatalytic systems in the bipartite K{\"o}nig
  representation of the CRN. On this basis, we develop an efficient
  algorithm to enumerate autocatalytic subnetworks and, as a special case,
  autocatalytic cores, i.e., minimal autocatalytic subnetworks, in
  full-size metabolic networks. The same algorithmic approach can also be
  used to determine autocatalytic cores only. As a showcase application, we
  provide a complete analysis of autocatalysis in the core metabolism of
  \textit{E.\ coli} and enumerate irreducible autocatalytic subsystems of
  limited size in full-fledged metabolic networks of \textit{E.\ coli},
  human erythrocytes, and \textit{Methanosarcina barkeri}
  (Archaea). The mathematical and algorithmic results are accompanied
  by software enabling the routine analysis of autocatalysis in large CRNs.
\end{abstract}

\section{Introduction}

An autocatalytic reaction is ``a chemical reaction in which a product (or a
reaction intermediate) also functions as a catalyst''
\cite{gold_iupac_2025}. Self-replication, i.e., the ability of multiplying
instances of the self, is a special case of autocatalysis that is inherent
to all living organisms. The emergence of self-replicating systems hence is
a key issue in theories of the origin of life, independent of whether an
RNA world, a lipid world, or a metabolism-first scenario is envisioned
\cite{eigen_selforganization_1971,eigen_principle_1977,gilbert_origin_1986,
  joyce_rna_1989, joyce_antiquity_2002,
  bissette_mechanisms_2013,howlett_autocatalysis_2023}.  In a more general
setting, autocatalysis is a property of chemical reaction networks (CRNs)
that collectively implements an autocatalytic overall reaction without any
of the constituent reactions being autocatalytic. It is important to
distinguish two fundamentally different modeling frameworks: networks
of autocatalysts, such as the hypercycles of Eigen \& Schuster
\cite{eigen_principle_1977}, and catalytic reaction systems of Hordijk \&
Steele \cite{hordijk_autocatalytic_2010}, which presuppose that \emph{all}
reactions are explicitly and specifically catalyzed by members of the
system. Such systems naturally represent interactions of complex
entities such as RNAs, proteins, or other heteropolymers, even though
  there is mounting evidence that small molecules and cofactors in metabolic
  networks, as well as metal ions, also exert catalytic activity
  \cite{xavier_autocatalytic_2020}. In contrast, catalysis is usually considered as
an emergent property in networks of abiotic chemical reactions among
small molecules. More precisely, catalysis in this setting is the net
effect of a sequence of individual reactions. Here, we will be concerned
exclusively with CRNs without explicit catalysts.

Autocatalysis in CRNs was generally considered to be scarce in
non-enzymatic chemistry
\cite{butlerow_bildung_1861,orgel_implausibility_2008,
  peng_assessment_2023}.  On the other hand, it has been argued repeatedly,
that metabolic networks are dominated by autocatalytic subsystems
\cite{hordijk_autocatalytic_2010, orgel_implausibility_2008,
  barenholz_design_2017}.  Until recently, the lack of a consistent
definition of autocatalysis made it difficult to discuss the prevalence of
autocatalytic structures in chemical networks
\cite{andersen_defining_2021}. This situation changed when Blokhuis
\emph{et al.} proposed an algebraic definition of autocalytic submatrices
based only on structural properties of a chemical network encoded by the
stoichiometric matrix \cite{blokhuis_universal_2020}. This notion of
autocatalysis has become widely accepted because it not only captures
key features of collective autocatalysis, but also turned out to
mathematically well-behave \cite{vassena_unstable_2024} and to be suitable
for constructing practical algorithms \cite{gagrani_polyhedral_2024}
identifying autocatalytic subnetworks. To the best of our knowledge,
available tools enumerating autocatalytic cycles are restricted to network
sizes of approximately 300 metabolites and reactions
\cite{gagrani_polyhedral_2024, honegger_efficient_2022, golnik_birne_2025,
  kosc_thermodynamic_2025}. Gagrani \emph{et al.}
\cite{gagrani_polyhedral_2024} at present offers the most capable method
currently accessible for networks of this scale, while the other approaches
are either not yet publicly available \cite{honegger_efficient_2022} or
limited to smaller networks \cite{golnik_birne_2025}. This falls short of
the capability to analyze much larger metabolic networks in living
organisms ranging from bacteria to animals and plants
\cite{norsigian_bigg_2020, moretti_metanetx_2016, hawkins_plant_2025,
  deoliveiradalmolin_aragem_2010, poolman_responses_2013, saha_zea_2011,
  yuan_genomescale_2016}.

\section{Road Map}

In this contribution, we describe a graph-theoretical approach to identify
irreducible autocatalytic subsystems and the software package
  \textit{autogato} that implements these methods. The concise and
  self-contained presentation of the mathematical results underlying the
  algorithmic developments requires a considerable level of technicality.
  We therefore start with a road map of this contribution that summarizes
  the main results and their implications in a less technical manner and with
  only a minimum of notation. Still, we introduce a few concepts formally
  already in this overview section. An illustration of this section is
  depicted in Fig.~\ref{fig:Roadmap}.
  
\begin{figure}[htb]
  \centering
  \begin{tikzpicture}
    \node[draw, ellipse, minimum width = 2.0cm, minimum height = 1.1cm, label={[font=\footnotesize, align = center] center:{Autocatalytic\\ cores}} ] (a) at (2,4.5) {};
    \node[draw, ellipse, minimum width = 5.0cm, minimum height = 2.1cm, label={[font=\footnotesize, align=center, xshift=73pt] left:{Autocatalytic \\ CS matrices  \\ with irreducible \\ Metzler part}}] (c) at (1,4.5) {};
    \node[draw, ellipse, minimum width = 7.5cm, minimum height = 3.6cm,
    line width=1.4pt,
    label={[font=\footnotesize, align=center, xshift=65pt] left:{CS  Matrices \\ with irreducible \\ Metzler-part}}] (d) at (0,4.5) {};
    
    \node[] (f1) at (0, 1.1) {\textbf{Fluffles} $F\in \mathfrak{F}$};
    \node[draw, rectangle, rounded corners, line width=1.4pt] (f) at (0,-0.5) {
      \begin{minipage}{11cm}
    	\begin{itemize}
      	\item {${|X(F)|=|R(F)|}$  \hspace{26pt} \textbf{(Child-selection, Cor.~\ref{cor:DegK(k)})}}
      	\item {$d_{out}(x)=1, \forall x\in X(F)$ \textbf{(Metzler property, Lem.~\ref{lem:CSgraph})}}
	\item{$d_{in}(r)=1, \forall r\in R(F)$  \hspace{4pt} \textbf{(Metzler property, Lem.~\ref{lem:CSgraph})}}
      	\item {Strong block \hspace{44pt} \textbf{(Irreducible Metzler part, Thm.~\ref{thm:IrrerMetzler})}}
    	\end{itemize}
      \end{minipage}
    };
    \node[draw, rectangle, rounded corners, align = center] (h) at (-8, -0.5) {K\"onig graph: \\ $\king(X,R)$};			
    \node[draw, rectangle, rounded corners, align = center] (i) at (-8, -3) {Elementary circuits: $\mathbf{C}$};
    \node[draw, rectangle, rounded corners, align=center] (j) at (-4, -4) {Circuitnets: \\ $\mathcal{C} \in \mathfrak{P}(\mathbf{C})$};						
    \node[align=center] (k) at (-4,2.5)  {\textbf{Translation of matrix into}\\ \textbf{graph properties (Prop.~\ref{prop:usefluffles})}};
			
    \draw[<->, double, rounded corners] (d) to (f1);
    \draw[->, dashed, rounded corners] (0.3,3.3) to [bend right=30] node[midway, below right=-3pt] {ILP/$\lambda^*$} (1,3.7);
    \draw[->, rounded corners] (h) to (-8, 2) to (0, 2);
    \draw[->, rounded corners] (h) to (i);
    \draw[->, rounded corners] (i) to (-8,-4) to (j);
    \draw[->, rounded corners] (j) to (0,-4) to node[draw, pos=0.4, right, align =center, rectangle, rounded corners] {CS-equivalence class $\bumpeq$ (Def.~\ref{def:csequivalence}): \\ $\mathcal{C}_1\bumpeq \mathcal{C}_2 \Leftrightarrow E_1\left(\bigcup \mathcal{C}_1\right)=E_1\left(\bigcup \mathcal{C}_2\right)$} node[pos=0.75, right] {\textbf{Alg. Enumeration (Lem.~\ref{lem:bumpeqEquiv})}} (f);			
    
  \end{tikzpicture}
  \caption{Schematic depiction of the main results. A CRN $\Gamma =
      (X,R)$ is represented by its bipartite K\"onig graph
      $\king(X,R)$. Child-selection (CS) matrices with irreducible Metzler
      part can be identified as subgraphs of $\king$, termed
      \emph{fluffles}. Equivalent \emph{fluffles}, w.r.t. to their CS
      matrices, are enumerated efficiently via sets of elementary circuits
      (\emph{circuitnets}) by taking advantage of CS-equivalence
      classes. CS matrices $\SM[\child]$ of \emph{fluffles} are extracted
      and their autocatalytic capacity assessed using an ILP or, in the
      Metzler case, computing the leading eigenvalue $\lambda^*$ 
      (Lemma~\ref{cor:SingularMetzlerHurwitz}). The
      matrix properties inducing fluffle properties and the relevant
      associated results are stated in brackets.}
  \label{fig:Roadmap}
  \end{figure}

\paragraph{Reaction Networks.}
A chemical reaction network (CRN) $\Gamma$ is a pair of finite sets
$\Gamma\coloneqq (X,R)$, where $X$ is a set of chemical species or
metabolites and $R$ is a set of chemical reactions. A reaction $r$ is a
directed transformation between nonnegative linear combinations of
metabolites and can be described as:
\begin{equation}\label{eq:reaction}
  \sum_{x\in X} s_{xr}^- \cdot x \quad
  \underset{r}\rightarrow \quad \sum_{x\in X} s_{xr}^+ \cdot x 
\end{equation}
with $s_{xr}^-\ge0$ and $s_{xr}^+\ge0$ denoting the nonnegative
stoichiometric coefficients of the molecular count. Typically these
coefficients are integers, but our theory does not make a distinction about
it. Metabolites $x$ appearing with nonzero $s_{xr}^->0$ on the left-hand
side of \eqref{eq:reaction} are called \emph{reactants} or
\emph{substrates} or \emph{educts} of $r$, while metabolites $x$
appearing with nonzero $s_{xr}^+>0$ on the right-hand side of
\eqref{eq:reaction} are called \emph{products} of $r$. We may collect the
stoichiometric coefficients of the reactant species in the $|X|\times |R|$
\emph{reactant matrix} $\mathbf{S}^{-}: \mathbf{S}^-_{xr} \coloneqq
s^-_{xr}$ and the coefficients of the product species in the $|X|\times
|R|$ \emph{product matrix} $\mathbf{S}^+: \mathbf{S}^+_{xr} \coloneqq
s^+_{xr}$. The difference between the two matrices gives rise to the
\emph{stoichiometric matrix} $\SM:=\mathbf{S}^+ -\mathbf{S}^-$ with
entries:
\begin{equation}\label{eq:stoichmatrix}
   \SM_{xr} \coloneqq s_{xr}^+ - s_{xr}^-.
\end{equation}
In the absence of explicit catalysis, no chemical species $x$ is both
  reactant and product of the same reaction $r$ and hence $\SM_{xr}>0$
  implies $\SM_{xr} = s_{xr}^+$ and $s_{xr}^-=0$, i.e., $x$ is a product of
  $r$, while $\SM_{xr}<0$ implies $\SM_{xr} = -s_{xr}^-$ and $s_{xr}^+=0$,
  i.e., $x$ is a reactant of $r$. The stoichiometric matrix then completely
  describes the CRN. In particular, therefore, we do not consider here
  explicitly autocatalytic reactions of the form $s^-_{xr}\cdot x + \ldots
  \;\underset{r}{\rightarrow}\; s^{+}_{xr} \cdot x + \ldots$ with $0 <
  s^-_{xr} < s^{+}_{xr}$
  \cite{eigen_principle_1977,bissette_mechanisms_2013}.  \emph{Throughout
  this contribution, we assume that there are no explicit catalysts.}  An
  extension to the more general case will be considered elsewhere
  \cite{golnik_autocatalytic_2026}.
  
The most natural representation of $\Gamma$ is a directed, weighted
hypergraph with vertex set $X$ and edge set $R$. Each reaction $r \in R$ is
represented by a hyperarrow, whose inputs correspond to the reactant
species and outputs to the product species. The stoichiometric coefficients
appear as weights for the hyperarrow.  Building on this perspective,
we consider CRNs in \emph{K\"onig representation}
\cite{zykov1974hypergraphs} throughout, i.e., we represent the hypergraph
$\Gamma$ as a directed bipartite graph
\begin{equation}
  \mathbf{K}\coloneqq \king(X,R) \coloneqq (X \cupdot R,E),
\end{equation}

with disjoint vertex sets $X$ and $R$ and edge set $E\coloneqq E_1\cup E_2$
where $E_{1}(\king)\coloneqq \{(x,r) \ \vert \ s_{xr}^- > 0\}$ and
$E_{2}(\king)\coloneqq \{(r,x) \ \vert \ s_{xr}^+ > 0\}$. The absence
  of explicit catalysts excludes so-called digons, i.e., pairs of edges
  $(x,r)$ and $(r,x)$ are never present in $\king$.
\FloatBarrier
\paragraph{Child-selections (CS)} 
\cite{vassena_good_2020,vassena_unstable_2024} are a central technical
  tool relating the structure of a CRN to certain dynamical properties.
  Mathematically, they are motivated by a structural and symbolic analysis
  of the Jacobian matrix of dynamical systems associated with a CRN. More
  precisely, via a Cauchy--Binet decomposition, only the square `CS'
  matrices $\SM[\child]$ associated with a child-selection contribute to the
  characteristic polynomial of the Jacobian matrix. We briefly introduce
  them below, but we defer a detailed discussion of this topic to
  \textit{Sec.~\nameref{sect:PRCK}}.

\begin{definition}
  A \emph{$k$-child-selection triple}, or $k$-CS for short, is a triple
  $\pmb{\kappa}=(X_\kappa,R_\kappa,\kappa)$ such that
  $|X_\kappa|=|R_\kappa|=k$, $X_\kappa\subseteq X$, $R_\kappa\subseteq R$,
  and $\kappa:X_\kappa\to R_\kappa$ is a bijection satisfying
  $s_{x\kappa(x)}^->0$ for all $x\in X_\kappa$. We call $\kappa$ a \emph{CS
  bijection}.
\end{definition}
To any given $k$-CS
$\child=(X_\kappa,R_\kappa,\kappa)$, we associate a $k \times k$ CS matrix
$\SM[\child]$ defined as follows:
\begin{equation}
  \SM[\child]_{xw}\coloneqq s^+_{x\kappa(w)}-s^-_{x\kappa(w)},
  \quad\quad\text{for all }x,w\in X_\kappa.
\end{equation} 
Note that a CS matrix may differ from a submatrix of $\SM$ by having a
different column order. In particular, for a fixed ordering of the species
$X=\{x_1,\ldots,x_{|X|}\}$, the column order of the stoichiometric matrix
$\SM$ depends on the ordering of the set $R$, whereas the ordering of
$\SM[\child]$ is independent of it, depending only on the order of $X$.

\paragraph{Autocatalysis.}
 Blokhuis et al.~\cite{blokhuis_universal_2020} derived a matrix definition
 of autocatalysis from the definition of the IUPAC (International Union of
 Pure and Applied Chemistry) \cite{gold_iupac_2025}:
\begin{definition}\label{def:autocatmatrix}
  A matrix $\Auto\in\mathbb{R}^{n,m}$ is \emph{autocatalytic} if
  \begin{itemize}
  \item[(i)]  there is $v\in \mathbb{R}_{>0}^{m}$ such that
    $\Auto v > 0$;
  \item[(ii)] for each $r\in\{1,\dots,m\}$ there is $x,y\in\{1,\dots,n\}: 
    \Auto_{xr}<0$ and $\Auto_{yr}>0$.
  \end{itemize} 
\end{definition}
We use $\SM[X_\kappa,R_\kappa]$ to refer to the submatrix of $\SM$ with
species index in $X_\kappa\subseteq X$ and reaction index in
$R_\kappa\subseteq R$, and we can then define an autocatalytic network
consequently.

\begin{definition}
A network $\Gamma$ is autocatalytic if its stoichiometric matrix $\SM$
possesses an autocatalytic submatrix.
\end{definition}
Recent literature on autocatalysis
\cite{blokhuis_universal_2020, vassena_unstable_2024,
    gagrani_polyhedral_2024} put considerable emphasis on minimal
  autocatalytic matrices, so-called autocatalytic cores:
\begin{definition}
  A matrix $\mathbf{A}$ is an autocatalytic core if $\mathbf{A}$ is
  autocatalytic and does not contain a proper autocatalytic submatrix.
\end{definition}
The main goal of this contribution is to develop an algorithmic approach
  to compute autocatalytic substructures in large CRN.

  \paragraph{From autocatalytic matrices to fluffle graphs.}
  The key idea underlying our approach is to exploit relationships
  between certain autocatalytic subnetworks $(X',R')$, including
  autocatalytic cores, and subgraphs of $\king$ with special structures.
  This is not a trivial endeavor since the conditions on submatrices of
  $\SM$ that define autocatalysis are algebraic by nature and do not have
  an obvious translation to the language of graphs.

  Our starting point is the known observation that each autocatalytic core
  induces a \emph{unique} child-selection $\child$ since each reaction $r$
  in a core has a unique reactant, which appears as a negative entry in
  the diagonal of $\mathbf{A}=\SM[\child]$, see \ \cite{vassena_unstable_2024} 
  and Prop.~\ref{prop:AutCore} below. Child-selections, in turn, yield the
  first tangible connection to graphs: each pair $(x,\kappa(x))$ appears as
  an edge in $\king$ that connects a reaction with one of its
  reactants. Each child-selection $\child=(X',R',\kappa)$, therefore, can
  be associated with a subgraph $\king(\child)$ that is spanned by the set
  of reactant-to-reaction edges $E_1(\child)\coloneqq \{(x,\kappa(x))| x\in
  X'\}$ (see Thm.~\ref{thm:CSPerfMatch}). Our task thus becomes the
  characterization of child-selections, and thus subgraphs of $\king$, which
  can give rise to interesting autocatalytic subnetworks. To this end, we
  further investigate the matrices $\SM[\child]$.

  The second key observation is that for autocatalytic cores, $\SM[\child]$
  is a so-called Metzler matrix, i.e., all off-diagonal entries are
  non-negative, see \cite{blokhuis_universal_2020,vassena_unstable_2024}
  and Prop.~\ref{prop:Metzler} below, which implies that $\king(\child)$ is
  an induced subgraph of $\king$ (Cor.~\ref{cor:AutMatrixKing}). For more
  general child-selections, we show that a necesssary condition for
  $\SM[\child]$ to be autocatalytic is that its Metzler part
  $\Metzler{\SM[\child]}$ is autocatalytic. This matrix is obtained by
  replacing all negative off-diagonal entries in $\SM[\child]$ by $0$. The
  matrix $\Metzler{\SM[\child]}$ in turn corresponds exactly to the (not
  necessarily induced) subgraph $\king(\child)$.

  Autocatalytic cores are not the only interesting autocatalytic
  subnetworks. In child-selections, every species $x$ is consumed; if it is
  not produced, it cannot be part of an autocatalytic subnetwork. The
  definition of autocatalysis, furthermore, requires that every reaction
  has at least one reactant and at least one product. Thus every species
  $x\in X'$ and every reaction $r\in R'$ of an autocatalytic CS
  $\child=(X',R',\kappa)$ lies on a directed circuit in $\king$, and
  equivalently, all $x\in X'$ lie on a circuit in the so-called
  \emph{substrate graph}, which coincides with the graph representing the
  non-negative matrix $\Metzler{\SM[\child]}$. Very naturally, we restrict
  ourselves to irreducible matrices $\Metzler{\SM[\child]}$ and thus
  strongly connected substrate graphs. Considering the corresponding
  subgraph $\king(\child)$, we show that the irreducible matrices
  $\Metzler{\SM[\child]}$ are in 1-1 correspondence to subgraphs of
  $\king(\child)$ with the following properties:
  \begin{itemize}
  \item[(i)] $\king(\child)$ is balanced, i.e., $|X'|=|R'|$,
    Cor.~\ref{cor:DegK(k)};
  \item[(ii)] every $x\in X'$ has out-degree $1$ and every $r\in R'$ has
    in-degree $1$, Lem.~\ref{lem:CSgraph}; 
  \item[(iii)] $\king(\child)$ is a strong block, i.e., it is strongly
    connected and its underlying undirected graph is bi-connected,
    Thm.~\ref{thm:IrrerMetzler}.
  \end{itemize}
  We introduce the term \emph{fluffle} for directed bipartite (sub)graphs
  satisfying (i), (ii), and (iii).

In the following, we say that an (autocatalytic) subnetwork defined
    by a CS $\child$ is irreducible if the Metzler part of its corresponding
    child-selection matrix, i.e., $\Metzler{\SM[\child]}$ ,is irreducible.

  \paragraph{From Fluffles to Enumerating Irreducible Autocatalytic
    Subnetworks.}  The discussion so far suggests an attractive avenue to
  list irreducible autocatalytic subnetworks: enumerate the fluffle
  subgraphs of $\king$ and test their corresponding CS matrices.  Our task,
  therefore, becomes to list fluffles efficiently. The key property of
  fluffles that makes such an algorithm practicable is the fact that
  fluffles are characterized by a special class of \emph{ear
  decompositions} (Thm.~\ref{thm:Woffle}). More precisely, every elementary
  circuit is a fluffle, and a directed ear can be added to a fluffle if and
  only if it has an even number of inner vertices, its initial vertex is
  a reaction vertex, and its terminal vertex is a species vertex. This yields a 
  simple condition on how a fluffle and an additional elementary circuit may 
  overlap for the superposition to again be a fluffle, and suggests the notion 
  of \emph{circuitnets} as sets of compatible elementary circuits that produce 
  fluffles (Thm.~\ref{thm:flufflecnets}). The final building block is the
  observation that for every child-selection $\child$ there is a
  representative (maximal) fluffle that can be constructed from elementary
  circuits as follows: $E_1(\child)=\bigcup_i E_1(C_i)$, i.e., the union of
  the reactant-to-reaction edges of the superimposed elementary circuits
  defines the CS edges, and thus $X'$ and $R'$. The remaining
  (reaction-to-product) edges are then induced by $\king$, i.e.,
  $E_2=E(\king)\cap(R'\times X')$, see Prop.~\ref{prop:kingrep}. Moreover,
  the $E_2$ edges $(\kappa(x),y)$ correspond to positive off-diagonal
  entries $\Metzler{\SM[\child]}_{xy}>0$. In particular, therefore, we need
  to consider only one elementary circuit for $X'\cup R'$ and restrict
  circuitnets to ``fundamental'' ones, where each circuit covers vertices
  not present in any other one.

  \paragraph{Computational Results and Performance.}
  These mathematical results form the basis for a practical strategy to
enumerate irreducible autocatalytic subnetworks, or more precisely,
autocatalytic CS matrices with an irreducible Metzler part from the
K{\"o}nig graph $\king$ of a CRN. Conceptually, we can break up this task
into four steps:
\begin{itemize}
\item[(1)] Enumerating the elementary circuits of $\king$.
\item[(2)] Construction of representatives for `CS-equivalence
  classes' (see Def. \ref{def:csequivalence} and \ref{def:bumpeq}) of fluffles by iteratively adding elementary circuits.
\item[(3)] Testing of these candidate CS matrices $\SM[\child]$ for
  autocatalysis.
\item[(4)] Identification of autocatalytic cores.
\end{itemize}
The algorithms addressing these basic tasks are described in detail in
Sec.~\textit{\nameref{ssect:basic}}. The enumeration of elementary
  circuits in digraphs is a well-studied task for which we employ
Johnson's algorithm \cite{johnson_finding_1975} in a version that
optionally limits the length of the circuit \cite{gupta_finding_2021}.
The algorithm described in this contribution is implemented in the
  \texttt{python} package \texttt{autogato}
  (https://github.com/hollyritch/autogato), which is distributed with a
  detailed description and examples. We therefore refrain from a more
  thorough description of the software here.
  
The direct application of this strategy to large CRNs, in particular to
genome-scale metabolic networks, requires prohibitive computational
resources. We observe, however, that the K\"onig graph of metabolic
networks is rather sparsely connected. Moreover, it typically contains
modules with higher internal connectivity that often can be identified with
functional biological submodules \cite{ravasz_hierarchical_2002,
  ma_connectivity_2003, ma_decomposition_2004, zhao_hierarchical_2006,
  sridharan_identification_2011}. We use this structure to decompose the
CRN into smaller parts following a divide-and-conquer approach. Interfaces
between submodules, however, may also be part of autocatalytic
subsystems. We therefore consider elementary circuits that connect modules
in the final stage. A major advantage of the decomposition into modules is
that their analysis can be trivially parallelized, making it possible in
practise to tackle large metabolic networks.

The exhaustive enumeration of \emph{all} autocatalytic subnetworks
  with irreducible Metzler part or \emph{all} autocatalytic cores is only
  feasible for moderate-sized CRNs, such as a model of the core carbon
  metabolism with 36 metabolites and 71 reactions obtained by removing
  exchange metabolites from the model proposed earlier
  \cite{orth_reconstruction_2010}.  Nevertheless, our approach makes it
  feasible to tackle much larger networks by restricting the size of
  subnetworks of interest. For genome-scale metabolic networks (again,
  after removing a small set of exchange metabolites) \texttt{autogato} can
  enumerate all irreducible autocatalytic subnetworks (including all cores)
  up to sizes of 10 metabolites and 10 reactions (e.g.\ for
  \textit{E.\ coli} DH5$\alpha$).

The computational results in Sec.~\textit{\nameref{sect:appl}} indicate
that autocatalysis is prevalent in all domains of life, however, to varying
degrees. In particular, they support earlier claims that autocatalytic
subsystems are abundant in metabolic networks
\cite{hordijk_autocatalytic_2010, orgel_implausibility_2008,
  barenholz_design_2017}.  Moreover, they demonstrate that the
graph-theoretic approach is robustly applicable to metabolic network models
of practical interest and that \emph{autogato} is a tool that can be used
  routinely by system biologists and chemists.
  
\paragraph{The remainder of this contribution} is concerned with the
  technical details required to make the road map above precise. We start
  in Sec.~\textit{\nameref{sect:CRNs}} with a summary of the main results
  on autocatalytic sets and the -- closely related -- connection between
  child-selections, the stoichiometric matrix, and the reaction kinetics of
  a CRN. In Sec.~\textit{\nameref{sect:autokoenig}}, we characterize a
class of subgraphs of the bipartite K{\"o}nig representation of a CRN that
we term fluffles and show that only fluffles can induce autocatalytic
  subsystems with irreducible Metzler part in the CRN. The
  graph-theoretical algorithm enumerating the autocatalytic subsystems in a
  CRN is described in full detail in
  Sec.~\textit{\nameref{sec:Algorithm}}. In these two sections, we
  introduce all necessary notation and report our mathematical results in
  the form of precise statements. Since this material is already quite
  extensive, we relegate all lengthy and technical proofs to the
  Supplemental Material.

Subsequently, Sec.~\textit{\nameref{sect:appl}} provides details on
  the selected computational results briefly discussed above. Although the
  technical part of this paper provides a fairly comprehensive
  understanding of the connection between autocalysis and the K{\"o}nig
  graph of a CRN, some technical questions of interest remain open; they
are briefly summarized in the \textit{\nameref{sect:conclusion}}. A
  short appendix connects the formalism derived here -- focused on
  circuits in the K{\"o}nig graph $\king$ -- with the classification of
  autocatalyic cores by Blokhuis \textit{et al.}
  \cite{blokhuis_universal_2020}. To this end, we introduce the notion of
  \emph{centralized autocatalysis}. Finally, we clarify the relationship of
  autocatalytic core and minimal autocatalytic subsystems
  (MAS) introduced by Gagrani et al. \cite{gagrani_polyhedral_2024}, 
  and show that our algorithmic approach can also enumerate MAS.

\section{Reaction Kinetics, Child-selections, and Autocatalysis}
\label{sect:CRNs}

\subsection{Parameter-rich chemical kinetics}
\label{sect:PRCK}

Let $z(t)\in\mathbb{R}^{|X|}_{\ge0}$ indicate the vector of the chemical
concentration of the species at time $t$ in a well-mixed, spatially
homogeneous, reactor.  The time-evolution for $z(t)$ is described by the
system of ordinary differential equations (ODEs):
\begin{equation}\label{eq:ODEdynamics}
  \dot{z} = f(z) := \SM\cdot \kine(z)
\end{equation}
where $\SM$ is the stoichiometric matrix defined in
\eqref{eq:stoichmatrix}, and $\kine(z)\in \mathbb{R}^{|R|}$ is the vector
of the reaction rate functions (kinetics).  A primary modeling issue in
reaction networks is the ubiquitous lack of precise knowledge of the
mathematical form of the rates $\kine(z)$. For this reason, the literature
typically resorts to kinetic models: a wide class of reaction functions
defined as follows.
\begin{definition}
  A monotone kinetic model for a reaction network $\Gamma\coloneqq (X,R)$
  is a vector-valued function $\kine:\mathbb{R}^{|X|}_{\ge0} \mapsto
  \mathbb{R}^{|R|}_{\geq 0}$, which satisfies the following conditions:
  \begin{enumerate}
  \item[i.] $\kine_r(z) \geq 0,$  for all $ z\in\mathbb{R}^{|X|}_{\ge0}$;
  \item[ii.] $\kine_r(z)>0$ implies $z_x>0$ for all species $x$ with
    $s^-_{xr}>0$;
  \item[iii.] $s_{xr}^- = 0$ implies $\partial \kine_r/\partial z_x \equiv 0$;
  \item[iv.] $z>0$ and $s_{xr}^->0$ implies $\partial \kine_r/z_{z}>0$.
\end{enumerate}
\end{definition}
Since this work always focuses on monotone kinetic models, for brevity,
throughout we simply write ``kinetic models''. Given the
aforementioned uncertainty on the precise quantities involved, it is
typical to consider parametric kinetic models. Whenever necessary, we will
refer to this dependency by writing $\kine(z,p)$. Standard examples of such
parametric kinetic models are classic \cite{MA64} and generalized
\cite{muller_generalized_2012} mass-action kinetics, both polynomials, and
more involved rational functions such as Michaelis--Menten kinetics
\cite{johnson_original_2011} and the Hill model \cite{Hill1910}.

Consider now a network $\Gamma$ endowed with a kinetic model $\kine$. Fixed
points $\bar{z}\ge0$ of $f$ in \eqref{eq:ODEdynamics}, i.e.,
\begin{equation}
    0=f(\bar{z})=\SM \cdot \kine(\bar{z}),
\end{equation}
are called \emph{steady-states} of $\Gamma$. Throughout this work, we
consider only \emph{consistent} networks \cite{angeli_petri_2007}, whose
stoichiometric matrix admits a positive right kernel vector
$\mathbf{\kine}>0: \SM \cdot \mathbf{\kine} = 0$, which is a necessary
condition for a network $\Gamma$ to admit at least one positive
steady-state.  It is well-known \cite{hsu_ordinary_2013} that the dynamical
stability of $\bar{z}$ can be addressed at first approximation by studying
the linearization of system \eqref{eq:ODEdynamics} at $\bar{z}$:
\begin{equation}\label{eq:ODEdynamicslinear}
    \dot{z}= \Jacob_f(\bar{z})z = (\SM\cdot \Reac(\tilde{z})) \; z,
\end{equation}
where $\Jacob_f(\bar{z})$ is the \emph{Jacobian matrix} evaluated at
$\bar{z}$.  The nonnegative matrix $\Reac \in \mathbb{R}^{|R| \times |X|}$
with entries
\begin{equation}
  \Reac_{rx}(\bar{z})\coloneqq
  \frac{\partial{\kine_r(z)}}{\partial z_x}\biggr\rvert_{z=\bar{z}} 
\end{equation}
is called \emph{reactivity matrix}. In particular for hyperbolic
steady-states $\bar{z}$, i.e., for which the Jacobian $\Jacob_f(\bar{z})$
has only eigenvalues with nonzero real part, the spectrum of the Jacobian
determines the dynamical stability: the steady-state $\bar{z}$ is stable if
$\Jacob_f(\bar{z})$ is Hurwitz-stable, i.e. it possesses only eigenvalues
with negative-real part, and $\bar{z}$ is unstable if $\Jacob_f(\bar{z})$
is Hurwitz-unstable, i.e., it possesses at least one eigenvalue with
positive-real part.

Given a network $\Gamma$ endowed with a parametric kinetic model
$\kine(z,p)$, the relation between the network structure and the possible
spectrum configurations of the Jacobian at the varying of parameters $p$ is
a classic problem that has turned out to be very challenging
\cite{aris_prolegomena_1965}. These relationships become much more
tractable if the parametric kinetic model has sufficient internal freedom,
at least as long existence results are the major concern
\cite{vassena_unstable_2024}.

\begin{definition}
  A monotone kinetic rate model $\kine(z,p)$ is \emph{parameter-rich} if,
  for every positive steady-state $\bar{z}>0$ and every choice of an
  $|R|\times |X|$ matrix $\Reac$ satisfying $\Reac_{rx}>0$ iff
  $s_{xr}^->0$, there exists a choice of parameters $\bar{p} = p(\bar{z},
  \Reac)$ such that $\partial\kine_{rx}(z,\bar{p})/\partial z_x
    |_{z=\bar{z}} = \Reac_{rx}.$
\end{definition}
Far from being just a theoretical construct, widely used schemes in
biochemistry -- such as Michaelis-Menten, Hill, and generalized mass action
-- are naturally parameter-rich. Classical mass-action kinetics, however,
lacks sufficient parametric freedom and is therefore not parameter-rich.

The advantage of the parameter-rich framework is that we may then consider
a \emph{symbolic reactivity matrix} $\Reac$, that is, any $|R|\times |X|$
matrix whose nonnegative symbolic entries satisfy $\Reac_{rx} > 0 \iff
s_{xr}^- > 0$, and study the spectrum of the associated \emph{symbolic
Jacobian matrix}
\begin{equation}
  \Jacob := \SM \Reac,
\end{equation}
which no longer depends explicitly on the steady-state value $\bar{z}$. In
particular, under parameter-rich kinetics, the existence of some evaluation
of $\Reac$ such that $\Jacob$ has a given spectrum directly implies the
existence of kinetic parameters for which this very $\Jacob$ is realized as
the actual Jacobian at a steady state $\bar{z}$, and thus dynamical
conclusions can be drawn. Accordingly, we say that the network $\Gamma$
\emph{admits instability} if there exists a choice $\Reac$ of the kinetic
matrix such that the symbolic Jacobian $\Jacob$ is Hurwitz-unstable.

On the other hand, since the entries of the symbolic reactivity matrix are
determined as zero or positive solely from the stoichiometric matrix, this
approach can be used to draw conclusions about the range of realizable
  dynamics from purely structural information. To see this, we consider a
  $k$-CS $\child$. Without loss of generality, assume
$X_\kappa=\{x_1,..,x_k\} \subset X$. By choosing the following
rescaling for the symbolic reactivity matrix $\Reac$,
\begin{equation}
  \Reac_{rx}(\varepsilon)=\begin{cases}
  1 & \text{if $x\in X_\kappa$ and $r=\kappa(x)$} \\
  \varepsilon &\text{otherwise, if $s^-_{xr}>0$}
  \end{cases} 
\end{equation}
a straightforward computation shows that the associated symbolic Jacobian
$\Jacob(\varepsilon)$ reads:
\begin{equation}\label{eq:reductionCSJacobian}
  \Jacob(\varepsilon)=\begin{pmatrix}
  \SM[\child] + O(\varepsilon) & O(\varepsilon)\\
  ... & O(\varepsilon)
  \end{pmatrix},
\end{equation}
where $O(\varepsilon)$ indicate an expression of order $\varepsilon$. A
detailed derivation can be found in Vassena \textit{et al.}
\cite{vassena_unstable_2024}. In particular, for $\varepsilon$ small
enough, the $k$ eigenvalues of $\SM[\child]$ approximate the $k$ largest
(in absolute value) eigenvalues of $\Jacob(\varepsilon)$. This argument
shows that any $k$-CS matrix can be used to approximate $k$ dominant
eigenvalues of the Jacobian. In particular, we obtain a straightforward
condition for a network to admit instability:
\begin{proposition}[Cor.~5.1 \cite{vassena_unstable_2024}]
  \label{prop:csunstable}
  Consider a network $\Gamma\coloneqq (X,R)$ with parameter-rich kinetics.
  If there is a $k$-CS $\pmb{\kappa}$ such that its associated
  $k\times k$ CS matrix is Hurwitz-unstable, then the network admits
  instability.
\end{proposition}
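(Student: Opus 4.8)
The plan is to prove the statement directly: ``admits instability'' only requires producing \emph{one} admissible symbolic reactivity matrix $\Reac$ for which $\Jacob=\SM\Reac$ is Hurwitz-unstable, and such a matrix is essentially already on the table---the $\varepsilon$-rescaled reactivity matrix $\Reac(\varepsilon)$ introduced above. For every $\varepsilon>0$ one has $\Reac_{rx}(\varepsilon)>0\iff s^-_{xr}>0$ (the ``$\Reac_{rx}=1$'' case forces $s^-_{x\kappa(x)}>0$ by the definition of a Child-Selection), so $\Reac(\varepsilon)$ is a legitimate symbolic reactivity matrix for each such $\varepsilon$. It then suffices to show that $\Jacob(\varepsilon_0)=\SM\,\Reac(\varepsilon_0)$ has an eigenvalue with strictly positive real part for some small enough $\varepsilon_0>0$.

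First I would make the block decomposition \eqref{eq:reductionCSJacobian} precise and evaluate it at $\varepsilon=0$. From the two-case definition of $\Reac(\varepsilon)$: if $x\notin X_\kappa$, then no entry of column $x$ of $\Reac(\varepsilon)$ equals $1$, so the $(y,x)$-entry of $\Jacob(\varepsilon)$ equals $\varepsilon\sum_{r:\,s^-_{xr}>0}\SM_{yr}=O(\varepsilon)$ for every $y$; if $x,y\in X_\kappa$, then the $(y,x)$-entry of $\Jacob(\varepsilon)$ equals $\SM_{y\kappa(x)}+O(\varepsilon)=\SM[\child]_{yx}+O(\varepsilon)$. Hence $\Jacob(0)$ is block lower-triangular with diagonal blocks $\SM[\child]$ and $0_{|X|-k}$, so its characteristic polynomial factors as $\det(\lambda I-\SM[\child])\cdot\lambda^{\,|X|-k}$; in particular the spectrum of $\Jacob(0)$ is that of $\SM[\child]$ together with $|X|-k$ zeros.

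Next I would invoke continuity of the spectrum. The entries of $\Jacob(\varepsilon)$ are polynomials in $\varepsilon$, hence so are the coefficients of $\chi_\varepsilon(\lambda):=\det(\lambda I-\Jacob(\varepsilon))$, and the roots of a monic polynomial depend continuously on its coefficients. As $\SM[\child]$ is Hurwitz-unstable it has an eigenvalue $\lambda_0$ with $\operatorname{Re}\lambda_0>0$, and $\lambda_0$ is a root of $\chi_0$; since $\{\lambda:\operatorname{Re}\lambda>0\}$ is open, there is $\varepsilon_0>0$ for which $\chi_{\varepsilon_0}$ has a root in this half-plane, i.e.\ $\Jacob(\varepsilon_0)$ is Hurwitz-unstable. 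Thus $\Reac(\varepsilon_0)$ is the required witness and $\Gamma$ admits instability. The parameter-rich hypothesis is not used for this symbolic conclusion; its role is that, under parameter-rich kinetics, $\Jacob(\varepsilon_0)$ can be realized as the actual Jacobian at a positive steady state, making the instability genuinely dynamical (cf.\ \cite{vassena_unstable_2024}).

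The one genuinely delicate step is this perturbation argument: one must be sure that the $O(\varepsilon)$ corrections inside the $X_\kappa$-block, together with the (unspecified) $O(1)$ bottom-left block in \eqref{eq:reductionCSJacobian}, do not prevent some root of $\chi_\varepsilon$ from converging to $\lambda_0$ as $\varepsilon\to0$. Working at the level of the characteristic polynomial rather than tracking eigenvectors disposes of this: continuity of $\varepsilon\mapsto\chi_\varepsilon$ in the coefficients together with continuity of polynomial roots is all that is needed, and since we only need one root to \emph{enter} an open half-plane, no uniform estimates or eigenvalue-splitting analysis are required. The remaining points are routine bookkeeping---that the column-block claim holds for every column outside $X_\kappa$, and that $\SM_{y\kappa(x)}=\SM[\child]_{yx}$ for $x,y\in X_\kappa$---both immediate from the definitions of $\Reac(\varepsilon)$ and of the CS-matrix.
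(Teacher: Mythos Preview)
Your argument is correct and is exactly the approach the paper sketches: the paper does not give a self-contained proof but merely cites \cite{vassena_unstable_2024} after introducing the rescaled reactivity matrix $\Reac(\varepsilon)$ and the block form \eqref{eq:reductionCSJacobian}, noting that ``for $\varepsilon$ small enough, the $k$ eigenvalues of $\SM[\child]$ approximate the $k$ largest (in absolute value) eigenvalues of $\Jacob(\varepsilon)$.'' Your write-up makes this precise via the characteristic-polynomial continuity argument, which is the standard way to justify that one-line claim.
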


Moreover, using the same line of reasoning, it was shown 
that the presence of autocatalysis in the network always implies 
that the network admits instability \cite{vassena_unstable_2024}. 
The next section briefly reviews this connection.

\subsection{Autocatalytic matrices}\label{sec:AutocatalyticCycles}

Since we exclude explicit catalysis here, we can equivalently express
network properties as properties of the matrix $\SM$. Key features of
autocatalytic cores are collected in the following proposition, which
has been proven previously \cite{blokhuis_universal_2020,
vassena_unstable_2024}.
  
\begin{proposition}
  \label{prop:Metzler} 
  Let $\tilde{\Auto}$ be an autocatalytic core. The following all hold true:
  \begin{enumerate}
  \item $\tilde{\Auto}$ is an invertible square matrix;
  \item There exists a unique autocatalytic core $\Auto$ with strictly
    negative diagonal obtained by reordering the columns of $\tilde{\Auto}$;
  \item The off-diagonal entries of $\Auto$ obtained at point 2 are
    nonnegative.
  \end{enumerate}
\end{proposition}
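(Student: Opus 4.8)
The plan is to prove the three items in order, using repeatedly that a core has no proper autocatalytic submatrix, together with the fact that condition (ii) of Definition~\ref{def:autocatmatrix} is a property of each column in isolation, so it automatically survives the deletion of rows and of other columns; the only thing that can break when passing to a submatrix is the productivity condition (i). For \emph{Item~1} I would first show a core has no more columns than rows: if it did, its columns are linearly dependent, so take $0\neq w$ in the column kernel and a productivity witness $v>0$ with $\tilde{\Auto} v>0$. The segment $v+tw$ leaves the nonnegative orthant, so at the first $t^{\ast}$ at which some coordinates vanish one still has $v+t^{\ast}w\ge 0$ and $\tilde{\Auto}(v+t^{\ast}w)=\tilde{\Auto} v>0$; deleting the vanished columns gives a proper submatrix with a strictly positive, still productive witness, contradicting minimality. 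The opposite inequality is the delicate one: now a \emph{row} must be removed, which endangers (ii) exactly when that row carries the unique positive, or the unique negative, entry of some column, and one has to show that when rows strictly outnumber columns a ``removable'' row always exists, by combining (i) with the column sign pattern so that the ``critical'' rows cannot exhaust all rows. Once $\tilde{\Auto}$ is square, invertibility is immediate: a singular square matrix again has dependent columns, and the same slide-and-delete construction yields a proper autocatalytic submatrix.

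For \emph{Item~2}, rescaling column $r$ by $v_r>0$ changes no sign and preserves (i), (ii) and invertibility, so we may assume the all-ones vector is productive; then every column of the invertible $n\times n$ matrix has a negative entry and every row-sum is positive. A column permutation with strictly negative diagonal is a perfect matching in the bipartite graph on rows and columns with an edge wherever the entry is negative, so by Hall's theorem it suffices to exclude a set $C$ of columns whose negative entries occupy fewer than $|C|$ rows; from such a Hall violator one extracts — repairing productivity by the same restriction/rescaling device — a proper autocatalytic submatrix, contradicting minimality. Uniqueness of the form I would postpone and deduce from Item~3.

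For \emph{Item~3}, put $\Auto$ into the negative-diagonal form just obtained and suppose an off-diagonal entry $\Auto_{xy}<0$ with $x\neq y$. Then column $y$ carries two negative entries, in rows $x$ and $y$; exploiting this redundancy together with the negative diagonal and the productivity witness (again via the deletion of one row and one column followed by a witness repair) one exhibits a proper autocatalytic submatrix, a contradiction, so all off-diagonal entries are $\ge 0$. Uniqueness in Item~2 now follows: if a second column permutation also produces a negative diagonal, then $\Auto_{x,\tau(x)}<0$ for all $x$ for a non-identity permutation $\tau$ relating the two orderings, and the off-diagonal entries of $\Auto$ being $\ge 0$ forces $\tau=\mathrm{id}$.

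The main obstacle is the ``rows do not outnumber columns'' half of Item~1 — equivalently, that in an autocatalytic matrix with strictly more rows than columns some row can be deleted without breaking condition (ii). The column side is a one-line kernel argument, but this side requires a genuine sign-counting analysis intertwining conditions (i) and (ii); the same ``delete and repair the productivity witness'' step recurs, in milder forms, in the extractions used for Items~2 and~3.
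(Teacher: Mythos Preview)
The paper does not prove this proposition; it simply cites Blokhuis et al.\ (2020) and Vassena (2024). So there is no ``paper's own proof'' to compare against, and your plan must be assessed on its merits.

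Your kernel--slide argument for $m\le n$ and for invertibility is correct and is exactly the standard one. Your deduction of uniqueness in Item~2 from Item~3 is also clean. The weak points are the three ``delete and repair'' extractions you invoke for (a) $n\le m$, (b) the Hall violator in Item~2, and (c) the double negative in Item~3. In each case the difficulty is the same and you have not addressed it: when you delete a row, you may have to delete columns to rescue condition~(ii), and some of those deleted columns may have contributed \emph{positively} to the productivity of the surviving rows, so the restricted witness $v'$ need not satisfy $A'v'>0$. Your phrase ``witness repair'' hides exactly this step, and the naive restriction does not work in general.

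The route taken in the cited references avoids your counting of ``critical rows'' (which by itself only yields $n\le 2m$) and instead first proves \emph{autonomy}: every row has a negative entry. This is the one deletion argument that \emph{does} go through cleanly, because if row $x$ has no negative entry, the only columns you must delete after removing row $x$ are those where $x$ was the unique positive; for every surviving row $y\neq x$ those deleted columns contributed $\le 0$, so the restricted $v$ remains a witness. With autonomy in hand, one then shows that every column has exactly one negative entry (this is where the real work lies, and it is essentially your Item~3 argument done \emph{before} squareness rather than after); counting negatives then gives $n\le m$ immediately, and the unique-negative-per-column map is already the bijection you need for Item~2, bypassing Hall entirely. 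Your plan is structurally sound but reorders the steps in a way that makes each extraction harder than necessary; reorganising around autonomy first, as in the references, is what closes the gaps.
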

Square matrices with nonnegative off-diagonal entries are called
\emph{Metzler} in the literature, and their stability properties have been
extensively studied in connection with the Frobenius-Perron Theorem
\cite{bullo_lectures_2018}. Throughout this paper, we refer to
autocatalytic cores $\Auto$ always intending the Metzler representation
with negative diagonal and nonnegative off-diagonal. In this case, further
properties -- related to dynamical stability -- were shown in
ref.\cite{vassena_unstable_2024}.
\begin{proposition}
  \label{prop:AutCore}
Let $\Auto$ be an $n\times n$ autocatalytic core in Metzler form. The
following all hold true:
\begin{enumerate}
\item $\Auto=\SM[\child]$ for a unique child-selection $\child$; 
\item $\Auto$ is irreducible;
\item $\Auto$ is Hurwitz-unstable. More precisely, $\Auto$ possesses
  exactly one eigenvalue with positive real part, and thus its determinant
  is of sign $\mathrm{sign}\det\Auto=(-1)^{n-1}$.
\end{enumerate}
\end{proposition}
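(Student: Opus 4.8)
The three items are closely linked, and the plan is to establish them in the order 1, 2, 3, first isolating a spectral fact that both 2 and 3 rely on. The preliminary observation is that, $\Auto$ being Metzler, the matrix $\Auto+sI$ is entrywise nonnegative for $s$ large, so by Perron--Frobenius the spectral abscissa $\lambda(\Auto):=\max\{\operatorname{Re}\mu:\mu\in\operatorname{spec}\Auto\}$ is itself an eigenvalue of $\Auto$ and is real; and from the defining property of an autocatalytic matrix there is $v>0$ with $\Auto v>0$, hence $\Auto v\ge\varepsilon v$ for $\varepsilon:=\min_x(\Auto v)_x/v_x>0$, and the Collatz--Wielandt inequality (applied to $\Auto+sI$) forces $\lambda(\Auto)\ge\varepsilon>0$. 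So $\Auto$ has a real eigenvalue $\lambda(\Auto)>0$; this already yields the Hurwitz-instability assertion of item 3, and what remains is items 1, 2, and the sharper spectral count in item 3.

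For item 1, I would read the Child-Selection directly off the Metzler form. By the preceding proposition $\Auto$ arises from a square submatrix $\SM[X_\kappa,R_\kappa]$ of $\SM$ with $|X_\kappa|=|R_\kappa|=n$ by the \emph{unique} permutation of columns that makes the diagonal strictly negative. Since $\Auto$ is invertible its columns are pairwise distinct, so this permutation is a well-defined bijection $\kappa\colon X_\kappa\to R_\kappa$, where $\kappa(x)$ names the reaction whose $\SM$-column, restricted to $X_\kappa$, lands in column slot $x$. Then $\Auto_{xw}=s^+_{x\kappa(w)}-s^-_{x\kappa(w)}=\SM[\child]_{xw}$ by construction, and $\Auto_{xx}<0$ forces $s^-_{x\kappa(x)}>0$, so $\child=(X_\kappa,R_\kappa,\kappa)$ is a genuine Child-Selection with $\SM[\child]=\Auto$; uniqueness of $\child$ is inherited from uniqueness of the negative-diagonal column permutation.

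For item 2 (irreducibility), the plan is a contradiction against minimality. Suppose $\Auto$ is reducible, and consider the digraph on $\{1,\dots,n\}$ with an arc $i\to j$ whenever $i\ne j$ and $\Auto_{ij}\ne0$; then it has $k\ge2$ strongly connected components $S_1,\dots,S_k$, and, up to a simultaneous row/column permutation, $\Auto$ is block upper triangular with diagonal blocks $\Auto[S_\ell,S_\ell]$. Since the spectrum of a block-triangular matrix is the union of the spectra of its diagonal blocks, $\lambda(\Auto)=\max_\ell\lambda(\Auto[S_\ell,S_\ell])>0$; the maximizing block $\Auto[S^\ast,S^\ast]$ cannot be $1\times1$ (its only eigenvalue would be the negative diagonal entry), so $|S^\ast|\ge2$. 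I would then verify that $\Auto[S^\ast,S^\ast]$ is a proper autocatalytic submatrix of $\Auto$: being Metzler and irreducible with positive spectral abscissa it has a strictly positive Perron eigenvector $w$ with $\Auto[S^\ast,S^\ast]w>0$, giving condition~(i); and every column of it, indexed by some $j\in S^\ast$, carries its strictly negative diagonal entry together with a strictly positive off-diagonal entry inside $S^\ast$ --- an in-neighbour of $j$ within $S^\ast$, which exists because $S^\ast$ is strongly connected with at least two vertices --- giving condition~(ii). This contradicts the fact that $\Auto$ is a core.

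Item 3 then reduces to showing that a core has \emph{exactly one} eigenvalue with positive real part; granting this, the determinant sign follows immediately, because that eigenvalue must be the real number $\lambda(\Auto)$ (a non-real one would come with its conjugate), all other eigenvalues have negative real part, and evaluating the characteristic polynomial $p_\Auto(t)=\det(tI-\Auto)=\prod_i(t-\mu_i)$ at $t=0$ gives $(-1)^n\det\Auto=p_\Auto(0)<0$, since for $0\le t<\lambda(\Auto)$ the factor $t-\lambda(\Auto)$ is negative while every other factor is positive (a real negative eigenvalue gives $t-\mu_i>0$, a conjugate pair gives $(t-\operatorname{Re}\mu_i)^2+(\operatorname{Im}\mu_i)^2>0$). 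The hard part will be exactly this spectral count: $\lambda(\Auto)>0$ together with irreducibility does \emph{not} by itself preclude further eigenvalues with positive real part, so minimality of the core must enter essentially. Here my plan would be to show that a second eigenvalue with positive real part forces a proper autocatalytic substructure --- for instance by exploiting $\Auto^{-1}$ and the geometry of the cone $\{v>0:\Auto v>0\}$ to locate a strict submatrix still satisfying (i) and (ii) --- and, if no clean argument of this kind materializes, to fall back on the classification of autocatalytic cores into the five motifs of \cite{blokhuis_universal_2020} and verify the eigenvalue count motif by motif. This is the step on which I would expect to spend most of the effort.
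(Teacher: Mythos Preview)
The paper does not actually prove this proposition; it is imported wholesale from \cite{vassena_unstable_2024} (see the sentence immediately preceding the statement). So there is no in-paper argument to compare against, and your proposal should be read as an attempt to reconstruct the cited proof.

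Your treatment of items 1 and 2 is correct and essentially the standard route. For item 1 you correctly leverage the previous proposition's uniqueness of the negative-diagonal column ordering to read off the CS-bijection. For item 2 the reducibility-contradicts-minimality argument is clean: the block-triangular spectrum identity locates an irreducible diagonal block with positive spectral abscissa, Perron--Frobenius on that block supplies a strictly positive eigenvector witnessing condition~(i), and strong connectivity of a block of size $\ge 2$ ensures every column has both its negative diagonal entry and a positive off-diagonal entry, giving condition~(ii).

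Item 3 is where your proposal has a genuine gap, and you are right to flag it. Irreducibility plus $\lambda(\Auto)>0$ does not by itself bound the number of eigenvalues with positive real part, so minimality must do real work here. Your first suggested line (cone geometry of $\Auto^{-1}$) is too vague to assess. Your fallback --- verify the eigenvalue count motif by motif via the classification of \cite{blokhuis_universal_2020} --- is not a finite check as stated: the five types are structural templates that admit arbitrary chain lengths (monomolecular intermediates) and arbitrary positive stoichiometric coefficients, so ``motif by motif'' still means proving a parametric family of spectral statements. The argument in \cite{vassena_unstable_2024} proceeds differently and does not rely on the five-type classification; if you want a self-contained route, you would need an argument that a second unstable eigenvalue forces a proper principal submatrix that is itself autocatalytic, and this is exactly the step you have not supplied.
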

\noindent As a key consequence, these properties imply:
\begin{corollary}
  If the network is autocatalytic, then it admits instability.
\end{corollary}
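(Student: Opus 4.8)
The plan is to chain together the structural results already established. By the definition of an autocatalytic network, if $\Gamma$ is autocatalytic then $\SM$ contains an autocatalytic submatrix, and hence (by minimality, passing to a smallest autocatalytic submatrix) it contains an autocatalytic core $\tilde{\Auto}$. Reordering columns as in the preceding proposition, we may take the core in Metzler form $\Auto$ with strictly negative diagonal and nonnegative off-diagonal entries.

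Next I would invoke Proposition~\ref{prop:AutCore}: the Metzler core satisfies $\Auto = \SM[\child]$ for a unique Child-Selection $\child$, say a $k$-CS with $k = n$ the size of the core, and moreover $\Auto$ is Hurwitz-unstable. The only subtlety to spell out is that $\SM[\child]$, being a CS-matrix, is exactly the kind of $k\times k$ matrix to which Proposition~\ref{prop:csunstable} applies: its Hurwitz-instability (guaranteed by point~3 of Proposition~\ref{prop:AutCore}) is precisely the hypothesis needed.

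Finally I would apply Proposition~\ref{prop:csunstable}: since the network has parameter-rich kinetics and there is a $k$-CS $\child$ whose CS-matrix $\SM[\child] = \Auto$ is Hurwitz-unstable, the network admits instability. This closes the argument. (If one wants to be careful, one notes that ``admits instability'' is exactly the defined notion that there is a choice of symbolic reactivity matrix $\Reac$ making $\Jacob = \SM\Reac$ Hurwitz-unstable, and the $\varepsilon$-rescaling in~\eqref{eq:reductionCSJacobian} is what realizes the core's unstable eigenvalue as a dominant eigenvalue of the full symbolic Jacobian.)

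The only mild obstacle is bookkeeping rather than mathematics: making sure that the ``autocatalytic submatrix'' delivered by the definition of an autocatalytic network can indeed be shrunk to an autocatalytic core (this is immediate by well-ordering on the number of rows plus columns), and that the column-reordering to Metzler form does not affect the spectrum relevant to instability (reordering columns of a CS-matrix corresponds to relabeling the CS bijection, and the resulting matrix is still a CS-matrix with the same eigenvalues). Once these are observed, the corollary is an immediate consequence of Propositions~\ref{prop:csunstable} and~\ref{prop:AutCore} together with the parameter-richness hypothesis, so the proof is essentially a one-line citation chain.
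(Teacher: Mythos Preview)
Your proof is correct and matches the paper's approach, which simply presents the corollary as an immediate consequence of Propositions~\ref{prop:csunstable} and~\ref{prop:AutCore}. One small caveat: your parenthetical claim that column reordering preserves eigenvalues is false in general (permuting columns alone is not a similarity transformation), but this is irrelevant to the argument since Proposition~\ref{prop:AutCore} directly asserts Hurwitz-instability of the Metzler form $\Auto = \SM[\child]$, so no spectral comparison with $\tilde{\Auto}$ is ever needed.
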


The emphasis on minimal autocatalytic subnetworks is mostly justified for
qualitative and classification purposes. In contrast, the Jacobian
rescaling \eqref{eq:reductionCSJacobian} suggests that larger CS matrices
may better capture the overall dynamical impact of autocatalysis on the
system, in terms of instability and growth rate, since fewer variables are
$\varepsilon$-rescaled. We therefore aim to develop a detection algorithm
that goes beyond autocatalytic cores. A natural broader class of interest
is given by CS matrices that are irreducible Metzler matrices. For this
class, the link between autocatalysis and instability is fully preserved,
as in Prop.~\ref{prop:AutCore}. It can be restated as a direct consequence
of the Perron–Frobenius theorem
\cite{vassena_unstable_2024,bullo_lectures_2018} as follows:
\begin{lemma}\label{cor:SingularMetzlerHurwitz}
  Let $\SM[\child]$ be an irreducible Metzler matrix. The following are
  equivalent:
 \begin{enumerate}
 \item $\SM[\child]$ is Hurwitz-unstable;
 \item $\SM[\child]$ has a real positive eigenvalue;
 \item $\SM[\child]$ is autocatalytic.
 \end{enumerate} 
\end{lemma}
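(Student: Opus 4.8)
The plan is to reduce all three equivalences to the Perron--Frobenius theorem via the standard shift that turns a Metzler matrix into a nonnegative one. Write $M\coloneqq\SM[\child]$ and index its rows and columns by $\{1,\dots,n\}$. I would first record a structural fact: since $M$ is a CS-matrix and explicitly autocatalytic reactions are excluded in the setting considered here, each diagonal entry of $M$ equals $s^{+}_{x\kappa(x)}-s^{-}_{x\kappa(x)}=-s^{-}_{x\kappa(x)}<0$ for the corresponding species $x$, so, together with the Metzler hypothesis, $M$ has strictly negative diagonal and nonnegative off-diagonal entries. Pick $c>0$ with $M_{ii}+c>0$ for all $i$ and set $N\coloneqq cI+M$, so that $N\ge 0$ entrywise; since $N$ has the same off-diagonal support as $M$, it is irreducible. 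By the Perron--Frobenius theorem \cite{bullo_lectures_2018}, the spectral radius $\rho\coloneqq\rho(N)$ is a simple eigenvalue of $N$, is the unique eigenvalue of $N$ admitting a componentwise positive eigenvector $v>0$, and every eigenvalue $\mu'$ of $N$ satisfies $\operatorname{Re}\mu'\le\rho$, with equality only for $\mu'=\rho$. Undoing the shift, $\lambda^{\ast}\coloneqq\rho-c$ is a real, simple eigenvalue of $M$ with positive eigenvector $v$; it equals the spectral abscissa of $M$; and it is the only eigenvalue of $M$ attaining that maximal real part.

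With this in hand, $(1)\Leftrightarrow(2)$ is immediate. The implication $(2)\Rightarrow(1)$ is trivial, and if $M$ is Hurwitz-unstable it has an eigenvalue $\mu$ with $\operatorname{Re}\mu>0$, whence $\lambda^{\ast}\ge\operatorname{Re}\mu>0$ exhibits a positive real eigenvalue. I would record the sharper statement that $(1)$ and $(2)$ are each equivalent to $\lambda^{\ast}>0$, since this is the pivot for the rest of the argument.

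For $(2)\Leftrightarrow(3)$ I would first dispatch condition (ii) of Definition~\ref{def:autocatmatrix}. If $n=1$ then $M=(M_{11})$ with $M_{11}<0$, so none of the three statements holds and the lemma is vacuously true. If $n\ge 2$, then in every column $j$ the diagonal entry $M_{jj}<0$ already supplies a negative entry, and irreducibility (strong connectivity of the digraph of $M$) supplies some $i\ne j$ with $M_{ij}\ne 0$, hence $M_{ij}>0$; thus (ii) holds automatically and ``$M$ autocatalytic'' reduces to condition (i), i.e.\ to the existence of $v>0$ with $Mv>0$. Now $(2)\Rightarrow(3)$: if $\lambda^{\ast}>0$, the Perron eigenvector gives $Mv=\lambda^{\ast}v>0$. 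Conversely $(3)\Rightarrow(2)$: given $v>0$ with $Mv>0$, set $\alpha\coloneqq\min_i (Mv)_i/v_i>0$, so that $Nv=cv+Mv\ge(c+\alpha)v$ componentwise; the Collatz--Wielandt characterization of the Perron root (equivalently, iterating $N^{k}v\ge(c+\alpha)^{k}v$ and applying Gelfand's formula) yields $\rho\ge c+\alpha$, i.e.\ $\lambda^{\ast}\ge\alpha>0$. Combining the two directions closes the cycle.

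The argument is essentially routine once the correct Perron--Frobenius package is invoked, so I do not expect a genuine obstacle; the two points that need a little care are (a) justifying that the Perron eigenvalue of the shifted matrix $N$ really is the spectral abscissa of $M$ and is real and simple -- this is exactly where irreducibility and the shift are used -- and (b) the bookkeeping observation that, for the CS-matrices considered here, absence of explicit autocatalysis forces a strictly negative diagonal, so that condition (ii) of autocatalyticity is automatic and the whole content of ``autocatalytic'' collapses to the solvability of $Mv>0$ with $v>0$.
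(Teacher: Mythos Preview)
Your argument is correct and follows exactly the route the paper indicates: the lemma is stated there as a direct consequence of the Perron--Frobenius theorem (with a reference to \cite{vassena_unstable_2024,bullo_lectures_2018}) rather than proved in detail, and your shift $N=cI+M$ together with the Collatz--Wielandt bound is precisely the standard way to unpack that claim. Your handling of condition~(ii) of Definition~\ref{def:autocatmatrix} via irreducibility and the strictly negative diagonal is also the right bookkeeping step.
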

For irreducible Metzler matrices, Hurwitz-instability (a spectral property
in general sensitive to column ordering) is therefore equivalent to
autocatalysis (a structural property independent of ordering).  On the
other hand, without loss of generality in network labeling, any
\emph{reducible} Metzler CS matrix $\SM[\child]$ can be represented in
block form as
\begin{equation}
  \SM[\child]=
  \begin{pmatrix}
    \mathbf{S}[\child'] & 0\\
    \mathbf{B} & \mathbf{S}[\child'']
  \end{pmatrix},
\end{equation}
where $\mathbf{S}[\child']$ is irreducible. In the above representation, we
say that $\SM[\child]$ is decomposed as a \emph{cascade originating from
$\mathbf{S}[\child']$}. The next result shows that autocatalysis for
$\SM[\child]$ necessarily requires autocatalysis in $\mathbf{S}[\child']$.

\begin{proposition}[Proof: SI] 
  \label{prop:autored}
  Let $\child=(X_{\kappa}, R_{\kappa}, \kappa)$ be a $k$-CS whose
  associated CS matrix $\SM[\child]$ is reducible, Metzler, and
  autocatalytic. Then there exists a $k'$-CS $\child'=(X_{\kappa'},
  R_{\kappa'}, \kappa')$ with $X_{\kappa'}\subset X_{\kappa}$,
  $R_{\kappa'}\subset R_{\kappa}$, and
  $\kappa'(X_{\kappa'})=\kappa(X_{\kappa'})$, such that its associated
  CS matrix $\SM[\child']$ is an irreducible autocatalytic Metzler matrix.
\end{proposition}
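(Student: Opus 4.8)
The plan is to show that the irreducible block $\SM[\child']$ occurring in the cascade decomposition displayed just before the statement is already the matrix we want: it is Metzler and irreducible for free, and its autocatalysis can be transferred from that of $\SM[\child]$. I would first record one elementary fact, used twice below: \emph{an autocatalytic Metzler matrix has a strictly negative diagonal}. Indeed, condition~(ii) of Definition~\ref{def:autocatmatrix} requires a negative entry in every column, and in a Metzler matrix the only entry of a column that is allowed to be negative is the diagonal one. In particular every diagonal entry of $\SM[\child]$, hence of every principal submatrix of $\SM[\child]$, is strictly negative.

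After the relabeling of species used in the cascade form, let $\SM[\child']$ be the irreducible top-left block and let $X_{\kappa'}\subseteq X_{\kappa}$ be the species set indexing it. Put $R_{\kappa'}\coloneqq\kappa(X_{\kappa'})$ and let $\kappa'$ be the restriction $\kappa|_{X_{\kappa'}}$. Then $\child'\coloneqq(X_{\kappa'},R_{\kappa'},\kappa')$ is a $k'$-CS: the only condition to check, $s^-_{x\kappa'(x)}>0$ for all $x\in X_{\kappa'}$, is inherited from $\child$; its associated CS-matrix equals the principal submatrix $\SM[\child']$; and $\kappa'(X_{\kappa'})=\kappa(X_{\kappa'})$, as required. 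Since $\SM[\child]$ is reducible this decomposition is non-trivial, i.e.\ $\SM[\child'']$ is nonempty, so $X_{\kappa'}\subsetneq X_{\kappa}$ and $R_{\kappa'}\subsetneq R_{\kappa}$. Finally $\SM[\child']$ is Metzler, being a principal submatrix of the Metzler matrix $\SM[\child]$, and irreducible by the choice of the block.

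It remains to check conditions~(i) and~(ii) of Definition~\ref{def:autocatmatrix} for $\SM[\child']$. For~(i): pick $v\in\mathbb{R}^{|X_{\kappa}|}_{>0}$ with $\SM[\child]v>0$ and set $v'\coloneqq v|_{X_{\kappa'}}>0$; because every row indexed by $X_{\kappa'}$ has zero entries in all columns outside $X_{\kappa'}$, one gets $\SM[\child']v'=(\SM[\child]v)|_{X_{\kappa'}}>0$. This also forces $|X_{\kappa'}|\ge 2$: a $1\times1$ block would be $(a)$ with $a<0$ by the fact recorded above, and then $av'>0$ with $v'>0$ is impossible. For~(ii): by that same fact the diagonal of $\SM[\child']$ is strictly negative, so every column contains a negative entry; and since $\SM[\child']$ is irreducible of size $\ge 2$, no column of it can have all its off-diagonal entries equal to $0$ --- otherwise, bringing that column's index to the last position would exhibit $\SM[\child']$ as block-triangular with a $1\times1$ trailing block, contradicting irreducibility --- so, $\SM[\child']$ being Metzler, every column also contains a strictly positive (necessarily off-diagonal) entry. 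Hence $\SM[\child']$ is an irreducible autocatalytic Metzler matrix, and $\child'$ has all the stated properties.

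The only step that needs genuine care is~(ii): the positive entries that autocatalysis guarantees in each column of $\SM[\child]$ may all sit in the off-diagonal block $\mathbf{B}$, so~(ii) cannot simply be inherited by $\SM[\child']$; irreducibility of the chosen block is exactly what supplies a positive off-diagonal entry in each of its columns. Everything else is routine bookkeeping with the block form and with the definition of a Child-Selection, so I anticipate no further obstacle.
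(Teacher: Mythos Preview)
Your proof is correct and follows essentially the same approach as the paper: both take the irreducible top-left block of the cascade decomposition, transfer condition~(i) via restriction of the vector $v$, and derive condition~(ii) from irreducibility by arguing that a column without a positive off-diagonal entry would force a block-triangular form. Your version is in fact slightly tidier, as you explicitly rule out the case $|X_{\kappa'}|=1$ (where condition~(i) already fails against the negative diagonal), a case the paper leaves implicit.
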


In line with our emphasis on Metzler matrices, we will see that the
Metzler part of a CS matrix, introduced below, plays a key role.

\begin{definition}
  For every 
  CS matrix $\SM[\child]$ we define the $k\times k$ matrix
  $\Metzler{\SM[\child]}$ with entries:
  \begin{equation}
    \Metzler{\SM[\child]}_{xr}:=\begin{cases}
    \SM_{xr}[\child] \quad &\text{if } x=r \text{ or }
    x\neq r \text{ and } \SM_{xr}[\child]>0\\
    0 \quad &\text{otherwise}
    \end{cases}
  \end{equation}
\end{definition}
  
By construction, we have $\Metzler{\SM[\child]}_{xr}\ne\SM[\child]_{xr}$ if
and only if $r\ne\kappa(x)$ and $\SM_{xr}<0$, i.e., if and only if $x$ is a
reactant of a reaction $r$ other than the one assigned to $x$ by the
CS-bijection $\kappa$. These entries correspond exactly to the negative
off-diagonal elements of $\SM[\child]$. Consequently,
$\Metzler{\SM[\child]}$ contains only nonnegative off-diagonal entries and
is therefore a Metzler matrix.  We therefore call $\Metzler{\SM[\child]}$
the \emph{Metzler part} of $\SM[\child]$. The Metzler part of a CS
  matrix will be of key interest because it yields a necessary condition
  for autocatalysis:
\begin{proposition}
  \label{prop:Metzlerpartaut} 
  If $\SM[\child]$ is an autocatalytic CS matrix, then its Metzler part
  $\Metzler{\SM[\child]}$ is autocatalytic. 
\end{proposition}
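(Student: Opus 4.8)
The plan is to check the two conditions of Definition~\ref{def:autocatmatrix} for $\Metzler{\SM[\child]}$ directly, transporting the data that witnesses them for $\SM[\child]$. The one structural fact I would isolate first is that $\Metzler{\SM[\child]}$ is obtained from $\SM[\child]$ by replacing each negative off-diagonal entry with $0$ and leaving every other entry — in particular the entire diagonal and all nonnegative entries — unchanged; consequently $\Metzler{\SM[\child]}_{xr}\ge\SM[\child]_{xr}$ for all $x,r$.

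For condition~(i) I would simply reuse the strictly positive vector $v>0$ with $\SM[\child]\,v>0$ that exists because $\SM[\child]$ is autocatalytic. Since $v>0$ and $\Metzler{\SM[\child]}\ge\SM[\child]$ entrywise, each coordinate satisfies $(\Metzler{\SM[\child]}\,v)_i=\sum_r\Metzler{\SM[\child]}_{ir}v_r\ge\sum_r\SM[\child]_{ir}v_r=(\SM[\child]\,v)_i>0$, so $\Metzler{\SM[\child]}\,v>0$ and the same $v$ works.

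For condition~(ii), fix a column $w$, corresponding to the reaction $\kappa(w)$. The positive entry required by~(ii) for $\SM[\child]$, say $\SM[\child]_{yw}>0$, carries over verbatim because positive entries are untouched, so $\Metzler{\SM[\child]}_{yw}>0$. For the negative entry I would argue via the diagonal: since $w\in X_\kappa$ is a reactant of $\kappa(w)$ we have $s^-_{w\kappa(w)}>0$, and under the standing assumption that no reaction is explicitly (auto)catalytic $s^+_{w\kappa(w)}=0$, hence $\SM[\child]_{ww}=-s^-_{w\kappa(w)}<0$; this diagonal entry is preserved, so $\Metzler{\SM[\child]}_{ww}<0$. (Incidentally this also forces $y\ne w$ above, so the inherited positive entry is genuinely off-diagonal.) With both conditions verified, $\Metzler{\SM[\child]}$ is autocatalytic.

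The argument is short, and the only place demanding any care is the sign of the diagonal of $\SM[\child]$, which is exactly where the standing no-explicit-catalysis convention is used; I do not anticipate any genuine obstacle beyond stating that remark cleanly (one could alternatively cite the earlier normalized-form results, but the stoichiometric one-liner is the most transparent route).
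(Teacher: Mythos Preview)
Your proof is correct and follows the same entrywise-comparison idea as the paper: the paper records only the inequality $\Metzler{\SM[\child]}_{ij}\ge\SM[\child]_{ij}$ and the resulting $\Metzler{\SM[\child]}v\ge\SM[\child]v$ for $v\ge0$, leaving condition~(ii) implicit, whereas you spell out condition~(ii) via the preserved negative diagonal and the preserved positive off-diagonal entry. Your treatment is slightly more complete but not a different route.
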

\begin{proof}
  We observe that $\Metzler{\SM[\child]}_{ij}\ge \SM[\child]_{ij}$ for all
  $i,j$, and hence $\Metzler{\SM[\child]}v \;\ge\; \SM[\child]v$ 
  holds for every nonnegative vector $v$. 
\end{proof}

In light of Prop.~\ref{prop:autored}, any reducible autocatalytic CS matrix
$\SM[\child]$ can always be decomposed into a cascade originating from an
irreducible autocatalytic Metzler matrix $\SM[\child']$. Moreover,
  Prop.~\ref{prop:Metzlerpartaut} thus justifies our focus on irreducible
autocatalytic Metzler CS matrices.

\section{Autocatalytic K{\"o}nig graphs}
\label{sect:autokoenig}

\subsection{Child-selective subgraphs}

We first turn our attention to identify substructures in the K{\"o}nig
graph that induce child-selections (CS). In the following, we will be
  concerned with subgraphs $\king'$ of $\king$. We write
$V(\king')=X(\king')\cupdot R(\king')$, and $E(\king')=E_1(\king')\cup
E_2(\king')$ for the vertex and edge set of $\king'$, respectively.
Moreover, we denote the set of vertices incident with the edges in any edge
set $E_i$ by $V(E_i)$. Note that $\king'$ is not necessarily an
  induced subgraph, i.e., $E(\king')$ does not necessarily include
all edges $e\in \king$ between two selected vertices $x,r\in V(\king')$.
\begin{definition}
  \label{def:CSive}
  A subgraph $\king'$ of $\king$ is \emph{child-selective} if there exists
  a map $\kappa: X(\king')\rightarrow R(\king')$ such that $\child =
  (X(\king'), R(\king'), \kappa)$ is a CS.
\end{definition}

Since the map $\kappa$ in a CS is bijective, $\king'$ can only be
child-selective if $\vert X(\king')\vert = \vert R(\king') \vert$. Recall
that a \emph{matching} in a graph is a set of vertex-disjoint edges, while a
\emph{perfect matching} is one incident with every vertex.

\begin{theorem}\label{thm:CSPerfMatch}
  A subgraph $\king'\subseteq \king$ is child-selective if and only if
  the subset $E_1(\king') \coloneqq \{(x,r) \ \vert \ s_{xr}^->0\}
  \subset E(\king')$ of the reactant-to-reaction edges contains a perfect
  matching.
\end{theorem}
\begin{proof}
  If $\king'$ is child-selective with CS $\child$, we directly obtain a
  perfect matching by $E_{\child}\coloneqq \{(x,\kappa(x)) \ \vert \ x\in
  X(\king')\}\subseteq E_1$. Conversely, let $M\subseteq E_1(\king')
  \subseteq E_1$ be a perfect matching in $\king'$, then there is
  $(x,y)\in M$ for all $x\in X(\king')$ and $y$ is uniquely defined for
  every $x$. Thus $\kappa: X(\king')\to R(\king')$ with $\kappa(x)=y$ if
  $(x,y)\in M$ is uniquely defined and injective. Moreover, there is $u$
  with $(u,v)\in M$ for all $v\in R(\king')$. Hence $\kappa$ is a
  bijection, and  $\child\coloneqq (X(\king'), R(\king'), \kappa)$ is
  a CS.
\end{proof}

\begin{figure}
\centering
  \begin{minipage}[c]{0.3\columnwidth}
  \centering
    \begin{tikzpicture}
      \node[draw, circle] (x) at (0,0) {$x_1$};
      \node[draw, circle] (y) at (1.5,1.5) {$x_2$};
      \node[draw, circle] (z) at (3,0) {$x_3$};
      
      \node[draw, rectangle] (a) at (0,1.5) {$r_1$};
      \node[draw, rectangle] (b) at (3,1.5) {$r_2$};
      \node[draw, rectangle] (c) at (1.5,0) {$r_3$};
      
      \draw[->] (x) -- (a);
      \draw[->] (a) -- (y);
      \draw[->] (y) -- (b);
      \draw[->] (b) -- (z);
      \draw[->] (z) -- (c);
      \draw[->] (c) -- (x);
    \end{tikzpicture}
  \end{minipage}%
  \hfill%
  \begin{minipage}[c]{0.3\columnwidth}
    \centering
    \begin{tikzpicture}
      \node[draw, circle] (x) at (0,0) {$x_1$};
      \node[draw, circle] (y) at (1.5,1.5) {$x_2$};
      \node[draw, circle] (z) at (3,0) {$x_3$};
      
      \node[draw, rectangle] (a) at (0,1.5) {$r_1$};
      \node[draw, rectangle] (b) at (3,1.5) {$r_2$};
      \node[draw, rectangle] (c) at (1.5,0) {$r_3$};
		
      \draw[->] (x) -- (a);
      \draw[->] (a) -- (y);
      \draw[->] (y) -- (b);
      \draw[->] (b) -- (z);
      \draw[->] (z) -- (c);
      \draw[->] (c) -- (x);
      \draw[->] (x) -- (b);
    \end{tikzpicture}
  \end{minipage}%
  \hfill%
  \begin{minipage}{0.3\columnwidth}
    \centering
    $\SM[\child] = 
    \begin{pmatrix}
      -1 &  -1  &  1   \\
      1 &  -1  &  0   \\
      0 &   1  & -1   \\
    \end{pmatrix}$
  \end{minipage}%
  \hfill%
  \begin{minipage}{0.44\columnwidth}
  \end{minipage}
  \caption{In general, $\king(\child)$ \textbf{(left)} is a proper
    subgraph of the induced subgraph $\king[\child]$
    \textbf{(middle)}. The example corresponds to the CS matrix
    $\SM[\child]$ shown on the \textbf{right} with columns ordered as
    $\mathsf{x_1}$, $\mathsf{x_2}$, $\mathsf{x_3}$.}
      \label{fig:CSGraphs}
\end{figure}

The proof of Thm.~\ref{thm:CSPerfMatch} contains an explicit recipe to
construct CS. In fact, there is a 1-1 correspondence between perfect
matchings in $E_1(\king')$ and bijections $\kappa: X(\king')\to
R(\king')$. Moreover, the spanning subgraph $\king'' \subseteq \king'$ with
$X(\king'')=X(\king')$, $R(\king'')=R(\king')$ and $E(\king'')=M\cupdot
E_2(\king')$ is child-selective for every perfect matching $M\subseteq
E_1(\king')$. Conversely, for any CS $\child=(X_\kappa, R_\kappa, \kappa)$,
let $E_1^{\kappa}$ and $E_2^{\kappa}$ be subsets of $E_1$ and $E_2$, 
respectively, where edges have both adjacent vertices in $(X_\kappa \cupdot
R_\kappa)$. Then we write
\begin{equation}
  \label{eq:kingchild}
  \king(\child) \coloneqq (X_\kappa \cup R_\kappa, M_\kappa \cup
  E_2^{(\kappa)})
\end{equation}
defined by the perfect matching $M_\kappa \subseteq E_1^{(\kappa)}$. We
note that $\king(\child)$ is a spanning subgraph of the induced subgraph
$\king[\child]=\king[X_\kappa \cup R_{\kappa}, E_1^{(\kappa)}\cup
  E_2^{(\kappa)}]$, see Fig.~\ref{fig:CSGraphs} for an example. In the
following subsections, we will exclusively investigate $\king(\child)$. We
return to the induced subgraphs $\king[\child]$ in
Sec.~\nameref{ssect:induced} only.

The subgraphs $\king(\child)$ naturally fit together with the Metzler
  matrices introduced in the previous section. Given a CS
  $\child=(X',R',\kappa)$, we have a path $(x,r,y)$ with $x,y\in X'$ and
  $r\in R'$ in $\king(\child)$ if and only if $r=\kappa(x)$ and $y$ is a
  product of reaction $r$, i.e., if and only if
  $\SM[\child]_{y\kappa(x)}>0$. Since we necessarily have $x\ne y$ in this
  case, and $(x,r)$ is an edge in the child-selective subgraph if and only if
  $r=\kappa(x)$, we observe that the non-zero off-diagonal entries of
  $\Metzler{\SM[\child]}$ uniquely determines the edge set
  $E_2(\king(\child))$, while $\kappa$, as we know, determines 
  $E_1(\king(\child))$.

\begin{corollary}
  For a child-selection $\child$, the subgraph $\king(\child)$ is the
  directed bipartite K\"onig graph determined by $\Metzler{\SM[\child]}$.
\end{corollary}

Moreover, each CS $\child'$ for the induced subgraph $\king[\child]$
gives rise to a distinct subgraph $\king(\child')$, see
Fig.~\ref{fig:ambiguous-core}(left). We note in passing that
polynomial-delay algorithms exist for enumerating perfect matchings in
bipartite graphs
\cite{fukuda_finding_1994,uno_algorithms_1997,fink_constant_2025}. In the
present work, however, we adopt a different approach to constructing the
relevant child-selective subgraphs of $\king$.  The following statement is
a direct consequence of the fact that $\kappa$ is a bijection:
\begin{corollary}\label{cor:DegK(k)}
  Let $\child$ be a CS. Then every substrate vertex $x\in X_{\kappa}$ in
  $\king(\child)$ has out-degree $1$ and every reaction vertex $r\in
  R_{\kappa}$ in $\king(\child)$ has in-degree $1$.
\end{corollary}

The subsequent results provide us with a purely graph-theoretical
characterization of the subgraphs of $\king$ that derive from child
selections.
\begin{lemma}[Proof: SI] 
  \label{lem:CSgraph}
  Let $\king'=(X'\cupdot R',E_1'\cupdot E_2')$ be a subgraph of $\king$ with
  reactant vertices $X'$, reaction vertices $R'$, and edges $E_1\subseteq
  X'\times R'$ and $E_2'\subseteq R'\times X'$ 
  such that
  \begin{enumerate}
  \item $|X'|=|R'|$;
  \item every $x\in X'$ has out-degree $1$ and every $x\in R'$ has
    in-degree $1$.
  \end{enumerate}
  Then $\king'$ is child-selective with $\kappa(x)=r$ for
  $(x,r)\in E_1'$. 
\end{lemma}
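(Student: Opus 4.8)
The plan is to exhibit the CS bijection $\kappa$ directly from the degree constraints and then check that it satisfies every clause in the definition of a Child-Selection. First I would note that, since $\king'$ is a subgraph of $\king$ and the edges of type $E_1$ are precisely those directed from a reactant vertex to a reaction vertex, every edge $(x,r)\in E_1'$ already belongs to $E_1(\king)=\{(x,r)\mid s^-_{xr}>0\}$; hence $s^-_{xr}>0$ for all $(x,r)\in E_1'$. Condition~(2) says that each $x\in X'$ has a unique outgoing edge in $\king'$, and that edge necessarily lies in $E_1'$, so setting $\kappa(x)\coloneqq r$ for the unique $(x,r)\in E_1'$ is a well-defined map $\kappa:X'\to R'$, which by the previous remark automatically satisfies $s^-_{x\kappa(x)}>0$.

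Next I would verify injectivity: if $\kappa(x_1)=\kappa(x_2)=r$, then both $(x_1,r)$ and $(x_2,r)$ lie in $E_1'$, so the in-degree of $r$ in $\king'$ would be at least $2$ unless $x_1=x_2$; since condition~(2) forces in-degree $1$, we get $x_1=x_2$. By condition~(1), $X'$ and $R'$ are finite sets of the same cardinality $k$, so an injective map $\kappa:X'\to R'$ is automatically surjective, hence a bijection. Therefore $\child\coloneqq(X',R',\kappa)$ satisfies $|X'|=|R'|=k$, $X'\subseteq X$, $R'\subseteq R$, and $s^-_{x\kappa(x)}>0$ for all $x$, i.e. $\child$ is a $k$-CS and $\king'$ is child-selective, with $\kappa(x)=r$ for $(x,r)\in E_1'$ as claimed.

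There is no genuinely hard step; the argument is pure bookkeeping of edge directions and degrees in the bipartite graph. The only point that merits care is recognizing that the two parts of condition~(2) play complementary roles once restricted to $E_1'$: out-degree $1$ of reactant vertices is exactly what makes $\kappa$ a function, while in-degree $1$ of reaction vertices is exactly what makes it injective, and the edge set $E_2'$ is irrelevant throughout. In fact conditions~(1)--(2) say precisely that $E_1'$ is a perfect matching of $\king'$, so an alternative one-line proof is to invoke Thm.~\ref{thm:CSPerfMatch}; I would present the self-contained argument above and remark on this equivalence.
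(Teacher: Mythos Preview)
Your proof is correct and follows essentially the same approach as the paper: define $\kappa$ via the unique out-edge of each $x\in X'$ and then argue bijectivity. The only cosmetic difference is that the paper constructs the explicit inverse $\mu:R'\to X'$ from the in-degree-$1$ condition and checks $\kappa\circ\mu=\mathrm{id}$, $\mu\circ\kappa=\mathrm{id}$, whereas you argue injectivity plus $|X'|=|R'|$; both are routine and equivalent. Your version is in fact slightly more explicit in verifying the CS condition $s^-_{x\kappa(x)}>0$, and your closing remark that $E_1'$ is a perfect matching (so Thm.~\ref{thm:CSPerfMatch} applies) is apt.
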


\begin{figure}
\centering
  \hspace*{-1.6em}
  \begin{minipage}[c]{0.47\columnwidth}
    \centering
    \begin{tikzpicture}[scale=0.88, transform shape]
      \node[draw, circle] (x) at (-2,0) {$x_2$};
      \node[draw, rectangle] (a) at (-1,1) {$r_1$};
      \node[draw, circle, rounded corners] (y) at (0,0) {$x_3$};
      \node[draw, rectangle] (b) at (-1,-1) {$r_2$};
      
      \node[draw, circle, rounded corners] (z) at (1,1) {$x_1$};
      \node[draw, rectangle] (c) at (1,2) {$r_3$};
      
      \draw[->, rounded corners] (a) -- (x);
      \draw[->, rounded corners] (b) -- (y);
      \draw[->, rounded corners] (c) -- (z);
      
      \draw[->, blue, rounded corners] (x) -- (b);
      \draw[->, blue, rounded corners] (y) -- (c);
      \draw[->, blue, rounded corners] (z) -- (a);
      
      \draw[->, green, rounded corners] (x) to [bend left = 45] (c);
      \draw[->, green, rounded corners] (y) --  (a);
      \draw[->, green, rounded corners] (z) to [bend left = 45] (b);
    \end{tikzpicture}
  \end{minipage}%
  \begin{minipage}[c]{0.47\columnwidth}
  \centering
    \begin{tikzpicture}[scale=0.88, transform shape]
      \node[draw, circle] (a) at (0,0) {$x_1$};
      \node[draw, rectangle] (b) at (1,1) {$r_1$};
      \node[draw, circle] (c) at (2,0) {$x_2$};
      \node[draw, rectangle] (d) at (1,-1) {$r_2$};
  
      \node[draw, circle, rounded corners] (e) at (3,0) {$x_3$};
      \node[draw, rectangle] (f) at (4,1) {$r_3$};
      \node[draw, circle, rounded corners] (g) at (5,0) {$x_4$};
      \node[draw, rectangle] (h) at (4,-1) {$r_4$};
      
      \draw[->, red, rounded corners] (a) -- (b);
      \draw[->, rounded corners] (b) -- (c);
      \draw[->, red, rounded corners] (c) -- (d);
      \draw[->, rounded corners] (d) -- (a);
      
      \draw[->, red, rounded corners] (e) -- (f);
      \draw[->, rounded corners] (f) --  (g);
      \draw[->, red, rounded corners] (g) -- (h);
      \draw[->, rounded corners] (h) -- (e);
      
      \draw[->, blue, rounded corners] (c) to [bend left = 20] (f);
      \draw[->, green, rounded corners] (c) to [bend right = 20] (h);	
      \draw[->, blue, rounded corners] (e) to [bend left = 20] (d);
    \end{tikzpicture}
  \end{minipage}%
  \caption{\textbf{Left:} \emph{Multiple CS may exist in a given induced
    subgraph of $\king$.} In the example, there are indeed two perfect
    matchings and thus two CS $\child_1=(X_\kappa,R_\kappa,\kappa_1)$ and
    $\child_2(X_\kappa,R_\kappa,\kappa_2)$: $\kappa_1(x_1)=r_1$,
    $\kappa_1(x_2)=r_2$, $\kappa_1(x_3)=r_3$ and $\kappa_2(x_1)=r_2$,
    $\kappa_2(x_2)=r_3$, $\kappa_2(x_3)=r_1$.  \textbf{Right:} \emph{A
    strongly connected child-selective subgraph may not have a CS matrix
    with irreducible Metzler part.}  Here, choosing, $\kappa(x_1)=r_1$,
    $\kappa(x_2)=r_2$, $\kappa(x_3)=r_3$, and $\kappa(x_4)=r_4$ (edges
    depicted in red) yields a block diagonal $\Metzler{\SM[\kappa]}$
    composed of two $2\times 2$ blocks.}
  \label{fig:ambiguous-core} 
\end{figure}
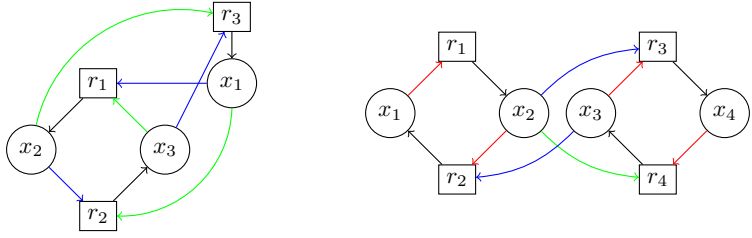 

As an immediate consequence of Thm.~\ref{thm:CSPerfMatch} we note:
\begin{corollary}
  Let $\king$ be an even elementary circuit graph. Then $\king$ is
  child-selective with a unique child-selection $\child$. Moreover,
  $\king(\child)=\king$.
\end{corollary}
\begin{corollary}
  If $\king$ is not connected, then it is child-selective if and only if
  each weakly connected component is child-selective.
\end{corollary}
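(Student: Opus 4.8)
The plan is to reduce the statement to Theorem~\ref{thm:CSPerfMatch}, which already translates child-selectivity of a subgraph of $\king$ into the existence of a perfect matching inside its reactant-to-reaction edge set $E_1$. First I would record the one structural fact that does all the work: if $\king_1,\dots,\king_\ell$ are the weakly connected components of $\king$, then $V(\king)=\bigcupdot_{i=1}^{\ell}\bigl(X(\king_i)\cupdot R(\king_i)\bigr)$ and, because no edge of a digraph can join two distinct weak components, $E_1(\king)=\bigcupdot_i E_1(\king_i)$ and $E_2(\king)=\bigcupdot_i E_2(\king_i)$. In other words $\king$ is the disjoint union of its components $\king_i$, and so is its $E_1$-edge set.

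Next I would prove the ``if'' direction. Assuming each $\king_i$ is child-selective, Theorem~\ref{thm:CSPerfMatch} supplies a perfect matching $M_i\subseteq E_1(\king_i)$ of $\king_i$; then $M\coloneqq\bigcupdot_i M_i$ is a set of pairwise vertex-disjoint edges of $E_1(\king)$ (disjointness across components being automatic) that is incident with every vertex of $\king$, hence a perfect matching of $E_1(\king)$, so $\king$ is child-selective by the same theorem. For the ``only if'' direction, starting from a perfect matching $M\subseteq E_1(\king)$ of $\king$, I would set $M_i\coloneqq M\cap E_1(\king_i)$; since every edge of $M$ lies in exactly one $E_1(\king_i)$ and every vertex of $\king_i$ is covered by an edge of $M$ that cannot leave its component, each $M_i$ is a perfect matching of $\king_i$, and Theorem~\ref{thm:CSPerfMatch} makes $\king_i$ child-selective. (Equivalently, avoiding matchings altogether: a CS bijection $\kappa$ of $\king$ restricts to a bijection $X(\king_i)\to R(\king_i)$, because $(x,\kappa(x))\in E_1$ forces $x$ and $\kappa(x)$ into the same component while surjectivity of $\kappa$ onto $R(\king)$ plus the same edge-locality argument gives surjectivity onto $R(\king_i)$; conversely, choosing a CS bijection on each $\king_i$ glues to a CS bijection on $\king$.)

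I do not expect a genuine obstacle. The only point requiring a word of care is the claim that perfect matchings — equivalently CS bijections — respect the decomposition into weak components, which is precisely the edge-locality noted in the first step; once that is stated, each direction is essentially one line. It is also worth remarking that the hypothesis ``$\king$ is not connected'' is inessential (the connected case is the trivial instance $\ell=1$); it merely flags that the content of the corollary lies in the disconnected situation.
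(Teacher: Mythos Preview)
Your proof is correct and follows exactly the route the paper intends: the corollary is stated there as an immediate consequence of Theorem~\ref{thm:CSPerfMatch}, and your argument simply unpacks that consequence by observing that perfect matchings in $E_1$ decompose over weakly connected components. Your parenthetical alternative via restricting/gluing CS bijections is equally valid and equally close to the paper's spirit.
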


\subsection{Irreducibility and Strong Connectedness}\label{subsec:IrrStr}

We start this section with two simple technical observations:
\begin{lemma}[Proof: SI] 
  \label{lem:scirreduc}
  Let $\child=(X_\kappa,E_\kappa,\kappa)$ be a CS. Then $\king(\child)$ is
  strongly connected if and only if $\Metzler{\SM[\child]}$ is irreducible.
\end{lemma}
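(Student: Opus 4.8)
The statement asserts an equivalence between a graph-theoretic property of $\king(\child)$ (strong connectedness) and an algebraic property of $\Metzler{\SM[\child]}$ (irreducibility). The natural bridge is the standard correspondence between irreducibility of a square matrix and strong connectedness of its associated digraph: a $k\times k$ matrix $B$ is irreducible iff the digraph $D(B)$ on vertex set $\{1,\dots,k\}$ with an arc $i\to j$ whenever $B_{ij}\neq 0$ is strongly connected. So the plan is to identify $D(\Metzler{\SM[\child]})$ with a digraph naturally obtained from $\king(\child)$, and then argue that this digraph is strongly connected iff $\king(\child)$ is.

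\emph{Step 1: Describe the off-diagonal support of $\Metzler{\SM[\child]}$ in graph terms.} By the definition of the Metzler part, for $x\neq w$ in $X_\kappa$ we have $\Metzler{\SM[\child]}_{xw}>0$ iff $\SM[\child]_{xw} = s^+_{x\kappa(w)}-s^-_{x\kappa(w)} > 0$, which (in the absence of explicit catalysis, so $s^-_{x\kappa(w)}=0$ when this is positive) holds iff $s^+_{x\kappa(w)}>0$, i.e. iff $(\kappa(w),x)\in E_2$. Combined with the fact that $(w,\kappa(w))\in M_\kappa$ is the unique out-edge of $w$ in $\king(\child)$ (Cor.~\ref{cor:DegK(k)}), this says precisely: there is an arc $w\to x$ in $D(\Metzler{\SM[\child]})$ iff the length-two path $w \to \kappa(w) \to x$ exists in $\king(\child)$. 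Thus $D(\Metzler{\SM[\child]})$ is exactly the digraph on $X_\kappa$ whose arcs are the directed length-two "shortcuts" through reaction vertices in $\king(\child)$.

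\emph{Step 2: Transfer strong connectedness across the bipartite/contracted picture.} Every directed path in $\king(\child)$ alternates between $X_\kappa$ and $R_\kappa$, and since each reaction vertex has in-degree $1$ (Cor.~\ref{cor:DegK(k)}), a reaction vertex $r=\kappa(w)$ is reachable from, and reaches, exactly the vertices that $w$ reaches and that reach $w$; hence $r$ lies in the same strong component structure as its unique CS-predecessor $w$. Consequently $\king(\child)$ is strongly connected iff its "$X_\kappa$-part" — the digraph obtained by contracting each matched pair $\{w,\kappa(w)\}$, equivalently the shortcut digraph of Step 1 — is strongly connected. This digraph is $D(\Metzler{\SM[\child]})$, so strong connectedness of $\king(\child)$ is equivalent to strong connectedness of $D(\Metzler{\SM[\child]})$, which is equivalent to irreducibility of $\Metzler{\SM[\child]}$. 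One should also note the degenerate case $k=1$: then $\king(\child)$ is a single pair $\{x,\kappa(x)\}$ (with the self-loop edge $(\kappa(x),x)$ present or not), which is trivially strongly connected, and the $1\times1$ matrix $\Metzler{\SM[\child]}$ is irreducible by convention — so the equivalence holds vacuously and should be mentioned to avoid an edge-case gap.

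\emph{Main obstacle.} The nontrivial point is Step 2: making precise why one may pass freely between the bipartite graph $\king(\child)$ and its contracted $X_\kappa$-digraph without losing or gaining strong connectedness. The key leverage is exactly the degree constraints of Cor.~\ref{cor:DegK(k)} (out-degree $1$ on species, in-degree $1$ on reactions), which force every reaction vertex to be "glued" to a unique species vertex in terms of reachability. I would phrase this as a short lemma: in a digraph where a set $U$ of vertices each has a unique in-neighbour lying outside $U$ and no in-neighbour inside $U$... — but actually it is cleaner to argue directly: for $x,y\in X_\kappa$, there is a directed $x$–$y$ path in $\king(\child)$ iff there is one in the shortcut digraph, because any $x$–$y$ path in $\king(\child)$ ends by entering $y$, and the only in-edge of any reaction vertex comes from its matched species, so the path decomposes canonically into shortcut steps; conversely a shortcut step lifts to a length-two path. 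Then strong connectedness of $\king(\child)$ restricted to $X_\kappa$ is equivalent to strong connectedness of the shortcut digraph, and one checks separately that strong connectedness of all of $\king(\child)$ is equivalent to strong connectedness of its restriction to $X_\kappa$ (each $r=\kappa(w)$ is reachable from $w$ in one step and reaches $X_\kappa$ only by first passing to some species). The remaining steps are the routine invocation of the classical matrix-irreducibility $\Leftrightarrow$ digraph-strong-connectedness theorem and the bookkeeping around explicit catalysis, which the paper has already assumed away.
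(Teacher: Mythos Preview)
Your approach is essentially identical to the paper's: both establish that an off-diagonal entry $\Metzler{\SM[\child]}_{xw}>0$ corresponds precisely to a length-two directed path $w\to\kappa(w)\to x$ in $\king(\child)$, and then transfer strong connectedness between the bipartite graph and its contraction to $X_\kappa$ using the degree constraints of Cor.~\ref{cor:DegK(k)}. The paper's proof simply carries out the two implications directly rather than naming the shortcut digraph or invoking the matrix--digraph correspondence as a black box, but the content is the same. One small slip in your $k=1$ remark: under the no-explicit-catalysis convention the edge $(\kappa(x),x)$ is absent, so $\king(\child)$ then consists of two vertices and a single arc and is \emph{not} strongly connected; the paper tacitly works with $k\ge 2$, which is the only case used downstream.
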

\begin{lemma}[Proof: SI] 
  \label{lem:cutvertex}
  If $\king(\child)$ is strongly connected, then it does not contain a
  cut vertex
\end{lemma}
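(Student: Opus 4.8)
The plan is to derive a contradiction from the existence of a cut vertex by combining the local degree constraints of Cor.~\ref{cor:DegK(k)} with the global structure forced by strong connectivity. First I would recall from Cor.~\ref{cor:DegK(k)} that in $\king(\child)$ every reactant vertex $x\in X_\kappa$ has out-degree exactly $1$, and every reaction vertex $r\in R_\kappa$ has in-degree exactly $1$. Throughout, ``cut vertex'' is meant in the usual sense that removing it disconnects the underlying undirected graph; note that under the alternative reading (removal destroying strong connectedness) the statement would already fail for a directed even circuit graph, which is child-selective and strongly connected.

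Second, suppose toward a contradiction that $v\in X_\kappa\cupdot R_\kappa$ is a cut vertex, so the underlying undirected graph of $\king(\child)-v$ has at least two connected components; pick two of them and let $A$ and $B$ be their vertex sets. By definition of a connected component of $\king(\child)-v$, every edge of $\king(\child)$ with exactly one endpoint in $A$ must have its other endpoint equal to $v$, and likewise for $B$; in other words, $A$ (resp.\ $B$) communicates with the rest of $\king(\child)$ only through $v$.

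Third --- the heart of the argument --- I would use strong connectedness to show that $v$ has at least two out-neighbours and at least two in-neighbours. Fix $a\in A$ and $b\in B$. A directed $a\to b$ path in $\king(\child)$ must leave $A$, and by the previous observation the first vertex it reaches outside $A$ is $v$; hence $v$ has an in-neighbour lying in $A$. Continuing along the same path, it must eventually enter $B$, which again can only happen via $v$, so $v$ has an out-neighbour lying in $B$. Applying the identical reasoning to a directed $b\to a$ path yields an in-neighbour of $v$ in $B$ and an out-neighbour of $v$ in $A$. Since $A$ and $B$ are disjoint, these are two distinct out-edges and two distinct in-edges at $v$, so the out-degree of $v$ is at least $2$ and the in-degree of $v$ is at least $2$.

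Finally, I would close with a case distinction on the type of $v$: if $v\in X_\kappa$, then out-degree at least $2$ contradicts the out-degree $1$ guaranteed by Cor.~\ref{cor:DegK(k)}; if $v\in R_\kappa$, then in-degree at least $2$ contradicts the in-degree $1$ from Cor.~\ref{cor:DegK(k)}. In either case we reach a contradiction, so $\king(\child)$ contains no cut vertex. I do not anticipate a genuine obstacle here; the only point requiring care is the bookkeeping in the path argument --- precisely, that a connected component of $\king(\child)-v$ meets the rest of the graph only at $v$, and that the two edges at $v$ obtained for $A$ and for $B$ are genuinely distinct because $A$ and $B$ are vertex-disjoint.
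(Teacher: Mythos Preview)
Your proof is correct and follows essentially the same approach as the paper's: assume a cut vertex, argue from strong connectedness that it must have in-degree and out-degree at least $2$, and then obtain a contradiction with the degree constraints of Cor.~\ref{cor:DegK(k)} via a case distinction on whether the vertex is a substrate or a reaction vertex. If anything, your version is more complete, since you spell out the path argument justifying why a cut vertex in a strongly connected digraph must have at least two in-neighbours and two out-neighbours, whereas the paper simply asserts this implication.
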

Thus, if $\king(\child)$ is strongly connected, then its underlying
undirected graph is also 2-connected. Such graphs are called \emph{strong
blocks} in the literature \cite{grotschel_minimal_1979}. 
Combining Lemmas~\ref{lem:scirreduc}
and \ref{lem:cutvertex} yields the main result of this section:
\begin{theorem}\label{thm:IrrerMetzler}
  Let $\child$ be a CS. Then $\king(\child)$ is a strong block if and only
  if $\Metzler{\SM[\child]}$ is irreducible.
\end{theorem}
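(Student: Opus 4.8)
The plan is to read the statement as a bookkeeping combination of the two preceding lemmas together with the definition of a strong block recalled just above the theorem, namely that a strong block is a strongly connected digraph whose underlying undirected graph is $2$-connected (connected and free of cut vertices). With this in hand there is essentially nothing left to do beyond quoting Lemmas~\ref{lem:scirreduc} and \ref{lem:cutvertex}, so I would keep the argument to a few lines.

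For the forward direction, suppose $\king(\child)$ is a strong block. Then in particular $\king(\child)$ is strongly connected, and Lemma~\ref{lem:scirreduc} immediately gives that $\Metzler{\SM[\child]}$ is irreducible. No further case analysis is needed here.

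For the converse, suppose $\Metzler{\SM[\child]}$ is irreducible. By Lemma~\ref{lem:scirreduc} this forces $\king(\child)$ to be strongly connected; consequently its underlying undirected graph is connected. Lemma~\ref{lem:cutvertex} then applies, since its hypothesis (strong connectivity of $\king(\child)$) is now satisfied, and yields that $\king(\child)$ has no cut vertex. A connected graph with no cut vertex is $2$-connected, so the underlying undirected graph of $\king(\child)$ is $2$-connected, and together with strong connectivity this is exactly the definition of a strong block. Hence $\king(\child)$ is a strong block, completing the equivalence.

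The only point requiring a shred of care — and the closest thing to an obstacle — is making sure the degenerate small cases do not escape the stated conventions: a one-vertex $\king(\child)$ cannot arise because $|X(\king')|=|R(\king')|\ge 1$ forces at least one substrate and one reaction vertex, and the case of two vertices (one substrate, one reaction) should be handled consistently with whichever convention for $2$-connectivity is used in the cited literature \cite{grotschel_minimal_1979}; in the bipartite setting a strongly connected $\king(\child)$ on two vertices is just the digon $x\to r\to x$, which we regard as $2$-connected, matching the fact that its Metzler part (the $1\times1$ matrix with a negative entry) is trivially irreducible. Beyond this remark, everything is a direct citation of the two lemmas.
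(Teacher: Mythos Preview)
Your proof is correct and follows exactly the same approach as the paper, which simply states that the theorem is obtained by combining Lemmas~\ref{lem:scirreduc} and~\ref{lem:cutvertex}. Your write-up is in fact more detailed than the paper's one-line justification, including the discussion of degenerate small cases.
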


\begin{figure}
\centering
\begin{minipage}[c]{0.47\columnwidth}
\centering
  \begin{tikzpicture}[scale=0.8, transform shape]
    \node[draw, circle] (a) at (0,0) {$x_1$};
    \node[draw, rectangle] (b) at (1,1) {$r_1$};
    \node[draw, circle, rounded corners] (c) at (2,0) {$x_2$};
    \node[draw, rectangle] (d) at (1,-1) {$r_2$};
    \node[draw, rectangle] (e) at (-1,1) {$r_3$};
    \node[draw, circle, rounded corners] (f) at (-2,0) {$x_3$};
    \node[draw, rectangle] (g) at (-1,-1) {$r_4$};
 
    \draw[->, rounded corners] (a) -- (b);
    \draw[->, rounded corners] (b) -- (c);
    \draw[->, rounded corners] (c) -- (d);
    \draw[->, rounded corners] (d) -- (a);
    \draw[->, rounded corners] (a) -- (e);
    \draw[->, rounded corners] (e) -- (f);
    \draw[->, rounded corners] (f) to (g);
    \draw[->, rounded corners] (g) -- (a);
    
    \draw[->, rounded corners] (b) to [bend right=60, looseness =2] (f);
    \draw[->, rounded corners] (g) to [bend right=60, looseness =2] (c);
  \end{tikzpicture}
\end{minipage}%
\begin{minipage}[c]{0.47\columnwidth}
\centering
  \begin{tikzpicture}[scale=0.8, transform shape]
    \node[draw, circle] (a) at (0,0) {$x_3$};
      \node[draw, rectangle] (b) at (1,1) {$r_1$};
      \node[draw, circle, rounded corners] (c) at (3,1) {$x_2$};
      \node[draw, rectangle] (d) at (3,-1) {$r_2$};
      \node[draw, rectangle] (e) at (-1,1) {$r_3$};
      \node[draw, circle, rounded corners] (f) at (-2,0) {$x_4$};
      \node[draw, rectangle] (g) at (-1,-1) {$r_4$};
      \node[draw, rectangle] (h) at (1,-1) {$x_1$};
		
      \draw[->, rounded corners] (a) -- (b);
      \draw[->, rounded corners] (b) -- (c);
      \draw[->, red, rounded corners] (c) -- (d);
      \draw[->, rounded corners] (d) to [bend left = 60, looseness = 1.5] (a);
      \draw[->, red, rounded corners] (a) -- (e);
      \draw[->, rounded corners] (e) -- (f);
      \draw[->, red, rounded corners] (f) --  (g);
      \draw[->, rounded corners] (g) -- (a);
      \draw[->, rounded corners] (d) -- (h);
      \draw[->, red, rounded corners] (h) -- (b);
  \end{tikzpicture}
\end{minipage}
\caption{\textbf{Left:} Strongly-connected bipartite graph without a
  cut-vertex but not child-selective, as the four reaction vertices are
  more than the three species vertices. \textbf{Right:} Strongly-connected,
  child-selective (red edges) bipartite digraph with a cut-vertex.}
\label{fig:NonCSStrCon}
\end{figure}
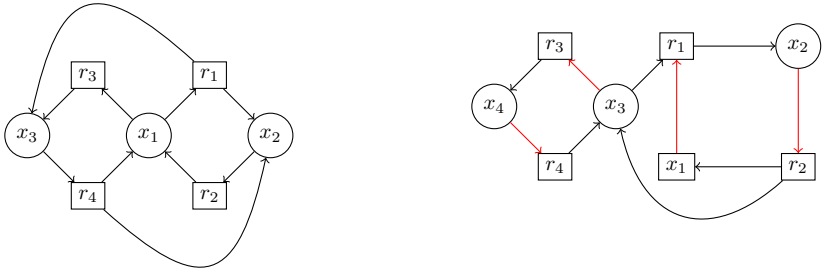

The delicate interplay between the notions of child-selectiveness and
strong blocks is exemplified in Fig.~\ref{fig:NonCSStrCon} and
Fig~\ref{fig:autoCore}. In particular, Fig.~\ref{fig:autoCore}b and
Fig.~\ref{fig:autoCore}c depict graphs that are strongly connected, albeit
not strong blocks, but not child-selective. Fig.~\ref{fig:autoCore}d and
Fig.~\ref{fig:autoCore}e exemplify strong-blocks that are not
child-selective.

We conclude this section by revisiting the notion of autocatalysis and
expressing Def.~\ref{def:autocatmatrix} in terms of the following
graph-theoretic characterization of an autocatalytic CS. We recall the
standard notions of \emph{source} and \emph{sink} vertices in a directed
graph, i.e. vertices with no incoming edges (zero in-degree)
and with no outgoing edges (zero out-degree), respectively.
\begin{lemma}[Proof: SI] 
  \label{lem:nosourcesink}
  A CS $\child=(X_\kappa,E_\kappa,\kappa)$ is autocatalytic if and only if
  the following conditions both hold:
  \begin{enumerate}
  \item there is a positive vector $v>0$ such
    that $\SM[\child]v >0$.
  \item 
    $\king(\child)$ does not possess source and sink vertices;
  \end{enumerate} 
\end{lemma}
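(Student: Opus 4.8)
The plan is to unwind both right-hand conditions into statements about the entries of the CS-matrix $\SM[\child]$ and then compare with Definition~\ref{def:autocatmatrix}. First I would record two consequences of the standing assumption that no explicitly autocatalytic reactions occur. Since $\kappa$ is a CS bijection we have $s^-_{w\kappa(w)}>0$ for every $w\in X_\kappa$, and by the standing assumption $\kappa(w)$ does not also produce $w$, so $s^+_{w\kappa(w)}=0$ and hence the diagonal entry $\SM[\child]_{ww}=-s^-_{w\kappa(w)}$ is strictly negative; call this (a). Likewise, for all $x,w\in X_\kappa$ one has $\SM[\child]_{xw}>0 \iff s^+_{x\kappa(w)}>0 \iff (\kappa(w),x)\in E(\king(\child))$; call this (b). By (a) every column of $\SM[\child]$ already contains a negative entry, so for CS-matrices clause (ii) of Definition~\ref{def:autocatmatrix} collapses to ``every column of $\SM[\child]$ contains a positive entry''. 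Since condition~1 of the lemma is literally clause~(i), the task reduces to proving that, assuming condition~1, condition~2 is equivalent to ``every column of $\SM[\child]$ contains a positive entry''.

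Next I would translate ``source'' and ``sink'' via Corollary~\ref{cor:DegK(k)}: in $\king(\child)$ every species vertex has out-degree $1$ and every reaction vertex has in-degree $1$, so a species is never a sink and a reaction is never a source. Hence $\king(\child)$ has no sink iff every reaction vertex $\kappa(w)$ has positive out-degree, and no source iff every species vertex $x$ has positive in-degree. By (b), positive out-degree of $\kappa(w)$ is exactly ``column $w$ of $\SM[\child]$ has a positive entry'', and positive in-degree of $x$ is exactly ``row $x$ of $\SM[\child]$ has a positive entry''. So condition~2 is equivalent to: every row and every column of $\SM[\child]$ contains a positive entry.

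The final step is to observe that the ``every row'' half comes for free once condition~1 holds: if $v>0$ with $\SM[\child]v>0$, then for each row $x$ the sum $\sum_w\SM[\child]_{xw}v_w$ is positive with all $v_w>0$, forcing $\SM[\child]_{xw}>0$ for some $w$. Thus, in the presence of condition~1, ``every column has a positive entry'' already implies ``every row has a positive entry'', so condition~2 is equivalent to clause~(ii). Putting the pieces together yields $\child$ autocatalytic $\iff$ (i)$\,\wedge\,$(ii) $\iff$ condition~1 $\wedge$ condition~2, which is the claim.

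I expect the only genuinely delicate point to be establishing (b) cleanly, i.e.\ that the positive entries of $\SM[\child]$ are precisely the $E_2$-edges present in $\king(\child)$. This is exactly where explicitly (auto)catalytic reactions would interfere: a species $x$ that is both a reactant and a product of $\kappa(w)$ could create an edge $(\kappa(w),x)$ of $\king(\child)$ with no matching positive entry $\SM[\child]_{xw}$, and the standing assumption is what excludes this. A secondary, mildly non-obvious point is the redundancy of the ``no source'' requirement once condition~1 is assumed — this is why condition~2 can package both ``no source'' and ``no sink'' even though clause~(ii) of the matrix definition only speaks of columns.
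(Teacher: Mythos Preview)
Your proof is correct and follows essentially the same route as the paper's: both reduce condition~2 to statements about in/out-degrees via Corollary~\ref{cor:DegK(k)}, use the matching edges to handle the ``automatic'' half (species are never sinks, reactions never sources), and observe that condition~1 already forces every row to have a positive entry so that the ``no source species'' clause is redundant. Your version is slightly more explicit in isolating the biconditional (b) between positive matrix entries and $E_2$-edges of $\king(\child)$, and in flagging that this step is exactly where the no-explicit-catalysis assumption enters; the paper's proof uses the same fact implicitly.
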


\subsection{Fluffles and Circuitnets} 

Let us summarize the main arguments in the discussion so far. (1) By
Prop.~\ref{prop:Metzlerpartaut}, we may restrict our attention to the
Metzler parts of CS matrices. (2) By Prop.~\ref{prop:autored}, an arbitrary
autocatalytic CS matrix is either irreducible or it contains one or more
disjoint irreducible autocatalytic blocks, so we may focus on CS
matrices with irreducible Metzler part. (3) By Lemma~ \ref{lem:scirreduc}
and Thm.~\ref{thm:IrrerMetzler}, CS matrices with irreducible Metzler part
correspond to strong blocks $\king(\child)$. (4) Using
Cor.~\ref{cor:DegK(k)}, Lemma~\ref{lem:CSgraph}, and
Thm.~\ref{thm:IrrerMetzler}, we finally derive the following central
observation:
\begin{proposition}
  \label{prop:usefluffles} 
  Let $G$ be a subgraph of $\king$. Then there is a CS
  $\child$ with a CS matrix that has an irreducible Metzler part
  $\Metzler{\SM[\child]}$ such that $G=\king(\child)$ if and only if $G$
  satisfies
  \begin{itemize}
  \item[(i)] $G$ is bipartite with vertex partition $V(G)=X(G)\cup R(G)$
    such that $|X(G)|=|R(G)|$,
  \item[(ii)] if $x\in X'$ then $x$ has out-degree $1$ and if $r\in R'$, then
    $r$ has in-degree $1$
  \item[(iii)] $G$ is a strongly connected block.
  \end{itemize}
\end{proposition} 
We call a graph satisfying these three properties a
\emph{fluffle}\footnote{\emph{Fluffle} is an informal, whimsical term for a
group of rabbits, motivated by the rabbit's ears in our context.}, and we
denote such subgraphs with $G$ throughout. Prop.~\ref{prop:usefluffles} in
particular implies that the enumeration of fluffle subgraphs in the
K{\"o}nig graph $\king$ encompasses all autocatalytic CS matrices with
irreducible Metzler part. In the following, we describe how the fluffles
can be constructed recursively.

Strongly connected blocks are precisely those graphs that admit an
\emph{open directed ear decomposition} \cite{grotschel_minimal_1979}, i.e.,
they can be constructed from an elementary circuit by iteratively adding
open directed ears, which are directed paths whose endpoints are distinct
vertices already present in the graph. Throughout the remainder of this
contribution, we refer to open directed ear decompositions and open
directed ears simply as ``ear decompositions'' and ``ears'',
respectively. After attaching an ear, the interior vertices of the ear have
in-degree and out-degree $1$, while its initial vertex has out-degree $>1$
and its terminal vertex has in-degree $>1$. In our bipartite setting, we
then have the following theorem characterizing fluffle graphs.
\begin{theorem}[Proof: SI] 
  \label{thm:Woffle}
  A graph $G$ is a fluffle if and only if it is bipartite with vertex set
  $X\cupdot R$ and it has an ear decomposition such that every ear
  initiates in a reaction vertex $r\in R$ and terminates in a substrate
  vertex $x\in X$. In this case, all directed open ear decompositions have
  this property.
\end{theorem}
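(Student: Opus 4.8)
The plan is to establish both directions by tracking vertex in- and out-degrees along an open directed ear decomposition, relying on two facts recalled in the preceding discussion: first, a directed graph is a strong block if and only if it admits an open directed ear decomposition; second, every intermediate graph of such a decomposition is strongly connected, and attaching an ear gives its interior vertices in-degree and out-degree exactly $1$ while raising the out-degree of its initial vertex and the in-degree of its terminal vertex by one.

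For the forward implication I would argue as follows. If $G$ is a fluffle, then by fluffle property (i) it is bipartite with parts $X(G)\cupdot R(G)$, and by property (iii) together with the first fact it admits an open directed ear decomposition; fix any such decomposition with initial circuit $C_0$ and ears $P_1,\dots,P_m$. The key observation is that when $P_i$ is attached, its initial vertex already lies in the preceding graph, which is strongly connected and hence gives it out-degree $\ge 1$; the new ear adds one more out-edge, so this vertex has out-degree $\ge 2$ in $G$. Since fluffle property (ii) forces every substrate vertex to have out-degree exactly $1$, the initial vertex of $P_i$ must be a reaction vertex; the dual argument on in-degrees forces the terminal vertex of $P_i$ to be a substrate vertex. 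Because this uses nothing specific about the chosen decomposition, it proves simultaneously that some ear decomposition has the claimed orientation property and the ``moreover'' assertion that all of them do.

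For the converse I would suppose $G$ is bipartite with parts $X\cupdot R$ and has an ear decomposition in which every ear runs from $R$ to $X$, and then verify properties (i)--(iii). Property (iii) is immediate from the first fact. For (i) I would use that $C_0$, being an elementary circuit of a bipartite graph, is an alternating cycle with equally many $X$- and $R$-vertices, and that an ear from a reaction vertex to a substrate vertex is an alternating path whose newly introduced interior vertices split evenly between $X$ and $R$ (the vertex just before the terminal substrate vertex lies in $R$, the one just after the initial reaction vertex lies in $X$); summing over the decomposition yields $|X(G)|=|R(G)|$. For (ii) I would track degrees through the construction: in $C_0$ all degrees are $1$; each interior ear vertex is new and receives in- and out-degree $1$; and, crucially, under the orientation hypothesis a reaction vertex, once created, can only ever gain out-edges (as an ear's initial vertex), never in-edges, so it keeps in-degree $1$, and dually every substrate vertex keeps out-degree $1$. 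Hence $G$ satisfies (i)--(iii) and is a fluffle.

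I expect the main obstacle to be the careful bookkeeping in the converse: one must confirm that the orientation hypothesis on the ears is exactly what prevents a reaction vertex from ever being the terminal endpoint of an ear and a substrate vertex from ever being an initial endpoint, since those are the only operations that could destroy the degree-$1$ constraints fixed at the moment a vertex is created. The remaining steps are routine consequences of the two recalled facts about open directed ear decompositions and of the alternation of paths and cycles in bipartite graphs.
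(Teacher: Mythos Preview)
Your proposal is correct and follows essentially the same approach as the paper: both directions exploit the equivalence of strong blocks with open directed ear decompositions, the forward direction deduces the $R$-to-$X$ orientation of every ear from the degree constraint~(ii) by observing that an ear's initial vertex acquires out-degree $\ge 2$, and the converse verifies (i)--(iii) by inductively tracking $|X|$, $|R|$, and the in/out-degrees along the decomposition. Your degree-tracking for~(ii) in the converse is phrased a bit more crisply than the paper's, but the substance is identical.
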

It will be useful to note that strong blocks in fluffles
are again fluffles themselves:
\begin{lemma}[Proof: SI] 
  \label{lem:subwoffle}
  Let $G$ be a fluffle in $\king$ and $G'$ a subgraph of $G$ that
  is a strong block. Then $G'$ is a fluffle.
\end{lemma}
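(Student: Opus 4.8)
The strategy is to verify, one at a time, the three defining properties of a fluffle from Prop.~\ref{prop:usefluffles} for the subgraph $G'$. Property (iii) is immediate: $G'$ is assumed to be a strong block. It remains to establish (i), that $G'$ is bipartite with $|X(G')|=|R(G')|$, and (ii), the degree constraints on substrate and reaction vertices. Bipartiteness of $G'$ is inherited trivially from $G$, with vertex partition $X(G')=X(G)\cap V(G')$ and $R(G')=R(G)\cap V(G')$; the content is in the cardinality equality.

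For property (ii), I would argue as follows. Let $x\in X(G')$. In $G$ the vertex $x$ has out-degree exactly $1$, so its unique out-neighbour is some reaction $r=\kappa(x)$. Since $G'$ is strongly connected and $|V(G')|\ge 2$, every vertex of $G'$ has out-degree $\ge 1$ in $G'$; as $G'$ is a subgraph of $G$, the out-degree of $x$ in $G'$ is at most its out-degree in $G$, hence exactly $1$, and moreover $r\in R(G')$. Symmetrically, every $r\in R(G')$ has in-degree $\ge 1$ in $G'$ by strong connectedness and $\le 1$ inherited from $G$, so in-degree exactly $1$, and its unique in-neighbour in $G$ lies in $X(G')$. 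This simultaneously proves (ii) and shows that $\kappa$ restricts to a well-defined map $\kappa': X(G')\to R(G')$. Now property (i): the map sending $x\in X(G')$ to its unique out-neighbour is injective (it is the restriction of the injection $\kappa$), and the map sending $r\in R(G')$ to its unique in-neighbour is its two-sided inverse, since in $G$ the edge $(x,\kappa(x))$ is the only edge into $\kappa(x)$. Hence $\kappa'$ is a bijection and $|X(G')|=|R(G')|$, giving (i).

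With (i), (ii), (iii) verified, Prop.~\ref{prop:usefluffles} (applied in the direction: these three properties imply fluffle) immediately yields that $G'$ is a fluffle, and in fact $G'=\king(\child')$ for the Child-Selection $\child'=(X(G'),R(G'),\kappa')$. I expect the main subtlety — though it is minor — to be the observation that the degree bounds force \emph{both} the unique out-neighbour of each $x\in X(G')$ and the unique in-neighbour of each $r\in R(G')$ to actually stay inside $G'$; this is exactly where strong connectedness of $G'$ (no source or sink vertex, via Lemma~\ref{lem:nosourcesink} or directly) is used, and it is what guarantees that $\kappa$ descends to a genuine CS-bijection on the smaller vertex sets rather than a partial map. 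Everything else is bookkeeping on inherited degree inequalities.
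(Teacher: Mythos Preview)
Your proposal is correct and follows essentially the same approach as the paper: verify the three conditions of Prop.~\ref{prop:usefluffles} using that degree bounds are inherited from $G$ while strong connectedness of $G'$ forces every vertex to have positive in- and out-degree. The only cosmetic difference is in how $|X(G')|=|R(G')|$ is obtained: you exhibit the restricted map $\kappa'$ and its inverse explicitly, whereas the paper counts edges in $E_1(G')$ via a pigeonhole/contradiction argument---both amount to the same observation that $E_1(G')$ is a perfect matching once condition (ii) is in hand.
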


Clearly, every elementary circuit in $\king$ is a fluffle. It is therefore
natural to identify fluffles as unions of elementary circuits. According to
Thm.~\ref{thm:Woffle}, however, such unions must be consistent with an ear
decomposition in which each ear originates at a reaction vertex and
terminates at a substrate vertex. This requirement is made explicit in the
following theorem. See also Fig.~\ref{fig:autoCore} for an overview of
possible ways to combine two elementary circuits, only one of which
actually constitutes a fluffle.

\begin{figure}
  \centering
  \includegraphics[width=\columnwidth]{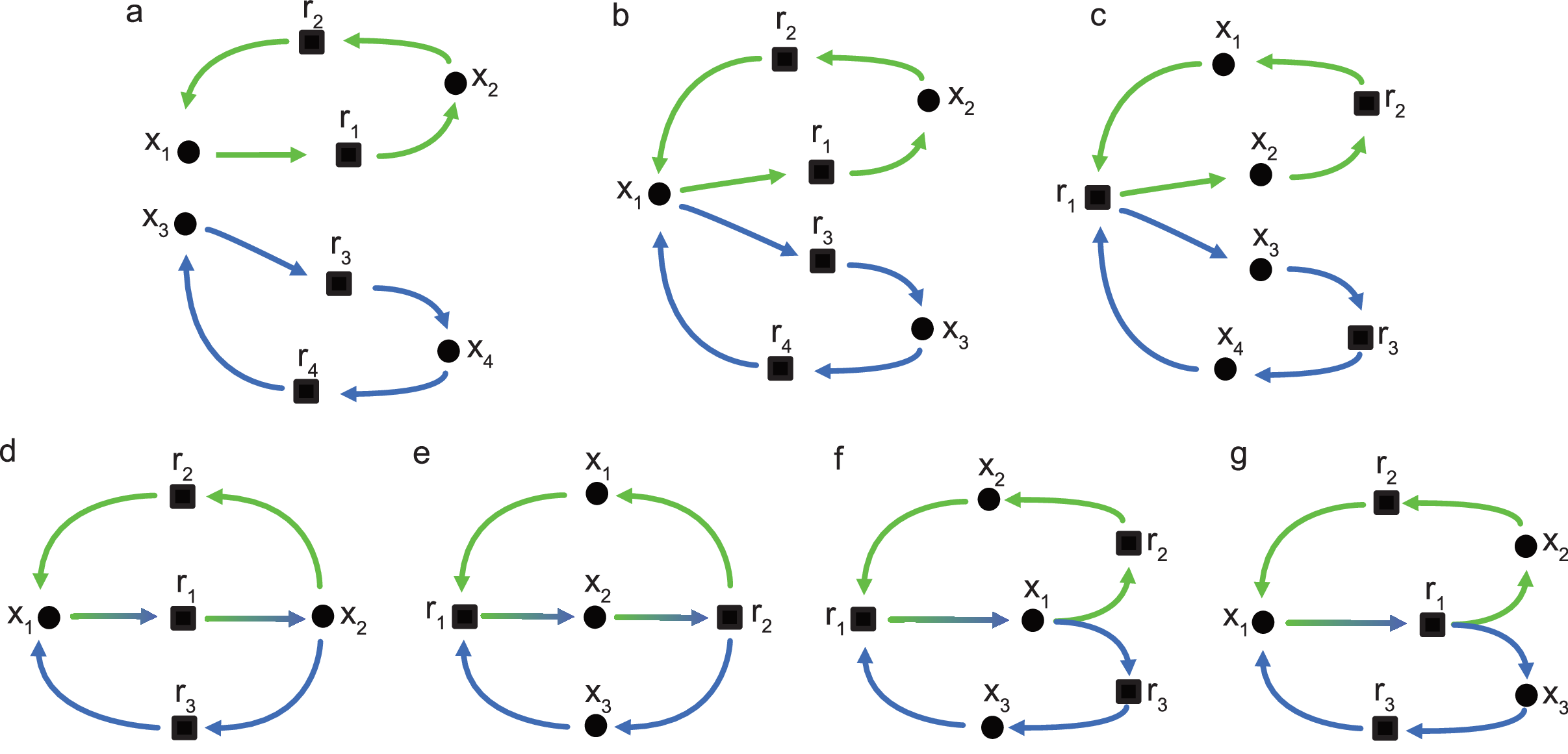}
  \caption{Unions of two elementary circuits, depicted in green and blue,
    yielding different configurations: \textbf{a} is not connected and
    thereby violates the fluffle condition $(iii)$ in
    Prop.~\ref{prop:usefluffles}, \textbf{b-e} possess more metabolites
    than reaction vertices or vice-versa, and contradict fluffle condition
    $(i)$ in Prop.~\ref{prop:usefluffles}. In addition, \textbf{b} and
    \textbf{c} are also not a strong block and thus violate $(iii)$ as
    well. The union of two elementary circuits as depicted in
    \textbf{f} has a substrate vertex $x_1$ with out-degree two and a
    reaction vertex $r_1$ with in-degree two, which contradicts fluffle
    condition $(ii)$ in Prop.~\ref{prop:usefluffles}.  The only combination
    consistent with the fluffle definition is depicted in \textbf{g}, where
    indeed the intersection of the two elementary circuits is a path from a
    substrate to a reaction vertex, as prescribed by
    Thm.~\ref{thm:woffle-union}. The two elementary circuits in \textbf{d},
    \textbf{e}, \textbf{f}, moreover, also, constitute an example of
    circuitnets that are not associated with fluffles.}
\label{fig:autoCore}
\end{figure}

\begin{theorem}[Proof: SI] 
  \label{thm:woffle-union}
  Let $G$ be a fluffle with vertex partition $X\cupdot R$ and $C$ an
  elementary circuit such that $\emptyset \subset G\cap C \subset C$. Then,
  the connected components of $G\cap C$ are directed paths $P_i$. Moreover,
  $G\cup C$ is a fluffle if, and only if, all such paths $P_i$ start from a
  substrate vertex $x_i \in X$ and terminate with a reaction vertex $r_i
  \in R$.
\end{theorem}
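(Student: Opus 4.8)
The plan is to prove Theorem~\ref{thm:woffle-union} in two stages: first establish the structural claim that the connected components of $G \cap C$ are directed paths, then characterize exactly when $G \cup C$ is again a fluffle, using the ear-decomposition characterization from Theorem~\ref{thm:Woffle}.

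\medskip\noindent\textbf{Step 1: the intersection components are directed paths.} Both $G$ and $C$ are subgraphs of $\king$ in which (by Prop.~\ref{prop:usefluffles}(ii), resp.\ the analogous trivial fact for an elementary circuit) every substrate vertex has out-degree $1$ and every reaction vertex has in-degree $1$. Taking the intersection $G \cap C$ can only decrease degrees, so in $G \cap C$ every substrate vertex still has out-degree $\le 1$ and every reaction vertex in-degree $\le 1$. Since the graph is bipartite with arcs alternating $X \to R \to X$, this means every vertex of $G \cap C$ has in-degree $\le 1$ and out-degree $\le 1$: such a digraph is a disjoint union of directed paths and directed cycles. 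A directed cycle in $G \cap C \subseteq C$ would, since $C$ is an elementary circuit, have to be all of $C$, contradicting $G \cap C \subsetneq C$. Hence every component of $G \cap C$ is a directed path $P_i$; a path may be trivial (a single vertex), which is the case to watch when counting degrees later.

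\medskip\noindent\textbf{Step 2: sufficiency.} Suppose every $P_i$ runs from a substrate vertex $x_i \in X$ to a reaction vertex $r_i \in R$. I want to exhibit an ear decomposition of $G \cup C$ with every ear initiating at a reaction vertex and terminating at a substrate vertex, and then invoke Theorem~\ref{thm:Woffle}. Start from an ear decomposition of the fluffle $G$ (which exists and has the required property by Thm.~\ref{thm:Woffle}). The arcs of $C$ not already in $G$ are exactly $E(C) \setminus E(G)$; removing the closed paths $P_i$ from the circuit $C$ decomposes $C \setminus \bigcup_i P_i$ into a set of directed paths $Q_j$, each of which begins at the terminal vertex of some $P_i$ — a reaction vertex $r_i$ — and ends at the initial vertex of some $P_\ell$ — a substrate vertex $x_\ell$ (using that going around $C$ the pieces $P_i$ and $Q_j$ alternate, and each $P_i$ is oriented $x_i \to r_i$). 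Each $Q_j$ is therefore a directed path whose endpoints are distinct vertices already present (in $G$, since the $x_i, r_i$ lie in $G \cap C \subseteq G$), so it is a legitimate open ear, initiating at a reaction vertex and terminating at a substrate vertex. Appending the $Q_j$ one at a time to the ear decomposition of $G$ yields an ear decomposition of $G \cup C$ of the required type. One must check that $G \cup C$ is connected and that the resulting graph is still a strong block: connectivity and strong connectedness follow because adding ears between existing vertices preserves strong connectedness, and the absence of a cut vertex follows from Lemma~\ref{lem:cutvertex} once strong connectedness is established (or directly from the ear-decomposition structure). Also, when some $P_i$ is a single vertex the "ears" $Q_j$ on either side of it simply meet at that vertex — still fine for an ear decomposition. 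By Theorem~\ref{thm:Woffle}, $G \cup C$ is a fluffle.

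\medskip\noindent\textbf{Step 3: necessity.} Suppose some component $P_i$ does \emph{not} start at a substrate vertex or does not end at a reaction vertex. I claim $G \cup C$ then violates one of the fluffle conditions (i)–(iii) of Prop.~\ref{prop:usefluffles}. The cleanest route is to track the degree conditions (ii): in $G \cup C$, a substrate vertex $x$ has out-degree equal to its out-degree in $G$ plus whatever new out-arc it gains from $C$; since $x$ already has out-degree $1$ in $G$ (if $x \in X(G)$), any out-arc of $x$ in $C$ that is not already in $G$ pushes its out-degree to $2$. Such an arc exists precisely when $x$ is the tail of an arc of $C \setminus G$, i.e.\ when $x$ is the starting vertex of some "gap path" $Q_j$ — equivalently, by the alternation around $C$, when $x$ is the terminal vertex of some $P_i$. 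But $P_i$'s terminal vertex lies in $X$ exactly when $P_i$ does not end at a reaction vertex. Symmetrically, a reaction vertex gets in-degree $2$ precisely when it is the initial vertex of some $P_i$ that lies in $R$, i.e.\ when $P_i$ does not start at a substrate vertex. Either way condition (ii) fails. (If $x \notin X(G)$, i.e.\ $x$ is a new vertex coming only from $C$, then $x$ is interior to a gap path $Q_j$ and keeps out-degree $1$, so such vertices cause no problem — the only candidates for degree violations are the endpoints of the $P_i$, as claimed.) Hence if $G \cup C$ is a fluffle, every $P_i$ must start at a substrate vertex and end at a reaction vertex, completing the equivalence.

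\medskip\noindent\textbf{Anticipated main obstacle.} The subtle point is the bookkeeping in Steps 2–3 around \emph{trivial} components $P_i$ (single vertices) and around vertices of $C$ that are not in $G$. A single shared vertex is simultaneously the "start" and "end" of $P_i$, so the conditions "$P_i$ starts in $X$" and "$P_i$ ends in $R$" both have to be interpreted correctly (a trivial $P_i$ satisfies the hypothesis iff that lone vertex is in $X\cap R=\emptyset$ — impossible — so in fact a trivial shared vertex always forces a degree violation unless it is an endpoint of exactly one gap path on each side in a compatible way; this needs to be stated carefully). I expect the bulk of the genuine work to be in phrasing the alternation-around-$C$ argument precisely enough that the correspondence "endpoints of $P_i$ $\leftrightarrow$ degree-$2$ vertices in $G\cup C$" is airtight, and in confirming that no \emph{new} cut vertex is created — though the latter is handled by Lemma~\ref{lem:cutvertex} together with strong connectedness, so it should reduce to checking that $G\cup C$ is strongly connected, which is immediate from the ear-decomposition viewpoint.
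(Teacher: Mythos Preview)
Your argument is correct and follows essentially the same route as the paper: identify the complementary arcs $Q_j$ of $C\setminus G$ as open ears attached to $G$, and invoke Theorem~\ref{thm:Woffle}. Two small remarks. First, in Step~1 the degree bounds in $G\cap C$ follow immediately from $G\cap C\subseteq C$ (every vertex of an elementary circuit has in- and out-degree exactly~$1$); the bipartite structure alone does not give you the in-degree bound for substrate vertices or the out-degree bound for reaction vertices, so that sentence should be rephrased. Second, for necessity the paper takes a slightly shorter path: since Theorem~\ref{thm:Woffle} states that \emph{every} open ear decomposition of a fluffle has the reaction-to-substrate property, and since the decomposition ``(ears of $G$) $+$ (the $Q_j$)'' is one particular open ear decomposition of $G\cup C$, the fluffle condition on $G\cup C$ is equivalent to each $Q_j$ running $R\to X$, which is complementary to each $P_i$ running $X\to R$. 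Your direct degree-counting argument for the contrapositive is equally valid and arguably more self-contained, since it does not rely on the ``all decompositions'' clause; it also transparently handles the single-vertex $P_i$ case you flag (such a $P_i$ can never satisfy both endpoint conditions, and your degree argument shows directly why the union then fails condition~(ii)).
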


\begin{definition}\label{def:cnet}
  A set $\mathcal{C}=\{C_1,\dots,C_n\}$ of elementary circuits in
  $\king$ is a \emph{circuitnet} if there is an ordering $\pi$ such that
  the union $G_k\coloneqq \bigcup_{i=1}^{k}C_{\pi(i)}$ is a strong block
  for all $1\le k\le n$. 
\end{definition}
We say that $\mathcal{C}$ is a circuitnet for a graph $G$ if $G$, as a
graph, is the union of all the elementary circuits in $\mathcal{C}$, and we
write $G=\bigcup(\mathcal{C})$. The next theorem guarantees that there is a
circuitnet for any fluffle $G$.

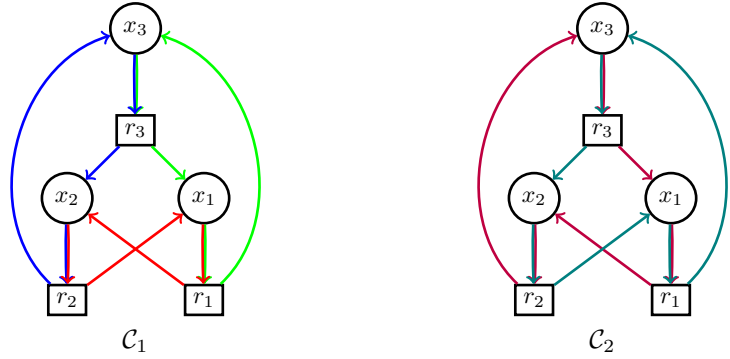
\begin{figure}
\centering
  \begin{minipage}[c]{0.47\columnwidth}
    \centering
    \begin{tikzpicture}[scale=0.9, transform shape, line width = 1pt]
      \node[draw, rectangle] (a) at (1,0) {$r_1$};     
      \node[draw, rectangle] (b) at (-1,0) {$r_2$};
      \node[draw, rectangle] (d) at (0,2.5) {$r_3$};
      \node[draw, circle] (w) at (0,4) {$x_3$};
      \node[draw, circle] (x) at (1,1.5) {$x_1$};            
      \node[draw, circle] (y) at (-1,1.5) {$x_2$};
      \draw[->, green] (a) to [bend right =60] (w);
      \draw[->, green] (w) to  [out = 272, in = 88] (d);
      \draw[->, green] (d) -- (x);
      \draw[->, green] (x) to [out = 272, in = 88] (a);
      \draw[->, blue] (w) to [out = 268, in = 92] (d);
      \draw[->, blue] (d) -- (y);
      \draw[->, blue] (y) to [out = 268, in = 92] (b);
      \draw[->, blue] (b) to [bend left = 60] (w);     
      \draw[->, red] (a) -- (y);
      \draw[->, red] (y) to [out = 272, in = 88] (b);
      \draw[->, red] (b) -- (x);
      \draw[->, red] (x) to [out = 268, in = 92] (a);
    \end{tikzpicture}
    \par\noindent
    $\mathcal{C}_1$
  \end{minipage}%
  \begin{minipage}[c]{0.47\columnwidth}
    \centering
    \begin{tikzpicture}[scale=0.9, transform shape, line width = 1pt]
      \node[draw, rectangle] (a) at (1,0) {$r_1$};     
      \node[draw, rectangle] (b) at (-1,0) {$r_2$};
      \node[draw, rectangle] (d) at (0,2.5) {$r_3$};
      \node[draw, circle] (w) at (0,4) {$x_3$};
      \node[draw, circle] (x) at (1,1.5) {$x_1$};            
      \node[draw, circle] (y) at (-1,1.5) {$x_2$};
      \draw[->, purple] (b) to [bend left = 60] (w);
      \draw[->, purple] (w) to  [out = 272, in = 88] (d);
      \draw[->, purple] (d) -- (x);
      \draw[->, purple] (x) to [out = 272, in = 88] (a);
      \draw[->, purple] (a) -- (y);        
      \draw[->, purple] (y) to [out = 272, in = 88] (b);
      \draw[->, teal] (a) to [bend right =60] (w);
      \draw[->, teal] (w) to [out = 268, in = 92] (d);
      \draw[->, teal] (d) -- (y);
      \draw[->, teal] (y) to [out = 268, in = 92] (b);
      \draw[->, teal] (b) -- (x);
      \draw[->, teal] (x) to [out = 268, in = 92] (a);	
    \end{tikzpicture}
    \par\noindent
    $\mathcal{C}_2$
  \end{minipage}%
  \vspace*{3mm}
  \caption{The fluffle corresponding to an autocatalytic core of Type V,
    Eq.~\eqref{eq:coretype5}. The fluffle $F$ is depicted as a union of the
    elementary circuits in two \emph{different} circuitnets.  The
    circuitnet $\mathcal{C}_1=\{C_1,C_2,C_3\}$ \textbf{(left)} comprises
    the three elementary circuits $C_1=(x_1,r_1,x_3,r_3,x_1)$
    (green), $C_2=(x_2,r_2,x_3,r_3,x_2)$ (blue), and
    $C_3=(x_1,r_1,x_2,r_2,x_1)$ (red); the circuitnet
    $\mathcal{C}_2=\{C_4,C_5\}$ \textbf{(right)} consists of the two
    elementary circuits $C_4=(x_1,r_1,x_3,r_3,x_2,r_2,x_1)$ (teal) and
    $C_5=(x_1,r_1,x_2,r_2,x_3,r_3,x_1)$ (purple).  We have
    $F=\bigcup(\mathcal{C}_1)=\bigcup(\mathcal{C}_2)$.}
  \label{fig:altFluf}
\end{figure}
  
\begin{theorem}\label{thm:flufflecnets}
  Let $G$ be a fluffle. Then there exists a circuitnet $\mathcal{C}$ for
  $G$, i.e.,
\begin{equation}
 G=\bigcup(\mathcal{C}).  
\end{equation}
\end{theorem}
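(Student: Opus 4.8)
The plan is to extract the circuitnet directly from an open directed ear decomposition of $G$: the first object of such a decomposition is already an elementary circuit, and each subsequent ear is turned into an elementary circuit by closing it up with a directed path through the portion of $G$ already constructed.

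Concretely, since $G$ is a fluffle it is a strong block by Prop.~\ref{prop:usefluffles}, so by the ear-decomposition characterization of strong blocks (cf.\ the discussion preceding Thm.~\ref{thm:Woffle} and \cite{grotschel_minimal_1979}) there is an elementary circuit $C_0$ and directed open ears $P_1,\dots,P_m$ such that, setting $G_0\coloneqq C_0$ and $G_k\coloneqq G_{k-1}\cup P_k$, one has $G_m=G$; moreover every partial graph $G_k$ is again a strong block. (This last fact is standard: adjoining an open ear to a $2$-connected graph preserves $2$-connectedness, and an interior vertex of an ear with endpoints $u,v\in V(G_{k-1})$ is reached from $u$ and reaches $v$, so strong connectedness is preserved as well; by Thm.~\ref{thm:Woffle} each $P_k$ runs from a reaction vertex to a substrate vertex, although this will not be needed here.)

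The first step is to replace each ear by a circuit. For $k\ge 1$ let $u_k$ and $v_k$ be the initial and terminal vertex of $P_k$; both lie in $V(G_{k-1})$, and since $G_{k-1}$ is strongly connected it contains a simple directed path $Q_k$ from $v_k$ to $u_k$. Put $C_k\coloneqq P_k\cup Q_k$. Because the interior vertices of the ear $P_k$ are not vertices of $G_{k-1}$ whereas $Q_k\subseteq G_{k-1}$, the paths $P_k$ and $Q_k$ meet exactly in $\{u_k,v_k\}$; hence concatenating the directed path $P_k$ (from $u_k$ to $v_k$) with $Q_k$ (from $v_k$ to $u_k$) yields an elementary directed circuit of $G$, and therefore of $\king$. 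The second step is to check that $\mathcal{C}\coloneqq\{C_0,C_1,\dots,C_m\}$, ordered as listed, is a circuitnet for $G$. Since $Q_k\subseteq G_{k-1}$ we have $G_{k-1}\cup C_k=G_{k-1}\cup P_k=G_k$, so by induction $\bigcup_{i=0}^{k}C_i=G_k$ for every $k$; in particular $\bigcup(\mathcal{C})=\bigcup_{i=0}^{m}C_i=G_m=G$. Each of these partial unions $G_k$ is a strong block, which is exactly the requirement of Def.~\ref{def:cnet}. (Incidentally the $C_i$ are pairwise distinct, since for $i<k$ one has $C_i\subseteq G_{k-1}$ while $C_k$ contains an edge of $P_k$ absent from $G_{k-1}$, though distinctness is not required by the statement.)

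The one point that needs care is the interplay between the ear $P_k$ and its closing path $Q_k$: it is essential to choose $Q_k$ inside the already-built subgraph $G_{k-1}$. This simultaneously guarantees that $Q_k$ avoids the fresh interior of $P_k$ — so that $C_k$ is genuinely elementary — and that adjoining $C_k$ enlarges the running union by nothing beyond $P_k$ — so that the partial unions coincide with the strong blocks $G_k$ of the ear decomposition. Strong connectedness of each $G_{k-1}$, which is exactly what the ear-decomposition description of strong blocks provides, is what makes a suitable $Q_k$ available; apart from this, the argument is a routine induction.
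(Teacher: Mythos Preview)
Your proof is correct and follows essentially the same approach as the paper: both arguments take an open directed ear decomposition of the strong block $G$, close each ear $P_k$ to an elementary circuit $C_k$ via a directed path in $G_{k-1}$, and observe that the partial unions coincide with the strong blocks $G_k$. Your version is in fact more explicit than the paper's, which simply cites the standard connection between ear decompositions and circuit bases.
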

\begin{proof}
  The statement follows directly from the well-known connection of
  elementary circuits, ear decompositions, and cycle bases: The cycle
  space of strongly-connected digraphs has a circuit basis \cite{Berge:73},
  and for strong blocks, such a basis can be constructed from an ear
  decomposition by completing each ear $P_i$ to a directed circuit $C_i$
  using any directed path in $G_{i-1}$ from the terminal to the initial
  vertex of the ear. In particular, therefore, every fluffle has a
  circuitnet.
\end{proof}

Although there is at least one circuitnet for a fluffle $G$,
in general, there may exist more circuitnets for the same fluffle. An
example is depicted as the autocatalytic core of type V,
\eqref{eq:coretype5}: see Fig.~\ref{fig:altFluf}. A simple corollary
follows from Thm.~\ref{thm:flufflecnets}.
\begin{corollary}\label{cor:superimposingfluffle}
  Let $\mathcal{C}$ be a circuitnet for the fluffle $G$. Then the following
  hold true.
  \begin{enumerate}
  \item $\mathcal{C}'\subseteq\mathcal{C}$ is a circuitnet for a fluffle
    $G'\subseteq G$ if and only if $G'$ is a strong block.
  \item There exist an ordering $\pi$ of the circuits in $\mathcal{C}$ such
    that $G_k=\bigcup_{i=1}^k C_{\pi(i)}$ is a fluffle.
\end{enumerate}
\end{corollary}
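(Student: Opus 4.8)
The plan is to derive both parts directly from Theorem~\ref{thm:flufflecnets}, Lemma~\ref{lem:subwoffle}, and the definition of a circuitnet (Def.~\ref{def:cnet}).

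For part~(1), the forward direction is immediate: if $\mathcal{C}'\subseteq\mathcal{C}$ is a circuitnet for a fluffle $G'$, then $G'$ is a strong block by the fluffle condition~$(iii)$ in Prop.~\ref{prop:usefluffles}. For the converse, suppose $G'\subseteq G$ is a strong block. Since $G$ is a fluffle and $G'$ is a strong subblock, Lemma~\ref{lem:subwoffle} tells us that $G'$ is itself a fluffle. By Thm.~\ref{thm:flufflecnets}, $G'$ has \emph{some} circuitnet; but the content of the claim is that we can take this circuitnet to be a \emph{subset of the given} $\mathcal{C}$. The first step is therefore to show that every elementary circuit in $\mathcal{C}$ is either fully contained in $G'$ or meets $G'$ only in a way forbidden for a strong sub-block. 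Concretely, I would argue: an elementary circuit $C\in\mathcal{C}$ with $\emptyset\subset C\cap G'\subset C$ would, by Thm.~\ref{thm:woffle-union} applied with the fluffle $G'$, have to intersect $G'$ in paths each running from a substrate vertex to a reaction vertex; but inside the ambient fluffle $G$ every substrate vertex has out-degree $1$ and every reaction vertex has in-degree $1$ (Prop.~\ref{prop:usefluffles}(ii)), so the single out-edge of such a substrate vertex and the single in-edge of such a reaction vertex force the whole of $C$ to lie in $G'$, a contradiction. Hence $\mathcal{C}'\coloneqq\{C\in\mathcal{C}\mid C\subseteq G'\}$ consists exactly of the circuits contained in $G'$, and $\bigcup(\mathcal{C}')\subseteq G'$. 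To see the reverse inclusion, note that $G$ is spanned by $\mathcal{C}$, so every edge of $G'$ lies in some $C\in\mathcal{C}$; since that edge also lies in $G'$ and is incident to a substrate vertex whose unique out-edge it is (or a reaction vertex whose unique in-edge it is), the same degree argument forces $C\subseteq G'$, i.e., $C\in\mathcal{C}'$. Thus $\bigcup(\mathcal{C}')=G'$, and it remains to check that $\mathcal{C}'$ is actually a circuitnet: take an open directed ear decomposition of the strong block $G'$ and, exactly as in the proof of Thm.~\ref{thm:flufflecnets}, complete each ear to an elementary circuit using a directed path in the already-built subgraph; each such circuit lies in $G'$, hence in $\mathcal{C}'$, and by construction every partial union is a strong block.

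For part~(2), I would simply take a circuitnet $\mathcal{C}$ for $G$ (it exists by Thm.~\ref{thm:flufflecnets}) and invoke Def.~\ref{def:cnet}, which already supplies an ordering $\pi$ such that each partial union $G_k=\bigcup_{i=1}^k C_{\pi(i)}$ is a strong block. Applying part~(1) to each $G_k$ — with the sub-circuitnet being $\{C_{\pi(1)},\dots,C_{\pi(k)}\}$ — shows that each $G_k$ is a fluffle, which is the assertion.

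The main obstacle is the converse direction of part~(1): showing that the circuitnet for the strong sub-block $G'$ can be chosen \emph{within} $\mathcal{C}$ rather than being some unrelated circuitnet. The crux is the observation, via Thm.~\ref{thm:woffle-union} and the degree-one constraints of Prop.~\ref{prop:usefluffles}(ii), that no elementary circuit of $\mathcal{C}$ can "partially" overlap a strong sub-block of the ambient fluffle — a circuit of $\mathcal{C}$ is an all-or-nothing affair with respect to $G'$. Once that rigidity is established, both the set-theoretic identity $\bigcup(\mathcal{C}')=G'$ and the circuitnet property for $\mathcal{C}'$ follow routinely from the ear-decomposition construction already used in Thm.~\ref{thm:flufflecnets}.
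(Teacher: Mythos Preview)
Your treatment of part~(2) is correct and matches the paper: the ordering supplied by Def.~\ref{def:cnet} already makes each $G_k$ a strong block, and Lemma~\ref{lem:subwoffle} (which is the entire content of part~(1)) upgrades each $G_k$ to a fluffle.

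For part~(1), however, you have over-read the statement, and your argument for the stronger reading contains a genuine error. In the corollary $\mathcal{C}'\subseteq\mathcal{C}$ is \emph{given} and $G'=\bigcup(\mathcal{C}')$; the claim is simply that $G'$ is a fluffle iff it is a strong block, and the paper's one-line proof is Lemma~\ref{lem:subwoffle}. You instead take $G'$ as given and try to manufacture a circuitnet for it inside the specific $\mathcal{C}$ --- a different and harder statement.

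Your ``all-or-nothing'' step for that stronger statement is false. In the Type~V core (Fig.~\ref{fig:altFluf}) with $\mathcal{C}=\{C_1,C_2,C_3\}$, take $G'=C_1$. Then $C_3\cap G'$ is the single edge $(x_1,r_1)$, a substrate-to-reaction path, yet $C_3\not\subseteq G'$. The degree constraints of Prop.~\ref{prop:usefluffles}(ii) pin down only the out-edge of a substrate vertex and the in-edge of a reaction vertex; both of these are already \emph{internal} to the intersection path, so nothing new is forced into $G'$. The in-edges of substrate vertices and the out-edges of reaction vertices are unconstrained in a fluffle, so $C$ is free to leave $G'$ precisely at the endpoints of such a path --- which is exactly what Thm.~\ref{thm:woffle-union} describes, not forbids. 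Your closing step (``each such circuit lies in $G'$, hence in $\mathcal{C}'$'') fails for a separate reason: the circuits produced by an ear decomposition of $G'$ are elementary circuits of $G'$, but there is no reason they should belong to the particular finite set $\mathcal{C}$ you started with.
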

\begin{proof}
The first statement follows directly from Lemma~\ref{lem:subwoffle} and the
second statement from the first and Def.~\ref{def:cnet}.
\end{proof}
 
\subsection{The set of fluffles of a CRN}

Summarizing the discussion so far, we have shown that all CS matrices in
$\Gamma$ with an irreducible Metzler part, i.e., the viable candidates for
``interesting'' autocatalytic subnetworks, are the fluffles in the
associated bipartite König graph $\king$ (Prop.~\ref{prop:usefluffles}). 
Moreover, any fluffle can be constructed by
superimposing elementary circuits such that each intermediate step is
itself a fluffle (Cor.~\ref{cor:superimposingfluffle}). Equivalently, this
boils down to enumerating the sets of circuitnets whose union is a fluffle.
Elementary circuits can be enumerated efficiently in a lazy manner
\cite{tarjan_enumeration_1973,johnson_finding_1975,szwarcfiter_search_1976}
with linear delay, i.e., $O(|V|+|E|)$. Moreover, algorithms exist that
allow to restrict circuit length \cite{gupta_finding_2021}.

Denote by $\mathfrak{F}$ the set of circuitnets
$\mathcal{C}$ 
whose union $\bigcup(\mathcal{C})$ is a
fluffle. In SI Sec.~\emph{\nameref{sect:flufflesetsystems}} we summarize
  the properties of $\mathfrak{F}$ that enable efficient enumeration.  The
example in Fig.~\ref{fig:altFluf} suggests to investigate the equivalence
relation on $\mathfrak{F}$ defined by $\mathcal{C}_1\sim \mathcal{C}_2$ iff
$\bigcup(\mathcal{C}_1)=\bigcup(\mathcal{C}_2)$. These equivalence classes
are specified by subsets of edges in $\king$.  More precisely, we have
$\mathcal{C}_1\sim \mathcal{C}_2$ if and only if
$E(\bigcup(\mathcal{C}_1))=E(\bigcup(\mathcal{C}_2))$ because the
corresponding vertex set is given implicitly by the vertices incident with
these edge sets. Our primary aim, however, is not to enumerate fluffles,
but to enumerate CS matrices $\SM[\child]$ with irreducible Metzler
parts. To this end, we recall that the CS $\child$ is equivalent to the
substrate-to-reaction edges in a fluffle, hence these edges completely
determine $\SM[\child]$. This suggests to consider the following
equivalence relation.
\begin{definition}
  \label{def:csequivalence}
  Let $\mathcal{C}_1$ and $\mathcal{C}_2$ be two circuitnets for fluffles
  $G_1$ and $G_2$. We say that $\mathcal{C}_1$ and $\mathcal{C}_2$ are
  \emph{CS-equivalent} and we write $\mathcal{C}_1 \bumpeq \mathcal{C}_2$
  if
  \begin{equation}\label{eqn:bumpeq}
    E_1\left(\bigcup(\mathcal{C}_1)\right)=
    E_1\left(\bigcup(\mathcal{C}_2)\right).
  \end{equation}
\end{definition}
Let now $\child_1=(X_{\child_1},R_{\child_1},\kappa_1)$ and
$\child_2=(X_{\child_2},R_{\child_2},\kappa_2)$ be two CS. We recall that we
consider two CS matrices $\SM[\child_1]$ and $\SM[\child_2]$ are the 
\emph{same} if 
\begin{equation}\label{eq:samecs}
  X_{\child_1}=X_{\child_2},\quad R_{\child_1}=R_{\child_2},\quad
  \kappa_1=\kappa_2.
\end{equation}
In other words, we consider two CS matrices the same if they involve the
same species, reactions, and bijection between species and reactions. We do
not require that the ordering of the rows is the same. That is, we consider
$\SM[\child_1]$ and $\Pi^{\top}\SM[\child_2]\Pi$ to be same for any
permutation matrix $\Pi$ on $X_{\child}\times X_{\child}$.  With this notion
of same-ness, we obtain the following result.
\begin{lemma}[Proof: SI] 
  \label{lem:circuitnetequiv}
  Two circuitnets $\mathcal{C}_1$ and $\mathcal{C}_2$ for fluffles $G_1$
  and $G_2$ yield the same CS matrix $\SM[\child]$ if and only if
  $\mathcal{C}_1 \bumpeq \mathcal{C}_2$.
\end{lemma}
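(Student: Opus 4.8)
The plan is to reduce everything to one structural observation: in a fluffle the reactant-to-reaction edge set $E_1$ is \emph{literally} the graph of the CS bijection, so $E_1$ determines and is determined by the whole CS triple, and then the statement is just an unwinding of the definition of sameness of CS-matrices in \eqref{eq:samecs} together with Def.~\ref{def:csequivalence}.

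First I would recall that each $\mathcal{C}_i$ is by hypothesis a circuitnet \emph{for} the fluffle $G_i$, i.e.\ $G_i=\bigcup(\mathcal{C}_i)$, and that by Prop.~\ref{prop:usefluffles} there is a CS $\child_i=(X(G_i),R(G_i),\kappa_i)$ with $G_i=\king(\child_i)$. By the construction \eqref{eq:kingchild} the reactant-to-reaction edges of $\king(\child_i)$ are exactly the perfect matching $M_{\kappa_i}=\{(x,\kappa_i(x))\mid x\in X(G_i)\}$, so $E_1(G_i)=\mathrm{graph}(\kappa_i)$. This is the only genuine input from the earlier theory.

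Next I would show that $E_1(G_i)$ recovers the \emph{entire} triple, not merely the bijection on a sub-part of it. By Cor.~\ref{cor:DegK(k)} every substrate vertex of $G_i$ has out-degree $1$, and since $G_i$ is bipartite its unique out-edge lies in $E_1(G_i)$; likewise every reaction vertex has in-degree $1$ with its unique in-edge in $E_1(G_i)$. Hence $X(G_i)$ is precisely the set of tails of the edges of $E_1(G_i)$, $R(G_i)$ the set of heads, and $\kappa_i$ the tail-to-head map. Consequently $E_1(G_1)=E_1(G_2)$ if and only if $X(G_1)=X(G_2)$, $R(G_1)=R(G_2)$ and $\kappa_1=\kappa_2$, i.e.\ if and only if $\child_1$ and $\child_2$ agree in the sense of \eqref{eq:samecs}.

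Finally I would close the loop via the definition of the CS-matrix: since $\SM[\child]_{xw}=s^+_{x\kappa(w)}-s^-_{x\kappa(w)}$ depends only on the triple $(X_\kappa,R_\kappa,\kappa)$ and on the fixed stoichiometric matrix of $\Gamma$, coincidence of the triples forces $\SM[\child_1]=\SM[\child_2]$, and in particular these are the same in the sense of \eqref{eq:samecs}; conversely, sameness of the CS-matrices means exactly that the triples coincide. Chaining this with the previous equivalence and with Def.~\ref{def:csequivalence} ($\mathcal{C}_1\bumpeq\mathcal{C}_2\iff E_1(\bigcup(\mathcal{C}_1))=E_1(\bigcup(\mathcal{C}_2))$) yields the claim. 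I do not expect a real obstacle here beyond the bookkeeping; the one point that must not be skipped is the appeal to Cor.~\ref{cor:DegK(k)} guaranteeing that every vertex of a fluffle is incident to an $E_1$-edge, without which $E_1(G_i)$ would not pin down the vertex partition $X(G_i)\cupdot R(G_i)$.
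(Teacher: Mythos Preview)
Your argument is correct and follows essentially the same route as the paper's proof: both directions hinge on the observation that in a fluffle the edge set $E_1$ is a perfect matching (equivalently, via Cor.~\ref{cor:DegK(k)}, the graph of the CS bijection), so that $E_1$ determines and is determined by the full CS triple $(X_\kappa,R_\kappa,\kappa)$, after which \eqref{eq:samecs} and Def.~\ref{def:csequivalence} finish the job. Your write-up is in fact more explicit than the paper's in justifying why $E_1$ recovers the vertex partition, but the underlying idea is identical.
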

For example, consider the autocatalytic core of type V \eqref{eq:coretype5}
depicted in Fig.~\ref{fig:altFluf}. The two circuitnets
$\mathcal{C}_1=\{C_1,C_2,C_3\}$ and $\mathcal{C}_2=\{C_4,C_5\}$ both are
associated to the same fluffle and thus in particular they are
CS-equivalent and give rise to the same CS matrix:
\begin{equation}\label{eq:TypVAgain}
  \SM[\child] =
  \begin{pmatrix}
    -1 &  1 &  1 \\
    1 & -1 &  1 \\
    1 &  1 & -1 \\ 
  \end{pmatrix},
\end{equation}
since $E_1(\mathcal{C}_1)=E_1(\mathcal{C}_2)$. Remarkably, 28 (!)
circuitnets in the power set $\mathfrak{P}(\{C_1, C_2, C_3, C_4, C_5\})$ --
all except for the empty set and the three singletons $\{C_1\}$, $\{C_2\}$,
and $\{C_3\}$, which correspond to elementary circuits involving only two
species -- induce the same CS matrix (Eq.~\ref{eq:TypVAgain}) and therefore
belong to the same equivalence class. Notably, even distinct elementary
circuits, such as $C_4$ and $C_5$ in this example, can be CS-equivalent. En
passant, we further note that all autocatalytic cores 
$\SM[\child]$, with the sole exception of type III, admit a single-circuit 
representative in the CS-equivalence class of circuitnets associated 
with $\SM[\child]$, see SI Example~\ref{ex:notjustcycle}. The next 
Lemma is central in exploiting the CS-equivalence to lighten the 
computational cost of our approach.

\begin{lemma}[Proof: SI] \label{lem:bumpeqEquiv}
  Let $\mathcal{C}_1$ and $\mathcal{C}_2$ be circuitnets for fluffles $G_1$
  and $G_2$, respectively, and let $C'$ and $C''$ be two elementary
  circuits.  Assume $\mathcal{C}_1 \bumpeq \mathcal{C}_2$, $C'\bumpeq C''$
  and $G_1'\coloneqq \bigcup(\mathcal{C}_1\cup\{C'\})$ is a fluffle. Then
  $G_2'\coloneqq \bigcup(\mathcal{C}_2\cup\{C''\})$ is a fluffle as well
  with $\mathcal{C}_1\cup\{C'\} \bumpeq \mathcal{C}_2\cup\{C''\}$.
  \label{lem:bumpeq}
\end{lemma}
The converse of Lemma~\ref{lem:bumpeq} does not hold. A counterexample is
again the autocatalytic core of type V \eqref{eq:coretype5} depicted in
Fig.~\ref{fig:altFluf}. Consider the two circuitnets
$\mathcal{C}_2\coloneqq \{C_4,C_5\}$ and $\mathcal{C}_3\coloneqq \{C_1,
C_2\}$. Note that the circuitnets $\mathcal{C}_3$ is obtained from
$\mathcal{C}_1=\{C_1,C_2,C_3\}$ by removing the elementary circuit $C_3$,
red in Fig.~\ref{fig:altFluf}. Even if $\mathcal{C}_2\bumpeq \mathcal{C}_3$
holds, the parallel removal of any elementary circuit from the two
circuitnets destroy the equivalence relation as $C_i\not\bumpeq C_j$ for
$i=4,5$, and $j=1,2$: the elementary circuits in $\mathcal{C}_2$ comprise
three species while the elementary circuits in $\mathcal{C}_3$ comprise
two.

However, we may still ensure that a representative of each CS-equivalence
class can be reached by adding an elementary circuit to the representative
of some CS-equivalence class of a smaller fluffle. More precisely, we have
the following Lemma.
\begin{lemma}[Proof: SI] 
  For every CS-equivalence class $[\mathcal{C}]$ there is a representative
  $\hat{\mathcal{C}}$ such that there exists a CS-equivalence class
  $[\mathcal{C}']$ with representative $\hat{\mathcal{C}}'$ and an
  elementary circuit $C^*$ such that $\hat{\mathcal{C}}'\cup\{C^*\} \bumpeq
  \hat{\mathcal{C}}$ and $|V(\bigcup(\mathcal{C}'))| <
  |V(\bigcup(\mathcal{C}))|$.
  \label{lem:bumpeq_enum}
\end{lemma}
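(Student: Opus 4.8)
The plan is to reduce the statement to a claim about ear decompositions of the \emph{maximal} fluffle in the class $[\mathcal{C}]$, and then to read off from one such decomposition the step at which the underlying child-selection first becomes complete; the non-trivial point is to enter that decomposition through the right elementary circuit.

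I would first record two bookkeeping facts. (i) The vertex number $|V(\bigcup(\mathcal{C}))|$ depends only on the class $[\mathcal{C}]$: every $\bigcup(\mathcal{C})$ is a fluffle, so each substrate has out-degree $1$ and each reaction in-degree $1$, and the unique incident edge of such a vertex lies in $E_1$; hence $X(\bigcup(\mathcal{C}))$ and $R(\bigcup(\mathcal{C}))$ are exactly the tails, resp.\ the heads, of the edges in $E_1(\bigcup(\mathcal{C}))$, which is constant along a $\bumpeq$-class. In particular the required inequality may be verified on arbitrary chosen representatives. (ii) If $G'\subseteq G$ are fluffles with $E_1(G')\subsetneq E_1(G)$, then $|V(G')|<|V(G)|$: a new edge $(x,r)\in E_1(G)\setminus E_1(G')$ cannot have $x\in X(G')$, since then $x$ would already have an out-edge in $G'$, which persists in $G$ and contradicts that $x$ has out-degree $1$ in $G$; so $X(G')\subsetneq X(G)$, and $|V|=2|X|$ for fluffles.

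Given $[\mathcal{C}]$, let $\child$ be the child-selection determined by the common edge set $E_1(\mathcal{C})$ and set $\hat{G}\coloneqq\king(\child)$. Since the CS-matrices of the class have an irreducible Metzler part, $\hat{G}$ is a strong block (Thm.~\ref{thm:IrrerMetzler}), hence a fluffle (Prop.~\ref{prop:usefluffles}), with $E_1(\hat{G})=E_1(\mathcal{C})$, so $\hat{G}$ represents $[\mathcal{C}]$; moreover every fluffle with this same $E_1$ is a subgraph of $\hat{G}$. If $\hat{G}$ happens to be an elementary circuit $C_0$, then $\{C_0\}$ is the only circuitnet in $[\mathcal{C}]$ and there is no strictly smaller fluffle in the class: this is the base case, covered by $\hat{\mathcal{C}}=\{C_0\}$, $\hat{\mathcal{C}}'=\emptyset$, $C^*=C_0$, with $|V(\bigcup(\emptyset))|=0$. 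Otherwise $\hat{G}$ is a strong block with strictly more than $|V(\hat{G})|$ edges, and the crucial step is to find an elementary circuit $C_0\subseteq\hat{G}$ with $E_1(C_0)\subsetneq E_1(\hat{G})$, i.e.\ a \emph{non-Hamiltonian} circuit of $\hat{G}$ (observe that $|E_1(C_0)|$ equals the number of substrate vertices on $C_0$, whereas $E_1(\mathcal{C})$ is a perfect matching, one edge per substrate). I would argue by contradiction: if every elementary circuit of $\hat{G}$ were Hamiltonian, fix one such $C_0$; it uses all of $E_1(\mathcal{C})$, so some edge of $\hat{G}$ outside $C_0$ must be an $E_2$-edge $e=(r\to x)$, and closing $e$ with the directed sub-path of $C_0$ running from $x$ forward to $r$ yields an elementary circuit of $\hat{G}$ which omits the successor of $r$ on $C_0$ (a substrate vertex distinct from $x$) and is therefore strictly shorter, hence non-Hamiltonian --- a contradiction.

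Finally I would assemble the circuitnet. Beginning from $C_0$, take an open directed ear decomposition of $\hat{G}$ (a strong block admits an open directed ear decomposition starting from any of its elementary circuits); by Thm.~\ref{thm:Woffle} every ear runs from a reaction vertex to a substrate vertex, and completing each ear to a circuit as in the proof of Thm.~\ref{thm:flufflecnets} gives a circuitnet $\{C_0,C_1,\dots,C_m\}$ for $\hat{G}$ whose partial unions $\hat{G}_i\coloneqq\bigcup_{l\le i}C_l$ are all fluffles, with $\hat{G}_0=C_0$ and $\hat{G}_m=\hat{G}$. Since $E_1(\hat{G}_0)=E_1(C_0)\subsetneq E_1(\mathcal{C})$, $E_1(\hat{G}_m)=E_1(\mathcal{C})$, and the $E_1(\hat{G}_i)$ increase in $i$, there is a least index $j\ge 1$ with $E_1(\hat{G}_j)=E_1(\mathcal{C})$. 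Set $\hat{\mathcal{C}}\coloneqq\{C_0,\dots,C_j\}$, $\hat{\mathcal{C}}'\coloneqq\{C_0,\dots,C_{j-1}\}$, $C^*\coloneqq C_j$. By Cor.~\ref{cor:superimposingfluffle}, both $\hat{\mathcal{C}}$ and $\hat{\mathcal{C}}'$ are circuitnets (subsets of the circuitnet for $\hat{G}$ whose unions $\hat{G}_j$, $\hat{G}_{j-1}$ are strong blocks); $\hat{\mathcal{C}}$ lies in $[\mathcal{C}]$, whereas $\hat{\mathcal{C}}'$ lies in a different class $[\mathcal{C}']$ because $E_1(\hat{G}_{j-1})\subsetneq E_1(\mathcal{C})$; and $\hat{\mathcal{C}}'\cup\{C^*\}=\hat{\mathcal{C}}$, so $\hat{\mathcal{C}}'\cup\{C^*\}\bumpeq\hat{\mathcal{C}}$ holds trivially. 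Fact (ii) gives $|V(\hat{G}_{j-1})|<|V(\hat{G}_j)|$, and fact (i) then yields $|V(\bigcup(\mathcal{C}'))|<|V(\bigcup(\mathcal{C}))|$ for arbitrary representatives of the two classes. I expect the main obstacle to be the contradiction argument of the previous paragraph: the remaining steps are routine given Thm.~\ref{thm:Woffle}, Thm.~\ref{thm:flufflecnets} and Cor.~\ref{cor:superimposingfluffle}, but one genuinely has to enter the ear decomposition through a non-Hamiltonian circuit, since (as the type-V core shows) a decomposition of $\hat{G}$ that begins at a Hamiltonian circuit already has $E_1$ complete at step $0$, leaving no later circuit that can be split off.
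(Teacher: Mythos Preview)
Your proposal is correct and takes a genuinely different route from the paper's proof. The paper starts from an arbitrary representative $\mathcal{C}$, passes to a minimal sub-circuitnet $\hat{\mathcal{C}}\subseteq\mathcal{C}$ with the same union $G$, removes the highest-index circuit to obtain $\hat{\mathcal{C}}'$ with union $G'\subsetneq G$, and then asserts that this removal ``removes an ear'' and hence $|V(G')|<|V(G)|$. You instead work with the canonical representative $\hat{G}=\king(\child)$, build an ear decomposition starting from a deliberately chosen \emph{non-Hamiltonian} circuit $C_0$, and locate the first index $j$ at which $E_1$ becomes complete.

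The key difference is your insistence on the non-Hamiltonian entry point, and it buys real content: if one enters the decomposition through a Hamiltonian circuit---as happens for the type-V core with the circuitnet $\{C_4,C_5\}$, or even with $\{C_1,C_2,C_3\}$---then $E_1$ is already complete at step~$0$, and removing the last circuit produces a $G'$ with the same vertex set \emph{and the same $E_1$}, so $\hat{\mathcal{C}}'$ lands back in $[\mathcal{C}]$ and the strict inequality fails. The paper's argument does not guard against this (an ear from a reaction to a substrate vertex can be a single $E_2$-edge with no interior vertices), so your version is the more rigorous of the two. Your contradiction argument that every non-circuit fluffle contains a non-Hamiltonian elementary circuit is clean and leans exactly on the fluffle degree constraints. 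One minor quibble: your base case $\hat{\mathcal{C}}'=\emptyset$ is informal (as is the paper's ``trivially holds''), since the empty set does not obviously define a CS-equivalence class; but this is harmless for the intended algorithmic use, where elementary-circuit classes seed the enumeration.
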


Finally, we may naturally extend the notion of CS-equivalence to
fluffles.
\begin{definition}\label{def:bumpeq}
  Two fluffles $G, G'$ are CS-equivalent if and
  only if $E_1(G)=E_1(G')$. 
\end{definition}
Moreover, we observe that each $\bumpeq$-equivalence class has a natural
fluffle representative, whose graph is indeed $\king(\kappa)$:
\begin{proposition}
  \label{prop:kingrep}
  Let $G$ be a fluffle and let $\child$ be the CS defined by
  $E_1(G)$. Consider now the graph $\bar{G}$ with vertex set
  $V(\bar{G})=V(G)$ and edge set $E_1(G)\cup E_2(V(G))$, i.e., $e \in
  E_2(V(G))$ if and only if $e=(r_1,x_2)$ with $r_1,x_2\in V(G)$. Then
  $\bar{G}$ is a fluffle with $\bar{G}\bumpeq G$, and such that
  $\bar{G}=\king(\child)$.
\end{proposition}
\begin{proof}
  Since $G$ is a fluffle of $\king$, then $E_1(G)$ is a perfect matching
  corresponding to the CS $\child$, and thus $\bar{G}=\king(\child)$ by
  Eq.~\eqref{eq:kingchild}. By Prop.~\ref{prop:usefluffles}, $\bar{G}$ is a
  fluffle as well. As $E_1(\bar G)=E_1(G)$ holds by construction, we have
  $\bar{G}\bumpeq G$.
\end{proof}
Taken together, we can avoid the enumeration of
  $\mathfrak{F}$ in favor of enumerating fluffle representatives only.

\subsection{Metzler matrices and induced fluffles}
\label{ssect:induced}

Recall that by Prop.~\ref{prop:Metzlerpartaut} $\SM[\child]$ corresponds to
the \emph{induced} subgraph $\king[\child]$, while $\Metzler{\SM[\child]}$
corresponds to the fluffle $\king(\child)$ defined on the same vertex set
$X_{\kappa}\cupdot R_{\kappa}$.  Therefore, $\king(\child)$ is always a
spanning subgraph of $\king[\child]$. However, in general, we have
$\king[\child]\neq\king(\child)$, as shown in Fig.~\ref{fig:CSGraphs}.
Nevertheless, many properties of $\king(\child)$ translate to the induced
subgraphs. In this section, we collect some of these implications, which
will be useful below in the context of autocatalytic cores.

As an immediate consequence of Lemma~\ref{lem:CSgraph} we obtain:
\begin{corollary}
  An induced subgraph $\king[X'\cupdot R']$ is child-selective if and only
  if it contains a spanning subgraph $G$ with edge set $E'$ such that
  $E'\cap (X'\times R')$ is a perfect matching in $G$.
\end{corollary}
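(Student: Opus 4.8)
The plan is to read off both directions directly from Lemma~\ref{lem:CSgraph}, using the perfect matching $E'\cap(X'\times R')$ as the set of CS-edges; the bipartite structure of $\king$ makes this essentially automatic.

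First I would treat the ``if'' direction. Assume $\king[X'\cupdot R']$ contains a spanning subgraph $G$, with edge set $E'$, such that $M\coloneqq E'\cap(X'\times R')$ is a perfect matching in $G$. Since $M$ is incident with every vertex of $X'\cupdot R'$ and every edge of $M$ joins exactly one vertex of $X'$ to one vertex of $R'$, it follows at once that $|X'|=|R'|$ and that in the spanning subgraph $(X'\cupdot R',M)$ every $x\in X'$ has out-degree $1$ and every $r\in R'$ has in-degree $1$. Moreover $M\subseteq X'\times R'$, and in $\king$ the only edges oriented from $X'$ to $R'$ are those in $E_1=\{(x,r)\mid s_{xr}^->0\}$; hence $M\subseteq E_1(\king[X'\cupdot R'])$. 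Thus $(X'\cupdot R',M)$ satisfies both hypotheses of Lemma~\ref{lem:CSgraph}, which produces a CS bijection $\kappa$ with $\kappa(x)=r$ for every $(x,r)\in M$. By Def.~\ref{def:CSive}, $\king[X'\cupdot R']$ is child-selective.

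For the ``only if'' direction I would run the same argument backwards. Suppose $\king[X'\cupdot R']$ is child-selective with CS $\child=(X',R',\kappa)$, and set $M_\kappa\coloneqq\{(x,\kappa(x))\mid x\in X'\}$. Because $\kappa$ is a bijection, $M_\kappa$ is a set of vertex-disjoint edges incident with every vertex of $X'\cupdot R'$, i.e.\ a perfect matching; because $s_{x\kappa(x)}^->0$ we have $M_\kappa\subseteq E_1(\king[X'\cupdot R'])\subseteq E(\king[X'\cupdot R'])$. Taking $G$ to be the spanning subgraph of $\king[X'\cupdot R']$ with edge set $E'\coloneqq M_\kappa$ (or, if one prefers, $E'\coloneqq M_\kappa\cup E_2(\king[X'\cupdot R'])$), one has $E'\cap(X'\times R')=M_\kappa$, a perfect matching in $G$, which is exactly the spanning subgraph required.

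I do not expect any genuine obstacle here; the statement is an immediate reformulation. The only points that need a line of care are that a perfect matching of $X'\cupdot R'$ contained in $X'\times R'$ automatically has the cardinality and degree properties demanded by Lemma~\ref{lem:CSgraph}, and that in a bipartite König graph every $X'$-to-$R'$ edge lies in $E_1$ and can therefore serve as a CS-edge. Alternatively, the whole corollary can be obtained in one step from Thm.~\ref{thm:CSPerfMatch}: a perfect matching of $G$ contained in $X'\times R'$ is, in particular, a perfect matching contained in $E_1(\king[X'\cupdot R'])$, and conversely, so child-selectivity of $\king[X'\cupdot R']$ is equivalent to the existence of such a spanning subgraph.
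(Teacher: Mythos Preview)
Your proof is correct and follows exactly the approach the paper intends: the corollary is stated there as ``an immediate consequence of Lemma~\ref{lem:CSgraph}'' with no further argument, and you have simply spelled out both directions of that immediate consequence. Your closing remark that Thm.~\ref{thm:CSPerfMatch} gives the same result in one step is also apt.
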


\begin{theorem}\label{lem:PMMetzler}
  An induced subgraph $\king'\subseteq \king$ has a Metzler CS matrix
  $\SM[\child']$ if and only if $E_1(\king')$ is a perfect matching in
  $\king'$.
\end{theorem}
\begin{proof}  
  If $E_1$ is a perfect matching in $\king'$ then each $x\in X(\king')$ has
  only one outgoing edge and each $r \in R(\king')$ has only one incoming
  edge. Hence, for each $r\in R(\king')$ there is a unique $x\in X(\king')$
  with $s_{xr}^->0$, while for all $y\in X(\king')$ with $x\neq y,
  s_{yr}^-=0$. Thus $\SM[\child]=\Metzler{\SM[\child]}$ and hence a Metzler
  matrix. If, on the other hand, $\SM[\child]$ is a Metzler matrix, for
  each $r\in R(\king')$ there is exactly one $x\in X(\king')$ with
  $s_{xr}^->0$, which implies that $r$ has exactly one incoming edge given
  by $(\kappa^{-1}(r),r)$. Since $\kappa$ is a bijection, $\kappa^{-1}$ is
  in particular injective and hence these edges form a perfect matching,
  which by construction coincides with $E_1$.
\end{proof}

If $\SM[\child]=\Metzler{\SM[\child]}$ then $\king(\child)$ already
contains all edges of $\king[\child]$. Conversely, if
$\king(\child)=\king[\child]$, then $E_1$ is a perfect matching in this
subgraph and hence $\SM[\child]$ is Metzler.
\begin{corollary}\label{cor:AutMatrixKing}
  Let $\child$ be a CS. Then the following statements are equivalent:
  \begin{itemize}
  \item[(i)] $\SM[\child]$ is a Metzler matrix
  \item[(ii)] $\king[\child] = \king(\child)$.
  \end{itemize}
\end{corollary}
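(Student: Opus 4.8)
The plan is to obtain both implications directly from Theorem~\ref{lem:PMMetzler}, using the defining description of $\king(\child)$ in Eq.~\eqref{eq:kingchild} and the degree bookkeeping of Corollary~\ref{cor:DegK(k)}. The guiding observation is that $\king(\child)$ and $\king[\child]$ are built on the same vertex set $X_\kappa\cupdot R_\kappa$ and share the same $E_2$-part, namely $E_2^{(\kappa)}$; consequently the equality $\king[\child]=\king(\child)$ is equivalent to the single equality $E_1(\king[\child])=E_1(\king(\child))$, i.e.\ $E_1^{(\kappa)}=M_\kappa$, where $M_\kappa$ is the perfect matching that defines $\king(\child)$.

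For the implication (i)$\Rightarrow$(ii), I would assume $\SM[\child]$ is a Metzler matrix and apply Theorem~\ref{lem:PMMetzler} to the induced subgraph $\king[\child]$: this gives that $E_1(\king[\child])=E_1^{(\kappa)}$ is itself a perfect matching in $\king[\child]$. Since $\child$ is a CS, $M_\kappa\subseteq E_1^{(\kappa)}$ is a perfect matching on the same vertex set, and a perfect matching contained in another perfect matching on the same vertex set must coincide with it (both consist of exactly $|X_\kappa|$ edges). Hence $E_1(\king(\child))=M_\kappa=E_1^{(\kappa)}=E_1(\king[\child])$, and since the vertex sets and the $E_2$-parts already agree, we conclude $\king(\child)=\king[\child]$.

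For (ii)$\Rightarrow$(i), I would assume $\king[\child]=\king(\child)$ and note that then $E_1(\king[\child])=E_1(\king(\child))=M_\kappa$, which is a perfect matching in $\king(\child)=\king[\child]$ by the construction in Eq.~\eqref{eq:kingchild} (equivalently, by Theorem~\ref{thm:CSPerfMatch}). Theorem~\ref{lem:PMMetzler} then immediately yields that $\SM[\child]$ is Metzler. Alternatively, one can argue through Corollary~\ref{cor:DegK(k)}: in $\king(\child)$ every substrate vertex has out-degree $1$ and every reaction vertex has in-degree $1$, hence the same holds in $\king[\child]$, which forces $E_1(\king[\child])$ to be a perfect matching and invokes Theorem~\ref{lem:PMMetzler} as before.

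This proof is essentially a formal consolidation of the paragraph preceding the statement, so I do not expect a real obstacle. The only step needing a moment's care is the elementary matching fact that a perfect matching contained in another perfect matching on the same vertex set must be equal; this is precisely what excludes $\king(\child)$ from being a \emph{proper} spanning subgraph of $\king[\child]$ while $\SM[\child]$ is nonetheless Metzler.
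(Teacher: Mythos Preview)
Your proposal is correct and follows essentially the same approach as the paper: the paper's justification is the short paragraph immediately preceding the corollary, which argues that $\SM[\child]=\Metzler{\SM[\child]}$ forces $\king(\child)$ to contain all edges of $\king[\child]$, and conversely that $\king(\child)=\king[\child]$ makes $E_1$ a perfect matching and hence $\SM[\child]$ Metzler via Theorem~\ref{lem:PMMetzler}. Your write-up is simply a more explicit version of this, with the added (and correct) observation that two perfect matchings on the same vertex set with one contained in the other must coincide.
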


Since $\king(\child)$ is a spanning subgraph of $\king[\child]$, the
following to observations are straightforward:
\begin{corollary}\label{cor:ksquaresc}
  If $\child$ is a CS and $\Metzler{\SM[\child]}$ is irreducible then
  $\king[\child]$ is strongly connected.
\end{corollary}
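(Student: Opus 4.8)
The plan is to combine Lemma~\ref{lem:scirreduc} with the fact, recorded just after Eq.~\eqref{eq:kingchild}, that $\king(\child)$ is a spanning subgraph of $\king[\child]$. First I would apply Lemma~\ref{lem:scirreduc}: since $\Metzler{\SM[\child]}$ is irreducible, $\king(\child)$ is strongly connected. Next I would note that $\king(\child)$ and $\king[\child]$ have the same vertex set $X_\kappa\cupdot R_\kappa$ and that $E(\king(\child))=M_\kappa\cup E_2^{(\kappa)}\subseteq E_1^{(\kappa)}\cup E_2^{(\kappa)}=E(\king[\child])$. The conclusion then follows because strong connectivity is monotone under adding edges on a fixed vertex set: any directed path between two vertices of $\king(\child)$ is also a directed path in $\king[\child]$, so the existence of directed paths in both directions between every pair of vertices of $\king(\child)$ is inherited by $\king[\child]$.

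The only point that needs to be stated carefully is that the two graphs share the same vertex set, since otherwise a vertex of $\king[\child]$ unreachable within $\king(\child)$ could destroy strong connectivity; but this is immediate from the definitions of $\king[\child]$ and $\king(\child)$. I do not anticipate any genuine obstacle: the statement is a direct corollary, and the argument is essentially two sentences long.
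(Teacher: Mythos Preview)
Your proposal is correct and matches the paper's own reasoning: the corollary is stated immediately after the observation that $\king(\child)$ is a spanning subgraph of $\king[\child]$, and the paper treats it as a direct consequence of that fact together with Lemma~\ref{lem:scirreduc}. Your two-sentence argument is exactly what is intended.
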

\begin{corollary}
  If $\child$ is a CS and $\Metzler{\SM[\child]}$ is irreducible then
  $\king[\child]$ is a strong block.
\end{corollary}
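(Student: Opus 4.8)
The plan is to bootstrap from the preceding corollary (Cor.~\ref{cor:ksquaresc}), which already tells us that $\king[\child]$ is strongly connected, and then upgrade strong connectedness to "strong block" by transferring the absence of cut vertices from the spanning subgraph $\king(\child)$. Concretely, I would first invoke Cor.~\ref{cor:ksquaresc}: since $\Metzler{\SM[\child]}$ is irreducible, $\king[\child]$ is strongly connected, so it only remains to show that $\king[\child]$ has no cut vertex (equivalently, that its underlying undirected graph is $2$-connected).

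For the no-cut-vertex part I would use that $\king(\child)$ and $\king[\child]$ share the same vertex set $X_\kappa\cupdot R_\kappa$ and that $\king(\child)$ is a spanning subgraph of $\king[\child]$, as recalled after Eq.~\eqref{eq:kingchild}. By Thm.~\ref{thm:IrrerMetzler}, irreducibility of $\Metzler{\SM[\child]}$ implies that $\king(\child)$ is a strong block, hence contains no cut vertex; that is, for every vertex $v$ the graph $\king(\child)-v$ is connected. Since $\king(\child)-v$ is a spanning subgraph of $\king[\child]-v$ (both on vertex set $V\setminus\{v\}$), and a spanning supergraph of a connected graph is connected, it follows that $\king[\child]-v$ is connected for every $v$. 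Thus $\king[\child]$ has no cut vertex, and together with strong connectedness this gives that $\king[\child]$ is a strong block.

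This is essentially monotonicity of connectivity under edge addition, so I do not anticipate a genuine obstacle; the only point requiring a little care is the degenerate small-$k$ situation (e.g.\ $|X_\kappa|=1$), but these cases are already handled consistently by the conventions under which Thm.~\ref{thm:IrrerMetzler} and Cor.~\ref{cor:ksquaresc} are stated, and the same convention carries over verbatim here.
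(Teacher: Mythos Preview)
Your argument is correct and matches the paper's reasoning: the paper simply notes that both this corollary and Cor.~\ref{cor:ksquaresc} are ``straightforward'' consequences of $\king(\child)$ being a spanning subgraph of $\king[\child]$, which is precisely the monotonicity-under-edge-addition argument you spell out. Your decomposition into ``strongly connected'' plus ``no cut vertex'' is just the unpacking of why the strong-block property passes to spanning supergraphs.
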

The statement of Lemma~\ref{lem:cutvertex}, however, does not hold for
$\king[\child]$ (see Fig.~\ref{fig:NonCSStrCon}, right). That is, a
strongly connected induced subgraph $\king[\child]$ is not necessarily a
strong block.

Moreover, the fact that $\king[\child]$ is strongly connected does not
imply that $\Metzler{\SM[\child]}$ is irreducible. As a counterexample,
consider the disjoint union of two even elementary circuits, one containing
a substrate vertex $x_2$ and the other $x_3$
(Fig.~\ref{fig:ambiguous-core}, right). This graph is child-selective, with
\(\kappa(z)\) defined as the successor of each substrate vertex \(z\) along
its circuit. Adding the two edges \((x_2,\kappa(x_3))\) and
\((x_3,\kappa(x_2))\) (blue edges in Fig.~\ref{fig:ambiguous-core}, right) 
produces a strongly connected graph that remains child-selective under 
the same map \(\kappa\). Nevertheless, \(\king(\child)\) remains the disjoint 
union of the two circuits and thus is not strongly connected.  In this example, 
an alternative child-selection exists by defining \(\kappa'(z)=\kappa(z)\) for
\(z\notin\{x_2,x_3\}\), \(\kappa'(x_2)=\kappa(x_3)\), and
\(\kappa'(x_3)=\kappa(x_2)\), though such a construction is not always
possible. If only $(x_3, \kappa(x_2))$ of the two blue edges was
  present, the subgraph would still be strongly connected if there existed
an additional edge \((x_2,r)\) to a reaction \(r\neq \kappa(x_3)\) in the
second circuit (green edge).  Suppose there is a perfect matching \(M\)
that includes \((x_2,r)\).  Then necessarily \((\kappa^{-1}(r),r)\notin
M\). Since \(\kappa^{-1}(r)\neq x_3\), the vertex \(\kappa^{-1}(r)\) has
only one successor, \(r\), along its circuit, implying that \(M\) cannot be
a perfect matching in \(E_1\).

We can, however, rephrase Lemma~\ref{lem:nosourcesink} in terms of the
induced subgraph $\king[\child]$, making use again of the fact that
$\king(\child)$ is a spanning subgraph of $\king[\child]$:
\begin{corollary}
  A CS $\child=(X_\kappa,E_\kappa,\kappa)$ is autocatalytic if and only if
  the following conditions both hold:
  \begin{enumerate}
  \item there is a positive vector $v>0$ such that $\SM[\child]v >0$.
  \item $\king[\child]$ does not possesses source and sink vertices;
  \end{enumerate} 
\end{corollary}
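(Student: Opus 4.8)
The plan is to deduce the statement from Lemma~\ref{lem:nosourcesink}, which is the identical characterization phrased in terms of the spanning subgraph $\king(\child)$ in place of the induced subgraph $\king[\child]$. Since condition~(1) is word-for-word the same in both statements, it suffices to prove that $\king(\child)$ has neither a source nor a sink vertex if and only if $\king[\child]$ has neither a source nor a sink vertex; the corollary then follows immediately.

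To establish this equivalence, I would first recall that $\king(\child)$ and $\king[\child]$ share the same vertex set $X_\kappa\cupdot R_\kappa$ and, more importantly, the same set $E_2^{(\kappa)}$ of reaction-to-substrate edges: by definition $E_2^{(\kappa)}$ collects exactly those $E_2$-edges whose two endpoints both lie in $X_\kappa\cupdot R_\kappa$, which is also the $E_2$-part of the induced subgraph $\king[\child]=\king[X_\kappa\cup R_\kappa, E_1^{(\kappa)}\cup E_2^{(\kappa)}]$. The two graphs differ only in their substrate-to-reaction edges, where $\king(\child)$ carries the perfect matching $M_\kappa$ and $\king[\child]$ carries all of $E_1^{(\kappa)}\supseteq M_\kappa$. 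Because every $E_1$-edge runs from a substrate vertex to a reaction vertex, the in-degree of any substrate vertex and the out-degree of any reaction vertex are determined solely by the $E_2^{(\kappa)}$-edges, and therefore coincide in $\king(\child)$ and $\king[\child]$.

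Next I would note that in either of the two graphs every substrate vertex $x$ has out-degree at least one --- the matching edge $(x,\kappa(x))$ belongs to $M_\kappa\subseteq E_1^{(\kappa)}$, which is Cor.~\ref{cor:DegK(k)} for $\king(\child)$ --- and every reaction vertex $r$ has in-degree at least one, since $(\kappa^{-1}(r),r)\in M_\kappa$. Hence, in both graphs, a substrate vertex is never a sink and a reaction vertex is never a source, so the only possible sources are substrate vertices of in-degree $0$ and the only possible sinks are reaction vertices of out-degree $0$. Combining this with the coincidence of the relevant degrees established above, a given vertex is a source (respectively, a sink) in $\king(\child)$ if and only if it is a source (respectively, a sink) in $\king[\child]$. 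This yields the desired equivalence of the two versions of condition~(2), and applying Lemma~\ref{lem:nosourcesink} finishes the proof.

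There is no substantive obstacle here; the only point that needs a moment's care is the verification that $\king(\child)$ and $\king[\child]$ literally have the same $E_2$-edge set, which is immediate from the definitions, together with the observation that $E_1$-edges influence only the out-degrees of substrate vertices and the in-degrees of reaction vertices --- precisely the quantities that are irrelevant to the source/sink classification once the matching edges are taken into account.
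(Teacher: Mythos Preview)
Your proof is correct and follows exactly the route the paper intends: the corollary is presented there without a separate proof, only the remark that it ``rephrases Lemma~\ref{lem:nosourcesink} in terms of the induced subgraph $\king[\child]$, making use again of the fact that $\king(\child)$ is a spanning subgraph of $\king[\child]$.'' Your argument is in fact more careful than the paper, since the spanning-subgraph observation alone only yields one implication; you correctly supply the missing point that the two graphs share the same $E_2$-edge set, so that the in-degrees of substrate vertices and the out-degrees of reaction vertices coincide in $\king(\child)$ and $\king[\child]$, which is precisely what is needed for the equivalence of the source/sink conditions.
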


Recall that autocatalytic cores are, in particular, irreducible Metzler CS
matrices. Autocatalytic cores thus satisfy
$\SM[\child]=\Metzler{\SM[\child]}$, i.e., $\king[\child]=\king(\child)$.
In other words, all autocatalytic cores correspond to \emph{induced}
fluffles. Prop.~\ref{prop:kingrep} thus implies that the natural
representative of the CS-equivalence class of an irreducible Metzler CS
matrix, and hence in particular of an autocatalytic core, is an
induced fluffle.

Finally, we show that irreducible Metzler CS matrices $\SM[\child]$ are
autocatalytic whenever they contain an autocatalytic core. Our starting
point is the following technical result, which then enables us to state the
main result of this subsection.
\begin{lemma}[Proof: SI] 
  Let $\king(\child^*)$ be a fluffle with irreducible autocatalytic
  Metzler CS matrix $\SM[\child^*]$ and let $\king(\child)$ be
  obtained from $\king(\child^*)$ by adding a single ear with initial
  vertex in $R(\king(\child^*))$, terminal vertex in $X(\king(\child^*))$,
  and a non-empty set of internal vertices, together with all
  reaction-to-metabolite edges in $R(\king(\child))\times
  X(\king(\child))$.  If $\SM[\child^*]$ is an autocatalytic CS matrix and
  $\SM[\child]$ is a Metzler matrix, then $\SM[\child]$ is autocatalytic
  irreducible CS matrix.
\label{lem:expandAutoMetzler} 
\end{lemma}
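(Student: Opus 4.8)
The plan is to first upgrade $\king(\child)$ to a fluffle, so that $\SM[\child]$ is an irreducible Metzler CS matrix, and then transfer autocatalyticity from $\SM[\child^*]$ to $\SM[\child]$ by a spectral comparison through a principal submatrix.

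\emph{Step 1 (structure).} The adjoined ear initiates in a reaction vertex and terminates in a substrate vertex, so by Thm.~\ref{thm:Woffle} the fluffle $\king(\child^*)$ has an ear decomposition with this property, and appending the new ear produces such a decomposition of $\king(\child^*)$ together with the ear; hence that graph is again a fluffle by Thm.~\ref{thm:Woffle}. Adjoining all reaction-to-metabolite edges among the vertex set only adds $E_2$-edges, which leaves the vertex counts, the out-degrees of substrate vertices, and the in-degrees of reaction vertices — exactly the data constrained by conditions (i)–(ii) of Prop.~\ref{prop:usefluffles} — unchanged, and cannot destroy strong connectedness or $2$-connectedness, so condition (iii) survives as well. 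Thus $\king(\child)$ is a fluffle, $\child$ is a well-defined CS, and $\Metzler{\SM[\child]}$ is irreducible by Thm.~\ref{thm:IrrerMetzler}. Since $\SM[\child]$ is Metzler by hypothesis, $\SM[\child]=\Metzler{\SM[\child]}$ is an irreducible Metzler CS matrix.

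\emph{Step 2 ($\SM[\child^*]$ as a principal submatrix).} Order $X_\kappa$ with the species of $X_{\kappa^*}$ first. Because every substrate vertex has out-degree $1$ in $\king(\child)$ and $E_1(\king(\child))\supseteq E_1(\king(\child^*))$, the restriction $\kappa|_{X_{\kappa^*}}$ equals $\kappa^*$; substituting this into $\SM[\child]_{xw}=s^+_{x\kappa(w)}-s^-_{x\kappa(w)}$ shows that the principal block of $\SM[\child]$ on the index set $X_{\kappa^*}$ is precisely $\SM[\child^*]$.

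\emph{Step 3 (spectral comparison).} Both $\SM[\child]$ and $\SM[\child^*]$ are Metzler; choose $c>0$ with $\SM[\child]+cI\ge 0$ entrywise. Then $\SM[\child^*]+cI$ is a principal submatrix of the nonnegative matrix $\SM[\child]+cI$, so by Perron–Frobenius its spectral radius does not exceed that of the whole, and since for nonnegative matrices the spectral radius equals the spectral abscissa $s(\cdot)$, this yields $s(\SM[\child^*])\le s(\SM[\child])$. Now $\SM[\child^*]$ is irreducible Metzler and autocatalytic, hence Hurwitz-unstable by Lemma~\ref{cor:SingularMetzlerHurwitz}, so its Perron–Frobenius eigenvalue — which coincides with $s(\SM[\child^*])$ — is positive. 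Therefore $s(\SM[\child])>0$, i.e.\ $\SM[\child]$ is Hurwitz-unstable; being an irreducible Metzler matrix, $\SM[\child]$ is then autocatalytic by Lemma~\ref{cor:SingularMetzlerHurwitz}, completing the proof.

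I expect the main obstacle to be the legitimacy of the monotonicity used in Step 3: for arbitrary matrices the spectral abscissa need not decrease under passing to a principal submatrix, so one must route through the nonnegative shift and the classical principal-submatrix monotonicity of the Perron root, which is robust to the Metzler diagonal being negative. A secondary, bookkeeping point is the verification in Step 1 that adjoining all reaction-to-metabolite edges preserves the fluffle degree conditions — this works exactly because those conditions only concern $E_1$-edges. As an alternative to Step 3 one could instead invoke Lemma~\ref{lem:nosourcesink}: strong connectedness excludes sources and sinks, and a positive vector $v$ with $\SM[\child]v>0$ is obtained by extending the autocatalytic vector of $\SM[\child^*]$ by sufficiently small positive values along the ear, using that each internal substrate vertex $x_i$ is a product of the reaction preceding it on the ear; this is more explicit but more computation-heavy.
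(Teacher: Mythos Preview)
Your proof is correct, and takes a genuinely different route from the paper's.

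The paper argues as you do in Step~1 (structure), but then in place of your Steps~2--3 it writes $\SM[\child]$ explicitly in block form with $\SM[\child^*]$ in the top-left and the ear species indexed $k+1,\dots,l$ below, and \emph{constructs} a positive vector $u=(u^*,u_{k+1},\dots,u_l)$ with $\SM[\child]u>0$: one takes $u^*$ to be an autocatalytic witness for $\SM[\child^*]$, then chooses $u_l,u_{l-1},\dots,u_{k+1}$ successively small enough that the bidiagonal inequalities along the ear are satisfied. This is exactly the ``alternative'' you sketch in your closing paragraph.

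Your spectral route is cleaner and more general: the Perron-root monotonicity under principal submatrices (after the Metzler shift) does not use the ear structure at all, and in fact already gives Thm.~\ref{thm:avoidMetzlerCheck} in one stroke, without the paper's ear-by-ear induction. The paper's explicit construction, on the other hand, buys a concrete autocatalytic vector and avoids invoking Perron--Frobenius beyond what is packaged in Lemma~\ref{cor:SingularMetzlerHurwitz}. Both your Step~2 (that $\kappa|_{X_{\kappa^*}}=\kappa^*$ forces $\SM[\child^*]$ to sit as a principal block) and your bookkeeping in Step~1 (that only $E_2$-edges are adjoined, so the fluffle degree conditions on $E_1$ are untouched) are fine.
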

\begin{theorem}[Proof: SI] 
  Let $\SM[\child]$ be an irreducible Metzler CS matrix and suppose
  $\SM[\child]$ contains an autocatalytic core $\SM[\child^*]$ as a
  principal submatrix. Then $\SM[\child]$ is autocatalyic.
  \label{thm:avoidMetzlerCheck}
\end{theorem}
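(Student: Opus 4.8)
The plan is to obtain the theorem as an iterated application of Lemma~\ref{lem:expandAutoMetzler}: I would exhibit $\king(\child)$ as the result of attaching directed ears, one at a time, to the K\"onig graph of the core $\SM[\child^*]$, and check that every intermediate graph is again an induced fluffle carrying an autocatalytic irreducible Metzler CS-matrix. First, the combinatorial set-up. Since $\SM[\child]$ is Metzler, $\king[\child]=\king(\child)$ by Cor.~\ref{cor:AutMatrixKing}, and since $\Metzler{\SM[\child]}=\SM[\child]$ is irreducible, $\king(\child)$ is a strong block by Thm.~\ref{thm:IrrerMetzler}; thus $\king(\child)$ is a fluffle. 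Likewise, an autocatalytic core is an irreducible Metzler CS-matrix (Prop.~\ref{prop:AutCore}), so $\king(\child^*)=\king[\child^*]$ is a fluffle, and since $\SM[\child^*]$ is the principal submatrix of $\SM[\child]$ on some index set $X_{\child^*}\subseteq X_\kappa$ with $\kappa$ restricting to the bijection of $\child^*$, the graph $\king(\child^*)$ is exactly the subgraph of $\king(\child)$ induced on $X_{\child^*}\cupdot\kappa(X_{\child^*})$. If $X_{\child^*}=X_\kappa$, then $\SM[\child^*]=\SM[\child]$ and there is nothing to prove, so I assume $X_{\child^*}\subsetneq X_\kappa$.

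Next I would fix an open directed ear decomposition of $\king(\child)$ starting from the strong block $\king(\child^*)$ --- this exists by the standard ear-decomposition extension property (prepend an ear decomposition of $\king(\child^*)$ itself), and by Thm.~\ref{thm:Woffle} each of its ears runs from a reaction vertex to a substrate vertex. Let $P_1,\dots,P_m$ be the ears possessing at least one interior vertex, numbered in the order they are attached; the remaining ears are single reaction-to-metabolite edges and introduce no new vertices. Put $X_{\child^{(j)}}\coloneqq X_{\child^*}\cup\{\text{interior substrate vertices of }P_1,\dots,P_j\}$, let $\child^{(j)}$ be the restriction of $\child$ to $X_{\child^{(j)}}$, and $\king(\child^{(j)})\coloneqq\king[\child^{(j)}]$; I would then prove by induction on $j$ that $\SM[\child^{(j)}]$ is autocatalytic, irreducible and Metzler, the base $j=0$ being the core $\SM[\child^*]$. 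For the step one checks: (a) $\SM[\child^{(j)}]$ is Metzler, being a principal submatrix of the Metzler matrix $\SM[\child]$, hence $\king(\child^{(j)})=\king[\child^{(j)}]$ (Cor.~\ref{cor:AutMatrixKing}) and, using Thm.~\ref{lem:PMMetzler} and the standing absence of explicitly catalytic reactions, the only reactant-to-reaction edges of $\king(\child^{(j)})$ are the matching edges $(x,\kappa(x))$; (b) $\king(\child^{(j)})$ is a strong block, as it contains on the same vertex set the strong block reached just after $P_j$ is attached in the ear decomposition, and adding edges preserves strong-blockness --- hence $\king(\child^{(j)})$ is a fluffle; (c) because of (a), the definition of $X_{\child^{(j)}}$, and the fact that $\kappa$ sends the interior substrate vertices of $P_j$ onto its interior reaction vertices, the graph $\king(\child^{(j)})$ arises from $\king(\child^{(j-1)})$ precisely by attaching the single ear $P_j$ (from a reaction vertex to a substrate vertex, with non-empty interior) together with all reaction-to-metabolite edges among the enlarged vertex set. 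This is exactly the hypothesis of Lemma~\ref{lem:expandAutoMetzler}, which propagates the inductive claim from $j-1$ to $j$. Finally $X_{\child^{(m)}}=X_\kappa$ --- every substrate vertex of $\king(\child)$ outside the core is interior to one of the $P_j$ --- so $\king(\child^{(m)})=\king[\child]=\king(\child)$ and $\SM[\child^{(m)}]=\SM[\child]$ is autocatalytic.

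The step I expect to be the main obstacle is (c): the bookkeeping that shows the single-edge $E_2$-ears are absorbed for free by the clause of Lemma~\ref{lem:expandAutoMetzler} adding all reaction-to-metabolite edges, and never contribute a new vertex. It is here that the Metzler hypothesis on $\SM[\child]$ is genuinely used, via Thm.~\ref{lem:PMMetzler}, to exclude reactant-to-reaction edges other than the matching in the induced graphs $\king[\child^{(j)}]$ --- so that "all reaction-to-metabolite edges among the current vertices" is precisely what turns $\king(\child^{(j-1)})\cup P_j$ into the next induced fluffle. Also worth stating explicitly is the licence to begin an open directed ear decomposition from the strong block $\king(\child^*)$ rather than from an elementary circuit.

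For completeness I would record the short spectral alternative. The core $\SM[\child^*]$ is an irreducible Metzler autocatalytic matrix, hence has a positive real eigenvalue by Lemma~\ref{cor:SingularMetzlerHurwitz}, so its spectral abscissa is positive; the spectral abscissa of a Metzler matrix dominates that of any principal submatrix (Perron--Frobenius, \cite{bullo_lectures_2018}), so the spectral abscissa of $\SM[\child]$ is positive; and since $\SM[\child]$ is irreducible Metzler, that value is itself a real eigenvalue, so $\SM[\child]$ has a positive real eigenvalue and is therefore autocatalytic by Lemma~\ref{cor:SingularMetzlerHurwitz}.
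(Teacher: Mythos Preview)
Your primary argument is correct and is essentially the paper's own proof: both obtain $\king(\child)$ from $\king(\child^*)$ by an open ear decomposition, observe that the intermediate principal submatrices are Metzler (as principal submatrices of the Metzler $\SM[\child]$), and iterate Lemma~\ref{lem:expandAutoMetzler} along the ears with non-empty interior while the single $E_2$-edges are absorbed by the ``add all reaction-to-metabolite edges'' clause; your bookkeeping in (a)--(c) is just a more explicit version of the paper's two-sentence sketch.

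Your spectral alternative, however, is a genuinely different and more economical route that the paper does not take. It bypasses the ear decomposition and Lemma~\ref{lem:expandAutoMetzler} entirely, using only the monotonicity of the Perron root under principal-submatrix inclusion for Metzler matrices together with Lemma~\ref{cor:SingularMetzlerHurwitz}. This is shorter and arguably more natural given that the paper already invokes Perron--Frobenius theory for Lemma~\ref{cor:SingularMetzlerHurwitz}; the ear-decomposition proof, on the other hand, stays within the graph-theoretic framework the paper is building and makes the structural reason (each ear extension preserves autocatalysis) visible, which is consonant with the algorithmic use of the theorem downstream.
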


We conclude this section with a close look at autocatalytic cores. Consider a
fluffle $G$ and recall that for any fluffle $H$ with
$E_1(H)\subseteq E_1(G)$ the matrix $\SM[E_1(H)]$ is a principal submatrix
of $\SM[E_2(G)]$. Since an autocatalytic core is irreducible and thus a
superposition of elementary circuits, we immediately observe the following
corollary:
\begin{corollary}
  Let $G$ be an autocatalytic core in $\king$ and $\mathcal{C}$ a
  circuitnet for $G$. Then $\SM[E_1(C)]$ is a Metzler CS matrix for every
  $C\in\mathcal{C}$.
  \label{cor:Metzler-core}
\end{corollary}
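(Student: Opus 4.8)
The plan is to realize $\SM[E_1(C)]$ as a principal submatrix of the Metzler matrix $\SM[\child]$ attached to the core, and then invoke the elementary fact that principal submatrices of Metzler matrices are Metzler. First I would fix the setup. Since $G$ is an autocatalytic core in $\king$, we have $G=\king(\child)$ for the unique CS $\child$ with $\SM[\child]=\Metzler{\SM[\child]}$ an irreducible autocatalytic Metzler matrix; in particular $E_1(G)$ is the perfect matching $M_\kappa$ defining $\kappa$ (Thm.~\ref{lem:PMMetzler}, Cor.~\ref{cor:AutMatrixKing}). Because $\mathcal{C}$ is a circuitnet for $G$, every $C\in\mathcal{C}$ is a subgraph of $G=\bigcup(\mathcal{C})$, so $E(C)\subseteq E(G)$ and hence $E_1(C)\subseteq E_1(G)$. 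Being an elementary circuit in the bipartite graph $\king$, $C$ is an even circuit, hence child-selective with unique CS $\child_C=(X(C),R(C),\kappa_C)$, where $\kappa_C(x)$ is the reaction following $x$ along $C$, and $\SM[E_1(C)]=\SM[\child_C]$ by definition.

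The key step is to check that $\kappa_C$ is the restriction of $\kappa$ to $X(C)$. Fix $x\in X(C)\subseteq X_\kappa$. In $\king(\child)$ the substrate vertex $x$ has out-degree $1$ (Cor.~\ref{cor:DegK(k)}), and its unique outgoing edge is $(x,\kappa(x))\in M_\kappa$. The outgoing edge of $x$ along $C$ lies in $E(C)\subseteq E(G)$ and is therefore forced to equal $(x,\kappa(x))$; hence $\kappa_C(x)=\kappa(x)$ and $R(C)=\kappa(X(C))$. Consequently, for all $x,w\in X(C)$ we obtain $\SM[\child_C]_{xw}=s^+_{x\kappa_C(w)}-s^-_{x\kappa_C(w)}=s^+_{x\kappa(w)}-s^-_{x\kappa(w)}=\SM[\child]_{xw}$, so $\SM[E_1(C)]$ is precisely the principal submatrix of $\SM[\child]$ on the index set $X(C)$. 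This is exactly the general observation recalled just before the statement, applied to the elementary circuit $C$, which is itself a fluffle with $E_1(C)\subseteq E_1(G)$.

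Finally, since $\SM[\child]$ is Metzler, every principal submatrix of it has its off-diagonal entries among the off-diagonal entries of $\SM[\child]$, hence nonnegative, while its diagonal entries are diagonal entries of $\SM[\child]$; thus $\SM[E_1(C)]$ is a Metzler matrix. As it is also a CS matrix by the previous paragraph, the claim follows. No genuine obstacle is anticipated here: the only point that requires a moment's care is the identification $\kappa_C=\kappa|_{X(C)}$, which hinges on the out-degree-$1$ property of substrate vertices in $\king(\child)$ together with the inclusion $C\subseteq G$.
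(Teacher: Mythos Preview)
Your proof is correct and follows essentially the same approach as the paper: identify $\SM[E_1(C)]$ as a principal submatrix of the Metzler matrix $\SM[\child]$ associated with the core (via $E_1(C)\subseteq E_1(G)$), and then use that principal submatrices of Metzler matrices are Metzler. You spell out the identification $\kappa_C=\kappa|_{X(C)}$ more explicitly than the paper, which simply invokes the observation stated just before the corollary, but the argument is the same.
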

It is therefore of interest to consider the following subclass of
elementary circuits:
\begin{definition}
  \label{def:Metzlercircuit}
  A Metzler circuit in $\king=(X\cupdot R,E)$ is an elementary circuit
  without a chord of the form $(x,r)\in X\times R$.
\end{definition}
Cor.~\ref{cor:Metzler-core} thus implies that every autocatalytic core can
be constructed from the Metzler circuits in $\king$ alone. In a recent
  follow-up paper, we show that this is indeed feasible in combination with a
  specialized algorithm for enumerating partially chordless circuits
  \cite{golnik_using_2026}.

\section{Algorithms}
\label{sec:Algorithm}

\subsection{Basic Algorithms} 
\label{ssect:basic}
\subsubsection{Elementary circuits}

Johnson's algorithm \cite{johnson_finding_1975} enumerates all elementary
circuits of a directed graph with linear delay. Since we only need an
arbitrary representative of each CS-equivalence class, it suffices to
record the sets $E_1(C)$. We denote the set of representatives by
$\mathbf{C}$.

\subsubsection{Recognition of fluffles}

Based on Thm.~\ref{thm:woffle-union} and Prop.~\ref{prop:kingrep},
the key task is to expand a representative fluffle $G$ by a representative
circuit $C$ and to test whether the union $G\cup C$ is again a fluffle. The
following result greatly simplifies this task:
\begin{lemma}[Proof: SI] 
  \label{lem:AB}
  Let $G$ be a fluffle and $C$ an elementary circuit. Then $G\cup C$
  is a fluffle if and only if $\emptyset\ne V(G)\cap V(C) = V(E_1(G) \cap
  E_1(C))$.
\end{lemma}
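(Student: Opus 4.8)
The plan is to prove both implications by combining Theorem~\ref{thm:woffle-union} (the ear-gluing characterization) with the degree constraint (ii) of the fluffle definition in Proposition~\ref{prop:usefluffles}. First I would unpack the right-hand condition: writing $P_1,\dots,P_m$ for the connected components of $G\cap C$, Theorem~\ref{thm:woffle-union} already tells us that $G\cup C$ is a fluffle precisely when every $P_i$ is a directed path starting at a substrate vertex and ending at a reaction vertex (the degenerate cases $G\cap C=\emptyset$ and $G\cap C=C$ being excluded on the left, and $G\cap C=\emptyset$ on the right as well). So the whole lemma reduces to showing that this ``each $P_i$ runs from $X$ to $R$'' condition is equivalent to the purely set-theoretic identity $V(G)\cap V(C) = V(E_1(G)\cap E_1(C))$.

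For the forward direction, assume $G\cup C$ is a fluffle, so each path $P_i$ runs from a substrate vertex $x_i\in X$ to a reaction vertex $r_i\in R$. I would argue that the edges of $P_i$ alternate between $E_1$-type (substrate$\to$reaction) and $E_2$-type (reaction$\to$substrate) edges, beginning with an $E_1$-edge at $x_i$ and ending with an $E_1$-edge into $r_i$; hence the $E_1$-edges of $P_i$ are exactly its $X$-outgoing edges, and these are shared by $G$ and $C$, so $P_i$'s edge set contributes to $E_1(G)\cap E_1(C)$. Conversely any vertex $v\in V(G)\cap V(C)$ lies on some $P_i$, and since $P_i$ is a nontrivial alternating path from $X$ to $R$, every vertex of $P_i$ — substrate or reaction — is incident to one of these shared $E_1$-edges: a substrate vertex (including the source $x_i$) has its unique outgoing edge in $E_1(G)\cap E_1(C)$ by Corollary~\ref{cor:DegK(k)}, and a reaction vertex has its unique incoming edge in it. This gives $V(G)\cap V(C)\subseteq V(E_1(G)\cap E_1(C))$; the reverse inclusion is immediate since $E_1(G)\cap E_1(C)$ has all its endpoints in both $V(G)$ and $V(C)$.

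For the reverse direction, assume $\emptyset\ne V(G)\cap V(C) = V(E_1(G)\cap E_1(C))$; I want to deduce that each component $P_i$ of $G\cap C$ is a path from $X$ to $R$ and then invoke Theorem~\ref{thm:woffle-union}. The key point is the degree restriction: in a fluffle every substrate vertex has out-degree $1$ and every reaction vertex has in-degree $1$ (Proposition~\ref{prop:usefluffles}(ii)), and the same holds in the elementary circuit $C$. So if a substrate vertex $x$ lies in $V(G)\cap V(C)$, its unique out-edge in $G$ and its unique out-edge in $C$ must coincide — but only if that out-edge is in $E_1(G)\cap E_1(C)$, which is exactly what the hypothesis $V(G)\cap V(C)=V(E_1(G)\cap E_1(C))$ forces (every such $x$ is an endpoint of a shared $E_1$-edge, necessarily its out-edge). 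Hence the shared substrate vertices propagate forward along a shared edge; a symmetric argument shows shared reaction vertices have their in-edge shared, so the shared $E_1$-edges assemble into alternating paths whose interior has matched degrees and which can only terminate at a reaction vertex (a substrate vertex would force its out-edge to be shared, extending the path) and originate at a substrate vertex (similarly). Finally one must rule out that $G\cap C$ contains a full cycle, i.e.\ $G\cap C=C$: if it did, $V(G)\cap V(C)=V(C)$, but $C$ has substrate vertices whose out-edge in $C$ might be an $E_2\cup E_1$-chord situation — more cleanly, $G\cup C=G$ would not be an ``addition'' and is excluded because $C\not\subseteq G$ is the standing assumption when we add a new circuit (and can be folded into the hypothesis); alternatively, $G\cap C=C$ forces $V(E_1(G)\cap E_1(C))=V(C)$, which in a cycle means $E_1(C)\subseteq E_1(G)$, whence $C\subseteq \king(\child_G)$ and the union is trivial. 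With these cases dispatched, Theorem~\ref{thm:woffle-union} applies and yields that $G\cup C$ is a fluffle.

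The main obstacle I anticipate is the bookkeeping of degenerate and boundary cases — in particular, cleanly separating the genuine ear-addition situation ($\emptyset\subsetneq G\cap C\subsetneq C$) from the trivial ones ($G\cap C=\emptyset$ giving disconnected union, $G\cap C=C$ giving $G\cup C=G$), and making sure that the equivalence in the lemma is being claimed only on the intended range. Once the degree-$1$ constraints from Corollary~\ref{cor:DegK(k)} and Proposition~\ref{prop:usefluffles}(ii) are in hand, the combinatorics of ``shared vertices force shared edges'' is short; the only subtlety is verifying that an alternating path in $G\cap C$ cannot start or end at the ``wrong'' vertex type without violating a degree constraint, which is where Theorem~\ref{thm:woffle-union}'s characterization does the real work.
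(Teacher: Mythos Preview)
Your proposal is correct and follows essentially the same approach as the paper, combining the degree constraints of Prop.~\ref{prop:usefluffles}(ii) with Thm.~\ref{thm:woffle-union}. The paper streamlines the forward direction by arguing the contrapositive directly: given $z\in (V(G)\cap V(C))\setminus V(E_1(G)\cap E_1(C))$, the unique $E_1$-edge incident to $z$ in $G$ and the one in $C$ must differ, so $z$ has out-degree $2$ (if $z\in X$) or in-degree $2$ (if $z\in R$) in $G\cup C$ --- a one-line degree violation that avoids your second appeal to Thm.~\ref{thm:woffle-union}.
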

As a consequence of Lemma~\ref{lem:AB}, it suffices to consider only the edge
sets $E_1(G)$ and $E_1(C)$, and their incident vertex sets when constructing
representatives of CS-equivalence classes of fluffles. This
considerably simplifies the practical implementation since we do not have to
maintain graph data structures for the fluffles. By
Prop.~\ref{prop:kingrep}, we may use $\bar G=\king(\child)$ as canonical
representative of $[G]$, where $\child$ is the CS defined by $E_1(G)$.  If
desired, circuitnets for the fluffle $\bar G$ can be re-constructed in
linear time by means of a directed ear decomposition.

\subsubsection{Representatives of CS-equivalence classes}

\begin{algorithm}[tb]\small
  \caption{\small Assembly of equivalence classes.}
  \label{alg:RecEqClAss} 

    \SetKwInOut{Require}{Require}
    \SetKwInOut{Output}{Output}
    \SetKw{Continue}{continue}

    \Require{$\mathbf{C}:$ Set of elementary circuits}
    \Output {$\mathcal{E}:$ Set CS-equivalence classes} 
    $Q \leftarrow \emptyset$; $\mathcal{E} \leftarrow \emptyset$ \tcp*{Init. empty queue and Output}
    Initialize hash map $M$ \tcp*{Edge $\rightarrow$ element. circuits}   
    \For{$C\in \mathbf{C}$}{
      \If{$E_1(C) \notin \mathcal{E}$}{
        $\mathcal{E} \leftarrow \mathcal{E} \cup \{E_1(C)\}$; $Q \leftarrow Q \cup \{E_1(C)\}$
        \For{$e \in E_1(C)$}{
            \eIf{$e \in M$}{
                $M[e] \leftarrow M[e] \cup \{E_1(C)\}$\;
            }{
                $M[e] \leftarrow \{E_1(C)\}$\;
            }
        }   
      }
    }
    \While{$Q\neq\emptyset$}{
        $E_1(G) \leftarrow Q.pop()$\;
        \For{$E_1(C) \in \bigcup_{e \in E_1(G)} M[e]$ } {
        \If{$E_1(C) \subseteq E_1(G)$ or $E_1(G) \subseteq E_1(C)$ } {        
             \Continue
        }
        $V(G) \leftarrow V(E_1(G))$; $V(C) \leftarrow V(E_1(C))$\;
        \If{ $\emptyset \ne V(G)\cap V(C) = V(E_1(G) \cap E_1(C))$ }{  
             $E_1(G') \leftarrow E_1(G) \cup E_1(C)$\;
             \If{$E_1(G') \notin \mathcal{E}$}{
                 $\mathcal{E} \leftarrow \mathcal{E} \cup \{E_1(G')\}$; $Q \leftarrow Q \cup \{E_1(G')\}$\;
             }
         }
         } 
     }
\end{algorithm}

As a consequence of Lemma~\ref{lem:bumpeq_enum} and Def.~\ref{def:bumpeq},
a representative fluffle for each CS-equivalence class of
$\mathfrak{F}$ can be obtained, sparsely, by computing unions of elementary
circuits with representative fluffles of CS-equivalence classes with
fewer vertices. To this end, we start from the set $\mathbf{C}$ of
representatives of elementary circuits and initialize a queue $Q$ with
these elementary circuits. $Q$ will contain an arbitrary representative of
each CS-equivalence class of fluffles for further expansion. We
maintain a separate set $\mathcal{E}$ of all representatives as the output
of the algorithm. The queue $Q$ is processed in first-in-first-out order.

For each $G\in Q$, all representative elementary circuits $C\in\mathbf{C}$
are tested for $G\cup C$ constituting a fluffle by means of
Lemma~\ref{lem:AB}. If so, $G'=G\cup C$ serves as a representative of the
CS-equivalence class $[G']$. If there is a fluffle $G''\in
\mathcal{E}$ such that $G''\in[G']$, i.e., $E_1(G'')=E_1(G')$ then $G'$ is
discarded without changes to $Q$.  Otherwise $[G']$ is appended to $Q$.  If
a concurrent hashset to $Q$ is maintained with all representative edgesets
as keys, then the comparison of $E_1(G')$ and $E(G'')$ can be performed in
constant time.  The algorithm terminates when $Q$ is empty, i.e., all
maximal fluffles have been found. Finding overlapping elementary circuits
$C$ for each fluffle $G$ can be sped up by a hashmap $M$ linking edges in
$E_1$ to sets of circuits they are contained in. We only have to consider
pairs where at least one such edge is shared. A pseudocode for this
procedure is given in Alg.~\ref{alg:RecEqClAss}.

\subsubsection{Identification of autocatalytic matrices
  and autocatalytic cores}

Each entry in $\mathcal{E}$ obtained by Alg.~\ref{alg:RecEqClAss}
constitutes a candidate $\SM[\child]$ for an irreducible autocatalytic CS
matrix.  Since every irreducible autocatalytic CS matrix $\SM[\child]$
contains an autocatalytic core, it either satisfies
$\SM[\child]=\Metzler{\SM[\child]}$ or it strictly contains a principal
submatrix $\SM[\child']$ with this property. In the latter case, $\child'$
is a restriction of $\child$ and thus $\king(\child')$ is a proper subgraph
of $\king(\child)$. In fact, by Cor.~\ref{cor:AutMatrixKing},
$\king(\child')=\king[\child']$ must also be an induced subgraph of
$\king(\child)$. Moreover, we have $E_1(\king(\child'))\subseteq
E_1(\king(\child))$ if and only if $\child'$ is a restriction of
$\child$. Thus, we have the following necessary condition:
\begin{corollary}
  If $G$ is a fluffle that defines an autocatalytic CS matrix, then
  there is an induced subgraph $G'$ of $G$ such that $G'$ is also
  an induced subgraph of $\king$. 
\end{corollary}
On the other hand, if $\SM[E_1(G)]$ is a Metzler matrix, i.e., if $G$
is an induced fluffle representative, and $G$ contains an autocatalytic
core, then $G$ is itself autocatalytic by Thm.~\ref{thm:avoidMetzlerCheck}.

These simple observations suggest computing the Hasse diagram with respect
to set inclusion, $\Hasse(\mathcal{E})$, for sets $\mathcal{E}$ of fluffle
equivalence classes. Traversing $\Hasse(\mathcal{E})$ in bottom-up order,
one then checks, for each $\subseteq$-minimal candidate $G$ in
$\mathcal{E}$:
\begin{itemize}
\item[(a)] whether $G = \king[V(E_1(G))]$, which is equivalent to $G$
  giving rise to a Metzler CS matrix $\SM[E_1(G)]$, and, if so,
\item[(b)] whether $\SM[E_1(G)]$ is Hurwitz unstable.
\end{itemize}
A minimal element in $(\mathcal{E},\subseteq)$ that satisfies (a) and (b)
is an autocatalytic core.  Moreover, any Metzler matrix that contains an
autocatalytic submatrix is automatically autocatalytic, and in this
case, test (b) can be omitted. Taken together, an explicit test for
autocatalyticity needs to be performed only for inclusion-minimal induced
fluffle representatives that are Metzler and for non-minimal non-Metzler
matrices in $\mathcal{E}$.  If one of the conditions (a) or (b) is
violated, $G$ is removed from $\Hasse(\mathcal{E})$ and each parents of $G$
is connected to each immediate descendant of $G$.  Upon completion of the
traversal, all minimal elements in $\Hasse(\mathcal{E})$ are autocatalytic
cores and thus Metzler matrices. Moreover, all descendants of a Metzler
matrix in $\Hasse(\mathcal{E})$ are again Metzler. Similarly, all parents
of a non-Metzler matrix are again non-Metzler. For each of the non-Metzler
matrices we explicitly test whether they are autocatalytic using an LP
solver \cite{virtanen_scipy_2020} to determine whether there is a vector
$v>0$ such that $\SM[\child]v>0$.

Determining the complete structure of the Hasse diagram, however, severely
compromises performance, and simply testing all CS-equivalence classes for
their autocatalytic capacity would be more efficient. In contrast,
predecessor relations are sufficient to avoid unnecessary testing and can
be obtained without additional costs. By Alg.~\ref{alg:RecEqClAss}, for
each element $G$ retrieved from $Q$, subset relations with all
CS-equivalence classes of elementary circuits intersecting in at least one
$e\in E_1(G)$ are determined. These subset relations, however, define
predecessor relations in $\Hasse(\mathcal{E})$ and can be utilized to avoid
unnecessary testing for autocatalysis as suggested. By construction, only
CS-equivalence classes of elementary circuits can be leaves. If their
set of predecessors is empty, they can be excluded from testing whenever
their associated CS matrix is non-Metzler. In the Metzler case, all
predecessors are recursively screened for the autocatalytic capacity
of their associated Metzler matrix.  Whenever this is the case, the search
is stopped and testing can be omitted. Additional flags avoid visiting and
testing an element twice.

\subsection{Direct enumeration of autocatalytic cores}

Since autocatalytic cores are necessarily Metzler matrices, it is possible
to modify Alg.~\ref{alg:RecEqClAss} to enumerate autocatalytic cores only:
first, the enumeration of elementary circuits is restricted to Metzler
circuits (Def.~\ref{def:Metzlercircuit}) since by
Cor.~\ref{cor:Metzler-core} every induced fluffle, and thus every candidate
for an autocatalytic core is a union of Metzler circuits. Moreover, if
$C_1,C_2,\dots,C_{n}$ is a circuitnet for a Metzler fluffle $G$ and each
$C_i$ is a Metzler fluffle, then any fluffle $G_k=\bigcup_{i=1}^k
C_{\pi(i)}$ (Cor.~\ref{cor:superimposingfluffle}) leads to a corresponding
Metzler matrix $\SM[E_1(G_k)]$ since $G_k$ is a fluffle subgraph of an
induced fluffle. Thereby, $\SM[E_1(G_k)]$ is a principal submatrix of a
Metzler matrix and hence itself a Metzler matrix. Therefore, all
non-Metzler fluffles can be discarded immediately. Moreover, if $E_1(G)$ is
an autocatalytic core, none of its extensions can be cores. Hence, $E_1(G)$
is only pushed on the queue $Q$ if it corresponds to an induced fluffle and
$\SM[E_1(G)]$ is not autocatalytic, while only the induced autocatalytic
cases are added to the output $\mathcal{E}$. This procedure, however, is
not guaranteed to detect all predecessor relationships between
autocatalytic matrices. The resulting false positive candidates can be
identified in a post-processing step by checking whether there are subset
relationships among the core candidates in $\mathcal{E}$. This inclusion
testing can be parallelized to increase efficiency. However, empirical
tests revealed that a different strategy performs better: pushing all
elementary circuits and larger fluffles, along with their associated
autocatalytic Metzler CS matrices, into the queue while restricting
their processing to inclusion-relation detection only. This approach
drastically reduces the number of required set-inclusion tests between
candidates and therefore offers a substantial performance advantage.

\subsection{Extensions for large CRNs} 

With increasing network size, the number of expected elementary circuits
grows exponentially. Exhaustive enumeration of elementary circuits as
required by Alg.~\ref{alg:RecEqClAss} therefore becomes infeasible for
large CRNs. A natural restriction is to limit the size of circuits, at the
cost of also limiting the size of resulting fluffles in the assembly. In
biochemical networks, one may expect that autocatalytic subsystems are
predominantly confined to functional modules or pathways.  We therefore aim
to enumerate circuits first within such modules and only then extend the
search for circuits to connections between modules.  We proceed in two
steps: the network is clustered and elementary circuits within units are
enumerated exhaustively. Then circuits crossing (typically sparsely
connected) borders of neighboring clusters (as identified by the cluster
partition tree) are enumerated with size restrictions.

A useful decomposition of a CRN should ideally generate subnetworks of
roughly equal size while preserving the cycle structure within modules as
much as possible. Moreover, as mentioned, the modules should be
biochemically meaningful, i.e., encapsulate specific metabolic
functions. This problem has received considerable attention in applications
to metabolic networks \cite{holme_subnetwork_2003, schaeffer_graph_2007,
  sridharan_identification_2011}. Here, we re-implemented the partitioning
algorithm described previously \cite{sridharan_identification_2011}, which is based
on spectral methods \cite{newman_modularity_2006}. A detailed description
and pseudocode for cluster and cycle enumeration algorithms are provided in
SI Sec.~\emph{\nameref{sec:Algorithms}}.

\subsection{Implementation details}

The algorithms detailed above are implemented as a Python package
\ourtool. It is structured in different submodules. First, a metabolic
model is imported in \code{xml}-format in \code{partitionNetwork.py} via
\code{libsbml} \cite{bornstein_libsbml_2008} and translated into a
\code{networkX} \cite{SciPyProceedings_11} \code{DiGraph}. After
modularisation by means of leading eigenvector computations using
\code{NumPy} \cite{harris_array_2020} in \code{partitionComputations.py},
the partition tree and all relevant parameters are \emph{pickled} and saved
separately for each strongly connected component. In the second step, each
strongly connected component is now analyzed separately in the module
\code{partitionAnalysis.py}. The submatrix of the stoichiometric matrix is
extracted, and then the set $\mathcal{E}$ of elementary circuits
enumerated. Depending on the strategy chosen by the user, the associated
CS matrices and their autocatalytic capacities are determined concurrently
or downstream after assembly. During assembly, feasible combinations of
CS-equivalence and CS-equivalence classes of elementary circuits are
combined. Finally, autocatalytic capacity is computed using the real part
of the largest eigenvalue or by solving a linear programming problem
with \code{SciPy} \cite{virtanen_scipy_2020}. If feasible, partitioning,
enumeration of elementary circuits, and assembly of larger equivalence
classes is processed in parallel using \code{ConcurrentFutures}
\code{ProcessPoolExecutors}. Heavy computations are processed with Cython
\cite{behnel2011cython}.

\section{Showcase applications}
\label{sect:appl}

\begin{figure}
  \centering
  \includegraphics[width=\columnwidth]{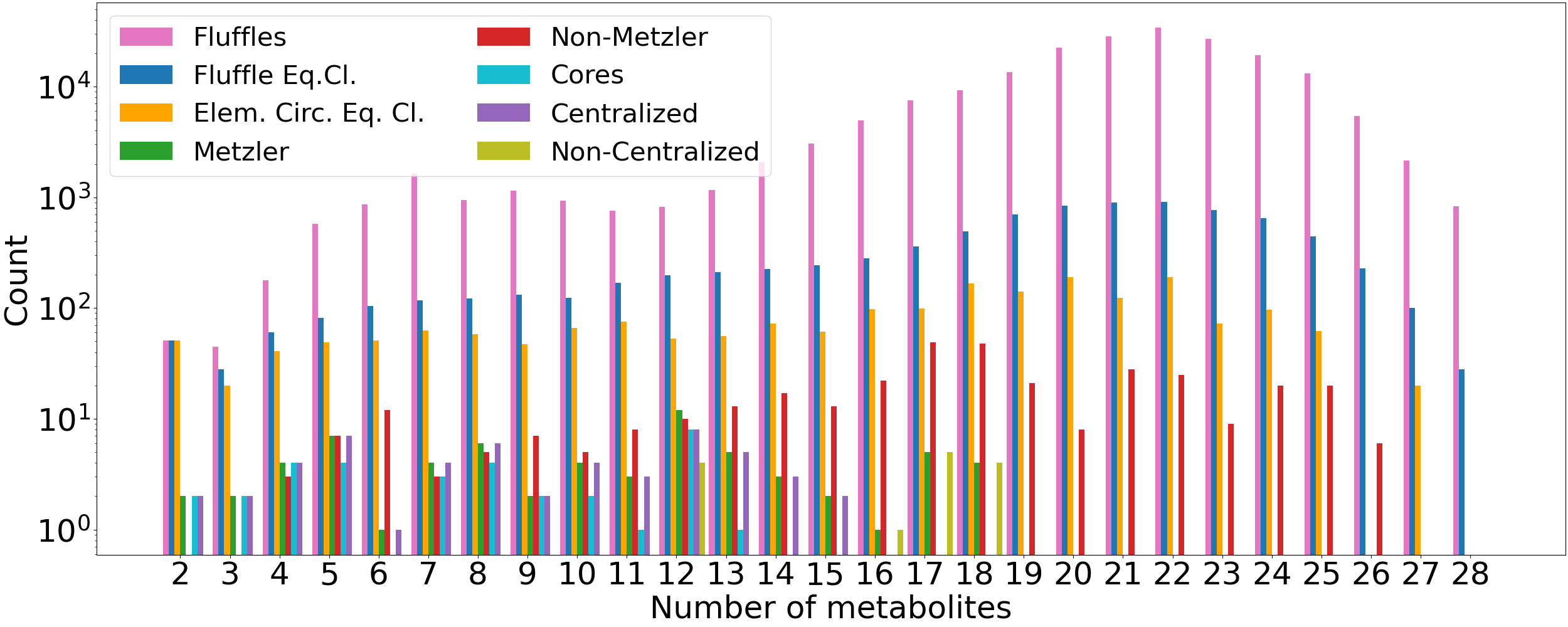}
  \caption{Size distribution of CS-equivalence classes for elementary
      circuits and fluffles, and autocatalytic Metzler and non-Metzler
    matrices in the \textit{E.\ coli} core metabolism. See SI
    Fig.~\ref{fig:EqClFluffles} for additional details.}
  \label{fig:EqClFluffles}
\end{figure}

To demonstrate the performance of \ourtool, and thus the practical use of
the algorithms described above, we investigated four metabolic networks:
the \textit{E.\ coli} core model \cite{orth_reconstruction_2010}, a larger
model of \textit{E.\ coli} DH5$\alpha$ \cite{monk_multiomics_2016}, a
metabolic model of human erythrocytes \cite{bordbar_iabrbc283_2011}, and a
model of the archaeon \textit{Methanosarcina barkeri}
\cite{feist_modeling_2006}. In all cases, we removed small, highly connected
molecules (e.g., CO\textsubscript{2} and H\textsubscript{2}O) as well as
exchange metabolites such as ADP and NADH, since they are of little
relevance for the biological interpretation of autocatalysis. A full list
for each model is provided in SI Sec.~ref{sec:addComp}.

Our version of the \textit{E.\ coli} core metabolism CRN comprises 36
metabolites and 71 reactions. Its K{\"o}nig graph contains 2,021 elementary
circuits, all belonging to distinct CS-equivalence classes. The enumeration
algorithm identified 202,206 fluffles, grouped into 8,551 CS-equivalence
classes. For a summary of their size distribution, we refer to
Fig.~\ref{fig:EqClFluffles}. Even for relatively small networks,
restricting to representatives of CS-equivalence classes provides a drastic
reduction in computational resources: fluffle enumeration took about 12,500
seconds, while enumeration of CS-equivalence classes required only about 25
seconds and only 5 seconds when being computed in parallel. In total,
\ourtool required 17.4s for completion, including decomposition of the
network and construction of the stoichiometric matrix from the reaction
data.

Autocatalytic subsystems are common in metabolic networks. In the central
carbon metabolism, represented by our \textit{E.\ coli} core model, 158 of
the 2,021 elementary circuits (7.9\%) are autocatalytic. Of these, 42 are
associated with Metzler CS matrices and 114 with non-Metzler CS
matrices. Overall, approximately 5\% (426 of 8,551) of CS-equivalence
classes are autocatalytic; 67 of these have a Metzler $\SM[\child]$, while
359 are non-Metzler. Of the 67 autocatalytic Metzler matrices, 53 were
centralized and only 14 were non-centralized (see 
Def.~\ref{def:centautocata} below for more details). Interestingly, the
ratio of autocatalytic Metzler CS to non-Metzler matrices roughly
halves when moving from elementary circuits to all equivalence classes,
from 1/3 to 1/6, which corresponds to the overall decrease in the fraction
of equivalence classes associated with Metzler matrices, from 6.4\% to
1.8\%. Among the 67 autocatalytic CS-Metzler matrices, we identified 33
autocatalytic cores. One of these, shown in SI Fig.~\ref{fig:TypeIV}, is
an autocatalytic core of Type IV. Previously, no example of this type had
been reported in the literature
\cite{blokhuis_universal_2020,unterberger_stoechiometric_2022}.

We compared our implementation with the ILP formulation of
Gagrani et al.~\cite{gagrani_polyhedral_2024}. To this end, we computed the
stoichiometric matrix and passed these data as input to the ILP, which
found 31 autocatalytic cores in 1.69s. Restricting \ourtool to enumerating
autocatalytic cores exclusively, only 0.342s (averaged over 1000
iterations) were required, and 33 cores were identified, which
included all found by Gagrani et al. \cite{gagrani_polyhedral_2024}. The two additional
cores are depicted in Fig.~\ref{fig:AddCores} of the Supplemental
  Material. We comment on potential reasons for these differences in the
Appendix below.

\begin{figure}
  \centering
  \includegraphics[width=\columnwidth]{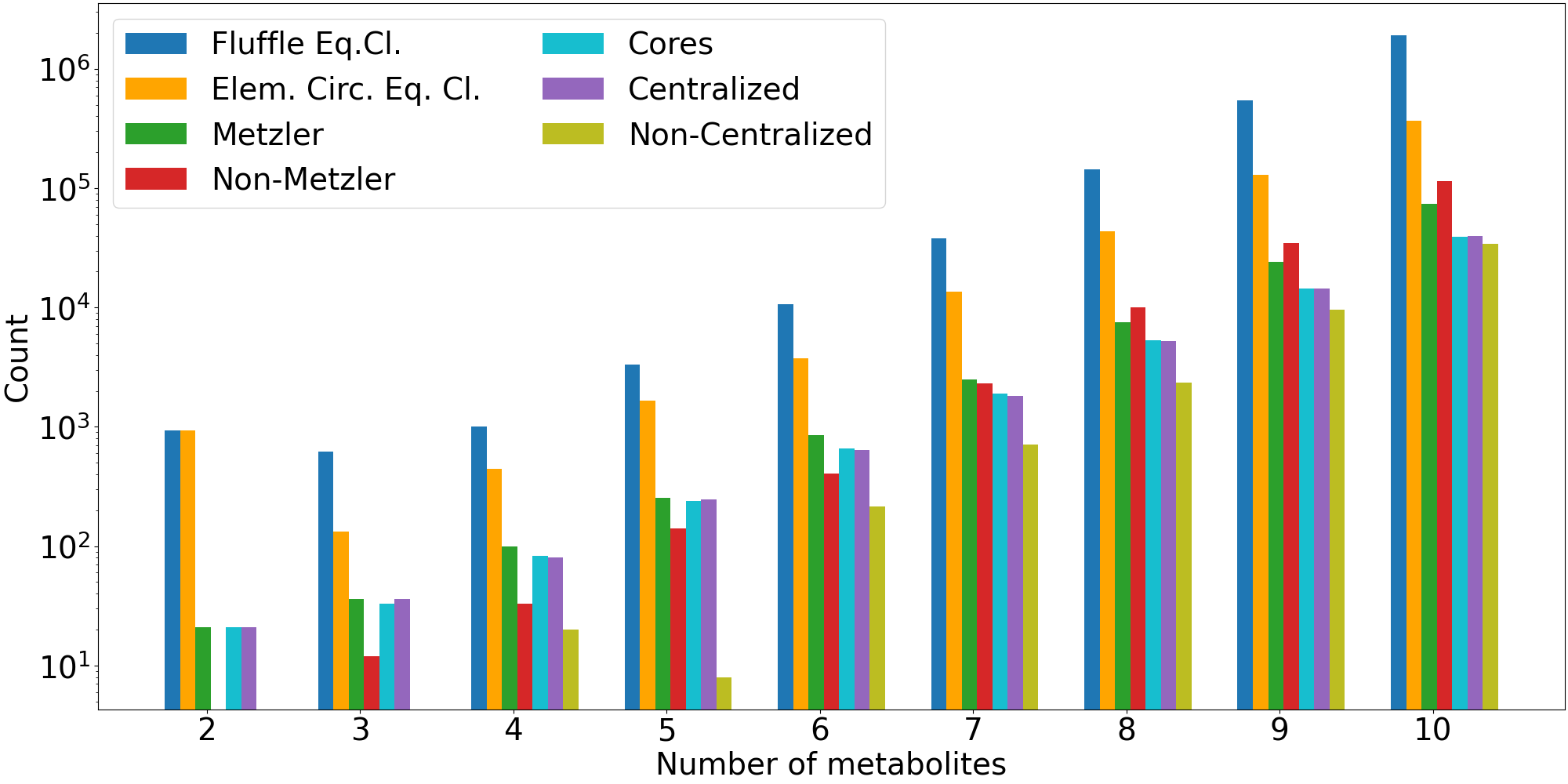}
  \caption{Size distribution of CS-equivalence classes for elementary
    circuits and fluffles, as well as autocatalytic CS-Metzler and
    non-Metzler matrices in the largest connected component of our
    modified \textit{E.\ coli DH5$\alpha$} network.}
  \label{fig:EqEColiC}
\end{figure} 

The larger \textit{E.\ coli} DH5$\alpha$ network comprised $2779$ reactions
and $1951$ metabolites. After removing small and highly connected
metabolites (see SI Sec.~\emph{\nameref{sec:addComp}}), as performed for the
\textit{E.\ coli} core network, $10$ strongly connected components with at
least $2$ reactions remained. In total, we retained $1142$ reactions and
$622$ metabolites. The largest strongly connected component comprised of
$568$ metabolites and $1061$ reactions. Overall, $2,647,664$ CS-equivalence
classes with at most $10$ metabolites and $10$ reactions could be detected;
the majority $(94.8\%; 2,516,295)$ comprised CS-equivalence classes with
associated non-Metzler matrices; $161,589 \ (6.4\%)$ of them autocatalytic
and $2,354,706 \ (93.6\%)$ non-autocatalytic. In contrast, $131,369
\ (5.2\%)$ of the enumerated CS-equivalence classes were associated with a
Metzler matrix; $109,391 \ (83.3\%)$ autocatalytic and only $21,978
\ (16.7\%)$ non-autocatalytic matrices. The majority of autocatalytic
Metzler matrices ($56\%$; $61,903$) form autocatalytic cores. Centralized
autocatalysis dominated with $57\%$ non-centralized autocatalysis with
$43\%$ slightly among the autocatalytic Metzler matrices. Overall, $10\%$
of all CS-equivalence classes were autocatalytic. The size distributions
are depicted in Fig.~\ref{fig:EqEColiC}. In total, \ourtool required
1h:49min:24s with maximum consumption of 16.9 Gb internal memory. 
To compare with the ILP of \cite{gagrani_polyhedral_2024}, we used the same
approach as for the smaller network and supplied the stoichiometric matrix
of the largest strongly connected component as input. After 14 hours of
running time, 4700 autocatalytic cores had been enumerated up to a size of
8 species and reactions. At this point, 20s were required for the
computation of the next core. We, therefore, terminated the enumeration
process. In contrast, the restriction of our algorithm to enumerating only
cores took 4min:5s with a maximum memory consumption of 302 Mb. The
majority of the time was spent on the enumeration of elementary circuits,
while assembly of larger equivalence classes and post-processing finished
in 26s and 17s, respectively.

\begin{figure}
  \centering
  \includegraphics[width=\columnwidth]{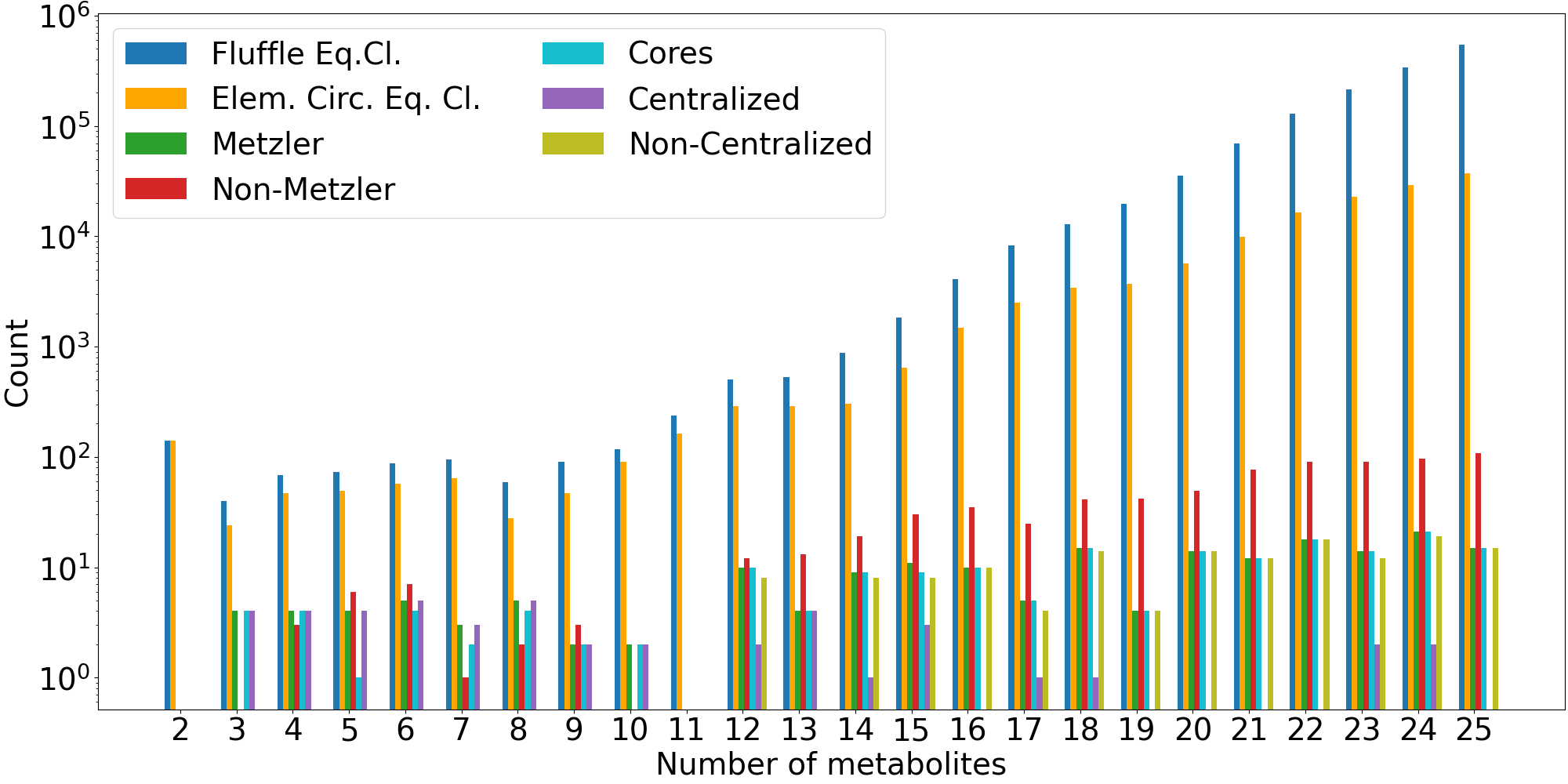}
  \caption{Size distribution of CS-equivalence classes for elementary
    circuits and fluffles, as well as autocatalytic CS Metzler and
    non-Metzler matrices in the largest connected component of our
    modified \textit{erythrocyte} network.}
  \label{fig:Erythrocytes}
\end{figure} 

To investigate the frequency of autocatalysis in non-bacterial species, we
applied the algorithm to another network, of human erythrocytes
\cite{bordbar_iabrbc283_2011}, which contained 342 metabolites and 469
reactions. After removal of all small metabolites, two larger strongly
connected components remained: one composed of 69 metabolites and 112
reactions, respectively, covering central carbon metabolism, including
glycolysis, PPP, and amino-acid metabolism, and a second component composed
of 72 metabolites and 135 reactions, largely covering lipid
metabolism. In summary, 1,379,913 CS-equivalence classes with a maximum
size of 25 metabolites/reactions were enumerated. In contrast to the
\textit{E.\ coli} networks, only $940$ ($0.068\%$) were autocatalytic. The
network reflecting central carbon metabolism exhibited approximately $8\%$
(103/1258) autocatalytic CS-equivalence classes (35/150 Metzler and 113/183
non-Metzler), which is in line with the results obtained from the
\textit{E.\ coli core} network. However, the network reflecting lipid
metabolism contained 1,378,647 CS-equivalence classes, of which only
$0.06\%$ (837) were autocatalytic; of these 156 are Metzler and 681
non-Metzler matrices. Size distributions for both networks together are
depicted in Fig.~\ref{fig:Erythrocytes}.

\begin{figure}
  \centering
  \includegraphics[width=\columnwidth]{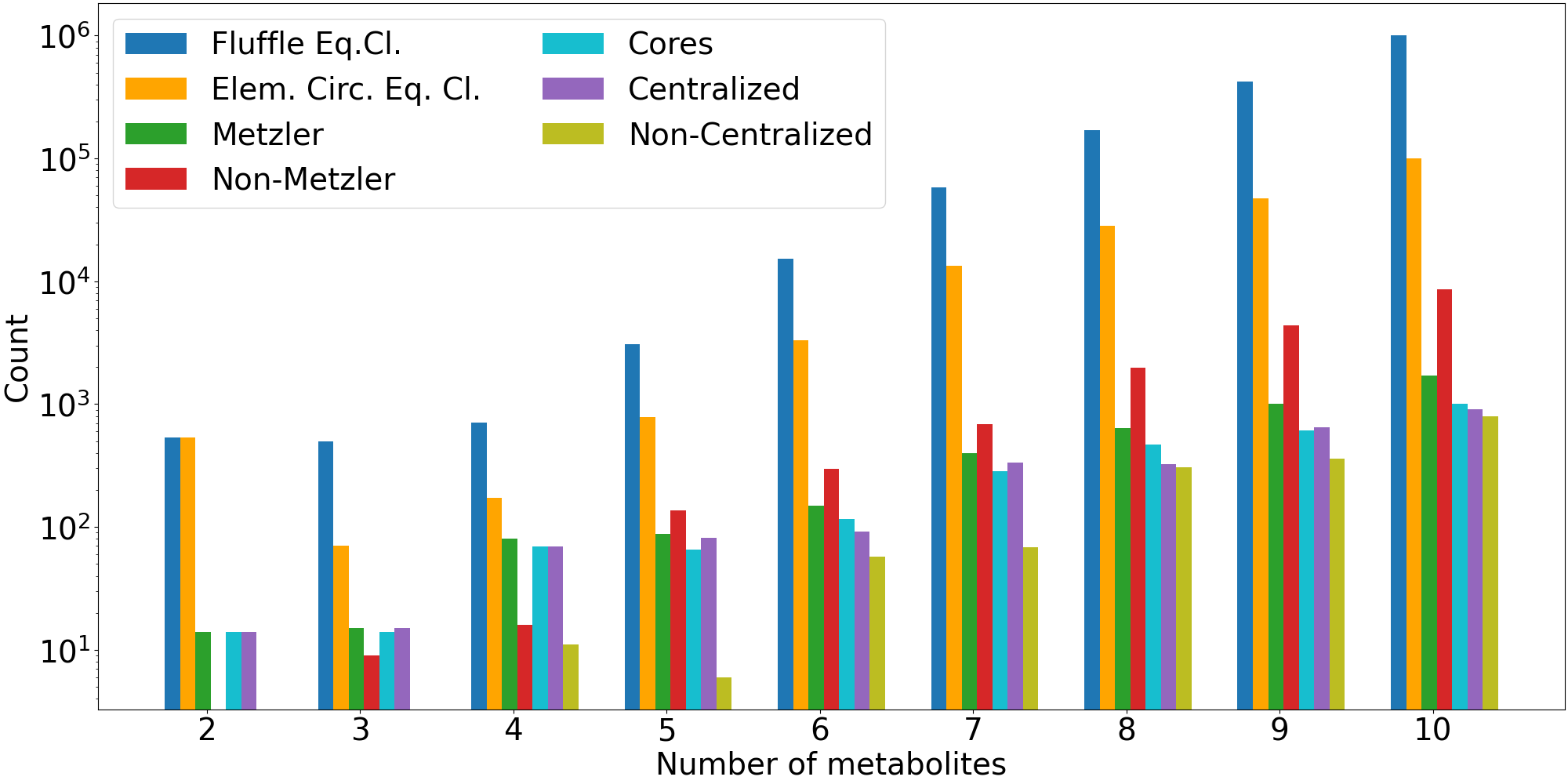}
  \caption{Size distribution of CS-equivalence classes for elementary
    circuits and fluffles, as well as autocatalytic CS-Metzler and
    non-Metzler matrices in the largest connected component of our
    modified \textit{Mathanosarcina barkeri} network.}
  \label{fig:Methanosarcina}
\end{figure} 

Finally, we investigated whether autocatalysis could be found in a
member of the Archaea domain. To this end, we took advantage of the
metabolic model of \textit{Methanosarcina barkeri}
\cite{feist_modeling_2006} with 690 metabolites and 692 originally, of
which 249 and 402, respectively, remained in the largest strongly connected
component. We restricted the size of all CS-equivalence classes to 10
metabolites and reactions. Within this connected component, only 1.2\%
(20,194) of all CS-equivalence classes (1,677,604) were found to be
autocatalytic; 4105 with a Metzler and 16,089 with a non-Metzler
matrix. Nearly three-quarters of the Metzler matrices (5483) were
autocatalytic, while for the non-Metzler matrices this is the case for only
1\% (16,089/1,656,158). Two-thirds of the CS-equivalence classes with
autocatalytic Metzler matrices correspond to autocatalytic cores. The size
distribution is depicted in Fig.~\ref{fig:Methanosarcina}.

\FloatBarrier

\section{Discussion}
\label{sect:conclusion}

We have presented a detailed mathematical analysis of autocatalytic
substructures in large CRNs. Starting from the stoichiometric matrix $\SM$,
we identify a specific class of subgraphs in the bipartite (K{\"o}nig)
representation of the CRN, called fluffles, which are necessary to support
irreducible autocatalytic subnetworks. Fluffles fall into equivalence
classes determined solely by the corresponding child-selections. These
correspond to Metzler matrices that form autocatalytic cores if and only if
they are induced subgraphs of the CRN, while larger irreducible
autocatalytic subnetworks only need to contain a Metzler part, or
equivalently, a spanning fluffle, as well as a smaller autocatalytic core.

Based on these structural insights, we developed an algorithmic approach to
produce representative fluffles by superimposing elementary circuits. This
purely graph-theoretical method avoids the complex ILP formulation used 
previously to detect autocatalytic cores \cite{gagrani_polyhedral_2024, 
kosc_thermodynamic_2025}. Furthermore, it extends to a much broader 
class of autocatalytic subsystems beyond the autocatalytic cores. 
Tests on four metabolic networks, a small model of the
\textit{E.\ coli} core metabolism and three much larger CRNs comprising up
to more than 600 metabolites and 1100 reactions showed that our algorithmic
approach is feasible in practise. For the small network, a complete
analysis is obtained within about 17 seconds. For the large network, the
computation had to be limited to moderate-size fluffles, with up to 10, 10,
and 25 metabolites and reactions for \textit{E.\ coli}, Methanosarcina Barkeri, 
and human erythrocytes, respectively. Clearly, this does not capture
all autocatalytic cores, since cores in the smaller \textit{E.\ coli}
network ranged up to 13 metabolites, i.e., almost half of the size of the
CRN. Nevertheless, in the \textit{E.\ coli DH5$\alpha$} model, we
identified more than 100,000 irreducible autocatalytic CS subnetworks, more
than half of which are autocatalytic cores. These results reinforce the
conclusion of earlier studies (in particular those based on different
definitions of autocatalysis \cite{barenholz_design_2017}) on the
ubiquitous nature of autocatalysis in metabolic CRNs.

A direct comparison of the restricted variant of Alg.~\ref{alg:RecEqClAss}
that computes autocatalytic cores only with the ILP formulation 
\cite{gagrani_polyhedral_2024} turned out favorably for our approach with
respect to resource consumption. An evaluation of the \textit{E.\ coli}
core network shows furthermore that the ILP does not enumerate all
autocatalytic cores and struggles with larger network sizes. In fact, the
algorithm described previously \cite{gagrani_polyhedral_2024} focuses on
enumerating only the minimal subsets of reactions that contain an
autocatalytic core, without imposing any restriction on the species
involved. Consequently, for reaction sets \(R_1 \subset R_2\) associated
with two autocatalytic cores \(A_1\) and \(A_2\) based on different sets of
species, the ILP formulation would identify only the minimal set
\(R_1\). We briefly elaborate on this issue in the Appendix.

Once autocatalytic subsystems have been identified, they can provide
further insight into the potential behaviors of the CRN. For example, the
close connection between autocatalysis and sustained oscillations has been
explored before \cite{blokhuis_stoichiometric_2025}. Building on this work, one
can state the following sufficient condition:
\begin{proposition}[Proof: SI]
  \label{prop:StabOszil}
  Let $\SM[\child]$ be a Hurwitz-stable autocatalytic CS matrix. Then there
  exists a choice of parameters such that the system
  \eqref{eq:ODEdynamics}, 
  $\dot{z}=f(z) \coloneqq \SM \cdot v(z)$, admits periodic solutions.
\end{proposition} 
This result sets the stage for identifying minimal subnetworks that
  are responsible for ``interesting'' dynamical behavior such as periodic oscillations.

  
For large CRNs, in particular models of complete metabolisms, an exhaustive
enumeration of fluffle CS-equivalence classes is probably infeasible even
on an HPC system. This is certainly true in the (chemically unrealistic)
worst-case scenario, since it is possible to construct CRNs in which all
autocatalytic cores have size $2$ but there are exponentially many
autocatalytic Metzler matrices: it suffices to consider a CS matrix
$\SM[\child]$ such that $\SM[\child]_{ii}=-1$, $\SM[\child]_{ij}=2$ for
$i>j$, and $\SM[\child]_{ij}=1$ for $i<j$; in this case, only the 2x2
principal submatrices of $\SM[\child]$ are autocatalytic cores while every
principal submatrix of $\SM[\child]$ is an irreducible, autocatalytic
Metzler matrix.

By Cor.~\ref{cor:Metzler-core}, all autocatalytic cores are superpositions
of Metzler circuits. Since worst-case instances may also contain very large
numbers of elementary circuits, this raises the question of whether it is
possible to enumerate the subset of Metzler circuits without enumerating
all elementary circuits. In our analysis of the large \textit{E.\ coli}
network, we pragmatically limited the length $L$ of the elementary
circuits. Current versions of Johnson's algorithm allow such a cut-off. In
particular, the algorithm of Gupta \& Suzumura \cite{gupta_finding_2021}
for sparse graphs, with running time $O((c+|V|)L\bar{d}^L)$ where $\bar{d}$
is the average degree, is attractive for applications to CRNs. So far,
there is no comparably efficient approach to produce elementary circuits
ordered by size. A related algorithm to enumerate chordless elementary
circuits, optionally restricted to length $L$, is described in
\cite{detecting_Bisdorff_2010}. We show that this is indeed feasible
  in a follow-up paper \cite{golnik_using_2026}.

All autocatalytic cores derive from induced fluffles, and more broadly from
(not necessarily maximal) induced strong blocks in $\king$. A recent
linear-delay algorithm for enumerating strongly connected induced subgraphs
\cite{tada_linear_2025, shota_linear_2025} may thus serve as an alternative
starting point for the efficient generation of candidate subsets for
autocatalytic cores.

As expected, the computational examples in the previous section 
identified a large number of irreducible autocatalytic subnetworks in
metabolic systems. It remains an open question what fraction of those is of
biological relevance. Clearly, this will depend on the metabolic fluxes
that are realized and on the fluxes that can potentially be realized
given specific food sets \cite{gagrani_polyhedral_2024,kosc_thermodynamic_2025}. 
The enumeration of irreducible autocatalytic subnetworks at least makes it 
possible to address such questions computationally in systematic studies.

\begin{small}
\section{Declarations}

\subsection{Availability of Data and Materials} 

The implementation, models, and all necessary data is available at
https://github.com/hollyritch/autogato.

\subsection{Competing interests}

The authors declare that they have no competing interests.

\subsection{Funding}
This work was supported in part by the Novo Nordisk Foundation (grant
no.\ 0066551, MATOMIC), the German Research Foundation (grant
no.\ 234823413), and the German Federal Ministry of Education and Research
(BMBF) within the German Network for Bioinformatics Infrastructure (de.NBI)
under grant number w-de.NBI{\textbackslash}303{\textbackslash}203018.
High-performance computing resources were provided by the German
Network for Bioinformatics Infrastructure (de.NBI).

\subsection{Authors' contributions}

All authors contributed to the conceptualization, methodology, and writing of
the manuscript.
\end{small}

\section*{Supporting information}

The file \emph{Supplements.pdf} is available free of charge and contains:

\begin{itemize}
  \item {\textbf{Background: Graphs and Matrices} (Basic Notation, Graph Theoretical Constructions)}
  \item {\textbf{Proofs of Statements in the Main Text}}
  \item {\textbf{The Set System of Circuitnets of Fluffles} ((strong) accessibility, commutability, no confluence)}
  \item {\textbf{Additional Computational Data} - Information on the metabolic models analyzed in this contribution for \emph{E. coli} (core metabolism), \emph{E. coli} (DH5$\alpha$), \emph{human erythrocytes}, and \emph{Methanosarcina barkeri}}
  \item{\textbf{Examples}
 	\begin{itemize}
		\item {Example for a type III autocatalytic core that does not admit a single elementary-circuit CS representative}
		\item {Example for an autocatalytic core type IV in the pentose-phosphate-pathway (PPP) of the \emph{E. coli} core model}
	\end{itemize}
	}
\item{\textbf{Algorithmic Overview} - Pseudocode and proof of correctness for the following algorithms: \textit{Partitioning}, \textit{Enumeration of elementary circuits}, \textit{Detecting autocatalytic capacity}}
\item {\textbf{Figures}
  	\begin{itemize}
		\item{Figure S1: Depiction of a type V autocatalytic core}
		\item{Figure S2: Counterexample that the set system of circuitnets of fluffles is confluent}
		\item{Figure S3: Length distribution of elementary and fluffles and their corresponding equivalence classes in the \emph{E. coli} core model}
		\item{Figure S4: Depiction of the two autocatalytic cores identified by \ourtool but not by the ILP approach of Gagrani et al. \cite{gagrani_polyhedral_2024} in the \emph{E. coli} core model}
		\item{Figure S5: Schematic depiction of an autocatalytic core type IV and an example from the PPP of the \emph{E. coli} core model}
		\item{Figure S6: Illustration for the fusion of two vertices of the partitioning tree with four intersecting metabolites}
	\end{itemize}
	}
\end{itemize}

\bibliography{AutocatalysisMath.bib}

\newpage

\section{Appendix}
\begin{appendix}

\section{Centralized autocatalysis}
\label{sec:centAut}

Upon analyzing autocatalytic structures one is bound to notice
a qualitative, dichotomic difference between two types. Consider the
following two examples:
\begin{center}
\begin{tabular}{lcl}
  \textbf{Example I:} & \qquad\qquad\qquad\qquad &  \textbf{Example II:}        \\
  $x_1\rightarrow x_2+x_3$ &  &  $x_1\rightarrow x_2+x_3$    \\
  $x_2\rightarrow x_1$     &  &  $x_2\rightarrow x_1 + x_3$  \\
  $x_3\rightarrow x_1$     &  &  $x_3\rightarrow x_1+x_2$    \\
\end{tabular}
\end{center}
with associated CS matrices, respectively,
\begin{equation}
  \begin{pmatrix}
    -1 & 1 & 1\\
    1 & -1 & 0\\
    1 & 0 & -1
  \end{pmatrix}
  \quad\quad\quad\quad
  \begin{pmatrix}
    -1 & 1 & 1\\
    1 & -1 & 1\\
    1 & 1 & -1
  \end{pmatrix}.
\end{equation}
In Example I, species $x_1$ produces $x_2$ and $x_3$, and both in turn
react back to $x_1$. The autocatalytic process is thus \emph{centralized}
around $x_1$: every reaction cycle passes through $x_1$, so that
autocatalysis can be interpreted as an amplification mechanism for $x_1$.
In contrast, in Example II, no single species plays such a role, due to a
stronger interconnection of the network: there is a clear symmetry among
indices $1,2,3$, and the network topology is invariant under permutations
of these labels.  To capture and formalize this intuitive difference, we
introduce the notion of \emph{centralized autocatalysis}. Our first
formulation is based on \emph{permutation cycles} of the associated
CS matrix. Subsequently, we will provide an equivalent graph-theoretic
characterization.  Recall the standard Leibniz formula for the determinant
of a $k\times k$ CS matrix $\SM[\child]$:
\begin{equation}\label{eq:leibnizCS}
  \det \SM[\child] = \sum_{\pi \in P_k} \operatorname{sgn}(\pi)
  \prod_{m=1}^{k} \SM[\child]_{m, \pi(m)},
\end{equation}
where $P_k$ denotes the symmetric group on $k$ elements. Every non-identity
permutation $\pi\neq\mathrm{id}$ can be decomposed as a product of disjoint
cyclic permutations of at least two elements,
\begin{equation}
  \pi= C_1 \cdot ... \cdot C_{n_\pi}.
\end{equation}
We write $\mathcal{C}_\pi$ for the set of permutation cycles of $\pi$.  For
simplicity of notation, we consider each cycle $C_i$ as an element itself
of $P_k$, i.e., as a permutation of its $k=|X_\kappa|$ elements. We say
that a permutation cycle $C$ \emph{contributes} (to $\det\SM[\child]$) if
\begin{equation}
  \prod_{m=1}^{k} \SM[\child]_{m, C(m)}\neq 0.
\end{equation}
We denote with $\mathfrak{K}[\child]$ the set of non-trivial permutation
cycles contributing to $\det\SM[\child]$, i.e. with length $\ge2$.  In
turn, permutation cycles without a contribution in the Leibniz formula
\eqref{eq:leibnizCS} can be ignored in the following. We are now in the
position to formalize the distinction between the two examples above:
\begin{definition}
  \label{def:centautocata}
  Let $\SM[\child]$ be a $k\times k$ irreducible autocatalytic
  Metzler CS matrix and denote by $M_{\kappa}$ the set of all
  $m^*\in\{1,\dots,k\}$ such that every permutation cycle $C$ contributing
  to $\det\SM[\child]$ satisfies $\SM[\child]_{m^*,C(m^*)}>0$. Then
  $\SM[\child]$ is \emph{centralized} if $M_{\kappa}\ne\emptyset$. The
  elements of $M_{\kappa}$ are the \emph{autocatalytic centers} of
  $\SM[\child]$ and we say that $\SM[\child]$ is \emph{centered at
  $M_{\kappa}$} provided $M_{\kappa}\ne\emptyset$.
\end{definition}

As a direct consequence of the definition, moreover, for centralized
autocatalysis, the determinant computed via the Leibniz formula
\eqref{eq:leibnizCS} naturally provides the sum over the weights of the
different cycles in the network, as the next proposition states.
\begin{proposition}[Proof: SI] 
  \label{prop:centralizedDet}
  Let $\SM[\child]$ be a $k\times k$ irreducible autocatalytic Metzler
  CS matrix that exhibits \emph{centralized autocatalysis}. Then
\begin{equation}
  \frac{\operatorname{det}\SM[\child]}{\prod_{m=1}^{k}\SM[\child]_{mm}}
  =1-\sum_{C}\prod_{m\in C}
  \frac{\SM[\child]_{m, C(m)}}{\vert \SM[\child]_{mm} \vert}
  \label{eq:propcentralized}
\end{equation}
where the sum runs on all permutation cycles.
\end{proposition}
In essence, Prop.~\ref{prop:centralizedDet} differs from the Leibniz
formula \eqref{eq:leibnizCS} because the sum runs over permutation cycles,
only, instead over all permutations. Prop.~\ref{prop:centralizedDet} holds
for centralized autocatalysis. Nevertheless, equality
\eqref{eq:propcentralized} does not solely apply to centralized
autocatalysis as stated in Thm.~\ref{thm:NgheClasses}.  Hence, it does not
provide a characterization of centralized autocatalysis.

Although examples of stoichiometric coefficients different from $(0,1)$ do
exist in metabolic networks, they are very rare. A well-known example
is the condensation of two acetyl-CoA molecules into acetoacetyl-CoA in the
synthesis of HMG-CoA during cholesterol or isopentenyl pyrophosphate (IPP)
biosynthesis \cite{lange_isoprenoid_2000}. In the special case of unit
stoichiometric coefficients, Prop.~\ref{prop:centralizedDet} simplifies to
an easily interpretable statement regarding the \emph{number} of
contributing permutation cycles of the stoichiometric matrix.
\begin{corollary}\label{cor:centralizedDettrivial}
Let $\SM[\child]$ be a $k\times k$ irreducible autocatalytic Metzler
CS matrix that exhibits \emph{centralized autocatalysis}, and such
that
\[\SM[\child]_{ij}\in\{-1,0,1\} \quad \text{for all $(i,j)$}.\]
Then
\[\operatorname{det}\SM[\child](-1)^k=1-\#_C,\]
where $\#_C$ is the number of permutation cycles $C$ such that
$\prod_{m=1}^{k} \SM[\child]_{m, C(m)}\neq 0$.
\end{corollary}
\begin{proof}
It directly follows from Prop.~\ref{prop:centralizedDet}.
\end{proof}

For a $k$-CS $\child=(X_\kappa, R_\kappa, \kappa)$, centralized
autocatalysis can be characterized in graph-theoretical terms using a
correspondence between contributing permutation cycles in its CS matrix
$\SM[\child]$ and directed elementary circuits in the induced subgraphs
$\king[\child]\coloneqq \king[X_\kappa \cup R_\kappa]$ of the K{\"o}nig
graph of the CRN. The key observation is that if $x$ and $y$ are
consecutive vertices in a permutation cycle $C$ that contributes to
$\SM[\child]$, then $(x,\kappa(x),y)$ is a path in $\king[\child]$ and,
\emph{vice versa}, if $(x,r,y)$ is a path in $\king[\child]$ of an
irreducible autocatalytic child-selection, then $r=\kappa(x)$.  Denoting
the elementary circuits (viewed as subgraphs of $\king[\child]$) by
$\mathbf{C}(\child)$, we obtain the following formal statement:
\begin{lemma}[Proof: SI] 
  \label{lem:ElemPermCycles}
  Let $\SM[\child]$ be an autocatalytic Metzler CS matrix and
  $\mathfrak{K}[\child]$ the set of contributing permutation cycles of
  length $\geq 2$. Then there is a one-to-one correspondence between
  $\mathfrak{K}[\child]$ and $\mathbf{C}(\child)$ such that a contributing
  permutation cycle $(x_1,x_2,\dots x_k)$ corresponds to the elementary
  circuit \linebreak
  $(x_1,\kappa(x_1),x_2,\kappa(x_2),\dots,\kappa(x_{k-1}),x_k,\kappa(x_k),x_1)$
  in $\king[\child]$.
\end{lemma}

As a direct consequence of Lemma \ref{lem:ElemPermCycles}, we can now
rephrase Def.~\ref{def:centautocata} in graph-theoretical terms:
\begin{corollary}
  \label{def:CenAutTriv}
  Let $\Auto\coloneqq \SM[\child]$ be an autocatalytic Metzler CS
  matrix. Then $\Auto$ is \emph{centralized} if and only if there is a
  vertex $x^*\in X_\kappa$ such that $x^*\in X(C)$ for all $C\in
  \mathbf{C}(\child)$.
\end{corollary}
We collect all center species $x^*$ in a set $X^*_{\kappa}$ and refer to it
as the autocatalytic center of $\Auto$. Note that $x^*$ and $X^*_{\kappa}$
correspond to $m^*$ and $M_{\kappa}$ above.

To simplify the notation, we introduce a normalized version of the matrix
$\Auto$ by setting $N(\Auto)_{ij}\coloneqq \Auto_{ij}/|\Auto_{jj}|$ for all
$i,j$. With this notation, Prop.~\ref{prop:centralizedDet} and
Lemma~\ref{lem:ElemPermCycles} immediately imply
\begin{corollary}\label{cor:CentAutNonTriv}
  Let $\Auto\in \{\mathbb{Z}\}^{k\times k}$ be centralized in
  $X^*_\kappa$, denote by  $\mathbf{C}_{x^*}(\child)$ the set of cycles
  containing $x^*$, and let $s_C(y)$ be the successor of $y$ along
  the elementary circuit $C$. Then 
  \begin{equation}\label{eq:CondNonTriv}
    \det(\Auto) \cdot (-1)^{k-1} = \sum_{C\in \mathbf{C}_{x^*}(\child)}
    \prod_{y\in X(C)} \left \vert N(\Auto)_{y,s_C(y)}\right \vert
  \end{equation}
  for all autocatalytic centers $x^*\in X^*_{\kappa}$. In particular if
  $\Auto\in \{-1,0,1\}^{k\times k}$ then
  \begin{equation}
    \det(\Auto) \cdot (-1)^{k-1} = {\vert \mathbf{C}_{x^*}(\child) \vert}
    -1,
  \end{equation} 
\end{corollary}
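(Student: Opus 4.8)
The plan is to substitute the permutation-cycle / elementary-circuit dictionary of Lemma~\ref{lem:ElemPermCycles} into the determinant formula of Prop.~\ref{prop:centralizedDet}, collapse the resulting sum onto the circuits through a fixed autocatalytic center $x^*$, and then read off the $\{-1,0,1\}$ case. The first ingredient is that an autocatalytic center lies on \emph{every} contributing cycle: if $x^*\in X^*_\kappa$ and some $C\in\mathfrak{K}[\child]$ did not contain $x^*$, then $C$ would fix $x^*$, giving $\Auto_{x^*,C(x^*)}=\Auto_{x^*x^*}<0$ and contradicting the defining inequality of Def.~\ref{def:centautocata}. Through the bijection $\mathfrak{K}[\child]\leftrightarrow\mathbf{C}(\child)$ of Lemma~\ref{lem:ElemPermCycles} this says $x^*\in X(C)$ for every elementary circuit $C$ of $\king[\child]$, i.e.\ $\mathbf{C}_{x^*}(\child)=\mathbf{C}(\child)$ for every center $x^*$; in particular the right-hand side of \eqref{eq:CondNonTriv} is independent of the chosen $x^*$, which is what the quantifier in the statement records.

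Next I would rewrite the sum in Prop.~\ref{prop:centralizedDet}: only contributing cycles give nonzero summands, so by Lemma~\ref{lem:ElemPermCycles} the sum effectively runs over $\mathbf{C}(\child)=\mathbf{C}_{x^*}(\child)$, with the permutation map $C(\cdot)$ realized as the circuit successor $s_C(\cdot)$. For each such circuit one has the per-cycle identity
\[
\prod_{m\in X(C)}\frac{\Auto_{m,C(m)}}{\lvert\Auto_{mm}\rvert}
=\prod_{y\in X(C)}\bigl\lvert N(\Auto)_{y,s_C(y)}\bigr\rvert ,
\]
because each $\Auto_{y,s_C(y)}$ is a nonnegative (in fact positive, as $C$ contributes) off-diagonal Metzler entry, so absolute values may be inserted, and because $s_C$ permutes $X(C)$, so normalizing each factor by the diagonal entry of its row or of its column yields the same product. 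Substituting this together with $\prod_{m=1}^k\Auto_{mm}=(-1)^k\prod_{m=1}^k\lvert\Auto_{mm}\rvert$ into Prop.~\ref{prop:centralizedDet} and solving for the determinant gives \eqref{eq:CondNonTriv}; carried out in detail, the determinant term comes out as $(-1)^{k-1}\det N(\Auto)$, equal to $\bigl(\sum_{C\in\mathbf{C}_{x^*}(\child)}\prod_{y\in X(C)}\lvert N(\Auto)_{y,s_C(y)}\rvert\bigr)-1$, with $\det N(\Auto)=\det\Auto/\prod_m\lvert\Auto_{mm}\rvert$. Integrality of $\Auto$ is not used up to here.

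For the special case $\Auto\in\{-1,0,1\}^{k\times k}$, the negative-diagonal property forces $\Auto_{mm}=-1$ for all $m$, so $\prod_m\lvert\Auto_{mm}\rvert=1$, $N(\Auto)=\Auto$, and for every $C\in\mathbf{C}_{x^*}(\child)$ each factor $\lvert N(\Auto)_{y,s_C(y)}\rvert=\lvert\Auto_{y,s_C(y)}\rvert$ equals $1$, the entry being a nonzero member of $\{-1,0,1\}$ since $C$ contributes. Hence every summand is $1$, the sum collapses to $\lvert\mathbf{C}_{x^*}(\child)\rvert$, and $\det(\Auto)(-1)^{k-1}=\lvert\mathbf{C}_{x^*}(\child)\rvert-1$; equivalently, this is Cor.~\ref{cor:centralizedDettrivial} transported through Lemma~\ref{lem:ElemPermCycles} and the observation of the first paragraph.

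The step I expect to require the most care is the bookkeeping in the second paragraph: orienting each elementary circuit consistently with its permutation cycle so that $s_C=C$ and not $C^{-1}$, and verifying that the cycle sign $\operatorname{sgn}(C)=(-1)^{\lvert X(C)\rvert-1}$ combines with the diagonal sign $(-1)^{k-\lvert X(C)\rvert}$ of the coordinates fixed by $C$ to the uniform global factor $(-1)^{k-1}$. All of this, however, is already internal to Prop.~\ref{prop:centralizedDet} and Lemma~\ref{lem:ElemPermCycles}, so the present argument supplies no new combinatorial estimate beyond the centralization observation.
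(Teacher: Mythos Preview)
Your approach is exactly the paper's: the corollary is stated as an immediate consequence of Prop.~\ref{prop:centralizedDet} and Lemma~\ref{lem:ElemPermCycles}, and you have simply written out the substitution that the paper leaves implicit, including the observation that an autocatalytic center must lie on every contributing cycle and the row-versus-column normalization remark for $N(\Auto)$. Your more careful bookkeeping in fact yields $(-1)^{k-1}\det N(\Auto)=\bigl(\sum_{C}\prod_{y}|N(\Auto)_{y,s_C(y)}|\bigr)-1$, which agrees with the $\{-1,0,1\}$ specialization and with Cor.~\ref{cor:centralizedDettrivial}, and shows that \eqref{eq:CondNonTriv} as printed is missing the ``$-1$'' and the normalization on the left-hand side; this is a slip in the statement, not in your argument.
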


We conclude this section by connecting the concept of centralized
autocatalysis with the classification of autocatalytic cores proposed by
Blokhuis et al. \cite{blokhuis_universal_2020}. Their five types, in essence, 
correspond to the following five motifs (up to different stoichiometric
coefficients):
\begin{equation}
\textbf{Type I:} \quad\quad x_1\rightarrow x_2 \rightarrow 2x_1 \quad\quad \begin{pmatrix}
  -1 & 2\\
  1 & -1
\end{pmatrix}\qquad
\end{equation}
\begin{equation}
  \textbf{Type II$^*$}: \quad 
  \begin{cases} 
    x_1\rightarrow x_2+x_3 \\
    x_2\rightarrow x_3\\
    x_3\rightarrow x_1
  \end{cases}
  \quad \begin{pmatrix}
    -1 & 0 & 1\\
    1 & -1 & 0\\
    1 & 1 & -1
  \end{pmatrix}
\end{equation}
\begin{equation}
  \textbf{Type III:} \quad 
  \begin{cases} 
    x_1\rightarrow x_2+x_3 \\
    x_2\rightarrow x_1\\
    x_3\rightarrow x_1
  \end{cases}
  \quad \begin{pmatrix}
    -1 & 1 & 1\\
    1 & -1 & 0\\
    1 & 0 & -1
  \end{pmatrix}
\end{equation}
\begin{equation}
  \textbf{Type IV:} \quad 
  \begin{cases} 
    x_1\rightarrow x_2+x_3 \\
    x_2\rightarrow x_1 + x_3\\
    x_3\rightarrow x_1
  \end{cases}
  \quad \begin{pmatrix}
    -1 & 1 & 1\\
    1 & -1 & 0\\
    1 & 1 & -1
  \end{pmatrix}
\end{equation}
\begin{equation}\label{eq:coretype5}
  \textbf{Type V:} \quad 
  \begin{cases} 
    x_1\rightarrow x_2+x_3 \\
    x_2\rightarrow x_1 + x_3\\
    x_3\rightarrow x_1+x_2
  \end{cases}
  \quad \begin{pmatrix}
    -1 & 1 & 1\\
    1 & -1 & 1\\
    1 & 1 & -1
  \end{pmatrix}
\end{equation}
Examples 1 and 2 above are of Type III and Type V, respectively. 
For Type II, the classification of Blokhuis et al. \cite{blokhuis_universal_2020} 
allows for a more general structure with additional internal cycles. In 
what follows, we consider only the simplest Type II$^*$ motif above 
and refer to the original paper for a description of the general case. 
The next result states which of such motifs are centralized and 
establishes the range of validity of Eq.~\eqref{eq:propcentralized}a.
\begin{theorem}[Proof: SI] 
  \label{thm:NgheClasses}
  An autocatalytic core of type I, {{II}$^*$},
  III, or IV is centralized. Moreover, Eq.~\eqref{eq:propcentralized}
  holds for all five types of cores.
\end{theorem}
We underline that Type II, in its more general form, is typically 
\emph{not} centralized due to the presence of additional internal cycles. 
Similarly, Eq.~\eqref{eq:propcentralized} does not need to hold 
in this case. For brevity, we omit the straightforward demonstration 
of this observation, which lies outside the scope of this work.

\paragraph{Algorithmic considerations.} 
To test whether an autocatalytic Metzler matrix $\SM[\child]$ on
$(X_{\kappa},R_{\kappa})$ exhibits centralized autocatalysis, counting of
elementary circuits passing each $x\in X_{\kappa}$ is required. Since
$\kappa:X_{\kappa}\to R_{\kappa}$ is bijective, there is a 1-1
correspondence between circuits in the induced subgraph $\king[\child]$ and
the graph with vertex set $X_{\kappa}$ and edges $(x,y)$ whenever
$\SM[\kappa]_{yx}>0$. It therefore suffices to enumerate the set
$\mathcal{C}$ of elementary circuits in $\king[\child]$, e.g., using
Johnson's algorithm and to store the vertices $V(C)\cap X_{\kappa}$ in a
bit vector $\zeta_C$ for each $C\in\mathcal{C}$. The component-wise
conjunction of these vectors
\begin{equation}
  \zeta^* \coloneqq \bigwedge_{C\in\mathbf{C}} \zeta_C 
\end{equation}
identifies the set of autocatalytic centers as $M=\{x|\zeta^*_x=1\}$.
The autocatalytic Metzler matrix $\SM[\child]$ is therefore centralized
if and only if there is an $x\in X_{\kappa}: \zeta_x^*\ne 0$.

\section{Minimal Autocatalytic Systems (MAS)}
  \label{sec:MAS}

  The ILP approach of \cite{gagrani_polyhedral_2024} is based on
  subnetworks $(X',R')$ of a CRN $(X,R)$ where $R'\subseteq R$ and
  $X':=X(R')$ is defined as the set of species participating in the set of
  reactions $R'$ either as reactant or as product (or both). From the
  perspective of the present paper, any child-selection (CS)
  $\child=(X_\kappa,R_\kappa,\kappa)$ uniquely defines a subnetwork
  $(X(R_{\kappa}),R_{\kappa})$ as above, where typically $X_\kappa\subset
  X(R')$. The converse is, however, not true: the same subnetwork
  $(X(R'),R')$ may support another child-selection $\tilde{\child}$,
  i.e.\ with $R_{\tilde{\kappa}}=R_{\kappa}$. Thus, different 
  child-selections may be supported by the same subnetwork
  $(X(R'),R')$. Moreover, $\tilde{\child}$ need not be defined on the same
  set of species, i.e.\ we may have $X_{\tilde{\kappa}}\ne X_{\kappa}$ as
  long as $X_{\tilde{\kappa}}\subseteq
  X(R_{\tilde{\kappa}})=X(R_{\kappa})$. The key concept in the work of
  Gagrani \emph{et al.} \cite{gagrani_polyhedral_2024} is introduced by Def.~3.2 
  (III.2 in the arXiv preprint) as follows:
  \emph{``A minimal autocatalytic subnetwork (MAS) is defined to be the
  subnetwork with the least number of reactions containing a particular
  autocatalytic core.''} We rephrase this statement here as follows:
  \begin{definition}[Def.~3.2 \cite{gagrani_polyhedral_2024}]
    \label{def:prafulMAS}
    A \emph{minimal autocatalytic subnetwork} (MAS) is a subnetwork
    $(X(R'),R')$ containing an autocatalytic core that is induced by  
    an inclusion-minimal set of reactions with this property. 
  \end{definition}
  Inclusion minimality of the reaction set $R'$ implies that an MAS does not
  contain more reactions than specified by the child-selection of its
  defining autocatalytic core, i.e.\ $R'=R_{\kappa}$ for a child-selection
  $\child$ for which $\SM[\child]$ is an autocatalytic core. Since
  Def.~\ref{def:prafulMAS} imposes minimality only on the set of reactions,
  not all autocatalytic cores are associated with an MAS. In particular, an
  autocatalytic core $\SM[\child_2]$ with reaction set $R_{\kappa_2}$ is
  not associated with an MAS if $R_{\kappa_2}$ strictly contains a reaction
  set $R_{\kappa_1} \subset R_{\kappa_2}$ that supports another
  autocatalytic core $\SM[\child_1]$. By minimality of cores,
  $\SM[\child_1]$ necessarily involves a set $X_{\kappa_1}$ of reactants
  that is not contained in $X_{\kappa_2}=\kappa_2^{-1}(R_{\kappa_2})$. To
  see that such cases indeed exist, consider the following simple reaction
  network:
  \begin{equation}
    \begin{cases}
      x_1 + x_3 &\underset{1}{\longrightarrow}\quad x_2 + x_4\\
      x_2 &\underset{2}{\longrightarrow}\quad 2x_1 + x_3\\
      x_4 &\underset{3}{\longrightarrow}\quad x_3,
    \end{cases}
  \end{equation}
  with stoichiometric matrix
  \begin{equation}
    \SM=\begin{pmatrix}
    -1 & 2 & 0\\
    1 & -1 & 0\\
    -1 & 1 & 1\\
    1 & 0 & -1
    \end{pmatrix}.
  \end{equation}
  The first two columns and first two rows, corresponding to the CS
  \(\child_1=(\{x_1,x_2\},\{1,2\},\kappa(x_1,x_2)=(1,2))\), form the 
  autocatalytic core 
  \begin{equation}
    \SM[\pmb{\kappa}_1]=
    \begin{pmatrix}
      -1 & 2\\
      1 & -1
    \end{pmatrix}.
  \end{equation}
  The MAS-oriented ILP implementation of Gagrani \emph{et al.}
  \cite{gagrani_polyhedral_2024} therefore discards any reaction set
  strictly containing $\{1,2\}$. However, by considering all three columns
  and rows $2$, $3$, and $4$, we obtain another autocatalytic core:
  \begin{equation}
    \SM[\child_2]=
    \begin{pmatrix}
      -1 & 1 & 0\\
      1 & -1 & 1\\
      0 & 1 & -1
    \end{pmatrix},
  \end{equation}
  associated with the CS
  \begin{equation*}
    \child_2=(\{x_2,x_3,x_4\},\{1,2,3\},\kappa(x_2,x_3,x_4)=(2,1,3)),
  \end{equation*}
  where $X_{\kappa_1}=\{x_1,x_2\}\not\subseteq \{x_2,x_3,x_4\}$. Indeed,
  the ILP of Gagrani \emph{et al.} does not detect the autocatalytic core
  $\SM[\child_2]$. The two missing autocatalytic cores in the
  \textit{E.\ coli} core metabolism can also be explained in this manner.

  We note, finally, that it is a simple task to determine the set of all
  MAS in a given CRN if all autocatalytic cores have been computed. By
  considering the sets of reactions associated to all autocatalytic cores,
  it suffices to remove the sets that are non-minimal with respect to the
  inclusion relation. The graph-theoretic approach thus can also be used to
  enumerate MAS in addition to autocatalytic cores.

\end{appendix}

\clearpage
\section{TOC Graphic}

\begin{figure}[htb]
	\centering
	\includegraphics[width=\textwidth]{./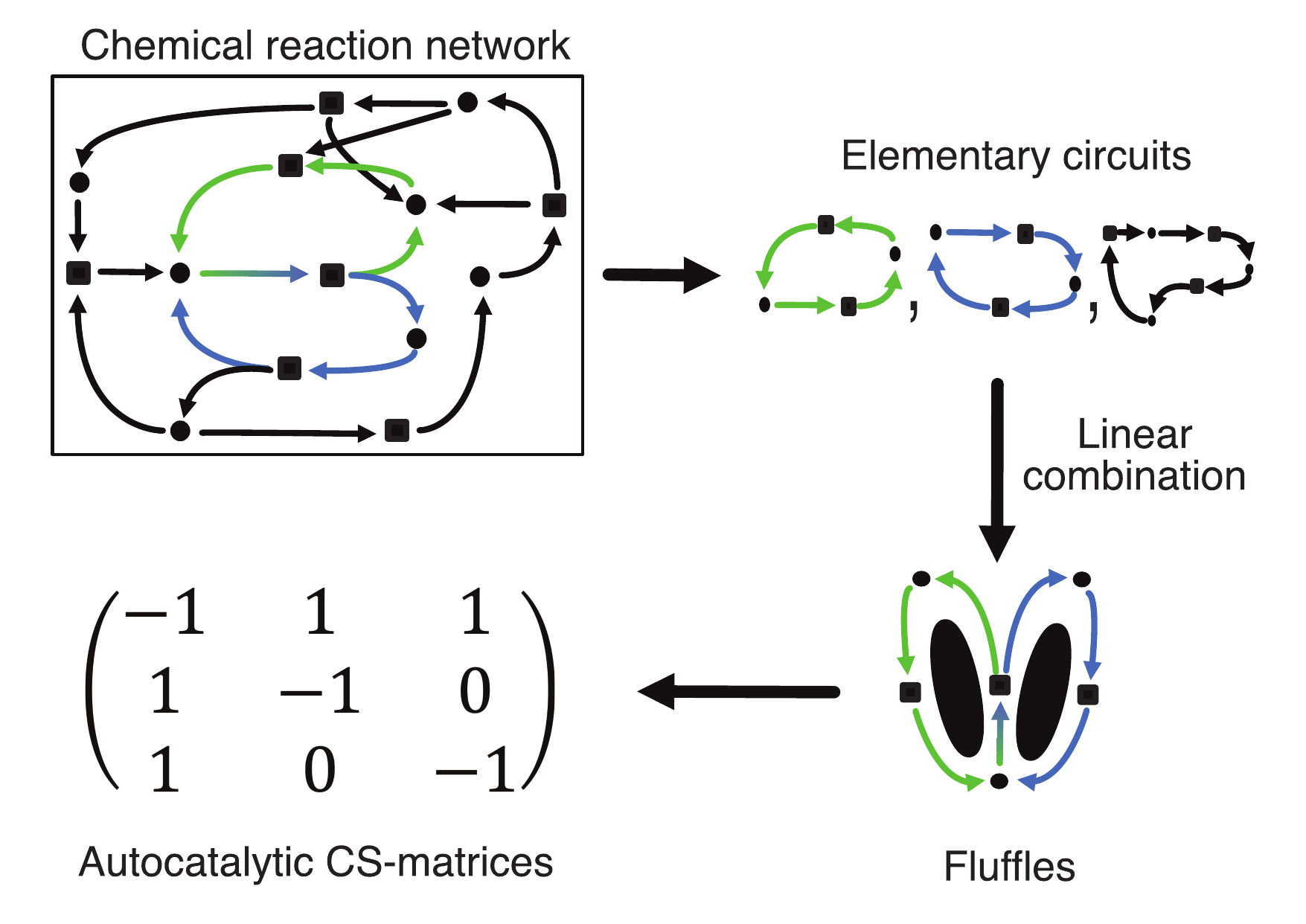}	
\end{figure}

\newpage

\begin{center}
{\Huge \textbf{Supplementary Information}\\}
\end{center}

\section{Background: Graphs and Matrices}

The main text follows well-established textbook terminology and
notation. For completeness, we briefly review the basic definitions
and well-known facts about matchings and circuits in directed graphs.

\subsection{Basic Notation} 

\par\noindent\textbf{Graphs.} We consider here directed graphs $G=(V,E)$
with vertex set $V$ and edge set $E\subseteq V\times V$ without loops,
i.e., $(v,v)\notin E$ for all $v\in V$. Where necessary, we write $V(G)$
and $E(G)$. Whenever there is a directed edge $(u,v)$ from vertex $u$ to
$v$, we say that $u$ is an in-neighbor of $v$, and $v$ is an out-neighbor
of $u$. The in-degree and out-degree of a vertex are the number of
its in-neighbors and out-neighbors, respectively. $H$ is a subgraph of $G$
if $H$ is graph, $V(H)\subseteq V(G)$, and $E(G)\subseteq E(H)$. The
subgraph $H$ is spanning if $V(H)=V(G)$ and induced if $u,v\in V(H)$ and
$(u,v)\in E(G)$ implies $(u,v)\in E(H)$. The adjacency matrix $\mathbf{A}$
of $G$ is the $V\times V$ matrix with entries $\mathbf{A}_{uv}=1$ if
$(u,v)\in E(G)$ and $\mathbf{A}_{uv}=0$ otherwise.

A walk of length $h\ge 0$ in $G$ is an alternating sequence
$(v_0,e_1,v_1,\dots,e_h,v_h)$ of vertices and edges such that
$e_i=(v_{i-1},v_i)$ for $1\le i\le h$. A walk is closed if $v_0=v_h$. It is
a \emph{path} if $i\ne j$ implies $v_i\ne v_j$ and thus also $e_i\ne e_j$.
A closed walk is an \emph{elementary circuit} if $i\ne j$ implies $v_i\ne
v_j$ for $i,j\ne 0$, i.e., if $(v_1,e_2,\dots,e_h,v_h)$ is a path.  A graph
is \emph{strongly connected} if there is a path from $u$ to $v$ for $u,v\in
V$. The underlying undirected graph is obtained from $G$ by ignoring
the direction of the edges. It is equivalent to the \emph{symmetrized graphs}
$G_s$ obtained by setting $V(G_s)=V(G)$ and $(u,v),(v,u)\in E(G_s)$
whenever $(u,v)\in E(G)$. A graph $G$ is \emph{connected} if its underlying
undirected graph is connected in the usual sense for undirected graphs, or
equivalently, if its symmetrized graph $G_s$ is strongly connected.

\medskip
\par\noindent\textbf{Hypergraphs.} Chemical reaction networks can be
represented as directed hypergraphs with vertex set $X$,
representing the chemical species, and a set of directed hyperedges $R$
denoting the reactions. A directed hyperedge $(E^-,E^+)$ is a pair of
non-empty subsets $E^-,E^+\subseteq V$ denoting the reactants and products,
respectively.  The \emph{K{\"o}nig representation} of a directed hypergraph
$(X,R)$ is the directed bipartite graph with vertex set $V=X\cupdot R$ and
edge set $E$ such that $(x,r)\in E$ if there is $r=(E^-,E^+)\in R$ such
that $x\in E^-$, and $(r,x)\in E$ if there is $r=(E^-,E^+)\in R$ such that
$x\in E^+$. Throughout, we denote the K{\"o}nig graph of the CRN under
consideration by $\king$. 

\medskip
\par\noindent\textbf{Linear Algebra.} Given a matrix $A$ with rows and
columns indexed by ordered sets $N$ and $M$, respectively, we denote by
$A[K,L]$ the submatrix with rows indexed by $K\subseteq N$ and columns
$L\subseteq M$. When $|K|=|L|$ the determinant of $A[K,L]$ is called a
\emph{minor}. If $A$ is a square matrix, i.e., $|N|=|M|$, a square
submatrix $A[K,L]$ with $K=L$ is called \emph{principal submatrix} and its
determinant \emph{principal minor}. Since only one set $K$ is needed to
define a principal submatrix, we refer to a principal submatrix as
$A[K]$. A permutation matrix $P$ of size $n$ is a square matrix that has
exactly one entry $1$ in each row and each column, and $0$ elsewhere. A
square matrix is \emph{irreducible} if there exists no permutation matrix
$P$ of size $n$ such that $PAP^{-1}$ is an upper triangular
matrix. Considering corresponding row and column indices as vertices and
introducing an edge $(i,j)$ if $A_{ij}\ne 0$ yields a graph $\Gamma$ that
is strongly connected if and only if $A$ is irreducible. A matrix $A\in
\mathbb{C}^{n\times n}$ is \emph{Hurwitz stable} if all of its eigenvalues
have negative real part. It is called \emph{Hurwitz unstable} if it
possesses at least one eigenvalue with positive real part.

\subsection{Graph Theoretical Constructions}

\medskip
\par\noindent\textbf{Matchings.} A \emph{matching} in $G=(V,E)$ is a subset
$M\subseteq E$ of edges such that each vertex is incident to at most one
edge. The notion of matchings is the same in directed and undirected
graphs, i.e., the direction of the edges does not play a role.  A matching
$M$ is \emph{perfect} if every vertex is incident with an edge in
$M$. While the existence of a perfect matching can be verified in
polynomial time (using any algorithm for computing a maximum matching),
counting the number of perfect matchings is \#P-complete. Enumeration can be
achieved with constant amortized time \cite{fink_constant_2025}. In the
applications below, we consider matchings on directed bipartite
graphs. Writing the vertex partition of the bipartite graph as $V\eqqcolon
X\cupdot R$, we are interested only in matchings $M\subseteq
E_1\coloneqq (X\times R)\cap E$. Clearly, this is equivalent to matchings
in the subgraph $G_1=(V, E_1)$.

\medskip
\par\noindent\textbf{Cycle Bases and Ear Decompositions in Digraphs.} 
Recall that a digraph is \emph{strongly connected} if every vertex is reachable
from every other vertex by means of a directed path. Equivalently, $G$ is
(weakly) connected if the underlying undirected graph is connected, and
every vertex $x\in V(G)$ is contained in an elementary circuit. The cycle
space of digraph is usually defined over $\mathbb{Z}$ as the kernel of the
directed incidence matrix $\mathbf{H}$ with entries $H_{xe}=-1$ if $x$ is
the tail of the edge $e$, $H_{xe}=+1$ if $x$ is the head of $e$, and $0$
otherwise. The (incidence vectors of the) elementary circuits define the
extremal rays of the non-negative cone $\{z\;|\;\mathbf{H}z=0, z_e\ge 0\,
\forall e\in E\}$
\cite{galluccio_pqodd_1996,loebl_remarks_2001,gleiss_circuit_2003}.  Every
strongly connected digraph has a basis of the cycle space that consists of
elementary circuits only, see Thm.~9 of Berge et al. \cite{Berge:73}.

A \emph{directed ear} in a digraph $G$ is a directed path in which all
internal vertices have in-degree $1$ and out-degree $1$, while the initial
vertex has out-degree at least $2$ and the terminal vertex has in-degree at
least $2$. An ear is called \emph{open} if its initial and terminal vertices
are distinct, and \emph{closed} otherwise. A digraph $G$ is strongly connected if
it can be obtained from a single directed cycle by successively adding
(open or closed) ears
\cite{galluccio_pqodd_1996,loebl_remarks_2001,havet_constrained_2019}. The
digraph $G$ is a \emph{strong block} if it is strongly connected and has no
cut vertices. Equivalently, any two vertices in $G$ lie on some elementary
circuit. Moreover, $G$ is a strong block if and only if it can be
constructed by means of an open ear decomposition
\cite{grotschel_minimal_1979}. For the ear decompositions, an \emph{ear
basis} is obtained by completing each ear (after it has been attached) to
an elementary circuit by a directed path from the terminal to the initial
vertex of the ear.

In the main text, we will use the following straightforward property of ear
decompositions for which we could not find a convenient reference, and thus
it is proved here:
\begin{lemma}
  Let $G'$ be a subgraph of $G$ and suppose both $G'$ and $G$ are strong
  blocks.  Then $G'$ can be extended to $G$ by adding a sequence of open
  ears.
\end{lemma}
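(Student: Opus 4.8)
The plan is to build the desired sequence of open ears by a greedy process, peeling off ears of $G$ that are not already in $G'$, while maintaining the invariant that every intermediate graph between $G'$ and $G$ is a strong block. First I would fix an open ear decomposition of $G$, say starting from an elementary circuit $C_0$ and attaching open ears $P_1, P_2, \dots, P_m$; by hypothesis this exhibits $G$ as a strong block. The difficulty is that this fixed decomposition need not ``pass through'' $G'$ in any orderly way: the circuit $C_0$ and some of the early ears $P_i$ may be entirely outside $G'$, or straddle the boundary. So a direct ``continue the decomposition of $G'$'' argument does not immediately work, and this is the main obstacle I anticipate.

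The key idea to overcome this is to argue in reverse: I would show that $G$ can be reduced to $G'$ by successively deleting open ears, i.e., by running an ear decomposition of $G$ ``backwards'' but relative to $G'$ as the base. Concretely, I would use the standard fact (see \cite{grotschel_minimal_1979}) that a strong block $G$ with a designated strong-block subgraph $H$ admits an open ear decomposition whose base graph is $H$ itself — this is the ``rooted'' version of open ear decomposition. Assuming this rooted version, applied with $H = G'$, we immediately obtain a sequence of open ears $Q_1, \dots, Q_t$ with $G' \cup Q_1 \cup \dots \cup Q_k$ a strong block for every $k$ and the union over all $k$ equal to $G$, which is exactly the claim. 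If I did not want to invoke the rooted version as a black box, I would prove it by induction on $|E(G)| - |E(G')|$: if $G = G'$ there is nothing to do; otherwise pick an edge $e \in E(G) \setminus E(G')$ and consider an open ear decomposition of $G$; the last ear $P_m$ containing $e$ (or, if $e \in C_0$, handle the base case) has all internal vertices of in- and out-degree $1$ in $G$, so deleting the interior of $P_m$ yields a smaller strong block $G''$ that still contains $G'$ (since no vertex or edge of $G'$ is interior to an ear disjoint from $G'$ — here one must check that $P_m$ can be chosen with interior disjoint from $G'$, which holds because interior vertices of $P_m$ have degree $1$ on each side in $G$ whereas every vertex of the strong block $G'$ lies on a circuit within $G'$ and hence has in- and out-degree $\ge 1$ within $G'$). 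Then apply the induction hypothesis to $G''$ and $G'$, and prepend $P_m$.

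The one technical point that needs care — and which I expect to be where the real work lies — is verifying that an ear whose interior meets $G'$ cannot be the ``outermost'' removable ear: one must rule out that every ear available for deletion has an interior vertex lying in $V(G')$. This follows from the degree argument above combined with the observation that $G'$, being a strong block, is itself the union of its elementary circuits, so each of its vertices has a ``circuit through it staying inside $G'$''; an ear interior vertex of $G$ with in- and out-degree exactly $1$ in $G$ therefore cannot support such a circuit unless that circuit also lies along the ear, contradicting that the ear's endpoints are the only vertices it shares with the rest of $G$. Once this is in place, the induction closes and reversing the order of the deleted ears gives the required sequence of open-ear additions from $G'$ up to $G$. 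This is routine and I would present it compactly rather than belaboring the degree bookkeeping.
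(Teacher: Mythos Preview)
Your reverse (peel-off) approach is genuinely different from the paper's, which builds $G$ up from $G'$ directly: for a vertex $x\in V(G)\setminus V(G')$ the paper takes an elementary circuit in $G$ through $x$ and some $y\in V(G')$, reads off the open ear between the last and first visits to $V(G')$ around $x$, attaches it, and iterates; once $V(G)$ is covered, the remaining edges are added as single-edge ears. This forward construction never needs to locate a removable ear of $G$.

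Your backward induction has a real gap. The degree observation establishes only a dichotomy: for the last ear $P_m$ of a \emph{given} open ear decomposition of $G$, either $P_m\subseteq G'$ entirely or $E(P_m)\cap E(G')=\emptyset$. It does not rule out the first alternative. Concretely, let $G'$ be the $6$-cycle $a\to b\to c\to d\to e\to f\to a$ and $G=G'\cup\{a\to d\}$. With the decomposition $C_0=(a,d,e,f,a)$, $P_1=(a,b,c,d)$, the last ear $P_1$ lies entirely in $G'$, and removing it destroys $G'$. A different decomposition does work here ($C_0=G'$, $P_1=(a,d)$), but your argument supplies no mechanism for selecting one: the phrase ``contradicting that the ear's endpoints are the only vertices it shares with the rest of $G$'' conflates $G_{m-1}$ (disjoint from the interior of $P_m$ by construction) with $G'$ (which need not be). Showing that a decomposition whose last ear avoids $G'$ always exists is essentially equivalent to showing that an ear decomposition of $G'$ extends to one of $G$---i.e., the forward statement---so the induction as written is circular. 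Likewise, invoking a ``rooted'' open ear decomposition from \cite{grotschel_minimal_1979} with an arbitrary strong-block base is exactly the lemma you are asked to prove; if that reference already states it in this generality there is nothing left to do, but then the proposal is a citation, not a proof.
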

\begin{proof}
  Consider a vertex $x\in V(G)\setminus V(G')$. Since $G$ is a strong
  block, there is a vertex $y\in V(G')$ and an elementary circuit $C$ that
  contains both $x$ and $y$. Let $u\in V(G')$ be the first
  predecessor of $x$ on $C$ and $v\in V(G')$ the first successor on $C$ in
  $G'$. Then the path $P=(u,\dots,x,\dots,v)$ is an ear. Clearly
  $G''=G'\cup P$ is again a strong block. Thus, all vertices in
  $V(G)\setminus V(G')$ can be added to $V(G')$ sequentially attaching
  ears. The resulting graph $G^*$ is a spanning subgraph of $G$. Any
  missing edges have both endpoints in $V(G^*)=V(G)$ and thus are ears. As
  an immediate consequence, any circuit basis of $G'$ can be extended to a
  circuit basis of $G$ by adding elementary circuits composed of an ear as
  described above and a directed path connecting its attachment vertices in
  the previously constructed subgraph.
\end{proof}
The same argument works for strongly connected graphs if open and closed
ears are allowed.

\section{Proofs of Statements in the Main Text}

\begin{mainitem}{Proposition}{prop:autored}
Let $\child=(X_{\kappa}, R_{\kappa}, \kappa)$ be a $k$-CS whose associated
CS-matrix $\SM[\child]$ is reducible, Metzler, and autocatalytic. Then
there exists a $k'$-CS $\child'=(X_{\kappa'}, R_{\kappa'}, \kappa')$ with
$X_{\kappa'}\subset X_{\kappa}$, $R_{\kappa'}\subset R_{\kappa}$, and
$\kappa'(X_{\kappa'})=\kappa(X_{\kappa'})$, such that its associated
CS-matrix $\SM[\child']$ is an irreducible autocatalytic Metzler matrix.
\end{mainitem}
\begin{proof}
  Since $\SM[\child]$ is reducible, there exists a permutation matrix such that 
  \begin{equation}
    P\SM[\child]P^{-1} = 
    \begin{pmatrix}
      A & 0 \\
    B & C
    \end{pmatrix}
  \end{equation}
  with irreducible $A$. Let now $X_{\kappa'} \subset X_{\kappa}$ represent
  the species corresponding to the rows of $A$. Then the triple $\child'
  \coloneqq (X_{\kappa'}, R_{\kappa'} \coloneqq \kappa(X_{\kappa'}),
  \kappa' \coloneqq \kappa|_{X_{\kappa '}})$ is a CS satisfying
  $\kappa(X_{\kappa'}) = \kappa'(X_{\kappa'})$. Moreover, $\SM[\child']
  \coloneqq A$ has negative diagonal entries since $\SM[\child]$ has
  negative diagonal entries; a fact that is not changed upon
  simultaneous rearrangement of rows and columns. $\SM[\child]$ being a Metzler matrix
  implies that $\SM[\child']$ has only non-negative off-diagonal entries,
  thus $\SM[\child']$ is Metzler. An analogous argument can be made for
  $\SM[\child'']\coloneqq C$.

  In addition, $\SM[\child]$ being autocatalytic implies that there exists
  $v>0:\SM[\child]v>0 \Rightarrow P\SM[\child]v>0$. We let $w\coloneqq Pv$,
  then $w>0$:
  \begin{equation}
    0 < P \SM[\child] v =P \SM[\child] P^{-1} P v =  \begin{pmatrix}
    \SM[\child'] & 0 \\
    B & \SM[\child'']
    \end{pmatrix}w = \begin{pmatrix}
    \SM[\child']w_{1} \\
    Bw_1 + \SM[\child''] w_2
    \end{pmatrix}
  \end{equation}
  Thus $\SM[\child']w_1>0$. Assume now there is a column in $\SM[\child']$
  without a positive entry. Then one reaction of $\SM[\child']$ has no
  product. Hence, there exists a permutation matrix $P'$ such that:
  \begin{equation}
    P'\SM[\child'] P'^{-1}=
    \begin{pmatrix}
      A' & \overset{\rightarrow}{0} \\		
      C' & x	
    \end{pmatrix}
  \end{equation}
  with $x<0$ and $\overset{\rightarrow}{0}\in 0^{(m-1)\times 1}$. However,
  this implies that $\SM[\child']$ is reducible, which is a
  contradiction. Thereby, $\SM[\child']$ is autocatalytic.
\end{proof}

\begin{figure}
	\centering
%
	\centering
		\begin{tikzpicture}
			\node[draw, circle] (x1) at (0,0) {$x_1$};
			\node[draw, circle] (x2) at (4,1) {$x_2$};
			\node[draw, circle] (x3) at (4,-1) {$x_3$};
		
			\node[draw, rectangle] (r1) at (2,0) {$r_1$};
			\node[draw, rectangle] (r2) at (4,3) {$r_2$};
			\node[draw, rectangle] (r3) at (4,-3) {$r_3$};
			
			\draw[->] (x1) -- (r1);
			\draw[->] (x2) -- (r2);
			\draw[->] (x3) -- (r3);
		
			\draw[->] (r1) -- (x2);
			\draw[->] (r1) -- (x3);
			
			\draw[->] (r2) to [bend right = 45] (x1);
			\draw[->] (r2) to [bend left = 45] (x3);
				
			\draw[->] (r3) to [bend right = 45] (x2);
			\draw[->] (r3) to [bend left = 45] (x1);
		\end{tikzpicture}
	\caption{Depiction of a 
	type V autocatalytic core} 
    \label{fig:coreV}
\end{figure}
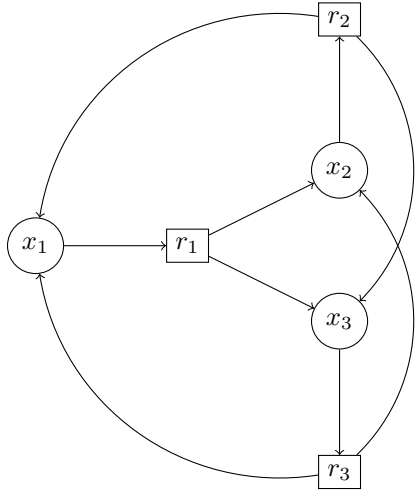

\begin{mainitem}{Lemma}{lem:CSgraph}
  Let $\king'=(X'\cupdot R',E_1'\cupdot E_2')$ be a subgraph of $\king$ with
  reactant vertices $X'$, reaction vertices $R'$, and edges $E_1\subseteq
  X'\times R'$ and $E_2'\subseteq R'\times X'$ 
  such that
  \begin{enumerate}
  \item $|X'|=|R'|$;
  \item every $x\in X'$ has out-degree $1$ and every $x\in R'$ has
    in-degree $1$.
  \end{enumerate}
  Then $\king'$ is child-selective with $\kappa(x)=r$ for
  $(x,r)\in E_1'$. 
\end{mainitem}
\begin{proof}
  Since $x\in X'$ has out-degree $1$ and $\king'$ is bipartite, there is a
  unique $\kappa(x)\in R'$. Analogously, for every $r\in R'$ there is a
  unique $\mu(r)\in X'$ and we have $(x,\kappa(x))\in E_1'$ for all $x\in
  X'$ as well as $(\mu(r),r)\in E_1'$ for all $r\in R'$. Thus we have
  $\kappa(\mu(r))=r$ and $\mu(\kappa(x))=x$, i.e., $\mu(\kappa)$ is the
  identity on $X'$ and $\kappa(\mu)$ is the identity on $R'$. Hence
  $\kappa$ is a bijection and $\child=(X',R',\kappa)$. By construction, we
  have $\king'=\king(\child)$.
\end{proof}

\begin{mainitem}{Lemma}{lem:scirreduc}
  Let $\child=(X_\kappa,E_\kappa,\kappa)$ be a CS. Then $\king(\child)$ is
  strongly connected if and only if $\Metzler{\SM[\child]}$ is irreducible.
\end{mainitem}
\begin{proof}
  There is a path $(x,r,y)$ in $\king(\child)$ if and only if $r=\kappa(x)$
  and $y$ is a product of $r$ in $X_{\kappa}$, which is the case if and
  only if $\SM_{y\kappa(x)}=\SM[\child]_{yx} >0$, 
  and hence if and only if
  $\mathbf{M}_{xy}\coloneqq \Metzler{\SM[\child]}_{xy}>0$.
  \\ First suppose
  $\king(\child)$ is strongly connected. Then there is a path from $x$ to
  $y$ for all $x,y\in X_{\kappa}$, say $(x=z_0,r_1,z_1,\dots,r_k,y=z_k)$
  with $r_i=\kappa(z_{i-1})$. Thus $\mathbf{M}_{z_{i-1}z_{i}}>0$ and hence
  $\mathbf{M}$ is irreducible.  Conversely, suppose $\mathbf{M}$ is
  irreducible. Then for every pair $x,y\in X_{\kappa}$ there is a sequence
  of vertices $z_i\in X_{\kappa}$ with $x=z_0$ and $y=z_k$ such that
  $\mathbf{M}_{z_{i-1}z_{i}}>0$ and hence
  $(x=z_0,\kappa(z_0),z_1,\kappa(z_1),\dots,\kappa(z_{k-1}),y=z_k)$ is a
  path in $\king(\child)$. Moreover, for every reaction vertex $r$ there is
  an edge $(x,r)$ with $x=\kappa^{-1}(r)$ and an edge $(r,y)$ since there
  is $y\in X_{\kappa}$ with $\mathbf{M}_{\kappa^{-1}(r),y}>0$. Hence all
  vertices of $\king(\child)$ are reachable from each other.
\end{proof}

\begin{mainitem}{Lemma}{lem:cutvertex}
  If $\king(\child)$ is strongly connected, then it does not contain a
  cut vertex.
\end{mainitem}
\begin{proof}
  Indirectly assume that $v$ is a cut vertex and $\king(\child)$ is
  strongly connected: this assumption implies that $v$ has at least two
  in-edges and two out-edges. If $v$ is reaction vertex, $v=r$, its
  in-edges in $\king(\child)$ are of the form $(\kappa^{-1}(r),r)$.  Since
  $\kappa$ is a bijection, there is at most one such edge, and thus we
  reach a contradiction. If $v$ is in turn a substrate vertex, i.e.,
  $v=x$, then all its out-edges are of the form $(x,\kappa(x))$,
  i.e., again, there is exactly one such edge, and thus we reach a
  contradiction analogously. Therefore, $\king(\child)$ cannot
  contain a cut vertex.
\end{proof}

\begin{mainitem}{Lemma}{lem:nosourcesink}
  A CS $\child=(X_\kappa,E_\kappa,\kappa)$ is autocatalytic if and only if
  the following conditions both hold:
  \begin{enumerate}
  \item there is a positive vector $v>0$ such
    that $\SM[\child]v >0$.
  \item 
    $\king(\child)$ does not possess source and sink vertices;
  \end{enumerate} 
\end{mainitem}
\begin{proof}
  Property 1 is identical to property (i) of
  Def.~M\ref{def:autocatmatrix}.\\ First, suppose $\child$ is
  autocatalytic. Property (i) of Def.~M\ref{def:autocatmatrix}
  further implies that no substrate vertex is a source in $\king(\child)$
  because the row $\SM[\child]_x$, corresponding to substrate $x$,
  satisfies $\SM[\child]_x v>0$ and thus $x$ is a product in at least one
  reaction.  By Eq.~(M\ref{eq:kingchild}), this implies that $x$ is
  not a source in $\king(\child)$. Property (ii) in
  Def.~M\ref{def:autocatmatrix}, on the other hand, implies that no
  reaction vertex is a sink. By Thm.~M\ref{thm:CSPerfMatch}, the CS
  bijection $\kappa$ explicitly guarantees the existence of a perfect
  matching in the set of reactant-to-reaction edges in
  $\king(\child)$. Thus, no substrate vertex is a sink and no reaction
  vertex is a source $\king(\child)$. In summary, statement 2 of the lemma
  is satisfied.\\ Conversely, suppose conditions 1 and 2 hold. Since
  $\child$ is a CS and there is no source or sink vertex in $\king(\child)$,
  then for every reaction vertex $r$, there is an edge $(x,r)$ and an edge
  $(r,y)$ and thus for every reaction $r$ (that is, for every column) we
  have $\SM[\child]_{xr}<0$ and $\SM_{ry}>0$, i.e., condition (ii) in
  Def.~M\ref{def:autocatmatrix} is satisfied. Together with (i),
  $\SM[\child]$ is autocatalytic.
\end{proof}

\begin{mainitem}{Theorem}{thm:Woffle}
  A graph $G$ is a fluffle if and only if it is bipartite with vertex set
  $X\cupdot R$ and it has an ear decomposition such that every ear
  initiates in a reaction vertex $r\in R$ and terminates in a substrate
  vertex $x\in X$. In this case, all directed open ear decompositions have
  this property.
\end{mainitem}
\begin{proof} 
  Let $G$ be a fluffle. Then $G$ is bipartite and a strong block. Thus, in
  particular, it has a directed open ear decomposition
  \cite{grotschel_minimal_1979}. Now consider \emph{any}
  decomposition ($P_1=C$, $P_2$, \dots, $P_h$), where $h\ge 1$ and $P_1$ is
  an elementary circuit and $P_h$, $h\neq 1$, is a path. As any
  ear decomposition starts from an elementary circuit $C$, it follows that
  for any substrate vertex $x\in V(C)\cap X$ of $C$, an out-neighbor of $x$
  is contained as well in $V(C)$. By condition (ii) in
  Prop.~\ref{prop:usefluffles}, such out-neighbors must be unique in the
  fluffle, and thus any substrate vertex $x\in V(C)\cap X$ cannot be an
  initial vertex of any ear.  Respectively and in total analogy, a reaction
  vertex $r\in V(C)\cap R$ of the elementary circuit $C$ cannot be a
  terminal vertex of an ear, since its only in-neighbor is also located
  along $V(C)$. Now let $G_2$ be the graph obtained by attaching the ear
  $P_2$ to $C$. Since $G_2$ is 2-connected, every substrate vertex $x\in
  V(G_2)\cap X$ already has a unique out-neighbor in $V(G_2)$ and every
  reaction vertex $r\in V(G_2)\cap R$ has a unique in-neighbor in
  $V(G_2)$. Thus, the next ear can only initiate at a vertex $r'\in
  V(G_2)\cap R$ and terminate at a vertex $x'\in V(G_2)\cap
  R$. Inductively, this argument holds true for all subsequent
  ears. Moreover, all ear decompositions are of this form.
  
  Conversely, assume that $G$ is bipartite with vertex partition $X\cupdot
  R$, let ($C$, $P_2$, \dots, $P_h$) be an ear decomposition of $G$ such
  that each $P_i$ initiates in a reaction vertex $r\in V(G)\cap R$ and
  terminates in a substrate vertex $x\in V(G) \cap X$. Since $G$ has an ear
  decomposition, then $G$ is a strong block, i.e., $G$ satisfies (iii) in
  Prop.~\ref{prop:usefluffles}. Moreover, as $G$ is bipartite, the
  vertices along each ear $P_i$ alternate between reaction vertices in $R$
  and substrate vertices in $X$, and since the first and last vertex of
  $P_i$ belong to different sets we have $|V(P_i)\cap X|=|V(P_i)\cap
  R|=|V(P_i)|/2$. Writing $G_1=C$ and $G_i$ for the graph obtained by
  attaching the ear $P_i$ to $G_{i-1}$, we have $|V(G_1)\cap X|=|V(G_1)\cap
  R|=|C|/2$ and $|V(G_i)\cap R|=|V(G_{i-1})\cap R|+(|P_i|/2-1)$ as well as
  $|V(G_i)\cap X|=|V(G_{i-1})\cap X|+(|P_i|/2-1)$, where the $-1$ accounts
  for the fact that initial and terminal vertices of $P_i$ are already
  present in $G_{i-1}$. By induction, it follows immediately that
  $|V(G_i)\cap R|=|V(G_i)\cap X|$ for all $i$, and thus $|R(G)|=|X(G)|$,
  i.e., $G$ satisfies (i). By construction, every substrate vertex $x\in X$
  has a single out-neighbor $r_x$. We have $r_x\in V(C)$ if $x\in V(C)$ and
  $r_x\in V(P_i)$ if $x\in V(P_i)$. Similarly, every reaction vertex $r\in
  R$ has a single in-neighbor $x_r$ satisfying $x_r\in V(C)$ if $r\in V(C)$
  and $x_r\in V(P_i)$ if $r\in V(P_i)$. Thus $G$ satisfies property (ii).
  Taking (iii), (i), and (ii) together, $G$ is a fuffle.
\end{proof}

\begin{mainitem}{Lemma}{lem:subwoffle}
  Let $G$ be a fluffle in $\king$ and $G'$ a subgraph of $G$ that
  is a strong block. Then $G'$ is a fluffle.
\end{mainitem}
\begin{proof}
  Trivially, $G'$ is bipartite, each substrate vertex $x\in X(G')$ has
  out-degree at most $1$ and each reaction vertex $r\in R(G')$ has
  in-degree at most $1$. Since $G'$ is a strong block by assumption, it has
  no vertices with in-degree or out-degree $0$, i.e., every $x\in X(G')$
  has out-degree $1$ and every $r\in R(G')$ has in-degree $1$. Thus $G'$
  satisfied condition \textit{(ii)}. Moreover, $|X(G')|=|R(G')|$ is
  satisfied because indirectly $|X(G')|<|R(G')|$ would imply that there is
  a reaction vertex $r\in R(G')$ without in-edge and $|X(G')|>|R(G')|$ would
  imply that there is a substrate vertex $x\in X(G')$ without out-edge,
  both leading to a contradiction to property \textit{(ii)}, which we just
  proved. Hence $G'$ also satisfies \textit{(i)} and is a fluffle.
\end{proof}

\begin{mainitem}{Theorem}{thm:woffle-union}
  Let $G$ be a fluffle with vertex partition $X\cupdot R$ and $C$ an
  elementary circuit such that $\emptyset \subset G\cap C \subset C$. Then,
  the connected components of $G\cap C$ are directed paths $P_i$. Moreover,
  $G\cup C$ is a fluffle if and only if all such paths $P_i$ start from a
  substrate vertex $x_i \in X$ and terminate with a reaction vertex $r_i
  \in R$.
\end{mainitem}
\begin{proof}
  Trivially, any connected component of proper subsets of an elementary
  circuit is a path whenever it starts and terminates
  with a vertex. In particular, then, the connected components $P_i$ of the
  intersection $\emptyset \subset G\cap C \subset C$ are paths. Without
  loss of generality, we can arrange such paths $P_i$ in circular order
  along $C$. Let then $Q_i$ identifies the path in $C$ that
  starts with the terminal vertex of $P_{i-1}$ and terminates with
  the starting vertex of $P_{i}$ (here the index $i$ is to be intended in a
  cyclic group). Note that the $Q_i$ are in a 1-to-1 relation with
  the connected components of the complement $C\setminus G$ of $C\cap G$ in
  $C$: they are obtained by adding to each connected component of
  $C\setminus G$ the starting and terminal vertices to obtain a
  path. Clearly, by construction, the paths $Q_i$ are ears for $G$. Since
  $G$ is a fluffle, then it admits itself an ear decomposition. Because the
  paths $Q_i$ are vertex-disjoint, any arbitrary ear decomposition for $G$
  can then be extended to an ear decomposition for $G\cup C$ by adding in
  arbitrary order the paths $Q_i$. Theorem~M\ref{thm:Woffle}
  therefore implies that $G\cup C$ is a fluffle if and only if each ear
  $Q_i$ initiates in a reaction vertex $r\in R$ and terminates in a
  substrate vertex $x\in X$. By complementary construction, this is
  equivalent to each path $P_i$ initiating in a substrate vertex $x\in X$
  and terminating in a reaction vertex $r\in R$.
\end{proof}

\begin{mainitem}{Lemma}{lem:circuitnetequiv}
  Two circuitnets $\mathcal{C}_1$ and $\mathcal{C}_2$ for fluffles $G_1$
  and $G_2$ yield the same CS matrix $\SM[\child]$ if and only if
  $\mathcal{C}_1 \bumpeq \mathcal{C}_2$.
\end{mainitem}
\begin{proof}
  Two circuitnets yield the same CS-matrix if and only if
  Eq.~(M\ref{eq:samecs}) holds. That is, for the graphs
  $\bigcup(\mathcal{C}_1)=(X^{(1)}\cup R^{(1)}, E^{(1)}_1 \cup E^{(1)}_2)$,
  $\bigcup(\mathcal{C}_2)=(X^{(2)}\cup R^{(2)}, E^{(2)}_1 \cup E^{(2)}_2)$
  it holds that $X^{(1)}=X^{(2)}$, $R^{(1)}=R^{(2)}$,
  $ E^{(1)}_1=E^{(2)}_1$, and in particular
  $\mathcal{C}_1\bumpeq \mathcal{C}_2$. In turn, since $\mathcal{C}_1$ and
  $\mathcal{C}_2$ are circuitnets for fluffles $G_1$ and $G_2$, then
  $E_1=(x_i,r_j)$ is a perfect matching (Prop.~M\ref{prop:usefluffles}
  and Eq.~(M\ref{eq:kingchild})) and thus fully specifies $(X,R)$.
\end{proof}

\begin{mainitem}{Lemma}{lem:bumpeq}
  Let $\mathcal{C}_1$ and $\mathcal{C}_2$ be circuitnets for fluffles $G_1$
  and $G_2$, respectively, and let $C'$ and $C''$ be two elementary
  circuits.  Assume $\mathcal{C}_1 \bumpeq \mathcal{C}_2$, $C'\bumpeq C''$
  and $G_1'\coloneqq \bigcup(\mathcal{C}_1\cup\{C'\})$ is a fluffle. Then
  $G_2'\coloneqq \bigcup(\mathcal{C}_2\cup\{C''\})$ is a fluffle as well
  with $\mathcal{C}_1\cup\{C'\} \bumpeq \mathcal{C}_2\cup\{C''\}$.
\end{mainitem}
\begin{proof} 
  First consider any fluffle $G=\bigcup(\mathcal{C})$ with circuitnet
  $\mathcal{C}$ and let $C$ be an elementary circuit such that $G\cup C$ is
  a fluffle.  By construction, we have $E_1(G\cup C)=E_1(G)\cup
  E_1(C)$. Moreover, $E(G)\cap E(C)\ne\emptyset$ since $G\cup C$ is
  fluffle, and hence a strong block. In particular, via
  Thm.~M\ref{thm:woffle-union}, $G$ and $C$ share at least one
  directed path $P$ initiating at a substrate-vertex and terminating at a
  reaction-vertex, i.e., $E_1(G)\cap E_1(C)\ne\emptyset$. Hence we have
  $E_1(G_1)\cup E_1(C')= E_1(G_1) \cup E_1(C'') =E_1(G_2)\cup
  E_1(C')=E_1(G_2)\cup E_1(C'')$ and $\emptyset \neq E_1(G_1)\cap
  E_1(C')=E_1(G_2)\cap E_1(C'')$, and thus
  Thm.~M\ref{thm:woffle-union} applies for $G'_2$, concluding that
  $G'_2$ is a fluffle. Def.~M\ref{def:csequivalence} implies
  $\mathcal{C}_1\cup\{C'\} \bumpeq \mathcal{C}_2\cup\{C''\}$.
\end{proof}

\begin{mainitem}{Lemma}{lem:bumpeq_enum}
  For every CS-equivalence class $[\mathcal{C}]$ there is a representative
  $\hat{\mathcal{C}}$ such that there exists a CS-equivalence class
  $[\mathcal{C}']$ with representative $\hat{\mathcal{C}}'$ and an
  elementary circuit $C^*$ such that $\hat{\mathcal{C}}'\cup\{C^*\} \bumpeq
  \hat{\mathcal{C}}$ and $|V(\bigcup(\mathcal{C}'))| <
  |V(\bigcup(\mathcal{C}))|$.
\end{mainitem}
\begin{proof}
  The statement trivially holds for circuitnets that are single-elementary
  circuits. Let $G$ be the fluffle associated with the circuitnet
  $\mathcal{C}=\{C_1,\dots,C_h\}$, $h>1$, listed according to the ordering
  in Def.~M\ref{def:cnet}. Pick now the first circuitnet
  $\hat{\mathcal{C}} \subseteq \mathcal{C}$ for $G$ such that any strict
  subset of $\hat{\mathcal{C}}$ is \emph{not} anymore a circuitnet for
  $G$. Clearly, we can always find such a suitable candidate
  $\hat{\mathcal{C}}$ from any circuitnet $\mathcal{C}$ for $G$ by
  iteratively removing the single elementary circuit $C^*_i$ with the
  highest index $i$ and checking whether the remaining set is a circuitnet
  for the very same fluffle $G$. Once $\hat{\mathcal{C}}$ is found, a
  further removal of the elementary circuit $\hat{C}^*_i$ with highest
  index $i$ identifies a circuitnet
  $\hat{\mathcal{C}}'=\hat{\mathcal{C}}\setminus \hat{C}^*_i$ for
  $G'\subset G$ which, by Cor.~M\ref{cor:superimposingfluffle}, is itself
  a fluffle.  Moreover, since $G'\subsetneq G$ and the removal of
  $\hat{C}^*_i$ removes an ear, $|V(G')|<|V(G)|$.
\end{proof}

\begin{mainitem}{Lemma}{lem:expandAutoMetzler}
  Let $\king(\child^*)$ be a fluffle with irreducible autocatalytic Metzler
  CS matrix $\SM[\child^*]$ and let $\king(\child)$ be obtained from
  $\king(\child^*)$ by adding a single ear with initial vertex in
  $R(\king(\child^*))$, terminal vertex in $X(\king(\child^*))$, and a
  non-empty set of internal vertices, together with all
  reaction-to-metabolite edges in $R(\king(\child))\times
  X(\king(\child))$.  If $\SM[\child^*]$ is an autocatalytic CS matrix and
  $\SM[\child]$ is a Metzler matrix, then $\SM[\child]$ is an autocatalytic
  irreducible CS matrix.
\end{mainitem}
\begin{proof}
  Since $\king(\child^*)=\king[\child^*]$ is in particular a strong block,
  the addition of an ear makes the resulting graph $G$ also a strong block,
  and thus a fluffle by Thm.~M\ref{thm:woffle-union} because the
  ear can be extended to an elementary circuit by any directed path
  in $\king(\child^*)$ from its terminal to its initial vertex. Inserting
  the additional $R$-to-$X$ edges does not affect the fluffle
  property, completing it to the corresponding representative
  CS-equivalence class, i.e., $\king(\child)$. Since $\SM[\child^*]$ is
  irreducible by assumption, $\Metzler{\SM[\child]}$ is also irreducible.

  Now suppose $\Metzler{\SM[\child]}$ is a Metzler matrix and let
  $\mathbf{A}$ be the matrix obtained by renumbering the vertices such that
  the initial vertex of the ear is $k=|X(\king[\child^*])|$, its terminal
  vertex is $1$, and the substrate vertices are ordered consecutively along
  the directed ear from $k+1$ to $l$. By construction $\mathbf{A}$ has the
  form
    \begin{equation*}
      \begin{pmatrix}
        \framebox{$\mathbf{A^*}$ } &  &     &    \dots & & & \vec{g}_l \\
        f_{k+1} & -a_{k+1}   &   &    \dots & & &\\  
                &  f_{k+2}   & -a_{k+2} & \dots & & &\\
                &  & f_{k+3} &    \ddots   & & & & \\
                &  &  &\ddots    &-a_{l-2} &   & \\
                &  &  &\dots    & f_{l-2}   & -a_{l-1} &  \\
                &  &  &\dots &  & f_{l-1}     & -a_{l} \\    
    \end{pmatrix}
  \end{equation*}
  where the vector $\vec{g}\ge 0$ has a strictly positive first entry, and
  all $a_i$ and $f_i$ are strictly positive. Moreover, all entries that are
  left blank are non-negative, since $\mathbf{A}$, like 
  $\SM[\child]$, is a Metzler matrix.
  Multiplying $\mathbf{A}$ with a strictly positive vector
  $\vec{u}=(\vec{u}^*,u_{k+1},u_{k+2},\dots,u_l)^{\top}$ yields
  $\mathbf{A}\vec{u}=\vec{z}=(\vec{z}^*,z_{k+1},z_{k+2},\dots,z_l)^{\top}$.
  Taking into account that the blank entries
  yield only non-negative contributions, we obtain component-wise
  inequalities for $\vec{z}$ from the terms that are shown explicitly:
    \begin{equation*}
    \begin{split} 
      \vec{z}^* & \ge \mathbf{A}^*\vec{u}^* \\
      z_{j} &\ge f_{j} u_{j-1}-a_{j}u_{j} \quad \text{for } k+1\le j\le l \\
    \end{split}
  \end{equation*}
  Hence $\mathbf{A}^*\vec{u}^*>0$ implies $\vec{z}^*>0$.
  The second set of inequalities implies $z_{j}>0$ whenever
  $0<u_{j}<(f_j/a_j)\cdot u_{j-1}$, for
  $k+1\le j\le l$. With $u_{k}=u^*_k$ fixed, we can recursively choose
  $0<u_{j}=(f_{j}/a_{j})\cdot u_{j-1}-\epsilon_j$ with $\epsilon_j>0$ small
  enough for $k+1\leq j \leq l$. By induction from
  $j=k+1$ down to $j=l$, therefore there is always a positive
  choice of $u_j$ that yields a positive entry $z_j$. In summary,
  therefore, if $\mathbf{A^*}\vec{u}^*>0$, i.e., if $\SM[\child^*]$
  is autocatalytic, then there is $\vec{u}>0$ such that
  $\mathbf{A}\vec{u}>0$, i.e., such that $\SM[\child]$ is autocatalytic.
\end{proof}

\begin{mainitem}{Theorem}{thm:avoidMetzlerCheck}
  Let $\SM[\child]$ be an irreducible Metzler CS matrix and suppose
  $\SM[\child]$ contains an autocatalytic core $\SM[\child^*]$ as a
  principal submatrix. Then $\SM[\child]$ is autocatalyic.
\end{mainitem}
\begin{proof}
  It suffices to recall that any fluffle can be obtained from a sub-fluffle
  by adding ears. Going from a fluffle to the canonical representative of
  its CS-equivalence class amounts to adding edges of the form
  $(r,x)$, i.e., ears without internal vertices. Thus if
  $\SM[\child^*]$ is an autocatalytic core that is a principal
  submatrix of $\SM[\child]$ there is a sequence of ears, and thus
  a corresponding sequence of child-selections $\child^*=\child_0,
  \child_1,\dots,\child_h=\child$, such that $\king(\child_i)$ is obtained
  from $\king(\child_{i-1})$ by adding an ear with a non-empty set of
  interior vertices. Since $\SM[\child_h]$ is Metzler and all
  $\SM[\child_i]$, $0\le i\le h$ are principal submatrices of
  $\SM[\child_h]$, each of the $\SM[\child_i]$ is an irreducible Metzler CS
  matrix.  Applying Lemma~M\ref{lem:expandAutoMetzler} to each of
  the steps from $\SM[\child_{i-1}]$ to $\SM[\child_{i}]$ for $1\le i\le h$
  now implies that $\SM[\child_{i}]$ is autocatalytic whenever
  $\SM[\child_{i-1}]$ is autocatalytic.
\end{proof}

\begin{mainitem}{Lemma}{lem:AB}
  Let $G$ be a fluffle and $C$ an elementary circuit. Then $G\cup C$
  is a fluffle if and only if $\emptyset\ne V(G)\cap V(C) = V(E_1(G) \cap
  E_1(C))$.
\end{mainitem}
\begin{proof}
  Since fluffles are connected by definition, we may assume that $V(G)\cap
  V(C)\ne \emptyset$. We observe $B\coloneqq V(E_1(G) \cap E_1(C))\subseteq
  V(E_1(G)) \cap V(E_1(C)) = V(G) \cap V(C) \eqqcolon A$.  First assume
  $A=B$. Thus $V(G)\cap V(C)\ne\emptyset$ implies that $E_1(G)\cap
  E_1(C)\ne\emptyset$ and thus $G\cup C$ is a strong block. Moreover, every
  edge in $E_1(C)$ is either contained in $G$ or disjoint from $G$, and
  thus every maximal path in the intersection $G\cap C$ initiates with a
  metabolite $x\in X$ and terminates with a reaction $r\in
  R$. Thm.~M\ref{thm:woffle-union} now implies that $G\cup C$ is a
  fluffle.  For the converse, assume that there is $z\in A\setminus B$. If
  $z\in X$, then there is a unique $y_1\in V(G)$ with $(z,y_1)\in E_1(G)$
  and $y_2\in V(C)$ with $(z,y_2)\in E_1(C)$.  We have $y_1\ne y_2$ since
  otherwise $(z,y_1)=(z,y_2)\in E_1(G)\cap E(G)$. Thus $z\in X$ has
  out-degree $2$ in $G\cup C$ and hence $G\cup C$ is not a
  fluffle. Similarly, if $z\in R$, there is $(y_1,z)\in E_1(G)$ and
  $(y_2,z)\in E_1(G)$ with $y_1\ne y_2$ and thus $z$ has in-degree $2$ in
  $G\cup C$, which therefore is not a fluffle.
\end{proof}

\begin{mainitem}{Proposition}{prop:StabOszil}
  Let $\SM[\child]$ be a Hurwitz-stable autocatalytic CS-matrix. Then there
  exists a choice of parameters such that the system
  M\eqref{eq:ODEdynamics}: $\dot{z}=f(z) \coloneqq \SM \cdot
  v(z)$ admits periodic solutions.
\end{mainitem}
\begin{proof}
  Blokhuis et al. \cite{blokhuis_stoichiometric_2025} showed that any CS-matrix
  that is Hurwitz-stable but $D$-unstable admits a parameter choice for
  which~M\eqref{eq:ODEdynamics} has periodic solutions. Here,
  $D$-unstable means that there exists a positive diagonal matrix $D$ such
  that $\SM[\child]D$ is Hurwitz-unstable. Autocatalyticity of
  $\SM[\child]$ implies $D$-instability: indeed, $\SM[\child]$ contains an
  autocatalytic core $\Auto[\kappa']$ as a principal submatrix, which is
  Hurwitz-unstable by Prop.~M\ref{cor:SingularMetzlerHurwitz}. Without
  loss of generality, let $\Auto[\kappa']$ be the leading $k'$-dimensional
  principal submatrix of $\SM[\child]$, and define
  $D(\varepsilon)=\operatorname{diag}Po(1_1,\dots,1_{k'},
  \varepsilon_{k'+1},\dots,\varepsilon_k)$. For $\varepsilon=0$,
  $\SM[\child]D(0)$ is Hurwitz-unstable, as is $\Auto[\kappa']$. By
  continuity of eigenvalues, $\SM[\child]D(\varepsilon)$ remains
  Hurwitz-unstable for $\varepsilon$ small enough, so $\SM[\child]$ is
  $D$-unstable. Stability together with $D$-instability implies the claim.
\end{proof} 

\begin{mainitem}{Proposition}{prop:centralizedDet}
  Let $\SM[\child]$ be a $k\times k$ irreducible autocatalytic Metzler
  CS matrix that exhibits \emph{centralized autocatalysis}.  Then
\begin{equation}
  \frac{\operatorname{det}\SM[\child]}{\prod_{m=1}^{k}\SM[\child]_{mm}}=
  1-\sum_{C}\prod_{m\in C}
  \frac{\SM[\child]_{m, C(m)}}{\vert \SM[\child]_{mm} \vert}
\end{equation}
where the sum runs on all permutation cycles.
\end{mainitem}
\begin{proof} 
We recall the notation $P_k$ for the permutation group on $k$ elements. The
first step is noting that if $\SM[\child]$ is centralized with center
$m^*$, then for each permutation $\pi\in P_{k}$ with nonzero contribution,
i.e. such that $\prod_{m=1}^{k} \SM[\child]_{m, \pi(m)}\neq 0,$ we get that
there exists exactly one permutation cycle $C_\pi$ with $\pi=C_\pi$, i.e.,
any permutation with nonzero contribution is a single-cycle permutation. To
confirm this, assume indirectly that there exists a permutation $\pi$ with
nonzero contribution and such that $\pi=C_1\cdot...\cdot C_i$, with
$i\ge2$. In particular, $C_1$ and $C_2$ have disjoint support and thus
$C(m^*)\neq m^*$ cannot hold for both $C_1$ and $C_2$, which leads to a
contradiction with the definition of centralized autocatalysis.  The second
step is just computing
\begin{equation}
    \dfrac{\det \SM[\child]}{\prod_{m=1}^k \SM[\child]_{mm}},
\end{equation}
where the numerator is expanded via the Leibniz formula.
\begin{equation}
  \begin{split}
    \dfrac{\det \SM[\child]}{\prod_{m=1}^k \SM[\child]_{mm}}
    & = \dfrac{\sum_{\pi\in P_k} \operatorname{sgn}(\pi)\prod_{m=1}^k
      \SM[\child]_{m,\pi(m)}}{\prod_{m=1}^k \SM[\child]_{mm}}\\
    & = \dfrac{\prod_{m=1}^k \SM[\child]_{mm}}{\prod_{m=1}^k
      \SM[\child]_{mm}}
      +\dfrac{\sum_{C}
      \operatorname{sgn}(C)\prod_{m\in C} \SM[\child]_{m,C(m)}
      \prod_{m\not\in C}\SM[\child]_{mm}}{\prod_{m=1}^k \SM[\child]_{mm}}\\
    & = 1+\sum_{C} (-1)^{|C|-1} \prod_{m\in C}\dfrac{
      \SM[\child]_{m,C(m)}}{
      \SM[\child]_{mm}}\\  
    & = 1-\sum_{C} \prod_{m\in C} \dfrac{ \SM[\child]_{m,C(m)}}{
      |\SM[\child]_{mm}|}\\ 
  \end{split}
\end{equation}
\end{proof}

\begin{mainitem}{Lemma}{lem:ElemPermCycles}
  Let $\SM[\child]$ be an autocatalytic Metzler CS matrix and
  $\mathfrak{K}[\child]$ the set of permutation cycles with non-zero
  contribution of length $\vartheta \geq 2$ (i.e., nontrivial cycles).  Then there is
  a one-to-one correspondence between $\mathfrak{K}[\child]$ and the
  elementary circuits of the induced subgraph $\king[\child]$ such that a
  permutation cycle $(x_1,x_2,\dots, x_\vartheta)$ corresponds to the elementary
  circuit
  $(x_1,\kappa(x_1),x_2,\kappa(x_2),\dots,\kappa(x_{\vartheta-1}),
  x_\vartheta,\kappa(x_\vartheta),x_1)$ in $\king[\child]$.
\end{mainitem}

\begin{proof}
Let $\pi_c=(x_1,x_2,...,x_\vartheta)$ be a nontrivial permutation cycle with nonzero contribution for the CS-matrix $\SM[\child]$, i.e., with index $i$ following the labeling along the cycle,
\begin{equation}\label{eq:nonzerocont}
    \prod_{i=1}^\vartheta \SM[\child]_{\pi_c(i)i}\neq 0.
\end{equation}
Since $\SM[\child]$ is Metzler, the nonzero off-diagonal entries are positive and correspond to products of the reaction with the respective column index. In particular, then, eq.~\eqref{eq:nonzerocont} holds if and only if $\SM[\child]_{\pi_c(i)i}=s^{+}_{\pi_c(i)i}> 0$ for any $i=1,...,\vartheta.$ Moreover, since $\SM[\pmb{\kappa}]$ is a CS-matrix, $\SM[\pmb{\kappa}]_{ii}=s^-_{i\kappa(i)}> 0$ for any $i=1,...,\vartheta.$ These two observations are equivalent to the existence of edges in the induced subgraph $\king[\child]$. Respectively, the existence of the edge $(x_i,\kappa(x_i))$ is equivalent to $s^-_{i\kappa(i)}> 0$ and the existence of the edge $(\kappa(x_i),x_{\pi_c(i)})$ is equivalent to $s^{+}_{ \pi_c(i)i}>0$. Following the index $i$ along any contributing permutation cycle thus identifies one elementary circuit in the induced subgraph $\king[\child]$. In turn, following the index $i$ along any elementary circuit in the induced subgraph $\king[\child]$ identifies one contributing nontrivial permutation cycle. The bijection follows.
\end{proof}

To see that the statement does not necessarily hold for a non-Metzler matrix, consider:
\begin{equation}
	\SM[\child] = \begin{pmatrix}
				-1 & -1  \\
				1 & -1  \\				
				\end{pmatrix}
\end{equation}
which has one permutation cycle with nonzero contribution, i.e., $(x_1,x_2)$, while the induced subgraph $\king[\child]$ has no elementary circuit.

\begin{mainitem}{Theorem}{thm:NgheClasses}
  An autocatalytic core of type I, {II}$^*$,
  III, or IV is centralized. Moreover,
  Eq.~(\ref{eq:propcentralized}), i.e.
  \begin{equation*}
\frac{\operatorname{det}\SM[\child]}{\prod_{m=1}^{k}\SM[\child]_{mm}}
  =1-\sum_{C}\prod_{m\in C}
  \frac{\SM[\child]_{m, C(m)}}{\vert \SM[\child]_{mm} \vert},
  \end{equation*}
  holds for all five types of cores.
\end{mainitem} 
\begin{proof}
  We prove the theorem identifying the five autocatalytic cores exactly
  with the following five motifs and associated CS-matrices, respectively,
\begin{equation}
\textbf{Type I:} \quad\qquad x_1\underset{1}\longrightarrow x_2 \underset{2}\longrightarrow 2x_1 \qquad\quad \begin{pmatrix}
-1 & 2\\
1 & -1
\end{pmatrix}
\end{equation}
\begin{equation}
\textbf{Type II$^*$}: \quad\quad 
\begin{cases} 
x_1\underset{1}\longrightarrow x_2+x_3 \\
x_2\underset{2}\longrightarrow x_3\\
x_3\underset{3}\longrightarrow x_1
\end{cases}
\quad\quad \begin{pmatrix}
-1 & 0 & 1\\
1 & -1 & 0\\
1 & 1 & -1
\end{pmatrix}
\end{equation}
\begin{equation}
\textbf{Type III:} \quad\quad 
\begin{cases} 
x_1\underset{1}\longrightarrow x_2+x_3 \\
x_2\underset{2}\longrightarrow x_1\\
x_3\underset{3}\longrightarrow x_1
\end{cases}
\quad\quad \begin{pmatrix}
-1 & 1 & 1\\
1 & -1 & 0\\
1 & 0 & -1
\end{pmatrix}
\end{equation}
\begin{equation}
\textbf{Type IV:} \quad\quad 
\begin{cases} 
x_1\underset{1}\longrightarrow x_2+x_3 \\
x_2\underset{2}\longrightarrow x_1 + x_3\\
x_3\underset{3}\longrightarrow x_1
\end{cases}
\quad\quad \begin{pmatrix}
-1 & 1 & 1\\
1 & -1 & 0\\
1 & 1 & -1
\end{pmatrix}
\end{equation}
\begin{equation}
\textbf{Type V:} \quad\quad 
\begin{cases} 
x_1\underset{1}\longrightarrow x_2+x_3 \\
x_2\underset{2}\longrightarrow x_1 + x_3\\
x_3\underset{3}\longrightarrow x_1+x_2
\end{cases}
\quad\quad \begin{pmatrix}
-1 & 1 & 1\\
1 & -1 & 1\\
1 & 1 & -1
\end{pmatrix}.
\end{equation}
For a better visualization, we prove the theorem using the
  correspondence of elementary cycles in $\king[\child]$ and the 
  permutation cycles as cycles established in
  Lemma~M\ref{lem:ElemPermCycles} above for type I-IV.\\
For type I, there is only one (permutation) cycle, 
\begin{equation}x_1 \rightarrow r_1 \rightarrow  x_2\rightarrow r_2 \rightarrow  2x_2,
\end{equation}
and thus the autocatalytic core is centralized with centers both $\{x_1, x_2\}$.\\
For type II$^*$, there are two (permutation) cycles:
\begin{equation}
  \begin{cases}
    x_1\rightarrow r_1  \rightarrow x_2\rightarrow r_2 \rightarrow x_3 \rightarrow  r_3\rightarrow x_1;\\
    x_1 \rightarrow r_1 \rightarrow x_3 \rightarrow r_3 \rightarrow x_1,
  \end{cases}   
\end{equation} with $\{x_1,x_3\}$ being both centers.\\
For type III, there are two (permutation) cycles:
\begin{equation}
  \begin{cases}
    x_1\rightarrow r_1  \rightarrow x_2\rightarrow r_2 \rightarrow x_1; \\
    x_1 \rightarrow r_1 \rightarrow x_3 \rightarrow r_3 \rightarrow x_1,
  \end{cases}   
\end{equation}
with $\{x_1\}$ being a center.\\
For type IV, there are three (permutation) cycles:
\begin{equation}
  \begin{cases}
    x_1\rightarrow r_1 \rightarrow  x_2 \rightarrow r_2 \rightarrow  x_1;\\  
    x_1\rightarrow r_1 \rightarrow  x_3\rightarrow r_3 \rightarrow x_1;\\ 
    x_1\rightarrow r_1 \rightarrow x_2 \rightarrow r_2 \rightarrow x_3
    \rightarrow r_3 \rightarrow x_1,
  \end{cases}
\end{equation}
with $\{x_1\}$ being a center.\\
For type V, there are five permutation cycles: 
\begin{equation}\
  \begin{cases}
x_1\rightarrow r_1 \rightarrow  x_2 \rightarrow r_2\rightarrow x_1;\\
    x_1\rightarrow r_1 \rightarrow x_3 \rightarrow r_3 \rightarrow x_1;\\
    x_2\rightarrow r_2 \rightarrow x_3\rightarrow r_3\rightarrow x_2;\\ 
    x_1\rightarrow r_1 \rightarrow x_2 \rightarrow r_2 \rightarrow x_3
    \rightarrow r_3 \rightarrow x_1;\\
    x_1\rightarrow r_1 \rightarrow x_3\rightarrow r_3 \rightarrow x_2
    \rightarrow r_2\rightarrow x_1,
  \end{cases}
\end{equation}
with no species being a center, see Fig.~\ref{fig:coreV}. One easily
  checks that those are also exactly the five elementary circuits in
  Fig.~\ref{fig:coreV}. Finally, an explicit and straightforward computation
shows the validity of Eq.~\eqref{eq:propcentralized} for all five
types. The straightforward generalization with different stoichiometric
coefficients and by addition of monomolecular intermediates, e.g., by
substituting $x_1 \rightarrow x_2 $ with $x_1\rightarrow
I_1\rightarrow\dots\rightarrow I_n\rightarrow x_2$, is omitted for
simplicity of presentation. For the latter, it suffices to say that any
argument based on the number of (permutation) cycles is indeed
insensitive to the addition of intermediates.

\end{proof}

\section{The Set System of Circuitnets of Fluffles}
\label{sect:flufflesetsystems}

Here, we collect some properties of the set system $\mathfrak{F}\subseteq
2^{\mathcal{C}}$ of circuitnets whose union form fluffles. This is of
interest because certain simple properties guarantee simple enumeration or
the existence of efficient algorithms to find maximal elements. For our
purposes, the following properties of the set system
$(X,\mathfrak{A})$ with basis set $X$ and $\mathfrak{A}\in
2^X$ are of interest:
\begin{itemize}
\item[(i)] $(X, \mathfrak{A})$ is \emph{accessible} if for all
  $A \in \mathfrak{A}, A \neq \emptyset:$ there is $a \in A$ such that
  $A\setminus \{a\} \in \mathfrak{A}$.
\item[(ii)] $(X, \mathfrak{A})$ is \emph{strongly accessible}
  \cite{boley_listing_2010} if it is accessible, and in addition, for any
  $A,B\in \mathfrak{A}$ with $A\subsetneq B$ there is $b\in B\setminus A$
  such that $A\cup \{b\}\in \mathfrak{A}$.
\item[(iii)] A strongly accessible set system $(X, \mathfrak{A})$ is called
  \emph{commutable} \cite{conte_listing_2019} if for any nonempty
  $A, B \in \mathfrak{A}$ and
  $a, b \in X: A \cup \{a\} \in \mathfrak{A}, A\cup \{b\} \in \mathfrak{A}$
  and $
  A \cup \{a,b\} \subseteq B$ implies
  $A \cup \{a,b\} \in \mathfrak{A}$.
\item[(iv)] A commutable set system $(X,\mathfrak{A})$ is called
  \emph{confluent} if for all $A,B,C\in \mathfrak{A}$ with
  $B\neq \emptyset$ that
  $B\subseteq A, B\subseteq C \Rightarrow A\cup C\in \mathfrak{A}$.
\item[(v)] $(X, \mathfrak{A})$ is an \emph{independence system} or
  hereditary, if $A\in\mathfrak{A}$ and $\emptyset \ne B\subseteq A$
  implies $B\in\mathfrak{A}$.
\end{itemize}
Note that confluence is not comparable to the other properties.
 
\begin{theorem}
  \label{thm:Fcommutable}
  $\mathfrak{F}$ is a commutable set system.
\end{theorem}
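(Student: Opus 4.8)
The plan is to verify, one after another, the three conditions defining a commutable set system for $\mathfrak{F}\subseteq 2^{\mathcal{C}}$, where $\mathcal{C}$ denotes the set of all elementary circuits of $\king$: accessibility, strong accessibility, and the exchange property~(iii). The whole argument hinges on a single reduction. \emph{Key reduction}: if $H$ is a fluffle, $G\subseteq H$ is a fluffle, and $C\subseteq H$ is an elementary circuit, then $G\cup C$ is a fluffle if and only if $V(G)\cap V(C)\neq\emptyset$. I would derive this from Lemma~\ref{lem:AB}, which states that $G\cup C$ is a fluffle iff $\emptyset\neq V(G)\cap V(C)=V(E_1(G)\cap E_1(C))$. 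The inclusion $V(E_1(G)\cap E_1(C))\subseteq V(G)\cap V(C)$ always holds (here $V(E_1(G))=V(G)$ because $G$ is a fluffle, and $V(E_1(C))=V(C)$ because $C$ is a circuit in a bipartite graph). The reverse inclusion is forced inside $H$: a shared substrate vertex uses the same unique out-edge in $G$ and in $C$, for otherwise it would have out-degree $2$ in the fluffle $H$; symmetrically a shared reaction vertex uses the same in-edge. Thus the delicate edge-set equality in Lemma~\ref{lem:AB} is automatic for two subgraphs of a common fluffle, and only the intersection of vertex sets matters. (The forward implication of the Key reduction is just connectedness of fluffles.)

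For \emph{accessibility}, given $\emptyset\neq\mathcal{D}\in\mathfrak{F}$, I would use Corollary~\ref{cor:superimposingfluffle}(2) to order $\mathcal{D}=\{C_1,\dots,C_n\}$ so that every prefix union is a fluffle, and then delete $C_n$: the truncated ordering witnesses that $\mathcal{D}\setminus\{C_n\}$ is again a circuitnet whose union is a fluffle, hence lies in $\mathfrak{F}$ (for $n=1$ this reads $\emptyset\in\mathfrak{F}$, which we include by convention). For \emph{strong accessibility}, take $\mathcal{A}\subsetneq\mathcal{B}$ in $\mathfrak{F}$ and write $G_{\mathcal{A}}=\bigcup(\mathcal{A})$, $G_{\mathcal{B}}=\bigcup(\mathcal{B})$. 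If $\mathcal{A}=\emptyset$, any single $C\in\mathcal{B}$ works, since an elementary circuit is a fluffle. Otherwise $V(G_{\mathcal{A}})\neq\emptyset$, and because $G_{\mathcal{B}}$ is connected and its edge set is covered by the circuits of $\mathcal{B}$, some $C\in\mathcal{B}\setminus\mathcal{A}$ must meet $V(G_{\mathcal{A}})$ — otherwise $V(G_{\mathcal{A}})$ and $V(\bigcup(\mathcal{B}\setminus\mathcal{A}))$ would be nonempty, disjoint, and joined by no edge of $G_{\mathcal{B}}$. For such a $C$ the Key reduction (with $H=G_{\mathcal{B}}$) gives that $G_{\mathcal{A}}\cup C=\bigcup(\mathcal{A}\cup\{C\})$ is a fluffle; since $\mathcal{A}\cup\{C\}\subseteq\mathcal{B}$, Corollary~\ref{cor:superimposingfluffle}(1) upgrades this to $\mathcal{A}\cup\{C\}\in\mathfrak{F}$.

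For the \emph{exchange property}, let $\mathcal{A},\mathcal{B}\in\mathfrak{F}$ be nonempty and let $C',C''\in\mathcal{C}$ satisfy $\mathcal{A}\cup\{C'\}\in\mathfrak{F}$, $\mathcal{A}\cup\{C''\}\in\mathfrak{F}$, and $\mathcal{A}\cup\{C',C''\}\subseteq\mathcal{B}$. Since $\mathcal{A}\cup\{C''\}\in\mathfrak{F}$, the graph $G_{\mathcal{A}}\cup C''$ is a fluffle, hence connected; as $G_{\mathcal{A}}\neq\emptyset$ this forces $V(G_{\mathcal{A}})\cap V(C'')\neq\emptyset$, so also $V(G_{\mathcal{A}}\cup C')\cap V(C'')\neq\emptyset$. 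Now $G_{\mathcal{A}}\cup C'$ is a fluffle (because $\mathcal{A}\cup\{C'\}\in\mathfrak{F}$), $C''$ is an elementary circuit, and both are subgraphs of the fluffle $G_{\mathcal{B}}=\bigcup(\mathcal{B})$; hence the Key reduction gives that $(G_{\mathcal{A}}\cup C')\cup C''=\bigcup(\mathcal{A}\cup\{C',C''\})$ is a fluffle, and Corollary~\ref{cor:superimposingfluffle}(1), using $\mathcal{A}\cup\{C',C''\}\subseteq\mathcal{B}$, concludes $\mathcal{A}\cup\{C',C''\}\in\mathfrak{F}$.

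The step I expect to be the real obstacle is the Key reduction: once it is available, accessibility, strong accessibility, and the exchange property are all elementary connectivity bookkeeping combined with the standard translation, via Corollary~\ref{cor:superimposingfluffle}, between ``a union of circuits is a strong block (equivalently, a fluffle)'' and ``that set of circuits is a circuitnet for a fluffle, i.e.\ lies in $\mathfrak{F}$''. The point of the Key reduction is that the edge-set equality in Lemma~\ref{lem:AB} can never fail when the two subgraphs sit inside one ambient fluffle, because excess in- or out-degree at a shared vertex would already be illegal there — and this is what collapses the entire argument to a single test of vertex-set intersection.
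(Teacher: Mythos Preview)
Your proof is correct and follows essentially the same approach as the paper: both verify accessibility, strong accessibility, and the exchange property by reducing to the fact that a strong-block subgraph of a fluffle is again a fluffle (Lemma~\ref{lem:subwoffle}, which you access via Corollary~\ref{cor:superimposingfluffle}(1)). Your explicit ``Key reduction'' derived from Lemma~\ref{lem:AB} is a clean packaging of exactly the degree constraint the paper invokes implicitly when it asserts that two overlapping strong blocks inside a common fluffle union to a strong block; this is a difference in exposition rather than substance.
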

\begin{proof}
  If $\mathcal{C}\in\mathfrak{F}$ then there is a $C\in\mathcal{C}$ such
  that $\mathcal{C}'\coloneqq \mathcal{C}\setminus\{C\}$ is a again a
  circuitnet. By Lemma~M\ref{lem:subwoffle}, $\bigcup(\mathcal{C}')$ is
  again a fluffle, i.e., $\mathcal{C}'\in\mathfrak{F}$. That is,
  $\mathfrak{F}$ is accessible.

  Now suppose $\mathcal{C}',\mathcal{C}\in\mathfrak{F}$ and
  $\mathcal{C}'\subsetneq\mathcal{C}$. Then there is a circuit
  $C_1\in\mathcal{C}$ that shares an edge with a cycle
  $C_2\in\mathcal{C}\setminus\mathcal{C}'$, since otherwise $\mathcal{C}$
  cannot be a strong block. Thus $\mathcal{C}'\cup\{C_2\}$ is a strong
  block and Lemma~M\ref{lem:subwoffle} implies that 
  $\mathcal{C}'\cup\{C_2\}$ is a fluffle. Since $\mathfrak{F}$ is accessible, 
  it is also strongly accessible.
  
  Let $\mathcal{C},\mathcal{D}\in\mathfrak{F}$,
  $\mathcal{C}\subset\mathcal{D}, C_1,C_2\in
  \mathcal{D}\setminus\mathcal{C}$,
  $\mathcal{C}\cup\{C_1\}\in\mathfrak{F}$,
  $\mathcal{C}\cup\{C_2\}\in\mathfrak{F}$, and 
  $\mathcal{C}\cup \{C_1,C_2\}\subseteq \mathcal{D}$. Since 
  $\mathcal{C}\ne\emptyset$, the union of the circuits in $\mathcal{C}\cup\{C_1\}$ and
  $\mathcal{C}\cup\{C_2\}$ are two strong blocks that share a strong block,
  namely the union of the circuits in $\mathcal{C}$. Thus
  $\mathcal{C}'\coloneqq \mathcal{C}\cup\{C_1,C_2\}$ is also a strong
  block, and hence $\mathcal{C}'$ is a circuitnet. Since
  $\mathcal{C}'\subseteq \mathcal{D}\in\mathfrak{F}$, the union of the
  circuits is a fluffle by Lemma~M\ref{lem:subwoffle}, and thus
  $\mathcal{C}'\in\mathfrak{F}$. Together with strong accessibility, this
  implies that $\mathfrak{F}$ is a commutable set system
\end{proof}

The example in Fig.~\ref{fig:notconfl} shows that the set system of fluffles
$\mathfrak{F}$ is not confluent.

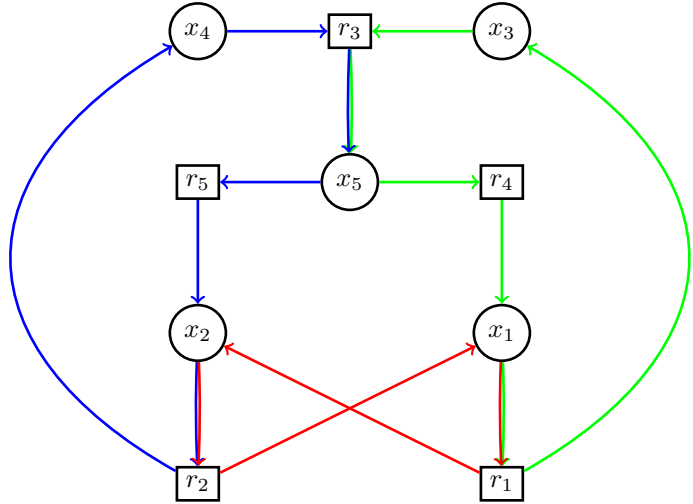
\begin{figure}
  \centering
  \begin{tikzpicture}[line width = 1pt]
    \node[draw, rectangle] (a) at (3,0) {$r_1$};     
    \node[draw, rectangle] (b) at (-1,0) {$r_2$};
    \node[draw, rectangle] (c) at (1,6) {$r_3$};
    \node[draw, rectangle] (e) at (3,4) {$r_4$};
    \node[draw, rectangle] (d) at (-1,4) {$r_5$};

    \node[draw, circle] (x) at (3,2) {$x_1$};    
    \node[draw, circle] (y) at (-1,2) {$x_2$};
    \node[draw, circle] (v) at (3,6) {$x_3$};
    \node[draw, circle] (u) at (-1,6) {$x_4$};
    \node[draw, circle] (w) at (1,4) {$x_5$};

    \draw[->, green] (a) to [bend right =60, looseness=1.5] (v);
    \draw[->, green] (v) -- (c);
    \draw[->, green] (c) to  [out = 272, in = 88] (w);
    \draw[->, green] (w) -- (e);
    \draw[->, green] (e) -- (x);
    \draw[->, green] (x) to [out = 272, in = 88] (a);
    
    \draw[->, blue] (u) -- (c);
    \draw[->, blue] (c) to [out = 268, in = 92] (w);
    \draw[->, blue] (w) -- (d);
    \draw[->, blue] (d) -- (y);
    \draw[->, blue] (y) to [out = 268, in = 92] (b);
    \draw[->, blue] (b) to [bend left = 60, looseness=1.5] (u);
    
    \draw[->, red] (a) -- (y);
    \draw[->, red] (y) to [out = 272, in = 88] (b);
    \draw[->, red] (b) -- (x);
    \draw[->, red] (x) to [out = 268, in = 92] (a);
    
  \end{tikzpicture}
  \caption{Counterexample to the hypothesis that $\mathfrak{F}$ is
    confluent. Consider the two child-selective elementary circuits
    $C_1=(r_1,x_3,r_3,x_5,r_4,x_1,r_1)$ (green) and
    $C_2=(r_2,x_4,r_3,x_5,r_5,x_2,r_2)$ (blue). Then $\{C_1,C_2\}$ is a
    circuitnet, but $\{C_1,C_2\} \notin \mathfrak{F}$ since $d_{in}(r_3) =
    2$ or $d_{out}(x_5) = 2$.  However, the circuit
    $C_3=(x_1,r_1,x_2,r_2,x_1)$ (red) is a fluffle. In addition,
    $\{C_1,C_3\},\{C_2,C_3\}\in\mathfrak{F}$, but
    $\{C_1,C_2,C_3\}\notin\mathfrak{F}$ since $\{C_1,C_2\} \notin
    \mathfrak{F}$.}
  \label{fig:notconfl}
\end{figure}

Instead of circuitnets, we use superpositions of the representatives of
fluffle CS equivalence classes with the representatives of the CS 
equivalence classes of elementary circuits. The following statement 
follows immediately from Lemmas~M\ref{lem:circuitnetequiv}, 
M\ref{lem:bumpeq_enum}, and Proposition~M\ref{prop:kingrep}:
\begin{corollary}
  Let $\mathcal{C}=\{C_1,\dots,C_h\}$ be a circuitnet for a fluffle
  $G=\bigcup(\mathcal{C})$. Then the representative $[G]_{\bumpeq}$ of its
  CS-equivalence class is 
  \begin{equation}
    \left[\bigcup(\mathcal{C})\right]_{\bumpeq}=\bigcup_{i=1}
         [C_i]_{\bumpeq}
         \label{eq:repsum}
  \end{equation}
\end{corollary}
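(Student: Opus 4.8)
The plan is to peel the identity back to the level of the reactant-to-reaction arc sets $E_1(\cdot)$, which by Prop.~\ref{prop:kingrep} are exactly what specify the $\bumpeq$-classes and their representatives. Set $G\coloneqq\bigcup(\mathcal{C})$. Since $\mathcal{C}$ is a circuitnet for the fluffle $G$, $G$ is itself a fluffle, so (Prop.~\ref{prop:usefluffles}) $E_1(G)$ is a perfect matching of $\king$ restricted to $V(G)$ and defines a unique CS $\child$; by Prop.~\ref{prop:kingrep} the representative $[G]_{\bumpeq}$ is the graph $\king(\child)$, which is fixed by $E_1(G)$ alone. The same remarks apply to each elementary circuit $C_i$ (itself trivially a fluffle): $E_1(C_i)$ defines a CS $\child_i$, we have $[C_i]_{\bumpeq}=\king(\child_i)$, and $E_1([C_i]_{\bumpeq})=E_1(C_i)$.

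The core step is then the trivial but crucial observation that $E_1$ is additive under graph unions. First I would note that $G=\bigcup_i C_i$ as graphs, so $E(G)=\bigcup_i E(C_i)$; intersecting with the fixed arc classes $E_1\subseteq X\times R$ and $E_2\subseteq R\times X$ of $\king$, which every subgraph inherits, gives $E_1(G)=\bigcup_i E_1(C_i)=\bigcup_i E_1([C_i]_{\bumpeq})$. Hence the $E_1$-edge set that specifies $[G]_{\bumpeq}$ is precisely the superposition of the $E_1$-edge sets specifying the $[C_i]_{\bumpeq}$. Since CS-equivalence of fluffles, and therefore (Prop.~\ref{prop:kingrep}) the canonical representative, depends on this $E_1$-edge set only — this is recorded by Lemma~\ref{lem:circuitnetequiv} and Definitions~\ref{def:csequivalence},~\ref{def:bumpeq} — Eq.~\eqref{eq:repsum} follows. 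Lemma~\ref{lem:bumpeq_enum} is invoked only to realise this superposition circuit-by-circuit along the circuitnet, matching the incremental assembly in Alg.~\ref{alg:RecEqClAss}.

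I expect the main obstacle to be less a combinatorial difficulty than a bookkeeping subtlety about how the union on the right of Eq.~\eqref{eq:repsum} is to be read: it must be understood as a superposition of representatives in the sense of their specifying $E_1$-edge sets (the convention established just before the corollary), not as a plain graph union. Indeed, the plain union $\bigcup_i\king(\child_i)$ can fail to contain a reaction-to-substrate arc $(r,x)$ with $r,x\in V(G)$ that is not traversed by any $C_i$, so it may be a proper subgraph of $\king(\child)$, although it is always CS-equivalent to it. Once the statement is read at the level of the specifying $E_1$-edge sets — equivalently, modulo $\bumpeq$ — everything reduces to bookkeeping with the established correspondence between fluffle CS-equivalence classes and their reactant-to-reaction edge sets.
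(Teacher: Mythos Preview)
Your proposal is correct and follows essentially the same route as the paper, which simply records that the corollary is an immediate consequence of Lemma~\ref{lem:circuitnetequiv}, Lemma~\ref{lem:bumpeq_enum}, and Proposition~\ref{prop:kingrep}; you invoke exactly these ingredients and spell out the additivity $E_1(\bigcup_i C_i)=\bigcup_i E_1(C_i)$ that makes the conclusion go through. Your closing remark about reading Eq.~\eqref{eq:repsum} at the level of the specifying $E_1$-edge sets (equivalently, modulo $\bumpeq$) rather than as a literal graph identity is a genuine clarification that the paper leaves implicit.
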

For every circuitnet $\mathcal{C}\in\mathfrak{F}$ we therefore can define a
corresponding set of representatives $[\mathcal{C}]_{\bumpeq}:=\{ [C]\, |
\, C\in\mathcal{C}\}$. Note that some of the representatives in
$\mathcal{C}$ may be redundant. Now we consider the corresponding set system 
$[\mathfrak{F}]\coloneqq\{[\mathcal{C}]\,|\, \mathcal{C}\in\mathfrak{F}\}$.
As an immediate consequence of equ.(\ref{eq:repsum}), the arguments in the
proof of Thm.~\ref{thm:Fcommutable} carry over to $[\mathfrak{F}]$, and
thus we may conclude that
\begin{corollary}
  The set system of circuitnet representatives $[\mathfrak{F}]$ is
  commutable.
\end{corollary}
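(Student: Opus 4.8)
The plan is to transfer the three defining conditions established for $\mathfrak{F}$ in the proof of Theorem~\ref{thm:Fcommutable} --- accessibility, strong accessibility, and the commutation property --- to the quotient set system $[\mathfrak{F}]$ along the surjection $q\colon\mathfrak{F}\to[\mathfrak{F}]$, $\mathcal{C}\mapsto[\mathcal{C}]=\{[C]\mid C\in\mathcal{C}\}$. The single fact that makes the transfer possible is Eq.~\eqref{eq:repsum}: the reactant-edge set $E_1(\bigcup(\mathcal{C}))=\bigcup_{C\in\mathcal{C}}E_1(C)$, and hence also the canonical fluffle $\bar G([\mathcal{C}])\coloneqq\king(\child)$ defined by the Child-Selection $\child$ of that edge set (Proposition~\ref{prop:kingrep}), depends only on the class-set $[\mathcal{C}]$. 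In particular $[\mathcal{C}_1]\subseteq[\mathcal{C}_2]$ forces $E_1(\bigcup(\mathcal{C}_1))\subseteq E_1(\bigcup(\mathcal{C}_2))$, so for $\mathcal{S}'\subseteq\mathcal{S}$ in $[\mathfrak{F}]$ the fluffle $\bar G(\mathcal{S}')$ is an \emph{induced} sub-fluffle of $\bar G(\mathcal{S})$ (the restricted reactant edges still form a perfect matching by Theorem~\ref{lem:PMMetzler}).

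Accessibility is then immediate once one passes to a careful lift. Given $\emptyset\ne\mathcal{S}\in[\mathfrak{F}]$, pick a circuitnet $\mathcal{C}$ with $[\mathcal{C}]=\mathcal{S}$ of minimal cardinality, and let $C\in\mathcal{C}$ be a circuit with $\mathcal{C}\setminus\{C\}\in\mathfrak{F}$, which exists by accessibility of $\mathfrak{F}$ (Theorem~\ref{thm:Fcommutable}). If the class $[C]$ occurred again in $\mathcal{C}\setminus\{C\}$ we would have $[\mathcal{C}\setminus\{C\}]=\mathcal{S}$ together with $\mathcal{C}\setminus\{C\}$ a smaller circuitnet representing $\mathcal{S}$, contradicting minimality; hence $[\mathcal{C}\setminus\{C\}]=\mathcal{S}\setminus\{[C]\}$, which therefore lies in $[\mathfrak{F}]$.

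For strong accessibility and the commutation property I would mirror the last two paragraphs of the proof of Theorem~\ref{thm:Fcommutable}. The auxiliary statement needed in place of Lemma~\ref{lem:subwoffle} / Corollary~\ref{cor:superimposingfluffle} is: \emph{if $\mathcal{S}\in[\mathfrak{F}]$ and a subfamily $\mathcal{S}'\subseteq\mathcal{S}$ admits representative circuits, one per class of $\mathcal{S}'$ and all lying inside $\bar G(\mathcal{S})$, whose union $H$ is a strong block, then $\mathcal{S}'\in[\mathfrak{F}]$} --- indeed $E_1(H)$ is a perfect matching of $H$, so $H$ is a fluffle by Proposition~\ref{prop:usefluffles}, and a fluffle that is a union of circuits can be assembled by an ear decomposition running through those circuits, yielding a circuitnet whose class-set is $\mathcal{S}'$. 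Granting this, strong accessibility follows by taking the circuit furnished by strong accessibility of $\mathfrak{F}$ and, via an $E_1$-minimal lift of $\mathcal{S}'$ inside $\bar G(\mathcal{S})$, arranging that it contributes a genuinely new class $[C^*]\in\mathcal{S}\setminus\mathcal{S}'$; and for the commutation property, if $\mathcal{S}\subseteq\mathcal{T}$ in $[\mathfrak{F}]$ with $[a],[b]\in\mathcal{T}\setminus\mathcal{S}$ and $\mathcal{S}\cup\{[a]\},\mathcal{S}\cup\{[b]\}\in[\mathfrak{F}]$, then --- using that $\mathcal{S}\ne\emptyset$ --- the representative circuits of $\mathcal{S}\cup\{[a],[b]\}$ inside $\bar G(\mathcal{T})$ union to two strong blocks sharing the strong block $\bar G(\mathcal{S})$, hence to a strong block, and the auxiliary statement gives $\mathcal{S}\cup\{[a],[b]\}\in[\mathfrak{F}]$. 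Strong accessibility together with this closure property is exactly commutability.

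The main obstacle is precisely that $q$ is many-to-one: distinct but CS-equivalent elementary circuits (Fig.~\ref{fig:altFluf}) mean that deleting or adjoining a single circuit in a circuitnet need not change its class-set, so the set operations in the $\mathfrak{F}$-proof cannot be copied verbatim. The remedy throughout is to work with cardinality- or $E_1$-minimal circuitnet lifts and to use Eq.~\eqref{eq:repsum} to certify that a class has actually been added or removed; checking that such minimal lifts always exist and behave as claimed --- in particular that a union-of-circuits fluffle inside a larger fluffle really is a circuitnet realizing the intended class-set --- is the one point that needs genuine care rather than a mechanical rewrite.
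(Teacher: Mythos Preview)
Your approach---pushing the proof of Theorem~\ref{thm:Fcommutable} through the quotient $q\colon\mathfrak{F}\to[\mathfrak{F}]$ using Eq.~\eqref{eq:repsum}---is exactly the paper's, except that the paper's entire argument for this corollary is the single sentence ``As an immediate consequence of Eq.~\eqref{eq:repsum}, the arguments in the proof of Thm.~\ref{thm:Fcommutable} carry over to $[\mathfrak{F}]$.'' You have therefore worked considerably harder than the published proof, and you have correctly isolated the one point the paper glosses over: because $q$ is many-to-one (distinct CS-equivalent circuits exist, cf.\ Fig.~\ref{fig:altFluf}), removing or adjoining a single circuit in a circuitnet $\mathcal{C}$ need not change $[\mathcal{C}]$, so the three paragraphs of Theorem~\ref{thm:Fcommutable} do not descend verbatim.

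Your minimal-cardinality lift argument for accessibility is clean and complete. For strong accessibility and the commutation property your plan is sound but, as you yourself flag, not fully closed: the auxiliary claim (a strong-block union of one representative per class inside $\bar G(\mathcal{S})$ is realised by a circuitnet with class-set exactly $\mathcal{S}'$) still needs an ordering compatible with Definition~\ref{def:cnet}, and for strong accessibility you must also argue that the circuit produced by strong accessibility of $\mathfrak{F}$ can be chosen with class outside $\mathcal{S}'$---which in turn requires first producing \emph{nested} lifts $\mathcal{C}'\subsetneq\mathcal{C}$ with $[\mathcal{C}']=\mathcal{S}'$, $[\mathcal{C}]=\mathcal{S}$. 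None of this is hard (an ear decomposition of $\bar G(\mathcal{S})$ relative to the induced sub-fluffle $\bar G(\mathcal{S}')$ does the job), but since the paper simply asserts the result without addressing any of it, your write-up is already more rigorous than what appears in print.
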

This observation provides a formal basis for the stepwise enumeration 
on the system of fluffle represen\-tatives.

In principle one could also consider the set system $(E,\mathfrak{X})$ with
$\mathfrak{X}\coloneqq\{ E_1(G)| G \text{ is fluffle in }\king\}$ on the
edge set $E$ of $\king$. It is easy to see, however, that
$(E,\mathfrak{X})$ is not an accessible set system, since 
the deletion of each of the $(r,x)$-edges may lead to a subgraph that is not
a strong block and thus also not a fluffle. Thus fluffles cannot be
generated efficiently by exploring individual edge additions.

\section{Additional Computational Data} \label{sec:addComp}

\paragraph{\emph{E.\ coli} core model}
This CRN comprises 72 metabolites and 95 reactions
\cite{orth_reconstruction_2010}. We excluded the following set of small,
highly connected molecules, which are of minor interest for autocatalysis:
cytosolic NAD, NADH, NADP, NADPH, AMP, ADP, H\textsuperscript{+},
H\textsubscript{2}O, CO\textsubscript{2}, coenzyme A, phosphate, oxygen,
ubiquinone, and ubiquinol. Since this model describes the central carbon
metabolism of \emph{E.\ coli}, we did not remove ATP as the major energy
carrier.

\begin{figure}[h]
  \centering
  \begin{minipage}[c]{0.5\textwidth}	
    \includegraphics[width=\textwidth]{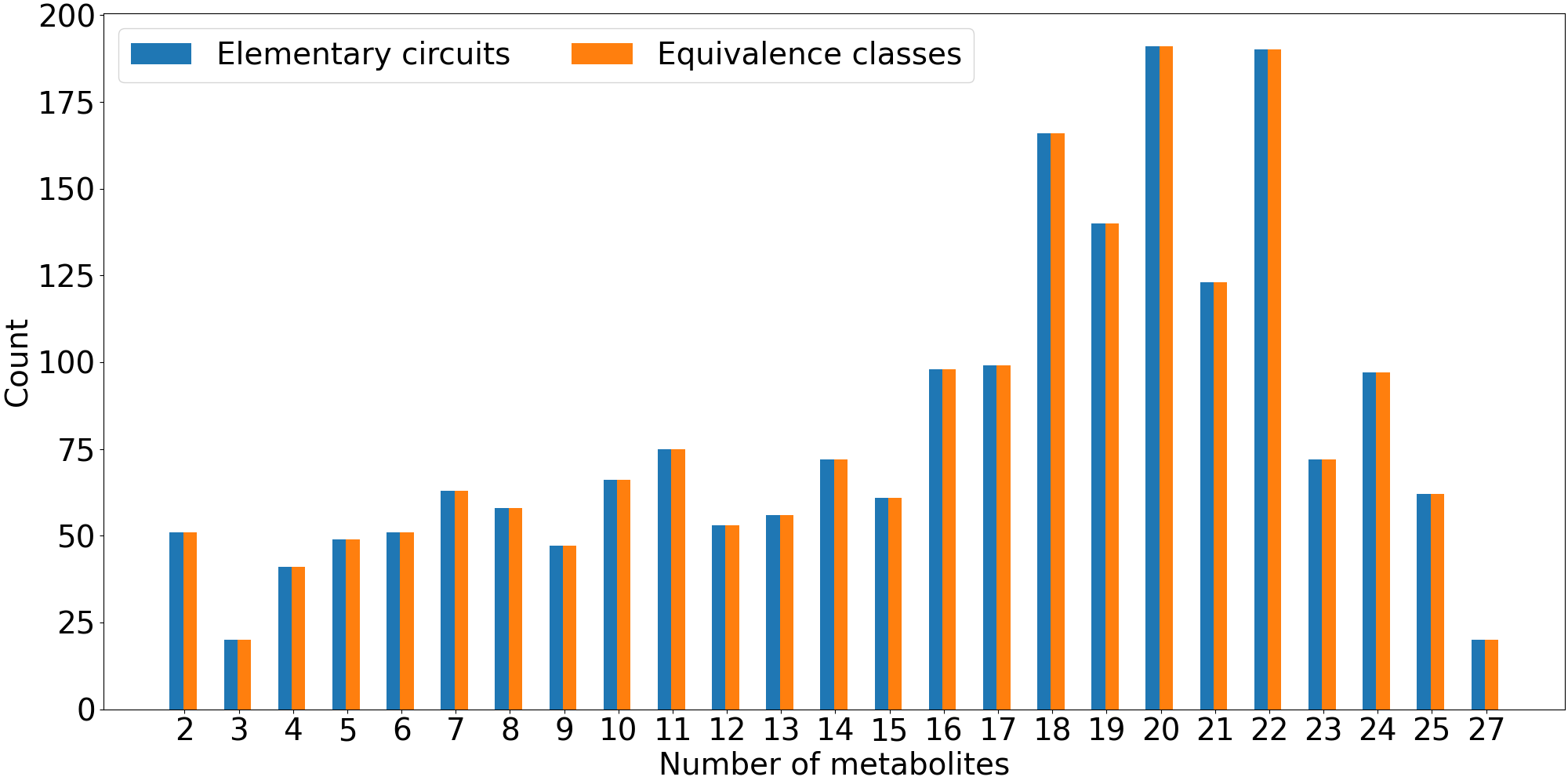}
  \end{minipage}%
  \hfill%
  \begin{minipage}[c]{0.5\textwidth}
    \includegraphics[width=\textwidth]{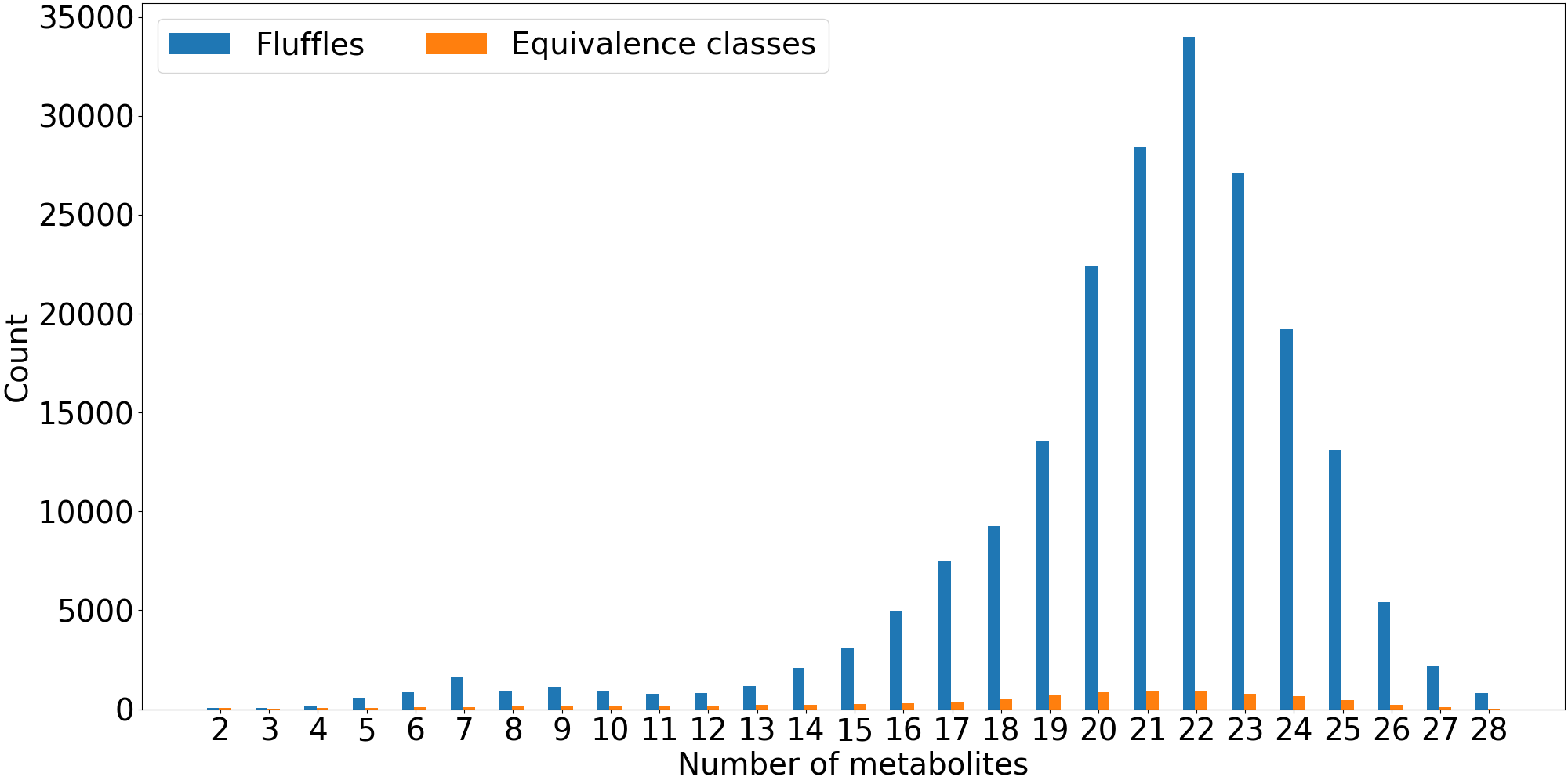}
  \end{minipage}
  \caption{Length distribution of elementary circuits (left) and size
    distribution of fluffles and their CS-equivalance classes
    (right) for the \textit{E.\ coli} core model.}
  \label{fig:EqClFluffles}
\end{figure}

Fig.~\ref{fig:EqClFluffles} in the main text summarizes the distribution
of elementary circuits and fluffles for the \textit{E.\ coli} core network
on a log scale. Here, we include the same data on a linear scale.

\begin{figure}
  \begin{minipage}[c]{0.4\textwidth}
    \begin{tikzpicture}[font=\footnotesize]
      \node[draw, circle] (x1) at (0,0) {F6P};
      \node[draw, circle] (x2) at (3,0) {G6P};
      \node[draw, circle] (x3) at (6,0) {6-PGL};
      \node[draw, circle] (x4) at (6,-3) {6-PGC};
      \node[draw, circle] (x5) at (6,-6) {Ru5P};
      \node[draw, circle] (x6) at (3,-6) {R5P};
      \node[draw, circle] (x7) at (0,-6) {S7P};
      \node[draw, circle] (x8) at (0,-3) {E4P};
			
      \node[draw, rectangle] (r1) at (1.5,0) {PGI};
      \node[draw, rectangle] (r2) at (4.5,0) {G6PDH};
      \node[draw, rectangle] (r3) at (6,-1.5) {PGL};
      \node[draw, rectangle] (r4) at (6,-4.5) {GND};
      \node[draw, rectangle] (r5) at (4.5,-6) {RPI};
      \node[draw, rectangle] (r6) at (1.5,-6) {TKT2};
      \node[draw, rectangle] (r7) at (3,-3) {TALA};
      \node[draw, rectangle] (r8) at (0,-1.5) {TKT1};
      
      \draw[->] (x1) to (r1);
      \draw[->] (r1) -- (x2);
      \draw[->] (x2) -- (r2);
      \draw[->] (r2) -- (x3);
      \draw[->] (x3) -- (r3);
      \draw[->] (r3) -- (x4);
      \draw[->] (x4) -- (r4);
      \draw[->] (r4) -- (x5);
      \draw[->] (x5) -- (r5);
      \draw[->] (r5) -- (x6);
      \draw[->] (x6) -- (r6);
      \draw[->] (r6) -- (x7);
      \draw[->] (x7) -- (r7);
      \draw[->] (r7) -- (x8);
      \draw[->] (r7) -- (x1);
      \draw[->] (x8) -- (r8);
      \draw[->] (r8) -- (x1);
    \end{tikzpicture}	
  \end{minipage}%
  \hfill%
  \begin{minipage}[c]{0.4\textwidth}
    \begin{tikzpicture}[font=\footnotesize]
      \node[draw, circle] (x1) at (0,0) {F6P};
      \node[draw, circle] (x2) at (4,0) {S7P};
      \node[draw, circle] (x3) at (4,-3) {R5P};
      \node[draw, circle] (x4) at (2,-5) {Ru5P};
      \node[draw, circle] (x5) at (0,-3) {Xu5P};
      
      \node[draw, rectangle] (r1) at (2,0) {TALA};
      \node[draw, rectangle] (r2) at (4,-1.5) {TKT1};
      \node[draw, rectangle] (r3) at (3,-4) {RPI};
      \node[draw, rectangle] (r4) at (1,-4) {RPE};
      \node[draw, rectangle] (r5) at (0,-1.5) {TKT2};
      
      \draw[->] (x1) to (r1);
      \draw[->] (r1) -- (x2);
      \draw[->] (x2) -- (r2);
      \draw[->] (r2) -- (x3);
      \draw[->] (r2) -- (x5);
      \draw[->] (x3) -- (r3);
      \draw[->] (r3) -- (x4);
      \draw[->] (x4) -- (r4);
      \draw[->] (r4) -- (x5);
      \draw[->] (x5) -- (r5);
      \draw[->] (r5) -- (x1);
      
    \end{tikzpicture}	
  \end{minipage}
  \caption{Two autocatalytic cores in the \textit{E.\ coli} core model were
    identified by our graph-theoretical algorithm, but not by the ILP
    formulation of Gagrani et al. \cite{gagrani_polyhedral_2024}}
  \label{fig:AddCores}
\end{figure}
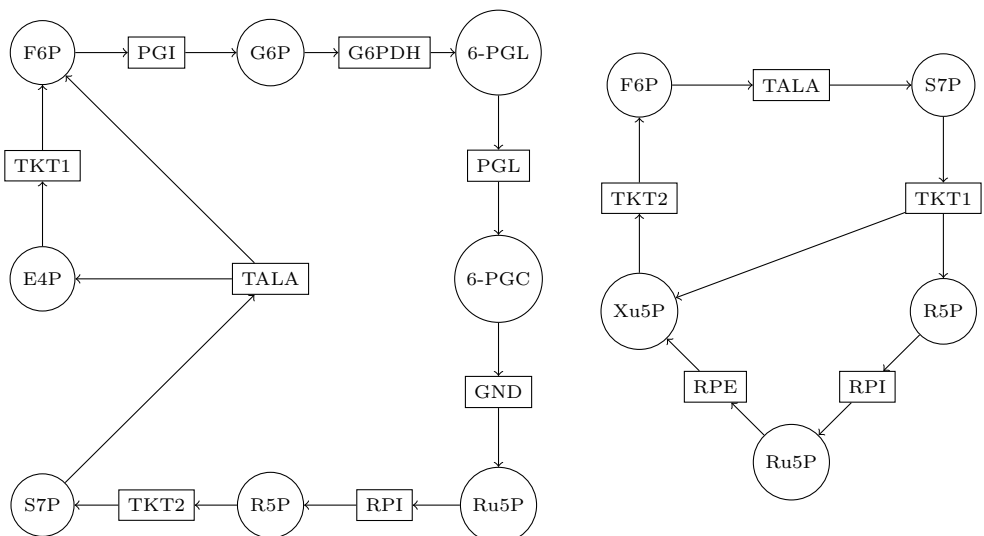

The comparison of our results with the ILP-formulation of Gagrani et al.
\cite{gagrani_polyhedral_2024} revealed two autocatalytic cores that were
detected by our graph-theoretical algorithm only, see
Fig.~\ref{fig:AddCores}.  Both are localized in the
Pentose-Phosphate-Pathway.

\paragraph{Large \emph{E.\ coli DH5$\alpha5$} model}
This CRN consists of $1,951$ metabolites and $2,779$ \cite{monk_multiomics_2016}.
Compared to the core model, the list of molecules that were removed was
augmented to allow for computational feasibility and meaningful
results. For brevity, we only provide the species identifier. The
assignment to the full names can be found at
\url{http://bigg.ucsd.edu/models/iEC1368\_DH5a/metabolites}.  The list
contains the metabolites from cytosol (c), periplasm (p), and extracellular space
(e): 
\par\noindent\begingroup\small M\_23camp\_p, M\_23ccmp\_p,
M\_23cgmp\_p, M\_23cump\_p, M\_2fe1s\_c, M\_2fe2s\_c, M\_35cgmp\_c,
\linebreak M\_3amp\_p, M\_3cmp\_p, M\_3fe4s\_c, M\_3gmp\_p, M\_3ump\_p, M\_4fe4s\_c,
M\_ACP\_c, M\_adp\_c, M\_alpp\_p, M\_amp\_c, M\_amp\_p, M\_apoACP\_c,
M\_arbtn\_e, M\_arbtn\_fe3\_e, M\_atp\_c, M\_btn\_c, M\_btnso\_c,\linebreak 
M\_ca2\_c, M\_ca2\_p, M\_camp\_c, M\_cdp\_c, M\_cl\_c, M\_cmp\_c,
M\_cmp\_p, M\_co2\_c, M\_co2\_p, M\_coa\_c, M\_colipa\_e, M\_cpgn\_e,
M\_cpgn\_un\_e, M\_ctp\_c, M\_cu2\_c, M\_cu2\_p, M\_cu\_p, M\_dadp\_c,
M\_damp\_c, M\_damp\_p, M\_datp\_c, M\_dcamp\_c, M\_dcdp\_c, M\_dcmp\_c,
M\_dcmp\_p, M\_dctp\_c, M\_dgdp\_c, \linebreak M\_dgmp\_c, M\_dgmp\_p, M\_dgtp\_c,
M\_didp\_c, M\_dimp\_c, M\_dimp\_p, M\_ditp\_c, M\_dnad\_c, \linebreak M\_dsbaox\_p,
M\_dsbard\_p, M\_dsbcox\_p, M\_dsbcrd\_p, M\_dsbdox\_c, M\_dsbdrd\_c,
M\_dsbgox\_p, \linebreak M\_dsbgrd\_p, M\_dtdp\_c, M\_dtmp\_c, M\_dtmp\_p, M\_dttp\_c,
M\_dudp\_c, M\_dump\_c, M\_dump\_p, \linebreak M\_dutp\_c, M\_enter\_e, M\_fad\_c,
M\_fadh2\_c', M\_fe2\_c, M\_fe2\_p, M\_fe3\_e, M\_fe3hox\_e,
M\_fe3hox\_un\_e, M\_fecrm\_e, M\_fecrm\_un\_e, M\_feenter\_e,
M\_feoxam\_e, M\_feoxam\_un\_e, M\_fmn\_c, M\_fmnh2\_c,\linebreak  M\_gdp\_c,
M\_gdp\_p, M\_gmp\_c, M\_gmp\_p, M\_gtp\_c, M\_gtp\_p, M\_h2\_c, M\_h2\_p,
M\_h2o2\_c, M\_h2o2\_p, M\_h2o\_c, M\_h2o\_e, M\_h2o\_p, M\_h2s\_c,
M\_h\_c, M\_h\_e, M\_h\_p, M\_hacolipa\_e, M\_halipa\_e, M\_hco3\_c,
M\_hdca\_e, M\_hqn\_c, M\_idp\_c, M\_imp\_c, M\_imp\_p, M\_itp\_c,
M\_lipa\_e, M\_lipidA\_e, M\_lipidAp\_e, \linebreak M\_metsox\_R\_\_L\_e,
M\_metsox\_S\_\_L\_e, M\_mql8\_c, M\_mqn8\_c, M\_n2o\_c, M\_na1\_c,
M\_na1\_p, \linebreak M\_nad\_c, M\_nadh\_c, M\_nadp\_c, M\_nadph\_c, M\_nh4\_c,
M\_nh4\_p, M\_nmn\_c, M\_nmn\_p, M\_no2\_c, M\_no2\_p, M\_no3\_c,
M\_no3\_p, M\_no\_c, M\_o2\_c, M\_o2\_p, M\_o2s\_c, M\_o2s\_p, M\_pi\_c,
M\_pi\_p, \linebreak M\_ppi\_c, M\_pppi\_c, M\_q8\_c, M\_q8h2\_c, M\_rbflvrd\_c,
M\_ribflv\_c, M\_s\_c, M\_sel\_c, M\_seln\_c, M\_slnt\_c, M\_so2\_c,
M\_so3\_c, M\_so3\_p, M\_so4\_c, M\_thm\_c, M\_thmmp\_c, M\_thmpp\_c,
M\_trnaala\_c, M\_trnaarg\_c, M\_trnaasn\_c, M\_trnaasp\_c, M\_trnacys\_c,
M\_trnagln\_c, M\_trnaglu\_c, M\_trnagly\_c, M\_trnahis\_c, \linebreak  M\_trnaile\_c,
M\_trnaleu\_c, M\_trnalys\_c, M\_trnamet\_c, M\_trnaphe\_c, M\_trnapro\_c,
M\_trnasecys\_c, \linebreak M\_trnaser\_c, M\_trnathr\_c, M\_trnatrp\_c,
M\_trnatyr\_c, M\_trnaval\_c, M\_tsul\_c, M\_tsul\_p, M\_udp\_c, \linebreak M\_ump\_c,
M\_ump\_p, M\_utp\_c.
\endgroup

\paragraph{\emph{Homo sapiens} erythrocyte model}
This CRN consists in total of $342$ metabolites and $469$
\cite{bordbar_iabrbc283_2011}. It was constructed by taking advantage
of the human RECON-1 metabolic model \cite{duarte_global_2007} and 
proteomic data from enucleated erythrocytes and covers two compartments 
only, the cytosol and the extracellular space. After removal of the below provided 
list of small molecules, $151$ metabolites and $261$ reactions in mainly two strongly
connected components remained. Again, we only provide the species
identifier. The assignment to the full names can be found at
\url{http://bigg.ucsd.edu/models/iAB_RBC_283/metabolites}. The list
contains the metabolites from the cytosol (c) and the extracellular space
(e): \par\noindent\begingroup\small M\_gdp\_c, M\_thmtp\_c, M\_nad\_c,
M\_ump\_c, M\_arg\_\_L\_e, M\_pi\_c, M\_3moxtyr\_e, M\_normete\_\_L\_e,
M\_cl\_c, M\_mal\_\_L\_e, M\_spmd\_e, M\_gluala\_e, M\_thmpp\_c, M\_thm\_e,
M\_imp\_c, M\_cdp\_c, M\_o2\_c, M\_band\_c, M\_utp\_c, M\_cl\_e,
M\_dnad\_c, M\_35cgmp\_c, M\_hco3\_c, M\_dopa\_e, M\_adp\_c, M\_na1\_e,
\linebreak M\_h\_c, M\_coa\_c, M\_ptrc\_e, M\_cmp\_c, M\_ala\_\_L\_e,
M\_nadp\_c, M\_nadh\_c, M\_k\_c, M\_ppi\_c, \linebreak M\_gmp\_c,
M\_nh4\_c, M\_co\_c, M\_ctp\_c, M\_k\_e, M\_bandmt\_c, M\_na1\_c,
M\_acnam\_e, M\_gtp\_c, \linebreak M\_nmn\_c, M\_camp\_c, M\_udp\_c,
M\_h2o\_c, M\_4pyrdx\_e, M\_mepi\_e, M\_h\_e, M\_ribflv\_c, M\_nrpphr\_e,
M\_h2o2\_c, M\_nadph\_c, M\_ca2\_c, M\_fad\_c, M\_ncam\_e, M\_ca2\_e,
M\_thmmp\_c, M\_thm\_c, M\_atp\_c, M\_amp\_c, M\_co2\_c, M\_pi\_e,
M\_fmn\_c, M\_gly\_e, M\_fe2\_c \endgroup

\paragraph{\emph{Methanosarcina Barkeri} model}
This CRN consists of $628$ metabolites and $690$ \cite{bordbar_iabrbc283_2011}.
Again, we only provide the species identifier. The assignment to the full names can 
be found at \url{http://bigg.ucsd.edu/models/iAF692/metabolites}. The list
contains metabolites from the cytosol (c) and the extracellular space (e):

 \par\noindent\begingroup\small 
M\_f420\_2\_c, M\_trnathr\_c, M\_h\_c, M\_h\_e, M\_f430p2\_c, M\_f420\_3\_c, 
M\_imp\_c, M\_s\_c, M\_cu2\_c, \linebreak M\_dctp\_c, M\_dtdp\_c,M\_trnaser\_c, 
M\_trnaarg\_c, M\_so3\_e, M\_trnaile\_c, M\_pppi\_c, M\_cobya\_c, \linebreak M\_mma\_e,
M\_ctp\_c, M\_ni2\_c, M\_dma\_e, M\_f420\_1\_c, M\_ala\_\_L\_e, M\_trnagly\_c,
M\_dcdp\_c,\linebreak  M\_tma\_e, M\_nmn\_c, M\_itp\_c, M\_h2\_c, M\_cd2\_e, M\_btn\_c,
M\_dcmp\_c, M\_dudp\_c, M\_no2\_c, \linebreak M\_cmp\_c, M\_tsul\_c, M\_dgtp\_c, 
M\_ch4\_e, M\_cobalt2\_c, M\_cbi\_e, M\_nh4\_c, M\_adp\_c, M\_n2\_e, 
\linebreak M\_nad\_c, M\_f420\_0\_c, M\_cd2\_c, M\_co2\_c, M\_dtmp\_c, M\_trnatrp\_c,
M\_trnalys\_c, M\_camp\_c, \linebreak M\_trnagln\_c, M\_ca2\_c, M\_k\_e, M\_h2s\_c, 
M\_f420\_5\_c, M\_trnaasp\_c, M\_mg2\_c, M\_co\_c, M\_f420\_4\_c, M\_pi\_c,
M\_dttp\_c, M\_f390a\_c, M\_o2\_c, M\_f420\_6\_c, M\_nadh\_c, M\_trnaala\_c,
M\_ind3ac\_e, M\_dgdp\_c, M\_h2o\_c, M\_cdp\_c, M\_f390g\_c, M\_fe2\_c, 
M\_meoh\_e, M\_com\_c, M\_dms\_e, M\_o2s\_c, M\_f420\_2h2\_c, M\_datp\_c, 
M\_cu2\_e, M\_cl\_e, M\_na1\_e, M\_hco3\_c, M\_so3\_c, M\_trnamet\_c, M\_pac\_e, 
M\_alac\_\_S\_e, M\_co1dam\_c, M\_dadp\_c, M\_gtp\_c, M\_trnaval\_c, M\_coa\_c, 
M\_nadp\_c, M\_thmpp\_c, M\_ppi\_c, \linebreak M\_f430p1\_c, M\_glyald\_e, M\_trnacys\_c,
 M\_fmn\_c, M\_thm\_c, M\_thmmp\_c, M\_f420\_7\_c, M\_dcamp\_c, M\_na1\_c, 
M\_nadph\_c,  M\_actn\_\_R\_e, M\_atp\_c, M\_dutp\_c, M\_cob\_c, M\_co2dam\_c, 
M\_dnad\_c, \linebreak M\_ump\_c, M\_cl\_c, M\_f430p3\_c, M\_damp\_c, M\_gmp\_c, M\_trnaglu\_c,
M\_k\_c, M\_gdp\_c, M\_idp\_c, \linebreak M\_s\_e, M\_trnaphe\_c, M\_f430\_c,  M\_cbl1hbi\_e,
M\_btn\_e, M\_ribflv\_c, M\_h2o2\_c, M\_udp\_c, M\_trnaleu\_c, M\_trnatyr\_c, M\_h2\_e,   
M\_trnahis\_c, M\_unknown\_rbfdeg\_e, M\_amp\_c, M\_ca2\_e, M\_dump\_c,  \linebreak
M\_unknown\_cbl1deg\_e, M\_ch4s\_e, M\_utp\_c, M\_trnapro\_c
\endgroup

\section{Examples}
\label{subsect:Examples}

\begin{example}[Autocatalytic core of Type III does not admit an elementary-circuit CS-representative]
\label{ex:notjustcycle}
The CS-equivalence class of the autocatalytic cores of Types I, II, IV, and
V contains a single circuit circuitnet; see the proof of
Thm.~M\ref{thm:NgheClasses} for a direct verification. In each of these
cases, the list of elementary circuits includes at least one (two for Type
V) circuit that passes through all species and reaction vertices. The only
exception is Type III, shown in Fig.~M\ref{fig:autoCore}g. Here, the
CS-matrix is
\begin{equation}
\SM[\child]=\begin{pmatrix}
-1 & 1 & 1\\
1 & -1 & 0\\
1 & 0 & -1
\end{pmatrix},
\end{equation}
and it admits exactly two elementary circuits:
\begin{equation}
\begin{cases}
x_1 \rightarrow r_1 \rightarrow x_2 \rightarrow r_2 \rightarrow x_1,\\
x_1 \rightarrow r_1 \rightarrow x_3 \rightarrow r_3 \rightarrow x_1,
\end{cases}
\end{equation}
neither of which traverses all vertices (both have length~2).
\end{example}

\begin{example}[Autocatalytic core Type IV in the \emph{E.\ Coli} core network] 
Most autocatalytic cores described in the literature are of types I, II, or
III. In the \emph{E.\ coli} core network, we found a single example of a
Type IV core in the pentose-phosphate-pathway (PPP). This example served as
motivation for introducing the concept of centralized autocatalysis.
 
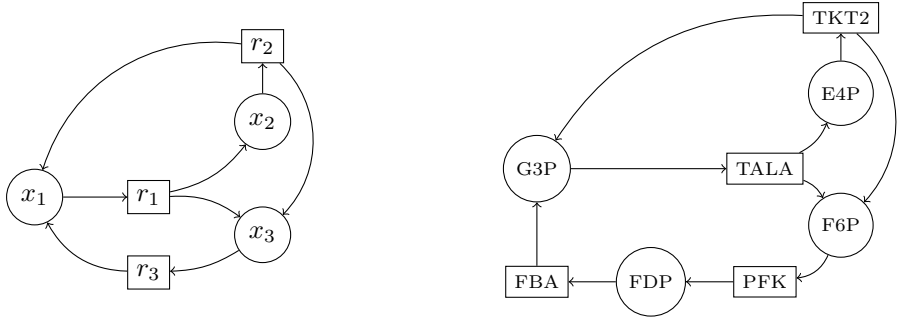
\begin{figure}[tb]
  \centering
  \begin{minipage}[c]{0.5\textwidth}
    \begin{tikzpicture}
      \node[draw, circle] (x1) at (0,0) {$x_1$};
      \node[draw, circle] (x2) at (3,1) {$x_2$};
      \node[draw, circle] (x3) at (3,-0.5) {$x_3$};		
      \node[draw, rectangle] (r1) at (1.5,0) {$r_1$};
      \node[draw, rectangle] (r2) at (3, 2) {$r_2$};
      \node[draw, rectangle] (r3) at (1.5,-1) {$r_3$};
      
      \draw[->] (x1) -- (r1);
      \draw[->] (r1) to [bend right = 20] (x2);
      \draw[->] (r1) to [bend left = 20] (x3);
      \draw[->] (x2) -- (r2);
      \draw[->] (r2) to [bend left = 45] (x3);
      \draw[->] (r2) to [bend right = 40] (x1);
      \draw[->] (x3) to [bend left = 15] (r3);
      \draw[->] (r3) to [bend left = 30] (x1);
    \end{tikzpicture}
  \end{minipage}%
  \hfill%
  \begin{minipage}[c]{0.5\textwidth}
    \begin{tikzpicture}[font=\footnotesize]
      \node[draw, circle] (x1) at (0,0) {G3P};
      \node[draw, circle] (x2) at (4, 1) {E4P};
      \node[draw, circle] (x3) at (4,-0.75) {F6P};	
      \node[draw, circle] (x4) at (1.5,-1.5) {FDP};
      
      \node[draw, rectangle] (r1) at (3,0) {TALA};
      \node[draw, rectangle] (r2) at (4, 2) {TKT2};
      \node[draw, rectangle] (r3) at (3,-1.5) {PFK};
      \node[draw, rectangle] (r4) at (0,-1.5) {FBA};		
      
      \draw[->] (x1) -- (r1);
      \draw[->] (r1) to [bend right = 20] (x2);
      \draw[->] (r1) to [bend left = 20] (x3);
      \draw[->] (x2) -- (r2);
      \draw[->] (r2) to [bend left = 45] (x3);
      \draw[->] (r2) to [bend right = 30] (x1);
      \draw[->] (x3) to [bend left = 30] (r3);
      \draw[->] (r3) to (x4);
      \draw[->] (x4) to (r4);
      \draw[->] (r4) to (x1);
    \end{tikzpicture}
  \end{minipage}
  \caption{Example of an autocatalytic core of Type IV according to the
    classification of Blokhuis et al. \cite{blokhuis_universal_2020} (left) and a
    topologically equivalent autocatalytic core detected in the
    pentose-phosphate-pathway (PPP) of the \textit{E.\ coli} core network
    (right). Abbreviations of metabolites: G3P glyceraldehyde 3-phosphate;
    E4P erythrose 4-phosphate; F6P fructose 6-phosphate; FDP fructose
    1,6-bisphosphatase. Reactions are labeled by the enzymes that catalyze
    them: TALA transaldolase A; TKT2 transketolase 2; PFK
    phosphofructokinase; FBA fructose-bisphosphate aldolase.}
  \label{fig:TypeIV}
\end{figure}

The Type IV autocatalytic core in Fig.~\ref{fig:TypeIV}
introduces one unit of G3P, which yields one unit of E4P and F6P each. E4P then
generates one unit of G3P and one unit of F6P. Each of the two units of F6P
finally produces a G3P, resulting in a 
gross yield of three G3P. There are
indeed three elementary circuits in this network that contain G3P, while
all other species are located on at most two elementary circuits. G3P
therefore differs from the residual species and, since all elementary
circuits coalesce in G3P, it forms the autocatalytic center.
\end{example} 

\FloatBarrier

\section{Algorithmic Overview}\label{sec:Algorithms}

The algorithm constitutes five main parts, computing the set of all autocatalytic cycles and their properties from the König graph of a CRN:
\begin{itemize}
\item preparing the network
\item decomposition of the network to biochemically functional units
\item enumeration of elementary circuits
\item enumeration of equivalence classes of fluffles
\item classification of enumerated fluffles
\end{itemize}
The individual steps will be described in the following sections.  

\begin{algorithm}[htb]
  \caption{Algorithm overview}
  \label{alg:AlgorithmOverview} 
    \SetKwInOut{Require}{Require}
    \SetKwInOut{Output}{Output}

    \Require{$\king(X,R)$, set of small molecules to remove $S$}
    \Output {$\mathcal{A}$ - Set of all autocatalytic Metzler matrices} 
    
    $\mathcal{A} \leftarrow \emptyset$\;
    $\mathcal{A}_Z \leftarrow \emptyset$\;
    $\king(X,R) \leftarrow$ \FuncSty{RemoveSmallMolecules}($S, \king(X,R)$)\ \tcp*{see Sec. \emph{\nameref{sec:network_prep}}}
    \For{SCC of $\king(X,R)$}{
    $\tree \leftarrow$ \FuncSty{PartitionNetwork}($\king[SCC], (V,\leq)$)\ \tcp*{see Sec. \emph{\nameref{sec:Partitioning}}}
    $\elem \leftarrow$ \FuncSty{EnumerateElementaryCircuits}($\tree$)\ \tcp*{see Sec. \emph{\nameref{sec:enum_elem_circuits}}}
    $\mathcal{E} \leftarrow $ \FuncSty{EquivalenceClassAssembly}($\elem$)\ \tcp*{see Alg.~M\ref{alg:RecEqClAss}}
    $\mathcal{A}\leftarrow \mathcal{A} \ \cup $ \FuncSty{AutocatalyticActivity}($\mathcal{E}$)\ \tcp*{see Sec. \emph{\nameref{sec:auto_cap}}}
    $\mathcal{A}_{Z} \leftarrow $ \FuncSty{CheckCentrality}($\mathcal{A}$)\ \tcp*{see Sec. M\ \textit{\nameref{sec:centAut}}}
    }
\end{algorithm}
    
An overview of all components is given in
Alg. \ref{alg:AlgorithmOverview}. The enumeration of equivalence classes of
fluffles is described in the main text in Alg.~M\ref{alg:RecEqClAss} and
\FuncSty{CheckCentrality} follows
Sec.~M\ \textit{\nameref{sec:centAut}}. The remaining 
components will be covered in the following sections.

\subsection{Network Preparation}
\label{sec:network_prep}

Highly interconnected metabolites that do not constitute a focal compound
of a reaction, i.e., co-factors like ATP, NADH, etc., and small molecules
such as CO\textsubscript{2} and H\textsubscript{2}O, do not contribute to
the generation of chemically meaningful autocatalytic cycles. They are
therefore removed to reduce complexity. We provide a manually curated list
of small molecules for this purpose. Otherwise, the K{\"o}nig graph of the
input CRN is not modified and follows standard definitions.

\subsection{Partitioning}\label{sec:Partitioning}

\begin{algorithm}[htb]
  \caption{PartitionNetwork}
  \label{alg:PartitionNetwork} 

  \SetKwInOut{Require}{Require}
  \SetKwInOut{Output}{Output}
  \SetKwProg{Def}{def}{:}{} 

  \Require{$\king(X,R)$}
  \Output {Partitioning tree $\tree$}  
   
  $\tree \leftarrow$ ($\king(X,R)$, $\emptyset$)\;
  \FuncSty{PartitionNetworkRecursion}($\king(X,R), \tree$)\;
  
  \vspace*{6pt}
  \Def{\FuncSty{PartitionNetworkRecursion}($\king(X,R)$, shared $\tree$)}{
        $\mathcal{R} \leftarrow$ \FuncSty{GenerateReactionNetwork}($\king(X,R)$)\;  
        $ShReD\leftarrow $ \FuncSty{ComputeActualShReDMatrix}($\mathcal{R}$)\;
        $P\leftarrow $ \FuncSty{ComputeExpectedShReDMatrix}($\mathcal{R}$)\;
        $G\coloneqq P-ShReD$\;
        $v\leftarrow$ \FuncSty{ComputeLeadingEigenvector}(G)\;
        $s\in \{-1,1\}^n,s_i\coloneqq \begin{cases}
          -1 &\text{if } v_i \leq 0\\
          1 & \text{else }
        \end{cases}$\;
        \If{$v = 0$}{
          $s\in \{-1,1\}^n,s_i\coloneqq \begin{cases}
          -1 &\text{if } v_i < 0\\
          1 & \text{else }
          \end{cases}$\;
        }
        \If{$\sum_i \sum_j Q_{ij}s_is_j\leq 0$}{
          \Return
        }
        $\king(X_1,R_1), \king(X_2,R_2) \leftarrow$ \FuncSty{SplitNetwork}($v,\king(X,R)$)\;
        \If{$\king(X_1,R_1)$ or $\king(X_2,R_2)$ is a DAG}{
            \Return
        }
        $V(\tree) \leftarrow V(\tree) \cup \king(X_1,R_1) \cup \king(X_2,R_2)$\;
        $E(\tree) \leftarrow E(\tree) \cup {(\king(X,R), \king(X_1,R_1))}$\;
        $E(\tree) \leftarrow E(\tree) \cup {(\king(X,R), \king(X_2,R_2))}$\;
        \FuncSty{PartitionNetwork}($\king(X_1,R_1), \tree$)\;
        \FuncSty{PartitionNetwork}($\king(X_2,R_2), \tree$)\;
        
   }
\end{algorithm}

The partition algorithm takes the König graph of a CRN (without small molecules) as input and computes a partition tree $\mathbb{T}$ whose nodes are labeled by subnetworks in König representation. The resulting partition tree $\mathbb{T}$ is then used to determine interfaces between modules.

To ensure that all elements of $\mathcal{E}$ can be generated via linear
combinations of elementary circuits, we treat strongly connected components
$S$ independently. For every $\king[S]$, a network $\mathcal{R}\coloneqq
(R, E)$ with the reactions as vertices and $E\coloneqq \{(r_1, r_2) |
\exists x \in X(S), s_{xr_1}^{+}>0, s_{xr_2}^->0\}$ is generated. The next
step partitions $\mathcal{R}$ based on a round trip distance metric, called
\emph{Shortest Retroactive Distance} (ShReD)
\cite{sridharan_identification_2011}. The difference between the expected
and actual \emph{ShReD} matrices, i.e. $G\coloneqq P-\text{ShReD}$, is
employed to solve an integer linear programming (ILP) problem: $\max
Q\coloneqq \sum_{i=1}^n\sum_{j=1}^n G_{ij}\cdot s_{i}\cdot s_j, \text{
  s.t. }s\in\{-1,1\}^{n}$. By construction, $G$ is symmetric. Thus all
eigenvalues are real, and for the sake of reduced runtime, we take advantage of
the fact that the leading eigenvector of $G$, i.e., the eigenvector to the
largest eigenvalue, approximates the solution vector for the given
optimization problem as proposed in \cite{newman_modularity_2006}. More
details can be found in \cite{sridharan_identification_2011}. Importantly,
each partitioning step yields two submodules
$\king(X_1,R_1),\king(X_2,R_2)$, where $R=R_1\cup R_2, R_1\cap
R_2=\emptyset$ and $X_i \coloneqq \{x\in X \ \vert \ \exists r\in R_i:
s_{xr}^->0 \text{ or } s_{xr}^+>0\}$. A partitioning tree $\mathbb{T}$ is
constructed such that $(\king(X,R), \king(X_1,R_1)), (\king(X,R),
\king(X_2,R_2))\in E(\mathbb{T})$. The upper steps are repeated recursively until
either $Q=0$ or one of the submodules contained is a DAG.

For more details on the functions \FuncSty{GenerateReactionNetwork()}, 
\FuncSty{Compute\-Actual\-ShReD\-Matrix()},
\FuncSty{Compute\-Expected\-ShReD\-Matrix()}, and \FuncSty{SplitNetwork()} we
refer to the original publication of the implemented partition algorithm
\cite{sridharan_identification_2011}.

\subsection{Enumeration of Elementary circuits}
\label{sec:enum_elem_circuits}

The enumeration of elementary circuits follows the partition tree
$\mathbb{T}$ of the last step from bottom to top. First, Johnson's
algorithm \cite{johnson_finding_1975} is applied to all leaf nodes, which
ensures the detection of all elementary circuits within biochemical
functional modules. Upon merging two modules, i.e., for interior nodes of
$\mathbb{T}$, we restrict Johnson's algorithm to metabolites in the
intersection that lie along directed paths from one child module into the
another, see Fig.~\ref{fig:EnumerateElementaryCircuits}. It should be
noted that while we expect that exhaustive enumeration of all circuits will
always be possible, enumeration for joined partitions may become infeasible
for larger networks, and is therefore size-limited in practice.

\begin{algorithm}[htb]
  \caption{OrientedNetwork}
  \label{alg:OrientedNetwork} 
  
    \SetKwInOut{Require}{Require}
    \SetKwInOut{Output}{Output}  
  
    \Require{Root, OutNetwork, InNetwork, L}
    \Output{G}
    $G \leftarrow$ Root\;
    \For{$u \in L$}{
          $V(G) \leftarrow V(G) \cup \{u_{in}, u_{out}\}$\;
          \For{$v\in V^{G}_{in}(u)$}{
             \uIf{$v\in V(InNetwork)$}{
               $E(G) \leftarrow E(G)\cup \{(v, u_{in})\}$\;
             } \uElseIf{$v\in V(OutNetwork)$}{
               $E(G) \leftarrow E(G) \cup \{(v, u_{out})\}$\;
             }
          }
          \For{$v \in V^{G}_{out}(u)$}{
             \uIf{$v\in V(InNetwork)$}{
                $E(G) \leftarrow E(G)\cup \{(u_{in}, v)\}$\;
             }\uElseIf{$v\in V(OutNetwork)$}{
                $E(G) \leftarrow E(G) \cup \{(u_{out}, v)\}$\;
             }
             $V(G) \leftarrow V(G)\setminus \{u\}$\;
          }
   }
   \Return{G}
\end{algorithm}

\begin{algorithm}[htb]
  \caption{EnumerateElementaryCircuits}
  \label{alg:EnumerateElementaryCircuits} 
  
      \SetKwInOut{Require}{Require}
      \SetKwInOut{Output}{Output}
    
      \Require{Partitioning tree $\tree$}
      \Output{Elementary Circuits $\elem$}

      \SetKwFunction{FuncEECR}{EnumElemCircuitsRec}      
      \SetKwProg{Def}{def}{:}{}

      $\elem \leftarrow \emptyset$\;
      $t \leftarrow \mathrm{root}(\tree)$\;
      \FuncSty{EnumElemCircuitsRec}($t, \elem$)\;
      \vspace*{10pt}
      \Def{ \FuncSty{EnumElemCircuitsRec}(tree node $t \in V(\tree)$, shared $\elem$)}{
        \eIf{t == Leaf}{
            $\elem \leftarrow \elem \ \cup $ \FuncSty{Johnsons}($t$, $\{\}$)
        }{
            $t_1, t_2 = \mathrm{children}(t)$\;
            \FuncSty{EnumElemCircuitsRec}($t_1$)\;           
            \FuncSty{EnumElemCircuitsRec}($t_2$)\;  
            $Y\leftarrow X(t_1)\cap X(t_2)$\;
        	$L \leftarrow \{\}$\;
        	\If{$Y\neq \emptyset$}{
            \For{$y\in Y$}{
                LIRO $\leftarrow$ \FuncSty{OrientedNetwork}($t$, $t_1$, $t_2$, L)\tcp*{see Alg. \ref{alg:OrientedNetwork}, proof Thm. \ref{thm:AllElemCircuits}}
                RILO $\leftarrow$ \FuncSty{OrientedNetwork}($t$, $t_2$, $t_1$, L)\; 
                $\elem'_1 \leftarrow $ \FuncSty{Johnsons}(LIRO, $\{y\}$)\;
                $\elem'_2 \leftarrow $ \FuncSty{Johnsons}(RILO, $\{y\}$)\;
                $\elem_1 \leftarrow$ \FuncSty{Backtranslation}($\elem'_1$)\tcp*{see proof Thm. \ref{thm:AllElemCircuits}}
                $\elem_2 \leftarrow$ \FuncSty{Backtranslation}($\elem'_2$)\;
                $\elem \leftarrow \elem \cup \elem_1 \cup \elem_2$\;
                $L\leftarrow L\cup \{y\}$\;
             }
        	}
         }
      }
\end{algorithm}

\begin{theorem}\label{thm:AllElemCircuits}
  Let $(X,R)$ be a CRN, $\elem(\king)$ be the set of all elementary 
  circuits of its K\"onig graph, and $\mathcal{C}$ the set of elementary circuits
  generated via Algorithm \ref{alg:EnumerateElementaryCircuits}. Then
  $\elem(\king) = \mathcal{C}$.
\end{theorem}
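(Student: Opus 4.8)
The plan is to prove the two inclusions $\mathcal{C}\subseteq\elem(\king)$ (soundness) and $\elem(\king)\subseteq\mathcal{C}$ (completeness) by structural induction on the partition tree $\tree$ returned by \textsc{PartitionNetwork}, using two structural facts. First, at every internal node $t$ with children $t_1,t_2$ the reaction set is partitioned, $R(t)=R(t_1)\cupdot R(t_2)$, while $X(t_i)$ is exactly the set of species participating in a reaction of $R(t_i)$; hence any edge of $\king$ incident with a reaction $r\in R(t_i)$ has its other endpoint in $X(t_i)$, so a directed walk can switch between a reaction of $R(t_1)$ and a reaction of $R(t_2)$ only at a species vertex of $Y\coloneqq X(t_1)\cap X(t_2)$. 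In particular, an elementary circuit of the module at $t$ that uses reactions of both children must meet $Y$. Second, the module subgraph attached to each node is an induced subgraph of $\king$, so a circuit of $\king$ contained in a node's vertex set is a circuit of that module.

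For soundness I would first note that at a leaf the closed walks returned by Johnson's algorithm are elementary circuits of the induced module graph, hence of $\king$. At an internal node the newly emitted circuits come from \textsc{Johnsons}(\textsc{LIRO},$\{y\}$) and \textsc{Johnsons}(\textsc{RILO},$\{y\}$) on the auxiliary graphs of Alg.~\ref{alg:OrientedNetwork}, post-processed by \textsc{Backtranslation}. The essential lemma is that the vertex-splitting of Alg.~\ref{alg:OrientedNetwork} together with \textsc{Backtranslation} is \emph{loop-faithful}: for a pivot $y\in Y$ and the current exclusion set $L$, collapsing each split pair $u_{in},u_{out}$ with $u\in L$ back to $u$ sends the elementary circuits through $y$ of \textsc{LIRO} (resp.\ \textsc{RILO}) bijectively to the elementary circuits of the module through $y$ whose crossings between the two children occur only at $y$ or at vertices of $Y\setminus L$. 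This uses that $u_{in}$ is incident only with the reaction vertices of the \emph{InNetwork} side and $u_{out}$ only with those of the \emph{OutNetwork} side, so a split vertex can no longer act as a crossing point; the two variants \textsc{LIRO} and \textsc{RILO} are required because a pivot can be traversed by a circuit in either crossing direction. Granting this lemma gives $\mathcal{C}\subseteq\elem(\king)$.

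For completeness I would again induct on $\tree$; the leaf case is the correctness of Johnson's algorithm on the induced module graph. Given an elementary circuit $C$ with $V(C)\subseteq V(t)$ at an internal node: if all reactions of $C$ lie in $R(t_1)$ (resp.\ $R(t_2)$), then by the first structural fact so do all its species, and $C$ is a circuit of that child, produced by the recursive call. Otherwise $C$ uses reactions of both children, so $V(C)\cap Y\neq\emptyset$; let $y$ be the element of $V(C)\cap Y$ processed first in the loop over $Y$. At that iteration $L\cap V(C)=\emptyset$, so no vertex of $C$ is split, and reading $C$ with the orientation it induces at $y$ lifts it verbatim to an elementary circuit through $y$ of \textsc{LIRO} or \textsc{RILO}, which Johnson's algorithm enumerates and \textsc{Backtranslation} returns as $C$. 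Hence $C\in\mathcal{C}$, and together with soundness, $\elem(\king)=\mathcal{C}$.

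The hard part will be the loop-faithfulness lemma for the \textsc{OrientedNetwork}/\textsc{Backtranslation} pair: one must verify that a backtranslated circuit never revisits a vertex — so that every emitted object is genuinely elementary — that restricting Johnson's algorithm to the intersection-oriented auxiliary graph rather than to all of the module still captures every crossing circuit through the pivot, and that the exclusion set $L$ behaves exactly as the completeness argument needs when selecting the pivot $y$. Once this lemma is established, the inductive scaffolding and the two structural observations are routine.
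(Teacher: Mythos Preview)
Your approach is essentially the paper's: induction on the partition tree, Johnson's algorithm at the leaves, and the \textsc{LIRO}/\textsc{RILO} vertex-splitting construction at internal nodes. Your separation into soundness and completeness and the explicit ``loop-faithfulness'' lemma is in fact cleaner than the paper's more narrative argument.

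One point in the completeness direction needs care. You pick the \emph{first} element $y$ of $V(C)\cap Y$ in processing order and claim $C$ lifts to \textsc{LIRO} or \textsc{RILO} at that iteration. This is fine if \textsc{OrientedNetwork} is taken literally from the pseudocode of Alg.~\ref{alg:OrientedNetwork} (only vertices in $L$ are split, the current pivot is untouched). But the paper's proof text and Fig.~\ref{fig:EnumerateElementaryCircuits} describe \textsc{LIRO}/\textsc{RILO} as additionally deleting the ``wrong-direction'' edges at the pivot $y$ itself, so that any circuit through $y$ in \textsc{LIRO} must cross from one child to the other at $y$. Under that reading your choice fails whenever $C$ passes through its first $Y$-vertex without crossing (both adjacent reactions in the same $R_i$): then $C$ is in neither \textsc{LIRO} nor \textsc{RILO}. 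The paper instead takes $y$ to be the first \emph{crossing} vertex of $C$ in processing order; earlier non-crossing $Y$-vertices of $C$ are then split, but since both their neighbouring reactions lie on one side, $C$ lifts through the appropriate copy $u_{in}$ or $u_{out}$. With this adjustment your argument goes through, and the remaining verifications you flag (elementarity after backtranslation, the pivot restriction capturing every crossing circuit) are exactly the ones the paper handles.
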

\begin{proof}
  We consider each strongly connected component independently since there
  are no elementary circuits connecting two strongly connected components
  by definition. By construction, $\tree$ is a strict binary tree, leading
  to a simple bottom enumeration scheme where a node is visited only after
  the full sub-trees of both children have been visited.  Note that from a
  purely algorithmic perspective, nodes may be visited in arbitrary order,
  as circuit sets for each node are independent. However, we require an
  order of closed sets of nodes for proof by induction. Leaf nodes serve
  as the base case, representing the minimal sub-networks of the
  partition. Here, the enumeration of elementary circuits is achieved by
  Johnson's Algorithm \cite{johnson_finding_1975, gupta_finding_2021},
  which has shown to be complete. In the inductive step, we only consider
  inner tree nodes; therefore, as the partition tree is strict, parent
  nodes $\king(X_\kappa, R_\kappa)$ (node) with non-empty children,
  $\king(X_1,R_1)$ (left) and $\king(X_2,R_2)$ (right).
  
  We first note that by construction, $R_\kappa=R_1\cup R_2$ and
  $X_\kappa=X_1\cup X_2$. While $R_1\cap R_2=\emptyset$, $X_1\cap X_2$ is
  not necessarily empty. Thus, for fusing two children, we consider only
  elementary circuits containing at least one intersecting metabolite. If
  $X_1\cap X_2=\emptyset$, there is nothing to do since there are no edges
  connecting $\king(X_1, R_1)$ and $\king(X_2, R_2)$. In any other case, we
  enumerate elementary circuits containing at least one compound of
  $X_1\cap X_2$. Several algorithms have been proposed to enumerate
  circuits containing a fixed node, such as modifications of Johnsons'
  Algorithm \cite{johnson_finding_1975}, with available pre-existing
  implementations \cite{SciPyProceedings_11}. Simply applying Johnson on
  each node $y_i \in Y = X_1\cap X_2$ yields a superset of desired
  circuits, as we also enumerate subsets of elementary circuits of
  $\king(X_1,R_1)$ and $\king(X_2,R_2)$, which have already been enumerated
  by induction. This is unproblematic from a purely mathematical
  standpoint, but would drastically increase runtime complexity.  To avoid
  this duplicate enumeration, we need to additionally enforce the inclusion
  of reactions from both $R_1$ and $R_2$ in each circuit. We say a circuit
  crosses child borders, implying that an intersecting metabolite is a
  product for a reaction from the left child and a reactant for a reaction
  from the right child or vice versa.  More formally, a border node $y_i
  \in Y \cap C$ of circuit $C$ is product of a unique reaction $r_i^{+} \in
  C$, connected by an ingoing edge $(r_i^{+},y_i) \in E(C)$, and educt of
  unique reaction $r_i^{-} \in C$, connected by an outgoing edge
  $(y_i,r_i^{-}) \in E(C)$. If $r_i^{+}, r_i^{-} \in R_1$ or $r_i^{+}, r_i^{-}
  \in R_2$ for all $y_i \in Y \cap X(C)$, the circuit $C$ includes no
  crossing and was therefore already enumerated by induction. Otherwise
  there is at least one $y_i \in Y \cap X(C)$ s.t. $r_i^{+} \in R_1$ and
  $r_i^{-} \in R_2$ or vice versa.  This leads to the following procedure
  (Fig.~\ref{fig:EnumerateElementaryCircuits}): For each vertex $y_i \in
  Y$ we consider two antidromic networks left-in-right-out (LIRO) and
  right-in-left-out (RILO), derived by removing all outgoing edges
  (respectively ingoing) in $R_1$ and all ingoing (respectively
  outgoing) in $R_2$ adjacent to $y_i$. Calling Johnson on $y_i$ on both
  networks now yields exactly the set of desired circuits containing
  $y_i$. Therefore, in the $i$-th iteration, all elementary circuits
  containing $y_i$ are enumerated crossing $y_i$ from the left to the right
  child or from the right to the left child, avoiding exactly all circuits
  not crossing at $y_i$. Again, completeness of the Johnsons algorithm
  guarantees finding all circuits. However, completeness also implies that
  circuits crossing metabolites on multiple points $Y'\subseteq Y \cap
  X(C)$ will be enumerated exactly $|Y'|$ times without further algorithmic
  restrictions. Given a fixed node order in $Y$, on $i$-th iteration any
  circuit with $y_i\in Y'$ was already enumerated iff there exists a
  $y_k\in Y': k<i$. To restrict re-enumeration, we split nodes $y_k\in Y':
  k<i$ into novel nodes $y_{k(in)}$ and $y_{k(out)}$ s.t. $y_{k(in)}$
  inherits edges in $\king(X_1,R_1)$ and $y_{k(out)}$ edges in
  $\king(X_2, R_2)$, thereby exactly disabling crossing, but not inclusion
  of previous metabolites. \FuncSty{Backtranslation} reverses this
  operation on the cycle node and edge sets. We note that elementary
  circuits containing both versions of an intersecting metabolite,
  $y_{i(in)}$ and $y_{i(out)}$, such as the blue circuit depicted in
  Fig.~\ref{fig:EnumerateElementaryCircuits} ($y_3$), are removed by
  backtranslation. It follows that all elementary circuits for the parent,
  $\king(X_\kappa,R_\kappa)$, are exactly enumerated without duplication.
  
  By induction, we conclude that all elementary circuits for
  $\king(X,R)$ are enumerated.
  \end{proof}
 
\begin{figure}[htb]
  \centering
  \includegraphics[width=\textwidth]{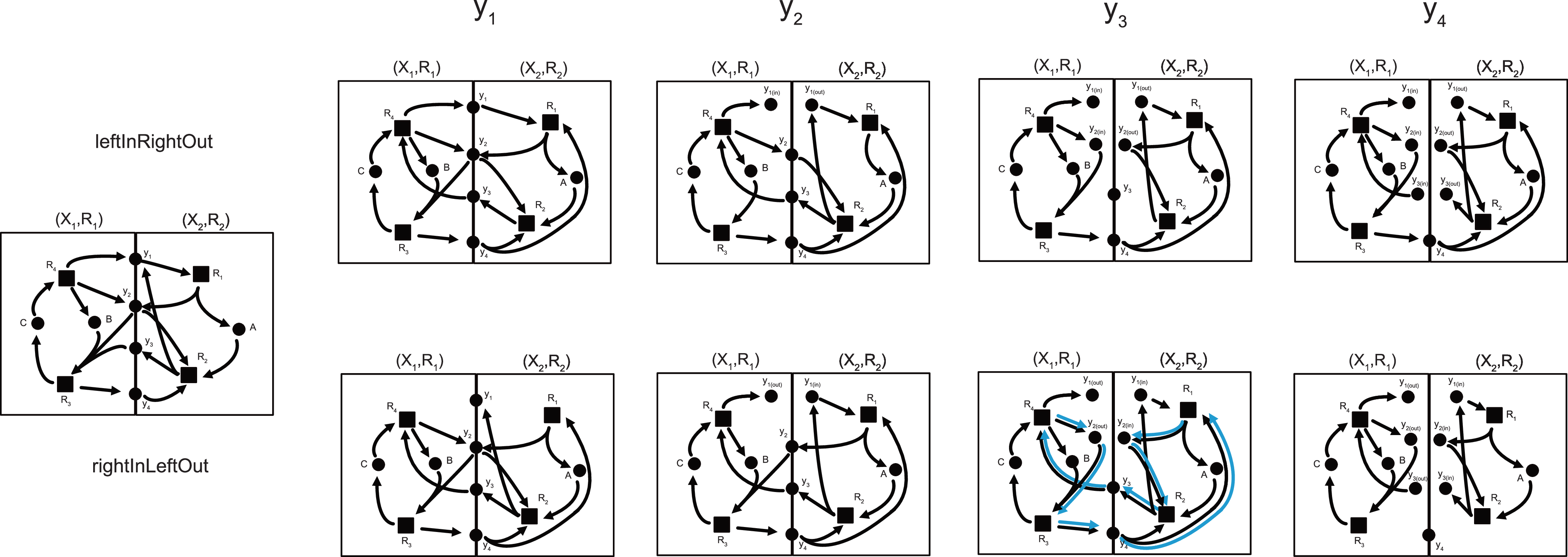}
  \caption{Depiction of an example for the fusion of two vertices of the
    partitioning tree $\tree$ with four intersecting metabolites,
    i.e. $Y\coloneqq X_1\cap X_2=\{y_1,y_2,y_3,y_4\}$. Iterative steps for
    intersecting metabolites are depicted in the four right panels. The
    upper and lower panel are illustrating the two antidromic oriented
    networks, left-in-right-out (LIRO) and right-in-left-out (RILO), for
    each iteration of intersecting metabolites. In detail, in the $i$-th
    iteration and LIRO orientation ($i$-th upper panel), $y_i$ exhibits
    incoming and outgoing edges only from the left and right child,
    respectively, while for RILO orientation, $y_i$ receives only incoming
    and outgoing edges from the right and the left child, respectively. In
    the $i+1$-th iteration for both orientations, LIRO and RILO, $y_i$ is
    split into $y_{i(in)}$ and $y_{i(out)}$. Subsequently, $y_{i(in)}$ is
    added to the left and $y_{i(out)}$ to the right network (LIRO) or
    vice-versa (RILO). In addition, edges pointing into $y_i$ are strictly
    retained in the appurtenant oriented network, i.e., $y_{i(in)}$ is only
    incident to edges with origin or target in the in-oriented network. In
    the lower third panel, an elementary circuit is depicted in blue, which
    contains both versions of the original vertex. Nevertheless, after
    re-translation into the original vertex sequence, it is not an elementary
    circuit. Thus, it is not considered to be checked for its
    autocatalytic capacity or for the assembly of larger cycles.}
  \label{fig:EnumerateElementaryCircuits}
\end{figure}

\subsection{Autocatalytic capacity}
\label{sec:auto_cap}

\begin{algorithm}[htb]
  \caption{AutocatalyticCapacity}
  \label{alg:DetermineAutocatalyticCapacity} 

    \SetKwInOut{Require}{Require}
    \SetKwInOut{Output}{Output}  

    \Require{$\mathcal{E}$, set of equivalence classes}
    \Output{$\mathcal{A}$, set of autocatalytic matrices}     
    
    $\mathcal{A} \leftarrow \emptyset$\;
    \For{$E_1(C) \in \mathcal{E}$}{
       $X(C), R(C) \leftarrow $ \FuncSty{SplitVertices}($E_1(C)$)\;
       $n \leftarrow |X(C)|$\;
       $A \leftarrow \SM[X(C), \kappa_C(R(C))]$ \tcp*{Compute Matrix from Graph}
      \eIf(\tcp*[f]{Case: Is Metzler Matrix}){$A ==\Metzler{A}$}{  
         $r_{\max} \leftarrow \max\{\text{Real}(\lambda) \mid \lambda \in \mathrm{spectrum}(A)\}$\;
         \If(\tcp*[f]{If Hurwitz unstable}){$r_{\max} > 0$}{ 
              $\mathcal{A} \leftarrow \mathcal{A} \ \cup \{A\}$\;
          }
       }(\tcp*[f]{Case: Is non-Metzler Matrix}){  
         \If{$\exists v\in \mathbb{R}_{>0}^{|n|}: Av>0$}{
            $\mathcal{A} \leftarrow \mathcal{A} \ \cup \{A\}$\;
         }
       }
	 }
\end{algorithm}

The autocatalytic capacity of a CS matrix can be determined for Metzler and non-Metzler matrices utilizing different methods. In the Metzler case, spectral properties are of great value, while for non-Metzler matrices, an optimization problem needs to be solved. Consider an $n\times n$ irreducible, Metzler matrix $A$. If $A$ possesses any eigenvalue with a positive real part, in particular the leading one, then $A$ is autocatalytic (see Lemma M\ref{cor:SingularMetzlerHurwitz}). For non-Metzler matrices, autocatalytic capacity is determined by the existence of a positive vector $v>0$, s.t. $Av>0$. This allows for a relatively straightforward implementation as it is shown in the pseudocode presented in Algorithm~\ref{alg:DetermineAutocatalyticCapacity}.


\end{document}